\newcommand\myshade{85}
\colorlet{mylinkcolor}{YellowOrange}
\colorlet{mycitecolor}{Aquamarine}
\colorlet{myurlcolor}{violet}
\renewcommand{\hat}{\widehat}
\renewcommand{\tilde}{\widetilde}
\renewcommand{\bar}{\widebar}
\newcommand{\bfm}[1]{\ensuremath{\boldsymbol{#1}}} 
\def\bbone{\mathbbm{1}} 
\def\ba{\bfm a}   \def\bA{\bfm A}  
\def\bb{\bfm b}   \def\bB{\bfm B}  
   \def\bD{\bfm D}  
\def\be{\bfm e}   \def\bE{\bfm E}  \def\EE{\mathbb{E}}
\def\bff{\bfm f}    
\def\bg{\bfm g}     
   \def\bI{\bfm I}
   \def\bM{\bfm M}  
     \def\OO{\mathbb{O}}
     \def\PP{\mathbb{P}}
   \def\bQ{\bfm Q}  
     \def\RR{\mathbb{R}}
\def\bu{\bfm u}   \def\bU{\bfm U}  
\def\bv{\bfm v}   \def\bV{\bfm V}  
   \def\bX{\bfm X}  
\def\bz{\bfm z}   \def\bZ{\bfm Z}
\def\calB{{\cal  B}} \def\cB{{\cal  B}}
 \def\cC{{\cal  C}}
 \def\cD{{\cal  D}}
\def\calE{{\cal  E}} \def\cE{{\cal  E}}
\def\calF{{\cal  F}} \def\cF{{\cal  F}}
\def\calH{{\cal  H}} \def\cH{{\cal  H}}
\def\calI{{\cal  I}}
 \def\cL{{\cal  L}}
\def\calM{{\cal  M}} \def\cM{{\cal  M}}
 \def\cN{{\cal  N}}
 \def\cR{{\cal  R}}
 \def\cS{{\cal  S}}
 \def\cT{{\cal  T}}
 \def\cV{{\cal  V}}
\def\calX{{\cal  X}} \def\cX{{\cal  X}}
\def\calY{{\cal  Y}} \def\cY{{\cal  Y}}
\def\calZ{{\cal  Z}} \def\cZ{{\cal  Z}}
\newcommand{\bfsym}[1]{\ensuremath{\boldsymbol{#1}}}
 \def\bbeta{\bfsym \beta}
            \def\bDelta {\bfsym {\Delta}}
 \def\bmu{\bfsym {\mu}}                 
 \def\btheta{\bfsym {\theta}}
              \def\bSigma{\bfsym \Sigma}
\providecommand{\abs}[1]{\left\lvert#1\right\rvert}
\providecommand{\norm}[1]{\left\lVert#1\right\rVert}
\providecommand{\bbrackets}[1]{\llbracket #1 \rrbracket}
\providecommand{\defeq}{:=}
\DeclarePairedDelimiterX{\infdivx}[2]{(}{)}{%
  #1 \; \delimsize\| \; #2%
}
\DeclareMathOperator{\diag}{diag}
\newcommand{\E}[1]{{\mathbb{E}} \left[ #1 \right]}
\DeclareMathOperator{\Var}{Var}
\DeclareMathOperator{\vect}{vec}
\newtheorem{definition}{Definition}[section]
\newtheorem{lemma}[definition]{Lemma}
\newtheorem{proposition}[definition]{Proposition}
\newtheorem{theorem}[definition]{Theorem}
\theoremstyle{definition}
\newtheorem{remark}{Remark}
\definecolor{royalpurple}{rgb}{0.47, 0.32, 0.66}
\definecolor{greenfresh}{HTML}{00897B}
\definecolor{bluefresh}{HTML}{1E88E5}
\definecolor{redfresh}{HTML}{E53935}
\definecolor{royalpurple}{rgb}{0.47, 0.32, 0.66}
\def\beq{\begin{equation}}
\def\eeq{\end{equation}}
\def\bet{\begin{theorem}}
\def\eet{\end{theorem}}
\def\bel{\begin{lemma}}
\def\eel{\end{lemma}}
\def\tr{\mbox{tr}}
\def\eps{\varepsilon}
\def\lam {\lambda}
\def\cond{\;|\;}
\def\mttimes{\times_{m=1}^M}
\newcommand{\mybib}{main}
\DeclareMathOperator*{\argmax}{\arg\max}
\DeclareMathOperator*{\mat1}{\text{mat}_1}
\DeclareMathOperator*{\vec1}{\text{vec}}
\DeclareMathOperator*{\matk}{{\text{mat}}_m}
\DeclareMathOperator*{\mats}{{\text{mat}}_S}
\def\ideal{(\text{\footnotesize ideal})}
\renewcommand{\hat}{\widehat}
\renewcommand{\tilde}{\widetilde}
\newcommand{\widebar}{\overline}
\def\spacingset#1{\renewcommand{\baselinestretch}%
{#1}\small\normalsize} \spacingset{1}
\def\TITLE{High-Dimensional Tensor Classification with \\
CP Low-Rank Discriminant Structure}
\newcommand{\blind}{0}
\begin{document}
\if0\blind
{
  \title{\bf \TITLE}
  \author{
    Elynn Chen$^\dag$ \hspace{8ex}
    Yuefeng Han$^\sharp$ \hspace{8ex}
    Jiayu Li$^\flat$ \hspace{8ex} \\ \normalsize
    \medskip
    $^{\dag,\flat}$New York University \hspace{8ex}
    $^{\sharp}$ University of Notre Dame
    }
   \date{September 18, 2024}
  \maketitle
} \fi

\if1\blind
{
  \bigskip
  \bigskip
  \bigskip
  \begin{center}
    {\LARGE\bf Title}
  \end{center}
  \medskip
} \fi

\bigskip
\begin{abstract}
\spacingset{1.38}
Tensor classification has become increasingly crucial in statistics and machine learning, with applications spanning neuroimaging, computer vision, and recommender systems. However, the high dimensionality of tensors presents significant challenges in both theory and practice. 
To address these challenges, we introduce a novel data-driven classification framework based on linear discriminant analysis (LDA) that exploits the CP low-rank structure in the discriminant tensor. Our approach includes an advanced iterative projection algorithm for tensor LDA and incorporates a novel initialization scheme called Randomized Composite PCA (\textsc{rc-PCA}). \textsc{rc-PCA}, potentially of independent interest beyond tensor classification, relaxes the incoherence and eigen-ratio assumptions of existing algorithms and provides a warm start close to the global optimum. We establish global convergence guarantees for the tensor estimation algorithm using \textsc{rc-PCA} and develop new perturbation analyses for noise with cross-correlation, extending beyond the traditional i.i.d. assumption. This theoretical advancement has potential applications across various fields dealing with correlated data and allows us to derive statistical upper bounds on tensor estimation errors. Additionally, we confirm the rate-optimality of our classifier by establishing minimax optimal misclassification rates across a wide class of parameter spaces. Extensive simulations and real-world applications validate our method's superior performance.
\end{abstract}

\noindent%
{\it Keywords:}
Tensor classification; 
Linear discriminant analysis; 
Tensor iterative projection; 
CP low-rank; 
High-dimensional data; 
Minimax optimality. 
\vfill


\newpage
\spacingset{1.9} 
\addtolength{\textheight}{.1in}%


\section{Introduction}  \label{sec:intro}

Tensors, representing multidimensional data, have become prevalent in diverse scientific fields, including neuroimaging \citep{zhou2013tensor}, economics \citep{chen2021statistical, chen2020constrained}, computer vision, recommendation systems \citep{bi2018multilayer}, and climate analysis \citep{hoff2015multilinear,chen2024semi}. 
Tensor classification, a key task in these domains, categorizes multidimensional data into predefined classes, extending vector-based methods to handle complex, multi-way relationships. 
Recent applications in medical diagnosis \citep{hao2013linear} and protein identification \citep{wen2024tensorview} showcase its potential for enhanced accuracy and interpretability by preserving tensor data structure. 
Despite empirical successes, the field critically lacks rigorous statistical guarantees and theoretical understandings, primarily due to the inherent complexity and delicacy of analyzing tensor learning \citep{Chien2018,Liuregression2020,jahromi2024variational}.
At the same time, while researchers are rapidly advancing theoretical guarantees for tensor data analysis, their focus has been predominantly on tensor decomposition \citep{sun2017provable,zhang2018tensor,han2023guaranteed} and tensor regression \citep{zhou2013tensor,lee2022bayesian,XiaZhangZhou2022}, leaving a significant gap in the theoretical foundation of tensor classification.

This paper investigates tensor classification within the framework of high-dimensional tensor linear discriminant analysis (HD Tensor LDA). We consider a high-dimensional tensor predictor $\calX$ of dimension ${d_1 \times d_2 \cdots \times d_M}$ and a class label $Y \in [K]$ for $K \ge 2$, under the Tensor Gaussian Mixture Model (TGMM) for $(\calX, Y)$. 
Vectorizing tensor observations not only leads to a multiplicative increase in dimensionality (see Remark \ref{rmk:vector_lda} in Section \ref{subsection: Tensor LDA} for a detailed discussion), but also fails to capture the inherent low-rank structure in the tensor discriminant \citep{wen2024tensorview}.

To tackle the challenge of high-dimensionality and preserve the unique tensor strcuture, we introduce the concept of a CP low-rank structure in the discriminant tensor and design a novel high-dimensional (HD) CP-LDA algorithm with theoretical guarantees, exploring the previously unstudied CP low-rank structure in the discriminant tensor. This low-rank discriminant structure implies that the principal discriminant direction lies in a low-rank tensor subspace, marking a significant departure from the sparsity assumptions employed in both high-dimensional vector LDA \citep{cai2011direct,fan2012road,cai2019high} and tensor LDA \citep{pan2019covariate}. 

The HD CP-LDA algorithm directly estimate the discriminant tensor through iterative refinement of an initial estimate, which can be viewed as a perturbed version of the true discriminant tensor. 
Unlike existing tensor estimation literature that typically assumes i.i.d.~(sub-)Gaussian noise and imposes restrictive eigen-ratio conditions, our setting involves perturbation error with dependent entries and may not satisfy the conventional eign-ratio conditions. Consequently, current initialization methods and theoretical guarantees are inadequate for our context.

To address these challenges, we propose Randomized Composite PCA (\textsc{rc-PCA}), a novel initialization scheme for the CP bases of the discriminant tensor. \textsc{rc-PCA} relaxes key assumptions of existing initialization algorithms \citep{anandkumar2014guaranteed, sun2017provable, han2023cp}. It accommodates non-i.i.d.~noise entries, eases eigen-ratio conditions, and permits non-orthogonal CP bases. 

Despite these relaxations, we have developed new theoretical results demonstrating that \textsc{rc-PCA} provides satisfactory initialization, ensuring that the tensor iterative projection algorithm converges to the global optimum.
Furthermore, we establish the convergence rate of HD CP-LDA, accommodating correlated entries in the noise tensor through new perturbation analysis.

Our proposed algorithms and theoretical results extend recent advancements in tensor classification theory.
Notably, tensor logistic regression is examined as a special case of generalized tensor estimation in \cite{HanWillettZhang2022}. Our work relaxes their assumption of independent elements within the tensor observation.
In the context of Tensor Gaussian Mixture Models (TGMM), \cite{pan2019covariate} investigates tensor LDA with an overall sparsity assumption on the discriminant structure. \cite{li2022tucker} assumes a Tucker low-rank structure for the discriminant tensor, incorporating group sparsity within the loading matrices. 
Our work, however, focuses on the CP low-rank structure with potentially non-orthogonal CP bases, offering a novel perspective in tensor classification.
Despite existing work in empirical tensor classification \citep{guoxian2016,Liuregression2020} and theoretical tensor learning literature \citep{chen2024semi,XiaZhangZhou2022,LuoZhang2022}, our HD Tensor LDA setting and the proposed HD CP-LDA algorithm address a crucial gap in tensor classification with theoretical guarantees. 
The subsequent section will provide a comprehensive review of related literature, highlighting our distinctive contributions to the field.

Our contributions are twofold. Methodologically, we introduce the novel HD CP-LDA algorithm, which pioneers the use of CP low-rank structure for the discriminant tensor under minimal assumptions. We also develop a new initialization scheme, \textsc{rc-PCA}, for the CP bases of the discriminant tensor. \textsc{rc-PCA} operates under relaxed assumptions and provides a beneficial warm start close to the global optimum. 
The \textsc{rc-PCA} has broader applications in tensor CP decompositions and regressions, making it of independent interest. 
Extensive simulations and real data analyses demonstrate our algorithms' superior performance over existing tensor LDA methods. 
Theoretically, we establish sharp convergence rates for the initializations and the tensor iterative projection algorithm when applied to the discriminant tensor, accommodating correlated entries in the noise tensor. This advancement effectively addresses challenges posed by the complexity of tensor data, our CP low-rank structure, and spectrum perturbation analysis amid correlated noise entries. This result surpasses most current analyses in tensor decomposition, which typically assume i.i.d. entries in the noise tensor \citep{zhang2018tensor, wang2020learning}. Additionally, we establish the first minimax optimal classification rate in the tensor classification literature, thereby advancing the theoretical foundations of the field.

\subsection{Related work}

This research contributes to the expanding fields of high-dimensional LDA and tensor learning, particularly in tensor decompositions, regressions, and classification. We provide a comprehensive review of the most relevant literature in each area, highlighting the distinctions between previous works and our study.

\smallskip
\noindent
\textbf{Tensor Learning.}
Tensor learning has become crucial in machine learning and statistics for analyzing complex, multi-dimensional array data. Key areas include tensor decomposition \citep{sun2017provable,zhang2018tensor,han2023guaranteed,lee2021smooth}, tensor regression \citep{zhou2013tensor,li2017parsimonious,lee2022bayesian,XiaZhangZhou2022}, and unsupervised learning such as tensor clustering \citep{SunLi2019,mai2021doubly,HanLuoWangZhang2022,LuoZhang2022,hu2023multiway}.


Exploiting low-rank structures has been central to managing high-dimensional tensor data. The Tucker low-rank structure is widely used in tensor regression to manage ultra-high-dimensional parameters \citep{li2018tucker,Ahmed2020,XiaZhangZhou2022}, while \cite{HanWillettZhang2022} developed a unified framework for tensor-based statistical tasks. Conversely, the CP low-rank structure has been less explored. \cite{zhou2013tensor} applied CP low-rank structures in tensor regression and introduced a Block relaxation algorithm, which lacks a convergence guarantee even under substantial assumptions, highlighting the challenges of the CP structure. Our work bridges this gap by offering a fresh perspective on CP low-rank structures in tensor classification through the novel HD CP-LDA algorithm, equipped with the Randomized Composite PCA (\textsc{rc-PCA}) initialization scheme. This ensures convergence to optimal components under relaxed assumptions about eigen-ratio and non-orthogonality among CP bases \citep{anandkumar2014guaranteed, sun2017provable, han2023cp}. The versatility of \textsc{rc-PCA} extends beyond classification, enhancing various tensor learning applications where effective initialization is key, such as in regression and decomposition.

\smallskip
\noindent
\textbf{Tensor Classification.}
Empirical studies have shown the effectiveness of tensor-based classification methods across various domains, with developments including support tensor machines \citep{hao2013linear, guoxian2016}, tensor neural networks \citep{Chien2018, jahromi2024variational, wen2024tensorview}, and tensor logistic regression \citep{wimalawarne2016theoretical, Liuregression2020, SONG2023430}. However, these methods often neglect the dependence structure of tensor predictors and lack theoretical guarantees. 
Our research fills this gap by deriving minimax optimal rates for misclassification risk, explicitly considering the covariance structure of tensor data.
While our approach shares similarities with 
tensor multilinear discriminant analysis (MDA), it stands out through its specific objectives and theoretical depth. 
Tensor MDA \citep{lai2013sparse, Qunli2014, Franck2023}, though effective at feature extraction, typically requires subsequent vector classification, highlighting the need for methods that directly classify tensor data with strong theoretical support.

In tensor LDA, \cite{pan2019covariate} investigates group sparsity assumptions on the discriminant structure, \cite{li2022tucker} employs a Tucker low-rank structure for the discriminant tensor with group sparsity in the loading matrices, while \cite{wang2024parsimonious} explores discriminant analysis using Tucker low-rank tensor envelopes. Our work introduces a novel approach by focusing on the CP low-rank structure of the discriminant tensor, which offers greater flexibility and simplicity compared to Tucker structures and leads to theoretical advancements in tensor classification.

\smallskip
\noindent
\textbf{High-dimensional vector discriminant analysis.}
High-dimensional discriminant analysis has been widely studied in the vector domain, with much of the work rooted in Fisher's discriminant analysis \citep{cai2011direct, Clemmensen2011, witten2011penalized, fan2012road, mai2012direct, Xu2015, cai2019high}. Advanced methods have since emerged, such as \cite{Pan2016}'s feature screening for pairwise LDA in multiclass problems and \cite{cai2021convex}'s exploration of quadratic discriminant analysis. These approaches rely on the discriminant vector, often imposing sparsity through regularization techniques for efficient computation and interpretable feature selection in high dimensions. However, these methods are limited by their reliance on sparsity assumptions. In contrast, multi-dimensional discriminant tensors offer richer structures. Our CP low-rank approach captures complex interactions in tensor data that one-dimensional discriminant vectors cannot fully represent. By leveraging the intrinsic low-rank nature of tensor data, our method provides a more comprehensive analysis than traditional vector-based methods.


\subsection{Organization.} 
This paper is structured as follows: We begin with an introduction to basic tensor analysis notations and preliminaries. Section \ref{sec:model} introduces Tensor LDA with a discriminant tensor featuring CP low-rank structure. Section \ref{sec:method} details the HD CP-LDA algorithm for the discriminant tensor, including the \textsc{rc-PCA} initialization scheme and an iterative refinement algorithm. Section \ref{sec:theorems} develops the theoretical properties of our methods, establishing statistical bounds on estimation errors and deriving minimax optimal rates for misclassification risk. Section \ref{sec:simu} examines the finite sample performance of our CP low-rank approach through comprehensive simulations and comparative studies, demonstrating its advantages over traditional vector-based and sparse tensor methods. Section \ref{sec:appl} showcases the practical utility of our method through empirical studies on real-world tensor data. Section \ref{sec:summ} concludes the paper. All proofs and technical lemmas are in the supplementary appendix.

\section{The Model}
\label{sec:model}
\subsection{Notation and Preliminaries}
An $M$-th order tensor is a multi-dimensional array $\calX \in \RR^{d_1 \times \cdots \times d_M}$, with $\calI = (i_1, \dots, i_M)$ denoting the index of an element. The inner product $\langle \mathcal{X},\mathcal{Y} \rangle = \sum_{\calI} \calX_{\calI} \calY_{\calI}$ induces the Frobenius norm $\| \mathcal{X} \|_{\rm F} = \langle \mathcal{X},\mathcal{X} \rangle^{1/2}$. For $\mathnormal{A} \in \mathbb{R}^{\tilde{d_m} \times d_m}$, the mode-$m$ product $\mathcal{X} \times_m \mathnormal{A}$ yields an $M$-th order tensor of size $d_1 \times \cdots \times d_{m-1} \times \tilde{d_m} \times d_{m+1} \times \cdots \times d_M$. Elementwise, $(\mathcal{X} \times_m A)_{i_1, \ldots, i_{m-1}, j, i_{m+1}, \ldots, i_M} = \sum_{i_m=1}^{d_m} \mathcal{X}_{i_1, \ldots, i_M} \cdot A_{j, i_m}$. The mode-$m$ matricization $\text{mat}_m(\mathcal{X})$ is a $d_m \times \prod_{k\neq m} d_k$ matrix, where $\mathcal{X}_{i_1\dots i_M}$ maps to element $(i_m, j)$, with $j=1+\sum\nolimits_{k \neq m} (i_k -1) \prod_{l<k,l \neq m}d_l$. For $S \subseteq \{1, \dots, M\}$, the multi-mode matricization ${\rm mat}_S(\calX)$ is a $\prod_{m \in S} d_m$-by-$\prod_{m \notin S} d_m$ matrix with elements mapped as:
$i = 1 + \sum_{m \in S} (i_m - 1) \prod_{\ell \in S, \ell < m} d_\ell$, 
$j = 1 + \sum_{m \notin S} (i_m - 1) \prod_{\ell \notin S, \ell < m} d_\ell$. Vectorization is denoted as $\text{vec}(\mathcal{X}) \in \mathbb{R}^d$, where $d=\prod_{m=1}^M d_m$. For a comprehensive review of tensor algebra and operations, we refer to \cite{kolda2009tensor}. 
For a matrix $\bA = (a_{ij})\in \RR^{m\times n}$, the matrix Frobenius norm is denoted as $\|\bA\|_{\rm F} = (\sum_{ij} a_{ij}^2)^{1/2}$ and its spectral norm is denoted as $\|\bA\|_{2}$. For two sequences of real numbers $\{a_n\}$ and $\{b_n\}$, write $a_n=O(b_n)$ (resp. $a_n\asymp b_n$) if there exists a constant $C$ such that $|a_n|\leq C |b_n|$ (resp. $1/C \leq a_n/b_n\leq C$) for all sufficiently large $n$, and write $a_n=o(b_n)$ if $\lim_{n\to\infty} a_n/b_n =0$. Write $a_n\lesssim b_n$ (resp. $a_n\gtrsim b_n$) if there exist a constant $C$ such that $a_n\le Cb_n$ (resp. $a_n\ge Cb_n$). We use $C, C_1,c,c_1,...$ to denote generic constants, whose actual values may vary from line to line. 

The Tensor Normal (TN) distribution generalizes the matrix normal distribution to higher-order tensors \citep{hoff2011TN}. Let \(\calZ \in \RR^{d_1 \times \cdots \times d_M}\) be a random tensor with i.i.d. \(N(0,1)\) elements. Given a mean tensor \(\mathcal{M} \in \mathbb{R}^{d_1 \times \cdots \times d_M}\) and mode-specific covariance matrices \(\Sigma_m \in \mathbb{R}^{d_m \times d_m}\) for \(m \in [M]\), we define \(\calX=\calM + \calZ \times_{m=1}^M \Sigma_m^{1/2}\). 

Then, \(\calX \sim \cT\cN(\cM; \bSigma)\), where \(\bSigma := [\Sigma_m]_{m=1}^M\). The probability density function is:

\begin{equation} 
    f(\calX | \cM, \bSigma)= (2 \pi)^{-d/2}
    \left(\prod\nolimits_{m=1}^M |{\Sigma_m}|^{-\frac{d}{2 d_m}}\right)
    \exp\left(- \norm{ (\calX - \cM) \mttimes \Sigma_m^{-1/2}}_{\rm F}^2/2\right). \label{eqn:tenor-normal-density}
\end{equation}
Importantly, \(\calX \sim \cT\cN(\cM; \bSigma)\) is equivalent to \(\vect(\calX) \sim \cN(\vect(\cM); \Sigma_M\otimes\cdots\otimes\Sigma_1)\), where \(\otimes\) denotes the Kronecker product.

\subsection{Tensor High-Dimensional Linear Discriminant Analysis}\label{subsection: Tensor LDA}
We explore the classification of tensor variables using the Tensor Gaussian Mixture Model (TGMM), which extends the Gaussian Mixture Model to accommodate tensor-variate $\calX$ drawn from $K$ classes $\{\cC_k\}_{k=1}^K$. The TGMM distribution for $(\calX, Y)$, where $Y$ is the class label, is specified by:
\begin{equation} \label{eqn:tgmm2}
   \calX \cond Y = k \sim \cT\cN(\cM_k; \bSigma_k),
   \quad \pi_k \defeq \PP(Y = k), 
   \quad \sum\nolimits_{k=1}^K \pi_k = 1, 
   \quad \text{for any } k \in [K],
\end{equation}
where $\calM_k \in \RR^{d_1 \times \cdots \times d_M}$ is the mean tensor of the $k$th class, $\bSigma_k$ represents the within-class covariance matrices, and $0<\pi_k<1$ is the prior probability for class $k$. In this paper, we focus on the binary classification case ($K = 2$) with equal covariance structure $\bSigma_k := \bSigma= [\Sigma_m]_{m=1}^M$ for all $k\in[K]$. Given independent observations $\calX_1^{(1)}, \ldots, \calX_{n_1}^{(1)} \overset{\text{i.i.d.}}{\sim} \cT\cN(\cM_1; \bSigma)$ and $\calX_1^{(2)}, \ldots, \calX_{n_2}^{(2)} \overset{\text{i.i.d.}}{\sim} \cT\cN(\cM_2; \bSigma)$, our goal is to classify a future data point $\calZ$ into one of the two classes.

In the ideal setting where all parameters $\btheta := (\pi_1, \pi_2, \cM_1, \cM_2, \bSigma)$ are known, Fisher's linear discriminant rule for Tensor LDA is given by:
\begin{equation}
\label{eqn:lda-rule}
\Upsilon(\cZ) 
= \bbone\left\{ \langle \cZ - \cM, \; \calB \rangle +\log\Big(\pi_2/\pi_1\Big) \ge 0 \right\},
\end{equation}
where $\cM = (\cM_1 + \cM_2)/2$, $\cD=(\cM_2-\cM_1)$, and the discriminant tensor $\calB = \cD \times_{m=1}^M \Sigma_m^{-1}$. The indicator function $\bbone(\cdot)$ assigns $\cZ$ to class $\cC_1$ (value 0) or $\cC_2$ (value 1). This rule, generalizing the classical vector case \citep{anderson2003}, is optimal under the TGMM \citep{mai2021doubly}. The optimal misclassification error is:
$R_{\text{opt}}=\pi_1\phi(\Delta^{-1}\log(\pi_2/\pi_1)-\Delta/2)+\pi_2(1-\phi(\Delta^{-1}\log(\pi_2/\pi_1)+\Delta/2)),$
where $\phi$ is the standard normal CDF and $\Delta=\sqrt{\langle \calB, \; \cD \rangle}$ is the signal-to-noise ratio. Building on the framework in vector LDA \citep{cai2011direct, mai2012direct, cai2019high}, this paper focuses on the estimation of the discriminant tensor $\cB$ while enforcing a CP low-rank structure.

\begin{remark}
The classification rule can also be derived from Bayes' Theorem. Let $f_k (\calZ) \equiv \PP(\mathcal{X} \cond Y=k)$ be the conditional probability density function \eqref{eqn:tenor-normal-density} for the $k$th class. Bayes' Theorem states that $\PP(Y=k \cond \mathcal{X}) = \pi_k f_k(\calZ) / (\sum_{k=1}^K \pi_k f_k(\calZ))$. The Bayes' rule classifies an observation to the class maximizing $\PP(Y=k \cond \mathcal{Z})$, achieving the lowest possible error rate \citep{domingos1997optimality}: $\hat{Y} = \mathop{\argmax}_{k \in [K]} \PP(Y=k \cond \mathcal{Z}) = \mathop{\argmax}_{k \in  [K]} \pi_k f_k(\calZ)$. For binary classification ($K = 2$), this simplifies to:
\(\Upsilon(\mathcal{Z}) = \bbone \left\{ \pi_2 f_2(\mathcal{X}) - \pi_1 f_1(\mathcal{X}) \geq 0 \right\}\). Rearranging and taking logarithms yields:
\begin{equation*}
\Upsilon(\mathcal{Z}) = \bbone \left\{ \log(\pi_2 / \pi_1) + \frac{1}{2} \left\| (\mathcal{X} - \mathcal{M}_1) \times_{m=1}^M \Sigma_m^{-1/2} \right\|_{\rm F}^2 - \frac{1}{2} \left\| (\mathcal{X} - \mathcal{M}_2) \times_{m=1}^M \Sigma_m^{-1/2} \right\|_{\rm F}^2 \geq 0 \right\}.
\end{equation*}
The equivalence of $\Upsilon(\mathcal{Z})$ to \eqref{eqn:lda-rule} can be demonstrated through tensor algebra.
\end{remark}

\begin{remark} \label{rmk:vector_lda}
Vectorizing tensor-variate $\calX$ to apply high-dimensional vector LDA leads to the discriminant rule:
\begin{equation*}
\Upsilon(\bz) 
= \bbone\left\{ (\bmu_2-\bmu_1)^\top (\Sigma_M\otimes\cdots\otimes\Sigma_1)^{-1} (\bz - \frac{\bmu_1+\bmu_2}{2}) +\log\Big(\pi_2/\pi_1\Big) \ge 0 \right\}, 
\end{equation*}
where $\bz=\vect(\calZ)$ and $\bmu_k = \vect(\calM_k)$. Studies in high-dimensional LDA \citep{cai2011direct, mai2012direct, cai2019high} estimate the discriminant direction $\bbeta=(\bmu_2-\bmu_1)^\top (\Sigma_M\otimes\cdots\otimes\Sigma_1)^{-1}$ directly under sparsity assumptions. However, this approach may overlook tensor structures and face ultrahigh-dimensionality issues. For instance, \cite{cai2018rate}'s adaptive estimation of $\bbeta$ requires solving linear programming (LP) problems. For a tensor $\calX \in \RR^{30 \times 30 \times 30}$, the resulting optimization involves 27,000 inequalities, becoming computationally extremely intensive.
\end{remark}

\begin{remark}
While \cite{pan2019covariate} utilize group sparsity of $\calB$ with a group LASSO penalty, we recognize and utilize that the \emph{multi-dimensional} discriminant tensor $\calB$ often exhibits richer low-rank structures, which captures complex dependencies among tensor dimensions beyond simple sparsity. Our proposed decomposable discriminant tensor $\calB$ will be elaborated upon later. One could employ the method from \cite{HanLuoWangZhang2022} by framing the problem as an optimization task using the negative log-likelihood function $\cL$ of logistic regression,
\begin{equation*}
\hat{\calB} = \arg\min_{\calB \; \text{is Tucker low-rank}} \mathcal{L}(\calB; \cX_1^{(k)},...,\cX_{n_k}^{(k)},k=1,2).
\end{equation*} 
However, measuring the statistical noise of the model---a crucial plug-in quantity in their theoretical framework for estimation errors---is non-trivial. This challenge hinders the validation of theoretical guarantees under their method. Additionally, their work typically assumes that $\cX_i^{(k)}$ has i.i.d. elements, a requirement that is often too stringent in practice. Consequently, our work necessitates the introduction of a new estimation procedure and the development of novel perturbation analysis to establish theoretical guarantees. Moreover, our approach diverges by imposing a CP low-rank structure on $\calB$, relaxing the orthogonality assumption, and providing a novel perspective in the field of tensor classification.
\end{remark}

\section{Estimation of CP-Low-Rank Discriminant Tensor} \label{sec:method}
In this paper, we propose an efficient method for estimating the discriminant tensor $\cB$, utilizing the CANDECOMP/PARAFAC (CP) low-rank structure:
\begin{equation}\label{eqn:lda-cp}
    \calB =\sum\nolimits_{r=1}^R w_r \cdot (\ba_{r1}\circ\ba_{r2}\circ\cdots\circ \ba_{rM}), 
\end{equation}
where $R$ is the CP low rank, $\circ$ denotes the tensor outer product, $w_r>0$ represents signal strength, and $\ba_{rm}$ is a unit vector of dimension ${d_m}$. Our estimation strategy encompasses two key components: firstly, developing an initial estimator that closely approximates the true discriminant tensor, ensuring a high probability of achieving an optimal solution; and secondly, algorithmically devising a computationally efficient method to refine this estimator, with provable convergence guarantees.

\begin{remark}
An alternative decomposition of $\calB$ is the Tucker low-rank structure:
\begin{equation*}
\calB =\cF\times_1 \bU_1\times_2\cdots\times_M \bU_M,
\end{equation*}
where $\calF \in \RR^{r_1 \times r_2 \times \cdots \times r_M}$ is the core tensor and $\bU_i \in \RR^{d_i \times r_i}$ are loading matrices. However, the CP low-rank structure of $\calB$ offers several advantages \citep{han2023cp,erichson2020randomized}. Unlike Tucker decomposition, which suffers from ambiguity due to invertible transformations, CP structure 
is uniquely defined up to sign changes, facilitating interpretation. Furthermore, while CP can be viewed as a special case of Tucker with a superdiagonal core tensor, our work allows for non-orthogonal CP bases $\{ \ba_{rm}, r\in [R] \}$. This generalization, though more challenging, offers greater flexibility compared to the common orthonormal representation in Tucker decomposition. Additionally, CP's sequence of signal strengths is more parsimonious than Tucker's redundant core tensor, enhancing modeling, interpretation, and practical estimation. These characteristics render the CP low-rank structure superior for our discriminant tensor analysis.
\end{remark}
In the subsequent sections, we introduce the sample discriminant tensor and explore the advantages of applying low-rank structures. We propose two algorithms: 
(i) an iterative projection algorithm utilizing power iteration techniques to ensure linear local convergence, and 
(ii) a method to generate a beneficial warm start from the sample discriminant tensor, which facilitates global convergence with high probability when used as the initial CP bases in the refinement process.

\subsection{Sample Discriminant Tensor}
For each class $\cC_k$, let $n_{k}$ be the sample size, $\calX_i^{(k)}$ the $i$-th sample, and $\bar\calX^{(k)} = (n_{k})^{-1}\sum_{i=1}^{n_{k}} \calX_i^{(k)}$ the sample mean. We define the sample discriminant tensor as:
\begin{equation}\label{eqn:lda-discrim-tensor}
\widehat\calB = (\widebar\calX^{(2)}-\widebar\calX^{(1)}) \times_{m=1}^{M} \widehat\Sigma_m^{-1},
\end{equation}
where $d=\prod_{\ell} d_{\ell}, d_{-m}=d/d_{-m}$,
\begin{align*}
& \widehat\Sigma_m = (n_1+n_2)^{-1} d_{-m}^{-1}\sum\nolimits_{k=1}^2\sum\nolimits_{i=1}^{n_k}\matk(\calX_{i}^{(k)}-\widebar\calX^{(k)}){\matk}^{\top}(\calX_{i}^{(k)}-\widebar\calX^{(k)}), \quad\text{for } m\in[M],\\
&\hat C_{\sigma} = \prod_{m=1}^M \hat\Sigma_{m,11}/\hat{\Var}(\cX_{1,1\cdots1}), \quad \text{and update} \quad\widehat\Sigma_M = \hat C_{\sigma}^{-1} \cdot \widehat\Sigma_M.
\end{align*}
We estimate $\pi_k$ by $\widehat\pi_k=\frac{n_k}{n_1+n_2}$ for $k=1,2$. Denote $n=\min(n_1, n_2)$. Inspired by \cite{drton2021existence}, $\hat\Sigma_m$ is positive definite and \eqref{eqn:lda-discrim-tensor} is well-defined with probability 1 if $n_{}> d_m/d_{-m}$. This condition is mild, especially when dimensions across modes are comparable. For violations, we can use a perturbed $\hat\Sigma_m^{'} =\hat\Sigma_m +\gamma I_{d_m}$ for a relatively small $\gamma$ \citep{ledoit2004well}, or a sparse precision matrix estimator \citep{friedman2008sparse,cai2016estimating,lyu2019tensor} for estimating covariance matrices.

The sample discriminant tensor $\hat\calB$, viewed as a perturbed version of the true $\calB$, can be refined iteratively under CP low-rankness \eqref{eqn:lda-cp}. However, the low-rank structure's benefits extend beyond improved estimation accuracy. Such structural assumptions are fundamental for constructing consistent high-dimensional LDA rules. According to \cite{cai2021convex}, no data-driven method can mimic the optimal misclassification error $R_{\text{opt}}$ in high-dimensional settings ($d\gtrsim n$) with unknown means, even with known identity covariance matrices. In contrast, our high-dimensional Tensor LDA rule, utilizing an estimated low-rank discriminant tensor, achieves minimax optimal misclassification rate, as demonstrated in Theorems \ref{thm:class-upp-bound} and \ref{thm:class-lower-bound}.

\subsection{Refinement of the Discriminant Tensor}

The core of the refinement procedure is iterative tensor projection (Algorithm \ref{alg:tensorlda-cp}). Each iteration performs simultaneous orthogonalized projections across tensor modes to refine the CP components $\ba_{rm},\; r \in [R]$, followed by estimating signal strengths $w_r, \; r \in [R]$. This approach, based on the CP structure, preserves the signal strength of $\calB$ while reducing the perturbation $\hat\calB - \calB$, thereby improving the signal-to-noise ratio (SNR) and estimation accuracy..

To illustrate the functionality, consider estimating mode-$m$ CP bases $\{\ba_{r m}\}_{r\in[R]}$ given true mode-$\ell$ CP bases $\{\ba_{r\ell}\}_{r\in[R]}$ for all $\ell\neq m$. Let $\bA_{\ell}=[\ba_{1\ell},\ldots,\ba_{R\ell}]$ and $\bB_{\ell} = \bA_{\ell}(\bA_{\ell}^{\top} \bA_{\ell})^{-1} = [\bb_{1\ell},...,\bb_{R\ell}]$ as a right inverse of $\bA_{\ell}$, where $\ba_{k\ell}^\top\bb_{r\ell }=\bbone\{k=r\}$. Projecting the noisy $\hat\calB$ with $\bb_{r \ell}$ on all modes $\ell \ne m$ extracts the signal component containing only $\ba_{rm}$:
\begin{equation} \label{eq:cp-ideal}
\bz_{rm} 
= \hat\calB \times_{\ell=1, \ell\ne m}^{M} \bb_{r\ell}^\top
= \underbrace{w_{r} \ba_{r m}}_\text{signal}
+ \underbrace{ (\hat\calB - \calB) \times_{\ell=1, \ell\ne m}^{M} \bb_{r\ell}^\top}_\text{noise}.
\end{equation}
The signal term retains the strength $w_r$ along $\ba_{rm}$, while the noise term is filtered through projections. When the true $\{\bB_{\ell},\; \ell \neq m\}$ are known, the relative magnitude of the noise term is small compared to the signal term. This characteristic enables the distinction of the signal component with theoretical guarantees, utilizing perturbation theory \citep{wedin1972perturbation}. 
In practice, with unknown $\{\bB_{\ell},\; \ell \neq m\}$, we update CP bases $\{\ba_{rm},\; r \in [R], m \in [M]\}$ at the $t$-th iteration:
\begin{align*}
\bz_{rm}^{(t)} &= \hat\calB \times_1 \hat\bb_{r1}^{(t)\top} \times_2 \cdots \times_{m-1} \hat\bb_{r,m-1}^{(t)\top} \times_{m+1} \hat\bb_{r, m+1}^{(t-1)\top} \times_{m+2} \cdots \times_M \hat\bb_{rM}^{(t-1)\top},\\ 
\hat \ba_{rm}^{(t)} &= \bz_{rm}^{(t)}/\| \bz_{rm}^{(t)} \|_2.
\end{align*}
The projection error is:
\begin{alignat*}{2}
\bz_{rm} - \bz_{rm}^{(t)} &= w_r\ba_{rm} - \sum_{i=1}^R \widetilde w_{i,r} \ba_{im} + \be_{rm} - \tilde \be_{rm}, \\
\text{where} \quad \widetilde w_{i,r} &= w_i \prod_{\ell=1}^{m-1} \left[ \ba_{i,\ell}^{\top} \hat \bb_{r\ell}^{(t)} \right] \prod_{\ell=m+1}^{M} \left[ \ba_{i,\ell}^{\top} \hat \bb_{r\ell}^{(t-1)} \right], \\
\be_{rm} = (\hat\calB - \calB)\times_{\ell=1}^{M} \bb_{r,\ell}^\top &\quad \text{and} \quad \tilde \be_{rm} = (\hat\calB - \calB)\times_{\ell=1}^{m-1} \widehat \bb_{r,\ell}^{(t)\top} \times_{\ell=m+1}^{M} \widehat \bb_{r,\ell}^{(t-1)\top}.
\end{alignat*}

The multiplicative measure in $\widetilde w_{i,r}$ decays rapidly as $\ba_{i,\ell}^{\top} \hat \bb_{r\ell}^{(t)}$ approaches zero for $i \neq r$ with increasing iterations, and $\widetilde w_{i,r}$ has a product of $M-1$ such terms. The remaining terms are projected noise, controlled by the signal-to-noise ratio. Specifically, we can establish upper bounds on the estimation error of $\ba_{rm}$, ensuring the accuracy of our recovered CP bases.

\begin{algorithm}[htpb!]
    \SetKwInOut{Input}{Input}
    \SetKwInOut{Output}{Output}
    \Input{Initial tensor $\hat\calB$, 
    CP rank $R$, warm-start $\hat \ba_{rm}^{(0)}, 1\le r\le R, 1\le m\le M$, tolerance parameter $\epsilon>0$, maximum number of iterations $T$}
    
    Let $t=0$
    
    \Repeat{$t = T$ {\bf or} $\max_{r,m}\| \hat \ba_{rm}^{(t)}  \hat \ba_{rm}^{(t)\top} - \hat \ba_{rm}^{(t-1)} \hat \ba_{rm}^{(t-1)\top}\|_{2}\le \epsilon$}{
        Set $t=t+1$. 
        
        \For{$m = 1$ to $M$}{
            \For{$r = 1$ to $R$}{
                Compute $\bz_{rm}^{(t)} = \hat\calB \times_1 \hat\bb_{r1}^{(t)\top} \times_2 \cdots \times_{m-1} \hat\bb_{r,m-1}^{(t)\top} \times_{m+1} \hat\bb_{r, m+1}^{(t-1)\top} \times_{m+2} \cdots \times_M \hat\bb_{rM}^{(t-1)\top}$
                
                
                Compute $\hat \ba_{rm}^{(t)} = \bz_{rm}^{(t)}/\| \bz_{rm}^{(t)} \|_2$
            }
            Compute $(\hat \bb_{1m}^{(t)},\ldots, \hat \bb_{rm}^{(t)})$ as the right inverse of $(\hat \ba_{1m}^{(t)}, \ldots, \hat \ba_{rm}^{(t)})^\top$
            
            Set $(\hat \bb_{1m}^{(t+1)},..., \hat \bb_{rm}^{(t+1)})=(\hat \bb_{1m}^{(t)},..., \hat \bb_{rm}^{(t)})$ 
        }
        Compute $\hat w_{r}^{(t)} = \big|\hat\calB\times_{m=1}^M (\hat\bb_{rm}^{(t)})^\top\big|, 1\le r\le R$
        }

    \Output{$\hat  \ba_{rm}=\hat \ba_{rm}^{(t)}$, 
        $\hat w_{r} = \big|\hat\calB\times_{m=1}^M (\hat  \bb_{rm}^{(t)})^\top\big|$, 
        $\hat\calB^{\rm cp}=\sum_{r=1}^R \hat w_r \circ_{m=1}^M \hat\ba_{rm}$, $1\le r\le R,  m\le M$}
    \caption{Discriminant Tensor Iterative Projection for CP low rank (DISTIP-CP) }
    \label{alg:tensorlda-cp}
\end{algorithm}
\vspace{-1em}

\subsection{Initialization of CP Bases}
The refinement procedure in Algorithm \ref{alg:tensorlda-cp} requires initialization of CP bases $\{\hat\ba_{rm}^{(0)}, r \in [R], m \in [M]\}$. Given the non-convex nature of this problem and its propensity for multiple local optima, a carefully designed initialization is crucial for convergence to the optimal solution. We propose a novel Randomized Composite PCA (\textsc{rc-PCA}) initialization method, detailed in Algorithm \ref{alg:initialize-cp}. This method efficiently produces initial estimates within the convergence zone of the iterative procedure, typically requiring only a polynomial number of attempts to achieve global convergence, as confirmed in Theorem \ref{thm:cp-initilization}.

Our initialization strategy addresses two scenarios: one with a significant gap between signal strengths $\{w_r, r \in [R]\}$, and another where all signal strengths are of similar magnitude. Consider the multi-modes matricization:
\[ {\rm mat}_S(\calB) = \sum\limits_{r=1}^R w_r \vect(\circ_{m\in S} \ba_{rm}) \vect(\circ_{m\in S^c} \ba_{rm})^\top, \]
where $S, S^c$ are complementary subsets of $[M]$. For orthogonal tensor components, signal strengths directly correspond to singular values. 
In the non-orthogonal case, Proposition \ref{prop:singular_value_gap} shows that under `mild' non-orthogonality, the difference between signal strengths and singular values remains relatively small. Thus, we use the gap between singular values as a proxy for signal strength differences in Algorithm \ref{alg:initialize-cp}.

\begin{proposition}
\label{prop:singular_value_gap}
Let $\bA_S= (\ba_{1,S},\ldots, \ba_{R,S})\in \RR^{d_S\times R}$ and $\bA_{S^c}= (\ba_{1,S^c},\ldots, \ba_{R,S^c})\in \RR^{d_{S^c}\times R}$, where $\ba_{r,S} = \vect(\circ_{m\in S} \ba_{rm}), \; \ba_{r,S^c} = \vect(\circ_{m\in S^c} \ba_{rm})$. Then
$ \mathrm{mat}_S(\calB) = \bA_S W \bA_{S^c}^\top,$
where \(W = \mathrm{diag}(w_1, w_2, \ldots, w_R)\). The measure of non-orthogonality is defined as
\[
\delta = \| \bA_S^\top \bA_S - I_R \|_2 \vee \| \bA_{S^c}^\top \bA_{S^c} - I_R \|_2.
\]
\noindent The bound on the difference between the singular values \(\lambda_r\) obtained from the SVD of \(\mathrm{mat}_S(\calB)\) and the signal strengths \(w_r\) is: $|\lambda_r - w_r| \leq \sqrt{2}\delta w_1.$
\end{proposition}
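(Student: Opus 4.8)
The plan is to compare the matrix $\mathrm{mat}_S(\calB) = \bA_S W \bA_{S^c}^\top$ with the ``idealized'' matrix $\bU_S W \bU_{S^c}^\top$ that one would obtain if the CP bases were exactly orthonormal, and then invoke Weyl's inequality for singular values. First I would observe that if $\bA_S$ and $\bA_{S^c}$ had orthonormal columns, then $\bA_S W \bA_{S^c}^\top$ would be (essentially) a singular value decomposition, so its singular values would be exactly $\{w_r\}_{r=1}^R$ (after sorting, with the convention $w_1 \ge \cdots \ge w_R$ implicit in the statement). The measure $\delta$ precisely quantifies how far $\bA_S^\top \bA_S$ and $\bA_{S^c}^\top \bA_{S^c}$ are from $I_R$; from $\|\bA_S^\top\bA_S - I_R\|_2 \le \delta$ one gets that all singular values of $\bA_S$ lie in $[\sqrt{1-\delta}, \sqrt{1+\delta}]$, and likewise for $\bA_{S^c}$.

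The key step is to produce a true SVD-like factorization or a suitable comparison matrix. Write the polar-type decomposition $\bA_S = \bU_S P_S$ where $\bU_S$ has orthonormal columns and $P_S = (\bA_S^\top \bA_S)^{1/2}$, and similarly $\bA_{S^c} = \bU_{S^c} P_{S^c}$. Then $\mathrm{mat}_S(\calB) = \bU_S (P_S W P_{S^c}) \bU_{S^c}^\top$, so the nonzero singular values of $\mathrm{mat}_S(\calB)$ coincide with the singular values of the $R\times R$ matrix $P_S W P_{S^c}$. Next I would apply Weyl's perturbation bound: $|\lambda_r - \sigma_r(W)| = |\sigma_r(P_S W P_{S^c}) - \sigma_r(W)| \le \|P_S W P_{S^c} - W\|_2$. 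It remains to bound $\|P_S W P_{S^c} - W\|_2$ by $\sqrt 2 \delta w_1$. Writing $P_S W P_{S^c} - W = (P_S - I) W P_{S^c} + W(P_{S^c} - I)$ and using $\|W\|_2 = w_1$, $\|P_{S^c}\|_2 = \sqrt{1+\delta} \le \sqrt 2$ (for $\delta \le 1$, which is the only interesting regime), plus the elementary bound $\|P_S - I\|_2 = \| (\bA_S^\top\bA_S)^{1/2} - I\|_2 \le \|\bA_S^\top\bA_S - I\|_2 \le \delta$ (since $|\sqrt{1+x}-1|\le |x|$ for $x \ge -1$), one arrives at $\|P_S W P_{S^c} - W\|_2 \le \delta w_1 \sqrt 2 + w_1 \delta$. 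A slightly more careful grouping---e.g.\ $P_S W P_{S^c} - W = (P_S - I)W(P_{S^c}-I) + (P_S - I)W + W(P_{S^c}-I)$, then bounding the first term by $\delta^2 w_1$ and the remaining two by $\delta w_1$ each, or alternatively symmetrizing---is what ultimately delivers the clean constant $\sqrt 2$; I would reconcile the bookkeeping so the stated $\sqrt 2 \delta w_1$ comes out, possibly under the implicit assumption $\delta$ small.

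The main obstacle is getting the constant exactly right: the naive triangle-inequality split gives something like $(1+\sqrt 2)\delta w_1$ or $2\delta w_1 + \delta^2 w_1$ rather than $\sqrt 2 \delta w_1$, so the proof must exploit more structure---likely that $\|P_S - I\|_2$ and $\|P_{S^c}-I\|_2$ cannot simultaneously be as large as $\delta$ in the ``wrong'' direction, or a two-sided estimate $\sqrt{1+\delta} - 1 \le \delta$ combined with a symmetric treatment of the $S$ and $S^c$ perturbations, or perhaps a direct argument bounding $\|\mathrm{mat}_S(\calB)\mathrm{mat}_S(\calB)^\top - \bU_S W^2 \bU_S^\top\|_2$ and taking square roots. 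I would also need to handle the ordering/pairing of singular values carefully (Weyl's inequality applies to sorted sequences, and the $w_r$ are assumed sorted since the bound references $w_1$ as the largest). Everything else---the polar decomposition, the eigenvalue bounds on $\bA_S^\top\bA_S$, and Weyl's inequality---is standard linear algebra that I would cite rather than reprove.
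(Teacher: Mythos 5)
Your overall structure is the right one and matches what the paper does: factor out the polar parts, observe that the singular values of $\mathrm{mat}_S(\calB) = \bU_S(P_S W P_{S^c})\bU_{S^c}^\top$ are those of $P_S W P_{S^c}$ (since $\bU_S, \bU_{S^c}$ have orthonormal columns), and then invoke Weyl's inequality to reduce the claim to $\|P_S W P_{S^c} - W\|_2 \le \sqrt{2}\,\delta\, w_1$. This is exactly the reduction the paper carries out, implicitly, via Lemma~\ref{lemma-transform-ext}: that lemma (stated with $\bA=\bA_S$, $\bB=\bA_{S^c}$, $\bQ=W$) asserts $\|\bA_S W \bA_{S^c}^\top - \bU W \bV^\top\|_2 \le \sqrt{2}\,\delta\,\|W\|_2$, and Weyl applied to $\bU W\bV^\top$ (a genuine SVD with singular values $w_1\ge\cdots\ge w_R$) gives the proposition. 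Equation~\eqref{eq1:thm:initial} in Proof~II of Appendix~\ref{append:proof:initia} records exactly this.

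The gap is the one you honestly flag: you do not establish $\|P_S W P_{S^c} - W\|_2 \le \sqrt{2}\,\delta\, w_1$. All of the decompositions you try produce a strictly weaker constant. The split $E_S W + W E_{S^c} + E_S W E_{S^c}$ with $\|E_S\|\vee\|E_{S^c}\|\le\delta$ gives $(2\delta+\delta^2)w_1$; the ``better grouping'' $P_S W P_{S^c}-W=\tfrac12[(P_S-I)W(P_{S^c}+I)+(P_S+I)W(P_{S^c}-I)]$ gives $(1-\sqrt{1-\delta})(1+\sqrt{1+\delta})w_1$, which equals $\delta w_1 + O(\delta^2)$ for small $\delta$ but at $\delta=1$ is $(1+\sqrt2)w_1 > \sqrt2 w_1$. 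So neither variant you propose delivers the stated constant uniformly in $\delta$. The paper does not re-derive the $\sqrt{2}$ either: it is imported wholesale from Lemma~\ref{lemma-transform-ext}, which the appendix attributes to Propositions 3 and 5 of \cite{han2023tensor}. To close your argument you would need to either cite that lemma (as the paper does) or reproduce its proof; the triangle-inequality bookkeeping you sketch will not get you there. It is worth noting, though, that the exact constant $\sqrt{2}$ is inessential downstream---the proposition is used only to justify treating singular-value gaps as a proxy for signal-strength gaps---so the weaker $(2\delta+\delta^2)w_1$ bound you do obtain would serve the same purpose, just not as the stated claim.
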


Algorithm \ref{alg:initialize-cp} first employs the Composite PCA (CPCA) algorithm from \cite{han2023cp} to address the large gap scenario. For cases where CPCA fails due to insufficient signal strength gaps, Algorithm \ref{alg:initialize-cp} then call Procedure \ref{alg:initialize-random}. This approach involves two key steps. Initially, it uses random projection to create a sufficient gap between the first and second largest singular values, satisfying the gap condition in CPCA. This allows CPCA to extract the CP basis corresponding to the leading singular value of the projected tensor. Subsequently, it applies clustering to extract CP bases. As demonstrated in Proof III of Appendix \ref{append:proof:initia}, with high probability, applying random projection $L$ times effectively cover all good initializations of the CP bases with guaranteed upper bounds. 
\begin{remark}
Finally, given the estimated discriminant tensor $\hat\calB^{\rm cp}$, the high-dimensional tensor CP-LDA rule is given by
\vspace{1ex}
\begin{equation}
\hat\Upsilon_{\rm cp}(\cZ) = \bbone\left\{ \langle \cZ - (\widebar\calX^{(1)}+\widebar\calX^{(2)})/2, \; \hat\calB^{\rm cp} \rangle + \log(\hat\pi_2/\hat\pi_1) \ge 0 \right\}.   \label{eqn:lda-rule-cp}
\end{equation}
It assigns $\cZ$ to class $\cC_1$ with $\hat\Upsilon_{\rm cp}(\cZ)=0$ or $\cC_2$ with $\hat\Upsilon_{\rm cp}(\cZ)=1$.
\end{remark}

\begin{algorithm}[h!]
    \SetKwInOut{Input}{Input}
    \SetKwInOut{Output}{Output}
    \Input{Initial tensor $\hat\calB$, CP rank $R$, $S \subset [M]$, small constant $0<c_0<1$.}

    If $S=\emptyset$, pick $S$ to maximize $\min(d_S,d/d_S)$ with $d_S = \prod_{m\in S}d_m$ and $d = \prod_{m=1}^M d_m$.

    Unfold $\hat\calB$ to be a $d_S\times (d/d_S)$ matrix ${\rm mat}_S(\hat\calB)$.

    Compute  $\widehat\lambda_r,\widehat \bu_r, \widehat \bv_r$, $1\le r\le R$ as the top $R$ components in the SVD ${\rm mat}_S(\hat\calB) = \sum_{r}\widehat\lambda_r \widehat \bu_r \widehat \bv_r^\top$. Set $\widehat \lambda_0=\infty$ and $\widehat\lambda_{R+1}=0$.

\If{$\min\{|\widehat\lambda_r-\widehat\lambda_{r-1}|,|\widehat\lambda_r-\widehat\lambda_{r+1}| \} > c_0 \widehat\lambda_R $  }{
 Compute $\widehat \ba_{rm}^{\rm rcpca}$ as the top left singular vector of $\matk (\widehat \bu_r)\in\RR^{d_m\times (d_S/d_m)}$ for $m \in S$, or $\matk (\widehat \bv_r)\in\RR^{d_m\times (d_{S^c}/d_m)}$ for $m \in S^c$.
}
\Else{
Form disjoint index sets $I_1,...,I_N$ from all contiguous indices $1\le r\le R$ that do not satisfy the above criteria of the eigengap.

For each $I_j$, form $d_{S}\times (d/d_{S})$ matrix $\Xi_{j}=\sum_{\ell\in I_j} \widehat \lambda_{\ell} \widehat \bu_{\ell}\widehat \bv_{\ell}^\top$, and formulate it into a tensor $\Xi_j\in \RR^{d_1\times\cdots\times d_M}$. Then run Procedure \ref{alg:initialize-random} on $\Xi_j$ to obtain $\widehat \ba_{rm}^{\rm rcpca}$ for all $r\in I_j,1\le m\le M$.
}

    \Output{Warm initialization $\hat \ba_{rm}^{\rm rcpca}, 1\le r\le R, 1\le m\le M$}
    \caption{Randomized Composite PCA (\textsc{rc-PCA})}
    \label{alg:initialize-cp}
\end{algorithm}
\vspace{-10pt}

\SetAlgorithmName{Procedure}{procedure}{List of Procedures}
\begin{algorithm}[h!]
\caption{Randomized Projection}\label{alg:initialize-random}
    \SetKwInOut{Input}{Input}
    \SetKwInOut{Output}{Output}
    \Input{Noisy tensor $\Xi\in\RR^{d_1\times\cdots\times d_M}$, rank $s$, $S_1 \subset [M]\backslash\{1\}$, number of random projections $L$, tuning parameter $\nu$.}

    If $S_1=\emptyset$, pick $S_1$ to maximize $\min(d_{S_1},d_{-1}/d_{S_1})$ with $d_{S_1} = \prod_{m\in S_1}d_m$ and $d_{-1} = \prod_{m=2}^M d_m$. Let $S_1\vee S_1^c = [M]\backslash\{1\}$.
    
   \For{$\ell = 1$ to $L$}{
        Randomly draw a standard Gaussian vector $\theta\sim\cN(0, I_{d_1})$. 
        
        Compute $\Xi\times_{1}\theta$, unfold it to be $d_{S_1}\times (d_{-1}/d_{S_1})$ matrix, and compute its leading singular value and singular vector $\eta_\ell,\widetilde \bu_{\ell}, \widetilde \bv_{\ell}$. 
        
        Compute $\widetilde \ba_{\ell m}$ as the top left singular vector of $\matk(\widetilde  \bu_{\ell})\in\RR^{d_m\times (d_{S_1}/d_m)}$ for $m \in S_1$, or $\matk (\widetilde \bv_{\ell})\in\RR^{d_m\times (d_{S_1^c}/d_m)}$ for $m \in S_1^c$. 
        
        Compute $\widetilde \ba_{\ell 1}$ as the top left singular vector of $\Xi\times_{m=2}^M \widetilde  \ba_{\ell m}$.
        
        Add the tuple $(\widetilde \ba_{\ell m},1\le m\le M)$ to $\cS_L$. 
    }
    \For{$r = 1$ to $s$}{
        Among the remaining tuples in $\cS_L$, choose one tuple $(\widetilde \ba_{\ell m},1\le m\le M)$ that correspond to the largest $\|\Xi\times_{m=1}^{M} \widetilde  \ba_{\ell m}\|_2$. Set it to be $\widehat \ba_{rm}^{\rm rcpca}=\widetilde \ba_{\ell m}$.

        Remove all the tuples with $\max_{1\le m\le M} |\widetilde  \ba_{\ell' m}^\top \widehat \ba_{rm}^{\rm rcpca}|>\nu$.
    }

    \Output{Warm initialization $\hat \ba_{rm}^{\rm rcpca}, 1\le r\le s, 1\le m\le M$}
\end{algorithm}
\SetAlgorithmName{Algorithm}{algorithm}{List of Algorithms}

\section{Theoretical Results}
\label{sec:theorems}

In this section, we delve into the statistical attributes of the algorithms introduced previously. Our theoretical framework offers guarantees for consistency and outlines the statistical error rates for the estimated discriminant tensor and the misclassification error, given certain regularity conditions. 
We define $d=\prod_{m=1}^M d_m$, $d_{-m}=d/d_m$, $d_{\min}=\min\{d_1,\cdots,d_M\}$. 
We use $\norm{\hat\ba_{rm}{\hat\ba_{rm}}^{\top} - \ba_{rm}\ba_{rm}^{\top}}=\sqrt{1-(\hat\ba_{rm}^\top\ba_{rm})^2}$ to measure the distance between $\hat\ba_{rm}$ and $\ba_{rm}$. Assume $w_1 \ge w_2 \ge \cdots \ge w_R$. As $\| \ba_{im}\|_2^2=1$, the correlation among columns of $\bA_m$ can be measured by 
\begin{align}\label{corr-k}
\delta_m = \| \bA_m^\top \bA_m - I_{R}\|_{2}. 
\end{align}
We denote $\delta_{\max} =\max\{\delta_1,\cdots, \delta_M \}$.

\medskip
\noindent
\textbf{Global convergence.} We first consider the global convergence rate of Algorithm \ref{alg:tensorlda-cp}, given a suitable warm initialization.
Define
\begin{align}
\alpha&=\sqrt{(1-\delta_{\max})(1-1/(4R))}-(R^{1/2}+1)\psi_0, \label{eq:alpha} \\  
\rho &= 2\alpha^{1-M}\sqrt{R-1}(w_1/w_R) \psi_{0}^{M-2}      \label{eq:rho} , \\
\psi^{\rm ideal} &= \frac{ \sqrt{ \sum_{k=1}^M d_k}}{\sqrt{n}  w_R}   +   \frac{w_1 }{ w_R} \max_{1\le k\le M} \sqrt{\frac{d_k}{n  d_{-k}}} .
\end{align}
The following theorem provides the convergence rates for the CP low rank discriminant tensor. 

\begin{theorem}\label{thm:cp-converge}
Suppose there exist constant $C_0 > 0$ such that $C_0^{-1} \leq \lambda_{\min}(\otimes_{m=1}^M \Sigma_m) \leq \lambda_{\max}(\otimes_{m=1}^M \Sigma_m) \leq C_0,$ where $\lambda_{\min}(\cdot)$ and $\lambda_{\max}(\cdot)$ being the smallest and largest eigenvalues of a matrix, respectively. Assume 
$n_{1} \asymp n_{2}$ and $n=\min\{n_{1},n_{2}\}$.
Let $\Omega_0 = \{\underset{r\in[R],m\in[M]}{\max}\|\hat\ba_{rm}^{(0)}\hat\ba_{rm}^{(0)\top} - \ba_{rm}\ba_{rm}^{\top})\|_2 \le \psi_0 \}$ for any initial estimates $\hat\ba_{rm}^{(0)}$. Suppose $\alpha>0$, $\rho<1$, with the quantities defined in \eqref{eq:alpha} and \eqref{eq:rho}.
After at most $T=O(\log \log (\psi_0/\psi^{\rm ideal}))$ iterations, Algorithm \ref{alg:tensorlda-cp} produces final estimates $\hat{\ba}_{rm}$ for $r\in[R]$ and $m\in[M]$ satisfying 
\begin{align} 
\norm{\hat{\ba}_{rm}\hat{\ba}_{rm}^\top - \ba_{rm}\ba_{rm}^{\top}}_2 &\le C\frac{ \sqrt{\sum_{k=1}^M d_k}}{\sqrt{n}  w_r}   +   C\frac{w_1 }{ w_r} \max_{1\le k\le M} \sqrt{\frac{d_k}{n  d_{-k}}}, \label{eqn:a-bound} \\
\abs{\hat w_r -w_r} &\le C\frac{ \sqrt{\sum_{k=1}^M d_k}}{\sqrt{n} }   +   C w_1  \max_{1\le k\le M} \sqrt{\frac{d_k}{n  d_{-k}}}, \label{eqn:w-bound} \\
\norm{\hat \cB^{\rm cp} - \cB}_{\rm F} &\le C\frac{ \sqrt{\sum_{k=1}^M d_k R}}{\sqrt{n} }   +   C w_1\sqrt{R}  \max_{1\le k\le M} \sqrt{\frac{d_k}{n  d_{-k}}}, \label{eqn:b-bound}
\end{align}
with probability at least $\PP(\Omega_0) - n^{-c} -\sum_{m=1}^M\exp(-c d_m)$. 
\end{theorem}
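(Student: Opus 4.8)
The plan is to separate a purely deterministic contraction analysis of Algorithm~\ref{alg:tensorlda-cp}, valid on a favourable event, from the stochastic control of that event. Abbreviate $\psi^{(t)}=\max_{r\in[R],m\in[M]}\norm{\hat\ba_{rm}^{(t)}\hat\ba_{rm}^{(t)\top}-\ba_{rm}\ba_{rm}^\top}_2$, so that $\Omega_0=\{\psi^{(0)}\le\psi_0\}$. Everything is driven by two functionals of the perturbation $\hat\calB-\calB$: the uniform projected perturbation $\epsilon_\star=\max_{m\in[M]}\sup_{\norm{\bv_\ell}_2\le1}\norm{(\hat\calB-\calB)\times_{\ell\ne m}\bv_\ell^\top}_2$ and the one-mode spectral perturbation $\epsilon_{\rm op}=\max_{m\in[M]}\norm{\matk(\hat\calB-\calB)}_2$. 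Step~3 below proves that on an event $\Omega_1$ of probability at least $1-n^{-c}-\sum_m\exp(-cd_m)$ one has $\epsilon_\star\vee\epsilon_{\rm op}\lesssim\sqrt{(\sum_k d_k)/n}+w_1\max_k\sqrt{d_k/(nd_{-k})}$; note that $\epsilon_\star/w_r$ then has precisely the form of the right-hand side of \eqref{eqn:a-bound}, so the whole argument reduces to showing that the iterates reach this noise floor.

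\emph{Steps 1--2 (deterministic contraction on $\Omega_0\cap\Omega_1$).} First, \eqref{eq:alpha} furnishes a lower bound $\alpha$ on $\sigma_{\min}(\hat\bA_\ell^{(t)})$, uniformly in $\ell$ and in $t$ while $\psi^{(t)}\le\psi_0$: start from $\sigma_{\min}(\bA_\ell)^2\ge1-\delta_{\max}$ by \eqref{corr-k} and lose only an $R$-dependent constant factor together with an additive $(\sqrt R+1)\psi_0$ for replacing $\bA_\ell$ by $\hat\bA_\ell$ (a Weyl estimate, using $\norm{\hat\ba_{r\ell}-\ba_{r\ell}}_2\lesssim\psi^{(t)}$ up to sign). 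Hence $\alpha>0$ makes the right inverses well defined with $\norm{\hat\bb_{r\ell}}_2\le\norm{\hat\bB_\ell}_2\le\alpha^{-1}$, and $\hat\ba_{r\ell}^\top\hat\bb_{r\ell}=1$, $\hat\ba_{i\ell}^\top\hat\bb_{r\ell}=0$ give $\abs{\ba_{r\ell}^\top\hat\bb_{r\ell}-1}\lesssim\alpha^{-1}\psi^{(t)}$ and $\abs{\ba_{i\ell}^\top\hat\bb_{r\ell}}\lesssim\alpha^{-1}\psi^{(t)}$ for $i\ne r$, while $\norm{\tilde\be_{rm}}_2\le\alpha^{-(M-1)}\epsilon_\star$ since $\tilde\be_{rm}$ is $\hat\calB-\calB$ projected along $M-1$ vectors of norm $\le\alpha^{-1}$. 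Feeding these into $\bz_{rm}^{(t)}=\sum_{i=1}^R\tilde w_{i,r}\ba_{im}+\tilde\be_{rm}$ (the rearrangement of the projection-error identity preceding Algorithm~\ref{alg:tensorlda-cp}): the matched coefficient satisfies $\tilde w_{r,r}=w_r\prod_{\ell\ne m}\ba_{r\ell}^\top\hat\bb_{r\ell}\gtrsim w_r$, the cross terms obey $\norm{\sum_{i\ne r}\tilde w_{i,r}\ba_{im}}_2\le\sqrt{R-1}\,w_1(C\alpha^{-1}\psi^{(t)})^{M-1}$, and a routine $\sin\Theta$ bound ($\bz=c\bu+\bw$, $\norm{\bu}_2=1$, $c>0$, $\norm{\bw}_2\le c/2\Rightarrow\norm{\bz\bz^\top/\norm{\bz}_2^2-\bu\bu^\top}_2\le2\norm{\bw}_2/c$) gives
\[
\psi^{(t+1)}\ \le\ C\sqrt{R-1}\,\frac{w_1}{w_R}\Big(\frac{\psi^{(t)}}{\alpha}\Big)^{M-1}+C\,\frac{\epsilon_\star}{\alpha^{M-1}w_R}\ \le\ \rho\,\psi^{(t)}+C\,\frac{\epsilon_\star}{\alpha^{M-1}w_R},
\]
where the last step uses $\psi^{(t)}\le\psi_0$ and \eqref{eq:rho}. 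Since $\rho<1$, iterating drives $\psi^{(t)}$ to order $\epsilon_\star/w_R\asymp\psi^{\rm ideal}$ within $T=O(\log\log(\psi_0/\psi^{\rm ideal}))$ sweeps; keeping the exact $w_r$ in the matched coefficient at the last iterate upgrades this to the component-wise bound \eqref{eqn:a-bound}. Comparing $\hat w_r=\abs{\hat\calB\times_{m=1}^M\hat\bb_{rm}^\top}$ with $w_r=\abs{\calB\times_{m=1}^M\bb_{rm}^\top}$ and using the component-wise \eqref{eqn:a-bound} to absorb the factor-$w_r$ contributions from the imperfect projectors gives $\abs{\hat w_r-w_r}\lesssim\epsilon_\star$, i.e.\ \eqref{eqn:w-bound}; and \eqref{eqn:b-bound} follows by writing $\hat\calB^{\rm cp}-\calB=\sum_r(\hat w_r\,\hat\ba_{r1}\circ\cdots\circ\hat\ba_{rM}-w_r\,\ba_{r1}\circ\cdots\circ\ba_{rM})$, telescoping each rank-one discrepancy across the $M$ modes, and using $\delta_{\max}=O(1)$ (near-orthogonality) to combine the $R$ summands with a factor $\sqrt R$ rather than $R$.

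\emph{Step 3 (perturbation bounds; the crux).} Split, one mode at a time, $\hat\calB-\calB=\calE_{\rm mean}+\calE_{\rm cov}$ with $\calE_{\rm mean}=(\widebar\calX^{(2)}-\widebar\calX^{(1)}-\cD)\mttimes\Sigma_m^{-1}$, $\calE_{\rm cov}=(\widebar\calX^{(2)}-\widebar\calX^{(1)})\mttimes\hat\Sigma_m^{-1}-(\widebar\calX^{(2)}-\widebar\calX^{(1)})\mttimes\Sigma_m^{-1}$, $\cD=\cM_2-\cM_1$. Since $\vect(\widebar\calX^{(2)}-\widebar\calX^{(1)}-\cD)\sim\cN(0,(n_1^{-1}+n_2^{-1})\otimes_{m=1}^M\Sigma_m)$, $\vect(\calE_{\rm mean})$ is centred Gaussian with covariance $(n_1^{-1}+n_2^{-1})\otimes_{m=1}^M\Sigma_m^{-1}\preceq(C/n)I_d$ by the eigenvalue hypothesis, so a net argument over the product of unit spheres plus Gaussian concentration yields $\sup_{\norm{\bv_\ell}_2\le1}\norm{\calE_{\rm mean}\times_{\ell\ne m}\bv_\ell^\top}_2\lesssim\sqrt{(\sum_k d_k)/n}$ and $\norm{\matk(\calE_{\rm mean})}_2\lesssim\sqrt{(d_m+d_{-m})/n}$, each with probability $1-\exp(-c\sum_k d_k)$. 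For $\calE_{\rm cov}$, a first-order expansion gives $\calE_{\rm cov}=-\sum_m\cD\times_m(\Sigma_m^{-1}(\hat\Sigma_m-\Sigma_m)\Sigma_m^{-1})\times_{\ell\ne m}\Sigma_\ell^{-1}+(\text{higher order})$; because the mode-$m$ matricization of each tensor sample contributes $d_{-m}$ columns, $\hat\Sigma_m$ is effectively an average over $\asymp nd_{-m}$ Gaussian vectors, so sample-covariance concentration — with the intra-sample correlation carried by $\otimes_{\ell\ne m}\Sigma_\ell$ handled by a Hanson--Wright/chaining argument — yields $\norm{\hat\Sigma_m-\Sigma_m}_2\lesssim\norm{\Sigma_m}_2(\sqrt{d_m/(nd_{-m})}+d_m/(nd_{-m}))$ with probability $1-\exp(-cd_m)$; combined with $\norm{\matk(\cD)}_2\le\prod_\ell\norm{\Sigma_\ell}_2\norm{\matk(\calB)}_2\lesssim w_1$ (the leading singular value of $\matk(\calB)$ lying within $\sqrt2\,\delta_{\max}w_1$ of $w_1$ by Proposition~\ref{prop:singular_value_gap}) and with the bounded eigenvalues, this bounds $\calE_{\rm cov}$ by $w_1\max_k\sqrt{d_k/(nd_{-k})}$ in both norms. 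The scalar $\hat C_\sigma$ concentrates at $1$ at the same rate and perturbs $\calE_{\rm cov}$ only at lower order. Adding the two pieces gives the $\epsilon_\star,\epsilon_{\rm op}$ bounds used above; intersecting $\Omega_0$ with these $O(RM)$ events produces the stated probability $\PP(\Omega_0)-n^{-c}-\sum_m\exp(-cd_m)$, the $n^{-c}$ absorbing the polynomial losses from union bounds over $r$.

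\emph{Main obstacle.} The genuinely hard part is Step~3 for $\calE_{\rm cov}$: the perturbation $\hat\calB-\calB$ is not an i.i.d.\ Gaussian tensor but a whitened mean difference contaminated by the \emph{data-dependent} whitening error $\hat\Sigma_m^{-1}-\Sigma_m^{-1}$, whose matricizations are correlated across modes and with the mean difference, under only two-sided control of $\lambda_{\min},\lambda_{\max}$ of $\otimes_{m=1}^M\Sigma_m$. Sharp spectral and injective-norm bounds for such a cross-correlated object — not available from the i.i.d.-noise arguments standard in tensor decomposition — are exactly the new perturbation analysis advertised in the introduction; the additive split $\calE_{\rm mean}+\calE_{\rm cov}$ and the viewpoint that $\hat\Sigma_m$ averages $\asymp nd_{-m}$ effective Gaussian samples are the devices that make it tractable. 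A secondary but essential point is that the projectors $\hat\bb_{r\ell}^{(t)}$ inside $\tilde\be_{rm}$ are data-dependent, which is why the noise is routed through the uniform functional $\epsilon_\star$ rather than a fixed-vector Gaussian tail bound.
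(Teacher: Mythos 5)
Your proposal is correct and tracks the paper's own proof closely: a deterministic contraction step keyed to $\alpha$ and $\rho$, noise routed through the injective norm (your $\epsilon_\star$ is precisely what the paper calls $\|\Delta\|=\max_{v_\ell\in\mathbb{S}^{d_\ell-1}}|\cE\times_{\ell=1}^M v_\ell^\top|$), and a split of $\hat\calB-\calB$ into a whitened Gaussian mean piece and a covariance-estimation piece, with $\hat\Sigma_m$ concentrating at the effective sample size $nd_{-m}$ (the paper's Lemma~\ref{lemma:precision matrix}). Two places in your write-up are looser than the paper's treatment, though neither breaks the argument. First, your displayed chain $\psi^{(t+1)}\le\rho\,\psi^{(t)}+C\epsilon_\star/(\alpha^{M-1}w_R)$ with $\rho$ held fixed would, iterated as written, give only $O(\log)$ rounds; the $O(\log\log)$ count requires iterating the intermediate superlinear form $\psi^{(t+1)}\lesssim(\psi^{(t)})^{M-1}$ directly, and the paper's Step~II makes this precise by tracking a Fibonacci-type recursion $T_{k+1}=1+\sum_{\ell=1}^{M-1}T_{k+1-\ell}$ for the exponent, with mode-by-mode (Gauss--Seidel) updates within each sweep rather than your coarser per-sweep bound. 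Second, where you use a two-piece additive split $\calE_{\rm mean}+\calE_{\rm cov}$ plus a first-order expansion and ``(higher order)'' for $\calE_{\rm cov}$, the paper instead expands $\hat\calB-\calB$ exactly into all $2^{M+1}-1$ cross terms between $\{\Sigma_m^{-1},\Delta_m\}$ per mode and the random-versus-deterministic mean factor, making the higher-order remainder explicit; and the paper's deterministic step carries an extra factor $\phi_{t,m-1}$ on the $\sum_k\sqrt{d_k}/\sqrt{n}$ contribution so it can in principle be decoupled from the smaller $\sqrt{d_m}/\sqrt{n}$ term, though since $\eta^{\rm ideal}$ is ultimately evaluated at $\phi=1$ this refinement does not sharpen the stated rate, and your uniform $\epsilon_\star$ bound is exactly as tight for the theorem as stated.
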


The statistical convergence rates in \eqref{eqn:a-bound}, \eqref{eqn:w-bound}, \eqref{eqn:b-bound} contain two parts. The first part is similar to the rate in conventional tensor CP decomposition. The second part comes from the estimation accuracy of the mode-$k$ sample precision matrices $\hat\Sigma_k^{-1}$.
As we update the estimation of each individual CP basis vector $\ba_{rm}$ separately in the algorithm, this would remove the bias on the non-orthogonality of the loading vectors.
The theoretical analysis of Algorithm \ref{alg:tensorlda-cp} will be based on various large deviation inequalities and a spectrum analysis of the projected discriminant tensor estimator using spectrum perturbation theorems \citep{wedin1972perturbation,cai2018rate,han2020iterative}. It is worth noting that our analysis accommodates correlated entries in the noise tensor, in contrast to most existing work in tensor decomposition that assumes i.i.d. entries \citep{zhang2018tensor, wang2020learning}.
The proof of Theorem \ref{thm:cp-converge} is provided in Section \ref{sec:proof-cp-converge} in the supplementary material.

In Theorem \ref{thm:cp-converge}, $\psi_0$ is the required accuracy of the initial estimator of CP basis vectors $\ba_{rm}$. When $\psi_0$ is sufficiently small, the conditions $\alpha>0$, $\rho<1$ hold. In view of the definition of $\alpha$ in \eqref{eq:alpha}, 
condition $\alpha>0$ requires $R^{1/2}\psi_0$ be small, with an extra factor $R^{1/2}$ on the initial error in the estimation of individual basis vectors. This is a technical issue due to the need to invert the estimated $A_m^\top A_m$ in our analysis to construct the mode-$m$ projection in Algorithm \ref{alg:tensorlda-cp}. Although the condition $\rho<1$ with $\rho$ defined in \eqref{eq:rho} may seem complex, it is designed to ensure the error contraction effect in each iteration. This ensures that as iterations progress, the error bound will approach the desired statistical upper bound.

\medskip
\noindent
\textbf{Convergence of initialization.}
Theorem \ref{thm:cp-initilization} below presents the performance bounds of the \textsc{rc-PCA} initialization in Algorithm \ref{alg:initialize-cp}, which depends on the coherence (the degree of non-orthogonality) of the CP bases. For simplicity, we focus on two cases: one with a sufficiently large eigengap, and another where all signals $w_r$ are of the same order.
Recall $ \bA_m= (\ba_{1m},\ldots,\ba_{Rm})$. We use
\begin{align}\label{corr-all}
\delta = \| \bA_S^\top \bA_S - I_{R}\|_{2} \vee \| \bA_{S^c}^\top \bA_{S^c} - I_{R}\|_{2}
\end{align}
to measure the correlation of the matrix $\bA_S= (\ba_{1,S},\ldots, \ba_{R,S})\in \RR^{d_S\times R}$ and $\bA_{S^c}= (\ba_{1,S^c},\ldots, \ba_{R,S^c})\in \RR^{d_{S^c}\times R}$, where $S$ maximizes $\min(d_S,d/d_S)$ with $d_S=\prod_{m\in S} d_m, d_{S^c}=\prod_{m\in S^c} d_m$, and $\ba_{r,S} = \vect(\circ_{m\in S} \ba_{rm}), \ba_{r,S^c} = \vect(\circ_{m\in S^c} \ba_{rm})$.

\begin{theorem}\label{thm:cp-initilization}
Suppose there exist constant $C_0 > 0$ such that $C_0^{-1} \leq \lambda_{\min}(\otimes_{m=1}^M \Sigma_m) \leq \lambda_{\max}(\otimes_{m=1}^M \Sigma_m) \leq C_0,$ where $\lambda_{\min}(\cdot)$ and $\lambda_{\max}(\cdot)$ being the smallest and largest eigenvalues of a matrix, respectively. Assume 
$n_{1} \asymp n_{2}$ and $n=\min\{n_{1},n_{2}\}$.

\noindent (i). The eigengaps satisfy $\min\{w_r-w_{r+1},w_{r}-w_{r-1}\} \le c_0 w_R$ for all $1\le r\le R$, with $w_0=\infty, w_{R+1}=0$, and $c_0$ is sufficiently small constant. With probability at least $1-n^{-c}-\sum_{m=1}^M \exp(-c d_m)$, the following error bound holds for the estimation of the loading vectors $\ba_{rm}$ using CPCA in Algorithm \ref{alg:initialize-cp}, 
\begin{align}\label{thm:initial:eq1}
\|\widehat \ba_{rm}^{\rm rcpca}\widehat \ba_{rm}^{\rm rcpca\top}  - \ba_{rm} \ba_{rm}^\top \|_{2} &\le \left(1+2\sqrt{2}(w_1/w_R)\right)\delta+C \phi_{1},
\end{align}
for all $1\le r\le R$, $1\le m\le M$, where $C$ is some positive constants, and
\begin{align}\label{thm:initial:eq2}
\phi_{1} &= \frac{\sqrt{d_S}+\sqrt{d_{S^c}}}{\sqrt{n} w_R} +  \frac{w_1}{w_R} \max_{1\le k\le M} \sqrt{\frac{d_k}{n  d_{-k}}}  .    
\end{align}

\noindent (ii). The eigengaps condition in (i) is not satisfied. Assume $w_1\asymp w_R$ and the number of random projections $L\ge C_0 d_1^2 \vee C_0 d_1 R^{2(w_1/w_R)^2}$. With probability at least $1-n^{-c}-d_1^{-c}-\sum_{m=2}^M e^{-c d_m}$, the following error bound holds for the estimation of the loading vectors $\ba_{rm}$ using Algorithm \ref{alg:initialize-cp},
\begin{align}\label{thm:initial:eq1*}
\|\widehat \ba_{rm}^{\rm rcpca}\widehat \ba_{rm}^{\rm rcpca\top}  - \ba_{rm} \ba_{rm}^\top \|_{2} &\le C \sqrt{ \delta_{\max} }+ C \sqrt{\phi_{2} }.
\end{align}
where $C$ is some positive constants, and
\begin{align}\label{thm:initial:eq3}
\phi_{2} &= \frac{\sqrt{d_1 d_{S_1}}+\sqrt{d_1 d_{S_1^c}} }{\sqrt{n} w_R} +  \frac{w_1}{w_R} \max_{1\le k\le M} \sqrt{\frac{d_k}{n  d_{-k}}}  .    
\end{align}
\end{theorem}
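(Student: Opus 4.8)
The plan is to decouple a deterministic spectral-perturbation argument from the stochastic control of the noise tensor $\calE:=\widehat\calB-\calB$, and to reduce part (ii) to repeated applications of part (i) on the randomly projected and clustered subtensors produced by Procedure~\ref{alg:initialize-random}.

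\textbf{Step 1: decomposition of the noise tensor.} Writing $\calD=\calM_2-\calM_1$, split
\[
\calE=\underbrace{(\widebar\calX^{(2)}-\widebar\calX^{(1)}-\calD)\times_{m=1}^M\widehat\Sigma_m^{-1}}_{=:\,\calE_{\mathrm{mean}}}\;+\;\underbrace{\Big(\calD\times_{m=1}^M\widehat\Sigma_m^{-1}-\calD\times_{m=1}^M\Sigma_m^{-1}\Big)}_{=:\,\calE_{\mathrm{cov}}},
\]
and telescope $\calE_{\mathrm{cov}}$ mode by mode. Conditionally on $\{\widehat\Sigma_m\}$, $\calE_{\mathrm{mean}}$ is tensor normal with covariance $\asymp n^{-1}\otimes_m\widehat\Sigma_m$, so its balanced flattening $\mathrm{mat}_S(\calE_{\mathrm{mean}})$ is an (approximately) $d_S\times d_{S^c}$ Gaussian-type matrix with $\|\mathrm{mat}_S(\calE_{\mathrm{mean}})\|_2\lesssim(\sqrt{d_S}+\sqrt{d_{S^c}})/\sqrt n$, and its mode-wise projections against fixed unit vectors are $\lesssim\sqrt{\sum_k d_k}/\sqrt n$ — the first term of $\phi_1$. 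For $\calE_{\mathrm{cov}}$ I use the tensor-normal covariance concentration $\|\widehat\Sigma_m-\Sigma_m\|_2\lesssim\sqrt{d_m/(n d_{-m})}$, the hypothesis $C_0^{-1}\le\lambda_{\min}(\otimes_m\Sigma_m)\le\lambda_{\max}(\otimes_m\Sigma_m)\le C_0$, and $\|\calB\|\lesssim w_1$, to bound the relevant flattened and projected norms of $\calE_{\mathrm{cov}}$ by $w_1\max_k\sqrt{d_k/(n d_{-k})}$ — the second term of $\phi_1$ (and of $\phi_2$). These bounds, valid off an event of probability $n^{-c}+\sum_m\exp(-c d_m)$, are what replace the i.i.d.-noise estimates of \cite{han2023cp}.

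\textbf{Step 2: part (i), CPCA branch.} By Proposition~\ref{prop:singular_value_gap} the singular values of $\mathrm{mat}_S(\calB)=\bA_S W\bA_{S^c}^\top$ satisfy $|\lambda_r-w_r|\le\sqrt2\,\delta w_1$, so the eigengap hypothesis of part (i) passes to a gap $\gtrsim w_R$ among the $\lambda_r$. Applying Wedin's theorem \citep{wedin1972perturbation} to $\mathrm{mat}_S(\widehat\calB)=\mathrm{mat}_S(\calB)+\mathrm{mat}_S(\calE)$ bounds $\|\widehat\bu_r\widehat\bu_r^\top-\bu_r\bu_r^\top\|_2\lesssim\|\mathrm{mat}_S(\calE)\|_2/w_R\lesssim\phi_1$, where $\bu_r$ is the $r$-th left singular vector of $\mathrm{mat}_S(\calB)$, and a direct $\delta$-perturbation shows $\bu_r$ is within $O(\delta)$ of $\ba_{r,S}/\|\ba_{r,S}\|_2$. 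Since $\matk\!\big(\ba_{r,S}/\|\ba_{r,S}\|_2\big)$ is exactly rank one with left singular vector $\ba_{rm}$, a second Wedin step applied to $\matk(\widehat\bu_r)$ (and $\matk(\widehat\bv_r)$ for $m\in S^c$) yields \eqref{thm:initial:eq1}; the coefficient $1+2\sqrt2\,w_1/w_R$ is precisely the direct $\delta$-term plus the singular-value slack $\sqrt2\,\delta w_1$ divided by the $\gtrsim w_R$ gap.

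\textbf{Step 3: part (ii), randomized branch.} Now $w_1\asymp w_R$ with no usable gap, so each contiguous block $\Xi_j$ is sent to Procedure~\ref{alg:initialize-random}. After the mode-$1$ projection, $\calB\times_1\theta=\sum_r w_r(\ba_{r1}^\top\theta)\,\circ_{m=2}^M\ba_{rm}$ is an order-$(M-1)$ CP tensor with signal strengths $w_r|\ba_{r1}^\top\theta|$, where $(\ba_{r1}^\top\theta)_{r\in[R]}$ is a correlated standard Gaussian vector. The key probabilistic ingredient is a Gaussian anti-concentration/comparison estimate: for each fixed $r$, with probability $\gtrsim R^{-c(w_1/w_R)^2}$ over $\theta$ the entry $w_r|\ba_{r1}^\top\theta|$ dominates all others by a factor $1+c$, creating an $\asymp w_R$ gap between the top two singular values of $\mathrm{mat}_{S_1}(\calB\times_1\theta)$. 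A coupon-collector bound over the $L\gtrsim d_1 R^{2(w_1/w_R)^2}$ independent projections then guarantees, with probability $1-n^{-c}-d_1^{-c}-\sum_{m\ge2}e^{-c d_m}$, that every basis $r$ is captured by at least one projection; on such a projection Step~2 (now with the manufactured gap) recovers $\ba_{rm}$ for $m\ge2$ from $\widetilde\bu_\ell,\widetilde\bv_\ell$, and then $\ba_{r1}$ as the leading left singular vector of $\Xi\times_{m=2}^M\widetilde\ba_{\ell m}$, whose mode-$1$ slice equals $w_r\ba_{r1}$ up to lower-order terms. The greedy selection in $\cS_L$ (largest $\|\Xi\times_{m=1}^M\widetilde\ba_{\ell m}\|_2$, then purge near-duplicates at threshold $\nu$) is shown to output exactly one accurate representative per true basis. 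The bound \eqref{thm:initial:eq1*} degrades to square-root form for two reasons: the projected noise $\calE\times_1\theta$ carries a factor $\|\theta\|_2\asymp\sqrt{d_1}$ (hence the $\sqrt{d_1}$ inside $\phi_2$), and the error compounds through two SVD stages together with the optimization over the size of the randomly created gap, producing $C\sqrt{\delta_{\max}}+C\sqrt{\phi_2}$.

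\textbf{Main obstacle.} For part (i) the delicate point is transporting the non-orthogonality parameter $\delta$ through two consecutive SVDs so that it enters only linearly and only amplified by $w_1/w_R$, which hinges on the balanced choice of $S$ and on Proposition~\ref{prop:singular_value_gap}. For part (ii) the hard part is the random-projection gap-creation lemma — simultaneously lower-bounding the probability that one Gaussian projection isolates a prescribed basis and controlling the number $L$ of projections needed to cover all $R$ bases — and then verifying that the $\sqrt{d_1}$-amplified noise together with the two-stage error propagation still closes to give \eqref{thm:initial:eq1*}. Underlying both parts is that the perturbation analysis must be carried out for the correlated (non-i.i.d.) entries of $\calE$, which is why the concentration inputs of Step~1 have to be established directly rather than imported from the i.i.d. CP-decomposition literature.
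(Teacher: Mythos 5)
Your proposal is correct and follows essentially the same route as the paper's proof: Proposition~\ref{prop:singular_value_gap} plus Wedin for the balanced unfolding, then a rank-one recovery for each mode in part~(i); random projection with Gaussian anti-concentration, a coupon-collector count of projections, and greedy-plus-purge clustering in part~(ii). The one genuine organizational difference is in the noise bound: the paper expands $\calE=\widehat\calB-\calB$ into the full $2^M$-term telescoping sum in which $\bDelta_m=\widehat\Sigma_m^{-1}-\widetilde\Sigma_m^{-1}$ replaces $\Sigma_m^{-1}$ mode by mode and controls each term separately, whereas you exploit the independence of Gaussian sample means from sample covariances to condition on $\{\widehat\Sigma_m\}$ and bound a single Gaussian ``mean'' term plus a single telescoped ``covariance'' term; your version is cleaner and buys a shorter argument. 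Two details should be tightened, though neither is a gap: (a) conditionally on $\{\widehat\Sigma_m\}$, the mode-$m$ covariance of $\calE_{\mathrm{mean}}$ is $\widehat\Sigma_m^{-1}\Sigma_m\widehat\Sigma_m^{-1}$, not $\widehat\Sigma_m$ --- still $\Theta(1)$ on the concentration event, so the bound is unaffected; and (b) because $\calE$ has correlated entries, the control of $\|\mathrm{mat}_{S_1}(\calE\times_1\theta)\|_2$ in Step~3 comes from a Gaussian matrix-series concentration inequality, which carries an extra $\sqrt{\log d_1}$ factor that is absorbed only because the signal gap manufactured by the projection is also of order $\sqrt{\log d_1}$, not merely because $\|\theta\|_2\asymp\sqrt{d_1}$. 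Finally, the square-root degradation in \eqref{thm:initial:eq1*} stems from the clustering criterion $\|\Xi\times_{m=1}^M\widetilde\ba_{\ell m}\|_2$ being quadratic in the angle error, rather than from ``two SVD stages''; this is a bookkeeping point, not a missing idea.
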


The first term of the upper bounds of $\ba_{rm}$ in \eqref{thm:initial:eq1} and \eqref{thm:initial:eq1*} arises due to the loading vectors $\ba_{rm}$ not being orthogonal, which may be seen as bias. Meanwhile, the subsequent term in \eqref{thm:initial:eq1} is derived from a concentration bound concerning random noise and the estimation of mode-$m$ precision matrices, thereby being describable as a form of stochastic error.
When the eigengap condition is not met, we employ randomized projection to determine the statistical convergence rate as shown in \eqref{thm:initial:eq1*}, which is slower than the rate in \eqref{thm:initial:eq1}. 
A broader result than \eqref{thm:initial:eq1*}, permitting a more general eigen ratio $w_1/w_R$ for part (ii), is detailed in the appendix. 
The proof of Theorem \ref{thm:cp-initilization} is provided in Section \ref{append:proof:initia} in the supplementary material.

If Algorithm \ref{alg:initialize-cp} is used to initialize Algorithm \ref{alg:tensorlda-cp} and $\psi_0$ is taken as the maximum of the right-hand side of \eqref{thm:initial:eq1} or \eqref{thm:initial:eq1*}, then the condition in Theorem \ref{thm:cp-converge} satisfies with $\PP(\Omega_0)\ge 1-n^{-c}-d_1^{-c}-\sum_{m=2}^M e^{-c d_m}$.

\begin{remark} 
We improve the existing initialization methods in tensor CP decomposition from several aspects. First, compare to CPCA in the literature \citep{han2023tensor,han2023cp}, the proposed initialization method allows for repeated singular values in tensor CP decomposition. 
Second, our \textsc{rc-PCA} in Algorithm \ref{alg:initialize-cp} provides weaker incoherence condition for the non-orthogonality of the CP basis vectors under small CP rank. For example, for 3-way tensors, the initial estimator in \cite{anandkumar2014guaranteed} requires the incoherence condition $\max_{i\neq j}\max_{m} \ba_{im}^\top \ba_{jm}  \le {\rm polylog}(d_{\min})/\sqrt{d_{\min}}$. In comparison, \textsc{rc-PCA} initialization $\psi_0\lesssim 1$ leads to $\max_{i\neq j}\max_{m} \ba_{im}^\top \ba_{jm}  \lesssim R^{-2}$. Compared with the previous work, this is a weaker incoherence condition when $R\lesssim d_{\min}^{1/4}$.
\end{remark}

Next, we develop the misclassification error rates of high-dimensional tensor LDA for data with low-rankness structure. We first obtain the upper and lower bounds for the excess misclassification risk of the Algorithm \ref{alg:tensorlda-cp}. Then, the upper and lower bounds together show that the Algorithm \ref{alg:tensorlda-cp} is rate optimal.


Recall that we classify tensor variables under the TGMM, with the tensor LDA rule defined in \eqref{eqn:lda-rule-cp}. 
The classifier's performance is evaluated by its misclassification error
\begin{align*}
\cR_{\btheta}(\hat\Upsilon_{\rm cp}) = \PP_{\btheta}\big(label(\cZ) \neq \hat\Upsilon_{\rm cp}(\cZ)\big),    
\end{align*}
where $\PP_{\btheta}$ denotes the probability with respect to $\cZ \sim \pi_1 \cT\cN(\cM_1; \bSigma) + \pi_2 \cT\cN(\cM_2; \bSigma)$ and $label(\cZ)$ denotes the true class of $\cZ$. In this paper, we use the excess misclassification risk relative to the optimal misclassification error, $\cR_{\btheta}(\hat\Upsilon_{\rm cp}) -\cR_{\rm opt}(\btheta)$, to measure the performance of the classifier $\hat\Upsilon_{\rm cp}$.

\begin{theorem}[Upper bound of misclassification rate]
\label{thm:class-upp-bound}
Assume the conditions in Theorem \ref{thm:cp-converge} all hold. Assume $ w_1\sqrt{\max_m d_m^2 R/(nd)}+\sqrt{\sum_{m=1}^M d_m R/n} = o(\Delta)$ with $n \rightarrow \infty$, where $\Delta=\sqrt{\langle \calB, \; \cD \rangle}$ is the signal-to-noise ratio. 

\noindent (i) If $\Delta\le c_0$, for some $c_0>0$, then with probability at least $1-n^{-c}-d_1^{-c}-\sum_{m=2}^M e^{-cd_m }$, the misclassification rate of classifier $\hat\Upsilon_{\rm cp}$ satisfies
\begin{equation}\label{eqn:lda-mis-cp1}
\cR_{\btheta}(\hat\Upsilon_{\rm cp}) -\cR_{\rm opt}(\btheta) \le C\left(\frac{\sum_{m=1}^M d_m R }{n} \right), 
\end{equation}
for some constant $C,c>0$.\\
(ii) If $\Delta\to\infty$ as $n\to \infty$, then there exists $\vartheta_n=o(1)$, with probability at least $1-n^{-c}-d_1^{-c}-\sum_{m=2}^M e^{-cd_m }$, the misclassification rate of classifier $\hat\Upsilon_{\rm cp}$ satisfies
\begin{equation}\label{eqn:lda-mis-cp2}
\cR_{\btheta}(\hat\Upsilon_{\rm cp}) -\cR_{\rm opt}(\btheta) \le C \exp\left\{-\left(\frac18+\vartheta_n\right)\Delta^2 \right\} \left(\frac{\sum_{m=1}^M d_m R }{n}  \right), 
\end{equation}
for some constant $C,c>0$.
\end{theorem}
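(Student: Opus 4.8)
The plan is to condition on the training sample (which is independent of the fresh point $\cZ$), write the conditional misclassification risk of $\hat\Upsilon_{\rm cp}$ in closed form, and then expand the excess risk around the oracle rule, feeding in the estimation-error bounds of Theorem~\ref{thm:cp-converge}. \textbf{Step 1 (exact conditional risk).} With the training data fixed, $\hat\calB^{\rm cp}$, $\widehat\cM:=(\widebar\calX^{(1)}+\widebar\calX^{(2)})/2$ and $\hat\pi_k$ are constants, and for $\cZ\sim\cT\cN(\cM_k;\bSigma)$ the scalar $\langle\cZ-\widehat\cM,\hat\calB^{\rm cp}\rangle$ is $N\!\big(\langle\cM_k-\widehat\cM,\hat\calB^{\rm cp}\rangle,\hat\sigma^2\big)$ with $\hat\sigma^2:=\vect(\hat\calB^{\rm cp})^\top(\Sigma_M\otimes\cdots\otimes\Sigma_1)\vect(\hat\calB^{\rm cp})$, so that
\begin{equation*}
\cR_{\btheta}(\hat\Upsilon_{\rm cp})=\pi_1\phi(\hat u_1)+\pi_2\phi(-\hat u_2),\qquad \hat u_k=\frac{\langle\cM_k-\widehat\cM,\hat\calB^{\rm cp}\rangle+\log(\hat\pi_2/\hat\pi_1)}{\hat\sigma},
\end{equation*}
while $\cR_{\rm opt}(\btheta)=\pi_1\phi(u_1^\ast)+\pi_2\phi(-u_2^\ast)$ with $u_k^\ast=\Delta^{-1}\log(\pi_2/\pi_1)+(-1)^k\Delta/2$, matching the formula for $R_{\rm opt}$ in Section~\ref{subsection: Tensor LDA}. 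Everything below is carried out on the event, of probability $\ge 1-n^{-c}-d_1^{-c}-\sum_{m\ge2}e^{-cd_m}$, on which the conclusion of Theorem~\ref{thm:cp-converge} together with the bound on $\PP(\Omega_0)$ from Theorem~\ref{thm:cp-initilization} holds, intersected with standard concentration events for $\widebar\calX^{(k)}$ and $\hat\pi_k$.

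\textbf{Step 2 (Taylor expansion; the first-order cancellation).} Let $\varphi=\phi'$ be the standard normal density and $\Delta u_k=\hat u_k-u_k^\ast$. Using $\phi''(u)=-u\varphi(u)$ and the key identity $\pi_1\varphi(u_1^\ast)=\pi_2\varphi(u_2^\ast)=:\kappa$ — which is precisely the stationarity condition defining the Bayes threshold — the two linear terms combine into a single factor $\kappa$ and
\begin{equation*}
\cR_{\btheta}(\hat\Upsilon_{\rm cp})-\cR_{\rm opt}(\btheta)=\kappa\,(\Delta u_1-\Delta u_2)+\tfrac12\kappa\big(u_2^\ast(\Delta u_2)^2-u_1^\ast(\Delta u_1)^2\big)+(\text{cubic remainder}).
\end{equation*}
A short computation gives $\kappa=\tfrac{\sqrt{\pi_1\pi_2}}{\sqrt{2\pi}}\exp(-\Delta^2/8)\exp\!\big(-\tfrac12\Delta^{-2}\log^2(\pi_2/\pi_1)\big)$, which is $\Theta(1)$ in case (i) and supplies the $\exp(-\Delta^2/8)$ prefactor in case (ii).

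\textbf{Step 3 (the surviving error is quadratic in the discriminant error).} Because $\hat u_2-\hat u_1=\langle\cD,\hat\calB^{\rm cp}\rangle/\hat\sigma$, $u_2^\ast-u_1^\ast=\Delta$, and the map $\bv\mapsto\langle\cD,\bv\rangle/\|\bv\|$ in the metric $\Sigma_M\otimes\cdots\otimes\Sigma_1$ is scale-invariant and maximized at $\bv\propto\calB$ (Cauchy--Schwarz, with the first-order term vanishing since $(\Sigma_M\otimes\cdots\otimes\Sigma_1)\vect(\calB)=\vect(\cD)$), a second-order expansion about $\hat\calB^{\rm cp}=\calB$ gives $0\le \Delta-(\hat u_2-\hat u_1)\lesssim \|\hat\calB^{\rm cp}-\calB\|_{\rm F}^2/\Delta$; hence $\Delta u_1-\Delta u_2=\Delta-(\hat u_2-\hat u_1)$ is quadratic in the discriminant error. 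Writing $\bar u:=\tfrac12(\hat u_1+\hat u_2)$, $\bar u^\ast:=\tfrac12(u_1^\ast+u_2^\ast)=\Delta^{-1}\log(\pi_2/\pi_1)$, and decomposing $\Delta u_k=(\bar u-\bar u^\ast)\pm\tfrac12(\Delta-(\hat u_2-\hat u_1))$ with $\bar u-\bar u^\ast=\langle\cM-\widehat\cM,\hat\calB^{\rm cp}\rangle/\hat\sigma+\log(\hat\pi_2/\hat\pi_1)/\hat\sigma-\log(\pi_2/\pi_1)/\Delta$, the residual location error is controlled by $|\hat\pi_k-\pi_k|\lesssim n^{-1/2}$, $|\hat\sigma^2/\Delta^2-1|\lesssim \|\hat\calB^{\rm cp}-\calB\|_{\rm F}/\Delta$, and $\langle\cM-\widehat\cM,\hat\calB^{\rm cp}\rangle$; the last is split as $\langle\cM-\widehat\cM,\calB\rangle$ (a scalar Gaussian of order $\Delta/\sqrt n$) plus $\langle\cM-\widehat\cM,\hat\calB^{\rm cp}-\calB\rangle$, the latter bounded via a uniform-over-rank-$\le 2R$-tensors concentration inequality by $\sqrt{(\textstyle\sum_m d_m)R/n}\,\|\hat\calB^{\rm cp}-\calB\|_{\rm F}$.

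\textbf{Step 4 (substitution; the main obstacle).} Inserting $\|\hat\calB^{\rm cp}-\calB\|_{\rm F}\lesssim \sqrt{\sum_m d_m R/n}+w_1\sqrt{R}\max_m\sqrt{d_m/(nd_{-m})}$ from Theorem~\ref{thm:cp-converge}, together with $w_1\lesssim\Delta$ (from $\sum_r w_r^2\asymp\|\calB\|_{\rm F}^2\asymp\Delta^2$ under bounded coherence $\delta_{\max}$ and bounded spectra of $\Sigma_m$), the standing assumption $\|\hat\calB^{\rm cp}-\calB\|_{\rm F}=o(\Delta)$, and $\|\cM-\widehat\cM\|_{\rm F}\lesssim\sqrt{d/n}$, every error contribution collapses after dividing by the relevant power of $\Delta$ to $O(\sum_m d_m R/n)$; multiplying by $\kappa$ yields \eqref{eqn:lda-mis-cp1}--\eqref{eqn:lda-mis-cp2}, the residual polynomial-in-$\Delta$ factors in case (ii) being folded into $\vartheta_n=o(1)$ via $\mathrm{poly}(\Delta)=e^{o(\Delta^2)}$. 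I expect the crux to be Step~3: although $\Delta u_1$ and $\Delta u_2$ individually carry first-order errors from estimating $\cM$, $\bSigma$ and $\pi_k$, the combination that actually appears in the excess risk must be shown to be \emph{quadratic} in the estimation error — this rests jointly on the Bayes stationarity identity and the Cauchy--Schwarz extremality of $\calB$ — while the cross term $\langle\cM-\widehat\cM,\hat\calB^{\rm cp}-\calB\rangle$ is the one place where the dependence between $\hat\calB^{\rm cp}$ and the sample means must be handled with a low-rank-uniform bound rather than a crude Cauchy--Schwarz step.
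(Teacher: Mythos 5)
Your proposal is correct and takes essentially the same approach as the paper: closed-form conditional Gaussian risk, a Taylor expansion with first-order cancellation, a Cauchy--Schwarz argument making the alignment error $\Delta-\langle\cD,\hat\calB^{\rm cp}\rangle/\hat\Delta$ quadratic in $\|\hat\calB^{\rm cp}-\calB\|_{\rm F}$ (the paper's Lemma \ref{lemma:tensor norm inequality}), and a uniform-over-low-rank-tensors concentration bound (the paper's Lemma \ref{lemma:low-rank-tensor}) for $\langle\hat\cM-\cM,\hat\calB^{\rm cp}\rangle$. The paper organizes the expansion through an intermediate quantity $\cR^*$ (replacing $\hat\cM$ by $\cM$) and mostly specializes to $\pi_1=\pi_2=1/2$, whereas you invoke the Bayes stationarity identity $\pi_1\varphi(u_1^\ast)=\pi_2\varphi(u_2^\ast)$ to cancel the location error at first order for general priors; these are bookkeeping variants of the same argument.
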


Compared with Theorem \ref{thm:cp-converge}, the excess misclassification rates in \eqref{eqn:lda-mis-cp1} and \eqref{eqn:lda-mis-cp2} only contain one part. The error induced by the estimation accuracy of the mode-$m$ precision matrix becomes negligible in the excess misclassification rates.

To understand the difficulty of the tensor classification problem, it is essential to obtain the minimax lower bounds for the excess misclassification risk. We especially consider the following parameter space of CP low-rank discriminant tensors,
\begin{align*}
\calH= \Big\{\;& \btheta=(\cM_1, \cM_2, \bSigma): \cM_1,\cM_2 \in \RR^{d_1 \times \cdots \times d_M}, \; \bSigma = [\Sigma_m]_{m=1}^M, \; \Sigma_m \in \RR^{d_m \times d_m}, \; \text{for some} \; C_0>0, \\
&C_0^{-1} \leq \lambda_{\min}(\otimes_{m=1}^M \Sigma_m) \leq \lambda_{\max}(\otimes_{m=1}^M \Sigma_m) \leq C_0, \; \calB =\sum_{r=1}^R w_r \circ_{m=1}^M \ba_{rm}\; \text{with}\; \|\ba_{rm}\|_2=1 \Big\}   . 
\end{align*}
The following lower bound holds over $\cH$.

\begin{theorem}[Lower bound of misclassification rate] \label{thm:class-lower-bound}
Under the TGMM, the minimax risk of excess misclassification error over the parameter space $\calH$ satisfies the following conditions.

\noindent (i) If $c_1<\Delta \le c_2$ for some $c_1,c_2>0$, then for any $\gamma>0$, there exists some constant $C_{\gamma}>0$ such that
\begin{equation}\label{eqn:lda-lbd-tucker1}
\inf_{\hat \Upsilon_{\rm cp}} \sup_{\btheta \in \calH} \PP\left( \cR_{\btheta}(\hat\Upsilon_{\rm cp}) -\cR_{\rm opt}(\btheta) \ge C_{\gamma}  \frac{\sum_{m=1}^M d_m R}{n}  \right) \ge 1-\gamma   .   
\end{equation}
(ii) If $\Delta\to\infty$ as $n\to \infty$, then then for any $\gamma>0$, there exists some constant $C_{\gamma}>0$ and $\vartheta_n=o(1)$ such that
\begin{equation}\label{eqn:lda-lbd-tucker2}
\inf_{\hat \Upsilon_{\rm cp}} \sup_{\btheta \in \calH} \PP\left(\cR_{\btheta}(\hat\Upsilon_{\rm cp}) -\cR_{\rm opt}(\btheta) \ge C_{\gamma} \exp\left\{-\left(\frac18+\vartheta_n\right)\Delta^2 \right\} \frac{\sum_{m=1}^M d_m R }{n}  \right) \ge 1-\gamma   .   
\end{equation}
\end{theorem}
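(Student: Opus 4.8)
\textbf{Proof proposal for Theorem \ref{thm:class-lower-bound} (minimax lower bound).}
The plan is to reduce the classification task to a multi-way hypothesis test over a finite subfamily of $\calH$ and invoke Fano's inequality. I will work inside the subfamily where $\Sigma_m=I_{d_m}$ for all $m$ (so $C_0=1$ suffices and $\calB=\cD=\cM_2-\cM_1$), $\pi_1=\pi_2=\tfrac12$, and $\cM_1=-\calB/2=-\cM_2$; each member is then identified with its discriminant tensor $\calB$, and I will keep $\|\calB\|_{\rm F}=\Delta$ constant across the family so that $\cR_{\rm opt}(\btheta)=\phi(-\Delta/2)$ is common to all members. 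The first ingredient is an exact excess‑risk formula: for $\Sigma=I$ and these means the Bayes rule for a parameter $\calB'$ is $z\mapsto\bbone\{\langle z,\calB'\rangle\ge 0\}$, and a short computation gives, whenever $\|\calB\|_{\rm F}=\|\calB'\|_{\rm F}=\Delta$,
\[
\cR_{\btheta}(\Upsilon_{\btheta'})-\cR_{\rm opt}(\btheta)=\phi\!\Big(-\tfrac{\Delta}{2}+\tfrac{\|\calB-\calB'\|_{\rm F}^2}{4\Delta}\Big)-\phi\!\Big(-\tfrac{\Delta}{2}\Big)\asymp \phi'(\Delta/2)\,\frac{\|\calB-\calB'\|_{\rm F}^2}{\Delta},
\]
valid when $\|\calB-\calB'\|_{\rm F}^2=o(\Delta^2\vee 1)$, which holds in our construction. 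Writing the excess risk of an arbitrary decision rule $\delta$ under $\btheta$ as $\tfrac12\int_{\{\delta\ne\Upsilon_\btheta\}}|f_2-f_1|$, I will upgrade this identity to a quasi‑triangle inequality: if $\delta$ has excess risk $o(\phi'(\Delta/2)\|\calB_j-\calB_{j'}\|_{\rm F}^2/\Delta)$ under both $\btheta_j$ and $\btheta_{j'}$, then $\calB_j$ and $\calB_{j'}$ must be $o(\|\calB_j-\calB_{j'}\|_{\rm F})$‑close, a contradiction once the family is separated. Then $\hat\jmath:=\argmin_j[\cR_{\btheta_j}(\hat\Upsilon_{\rm cp})-\cR_{\rm opt}(\btheta_j)]$ turns any classifier into a test whose decoding error forces a large excess risk at the true index.

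\textbf{The packing.} Fix a base configuration $\calB^{(0)}=\sum_{r=1}^R w\,(\ba^{(0)}_{r1}\circ\cdots\circ\ba^{(0)}_{rM})$ with $w=\Delta/\sqrt R$ and, within each mode $m$, an orthonormal set $\{\ba^{(0)}_{rm}\}_{r=1}^R$ (assuming the nondegenerate regime $d_m\gtrsim R$). For each pair $(r,m)$ I perturb $\ba^{(0)}_{rm}$ by step $\epsilon$ inside the $(d_m-R)$‑dimensional subspace orthogonal to $\mathrm{span}\{\ba^{(0)}_{r'm}\}_{r'}$; applying the Varshamov--Gilbert bound to the $\asymp\sum_m d_m R$ resulting binary coordinates produces a subfamily $\{\calB_j\}_{j=1}^N$ with $\log N\asymp \sum_m d_m R$, each still of CP form with unit‑norm factors, and (after a rescaling restoring $\|\calB_j\|_{\rm F}=\Delta$ exactly, which moves each $\calB_j$ by an amount negligible relative to the inter‑point distance) satisfying $\|\calB_j-\calB_{j'}\|_{\rm F}^2\asymp\epsilon^2\Delta^2\sum_m d_m$ for $j\ne j'$. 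Here orthonormality of the base factors makes the first‑order perturbation tensors mutually orthogonal and orthogonal to $\calB^{(0)}$, so the Frobenius bookkeeping is exact up to second order. Choosing $\epsilon^2=c\,R/(n\Delta^2)$ for a small constant $c$ gives $\|\calB_j-\calB_{j'}\|_{\rm F}^2\asymp c\,\sum_m d_m R/n$, which is $o(\Delta^2\vee1)$ in the regime of interest $\sum_m d_m R=o(n\Delta^2)$ (implied, in the nontrivial case, by consistency).

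\textbf{KL control, Fano, and the prefactor.} Since all covariances equal $I$ and priors are equal, the KL divergence between the laws of the full data under $\btheta_j$ and $\btheta_{j'}$ is $\tfrac{n_1+n_2}{8}\|\calB_j-\calB_{j'}\|_{\rm F}^2\asymp n\epsilon^2\Delta^2\sum_m d_m=c\sum_m d_m R$, so $\max_{j,j'}\mathrm{KL}\le\tfrac18\log N$ once $c$ is small, and Fano's inequality yields $\inf_{\hat\jmath}\max_j\PP_j(\hat\jmath\ne j)\ge 1-\gamma$ for $c=c(\gamma)$ small enough (in the high‑dimensional regime $\sum_m d_m R\to\infty$; the remaining low‑dimensional case is handled by a fuzzy‑hypothesis argument). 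Combining with the reduction of the first paragraph, every classifier $\hat\Upsilon_{\rm cp}$ admits some $\btheta_j\in\calH$ with $\PP_{\btheta_j}\big(\cR_{\btheta_j}(\hat\Upsilon_{\rm cp})-\cR_{\rm opt}(\btheta_j)\ge c'\,\phi'(\Delta/2)\sum_m d_m R/(n\Delta)\big)\ge 1-\gamma$. For regime (i), $\Delta\asymp1$ makes $\phi'(\Delta/2)/\Delta$ a positive constant, giving \eqref{eqn:lda-lbd-tucker1}. For regime (ii), $\phi'(\Delta/2)/\Delta=(2\pi)^{-1/2}\Delta^{-1}e^{-\Delta^2/8}=\exp\{-(\tfrac18+\vartheta_n)\Delta^2\}$ with $\vartheta_n=(\log\Delta+\tfrac12\log2\pi)/\Delta^2=o(1)$ (the $o(1)$ correction from the mean‑value term in the excess‑risk formula being absorbed likewise), giving \eqref{eqn:lda-lbd-tucker2}.

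\textbf{Main obstacle.} The delicate step is the quasi‑triangle inequality of the first paragraph: converting ``small excess misclassification risk under $\btheta_j$'' into ``$\hat\Upsilon_{\rm cp}$ is geometrically close to the half‑space $\{\langle z,\calB_j\rangle\ge 0\}$'' for an arbitrary, possibly nonlinear and data‑dependent rule, given that the weighting densities $|f_{2,j}-f_{1,j}|$ vary with $j$. This requires showing those densities are comparable on the thin slab around the Bayes boundary that carries essentially all of the relevant mass (true precisely because the $\calB_j$ are mutually close), and then bounding the symmetric difference of competing half‑spaces from below by their angle. A secondary nuisance, already flagged above, is maintaining $\|\calB_j\|_{\rm F}=\Delta$ exactly while preserving the CP structure and the Varshamov--Gilbert separation, which is handled by working to first order in $\epsilon$ and checking that the $O(\epsilon^2)$ renormalisation is lower order throughout.
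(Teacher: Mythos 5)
Your proposal reaches the same rates and follows the same high-level strategy (reduction of excess risk to a proxy discrepancy, packing via Varshamov--Gilbert, Fano-type lower bound), but it differs from the paper's route in two substantive ways.

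First, the packing is constructed differently. The paper defines $M+1$ separate sub-parameter-spaces $\calH_0,\ldots,\calH_M$, each perturbing a single mode (the core tensor or a single loading matrix $\bA_m$) with the remaining modes frozen as partial identities; Varshamov--Gilbert is applied mode-by-mode, giving lower bounds $\asymp R/n$ and $\asymp d_m R/n$ which are then combined by taking the maximum (since each $\calH_m\subset\calH$ and $M$ is a constant). You instead perturb all $(r,m)$ loading vectors jointly inside the orthogonal complement of the unperturbed span and run a single Varshamov--Gilbert over $\asymp\sum_m d_m R$ binary coordinates. This produces $\sum_m d_m R/n$ in one shot. Your calculation of the Frobenius separation $\asymp\epsilon^2\Delta^2\sum_m d_m$, the KL scaling $\asymp n\Delta^2\epsilon^2\sum_m d_m$, and the choice $\epsilon^2=cR/(n\Delta^2)$ are all consistent. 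The trade-off is that the joint packing requires the mutual-orthogonality bookkeeping for first-order perturbation tensors and the $O(\epsilon^2)$ renormalization argument you flag, none of which arises in the paper's mode-by-mode construction.

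Second, and more importantly, what you call ``the delicate step'' — converting small excess risk under $\btheta_j$ into geometric closeness of decision regions — is exactly what the paper outsources to two cited results: Lemma \ref{lemma:the first reduction} (excess risk $\gtrsim e^{\Delta^2/8}\Delta\,L_\btheta^2$, where $L_\btheta(\delta)=\PP_\btheta(\delta\ne\Upsilon_\btheta)$) and Lemma \ref{lemma:probability inequality} ($L_{\theta}(\delta)+L_{\tilde\theta}(\delta)\gtrsim\Delta^{-1}e^{-\Delta^2/8}\|\cM-\tilde\cM\|_{\rm F}$). Both come from \cite{cai2019high}. Your sketch of the quasi-triangle inequality via comparability of $|f_{2,j}-f_{1,j}|$ on the slab near the Bayes boundary is plausible and is essentially a proof sketch of those lemmas, but you have not closed it; as stated, your argument is incomplete at precisely the point where the paper invokes the prepared machinery. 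You also invoke Fano directly (and note a ``fuzzy-hypothesis'' fallback for the low-dimensional case), whereas the paper uses the Tsybakov variant (Lemma \ref{lemma:Tsybakov variant}) applied to $L_\btheta$, which is tailored to the approximate-triangle-inequality situation and handles the discrete nature of the construction cleanly. If you import those two lemmas as black boxes, your joint-packing route does give a valid, slightly more compact proof of the same bound.
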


Combined with the upper bounds of the excess misclassification risk in \eqref{eqn:lda-mis-cp1} and \eqref{eqn:lda-mis-cp2}, the convergence rates are minimax rate optimal.

\section{Simulation}
\label{sec:simu}
This section presents extensive simulations demonstrating the superiority of the DISTIP-CP algorithm in estimating the discriminant tensor, leading to exceptional tensor classification accuracy. We explore various experimental configurations and employ multiple evaluation metrics to substantiate our theoretical findings. The simulations primarily focus on classifications involving order-3 tensor predictors, with results for order-4 tensor predictors provided in Appendix \ref{append:simulation}. 
We incorporate state-of-the-art classification methods as competitors, including the Tensor LDA rule with the sample discriminant tensor as a benchmark. Additionally, we include the CATCH estimator \citep{pan2019covariate} and the Discriminant Tensor Iterative Projection using Tucker low-rank discriminant tensor (DISTIP-Tucker) algorithm for a thorough comparison.

\subsection{Data Generation}
We generate training and testing sets from tensor normal distributions $\cT\cN(\cM_k; \bSigma)$ with $k \in [2]$. The training set comprises $n_k$ samples of $\calX^{(k)}$, while the testing set consists of 500 samples per class. Common covariance matrices $\bSigma=[\Sigma_m]_{m=1}^M$ are either identity matrices or have unit diagonals with off-diagonal elements $2/d_m$.
CP bases for $\calB$ are generated as either orthogonal or non-orthogonal. Orthogonal bases $\bA_m \in \mathbb{R}^{d_m \times R}$ are derived from random matrices with i.i.d. $U(0,1)$ entries and orthonormalized via QR decomposition. Non-orthogonal bases $\tilde \bA_m$ are constructed using $\tilde \ba_{1m} = \ba_{1m}$ and $\tilde \ba_{rm} = (\ba_{1m} + \eta \ba_{rm}) / \norm{\ba_{1m} + \eta \ba_{rm}}_2$ for $r \ge 2$, where $\eta = (\vartheta^{-2/M} - 1)^{1/2}$ and $\vartheta = \delta / (r-1)$. The discriminant tensor $\calB$ is then configured with these bases and the signal strengths $\{w_r, r \in [R]\}$. For order-3 tensors, we explore four configurations combining orthogonal/non-orthogonal CP bases with identity/general covariance matrices. Across all settings, we maintain $d_1=d_2=d_3=30$, $R=5$, $n_1 = n_2 = 200$, and perform 50 repetitions. We vary signal strengths ${w_r}$ as follows:
\begin{itemize}
\item Equal strengths: $w_{\max}/w_{\min}=1$ with $w_{\max} \in {1.5, 2.5, 3.5, 5.0}$;
\item Unequal strengths: $w_{r+1}/w_r=1.25$ with $w_{max} \in {3, 4, 6}$.
\end{itemize}
For identity covariance matrices, we set $\calM_1 = 0$ and $\calM_2=\calB$. For general covariance matrices, we set $\calM_2=\calB \times_{m=1}^M \Sigma_m$. For non-orthogonal CP bases, we set the parameter $\delta = 0.1$ in the data generating process.

\subsection{Justiﬁcation of global convergence}
We employ various metrics to verify the theoretical results on the global convergence of DISTIP-CP in estimating the discriminant tensor $\calB$. Figures \ref{fig:1} and \ref{fig:2} illustrate the logarithm of CP bases estimation errors, measured by $\max_{r,m} \| \hat{\ba}_{rm} \hat{\ba}_{rm}^\top - \ba_{rm} \ba_{rm}^\top \|_2$, across four configurations with varying signal strengths. In the figures, ALS denotes SVD-based initialization \citep{anandkumar2014guaranteed} but still utilizes our iterative algorithm in Algorithm \ref{alg:tensorlda-cp}.
Notably, even in challenging scenarios with non-orthogonal CP bases and equal signal strengths (Figures \ref{fig:2a} and \ref{fig:4a}), the errors of the proposed method remain low. This indicates the effectiveness of our initialization method, particularly when signal strengths are not well-separated and orthogonality assumptions are relaxed.

\vspace{2em}
\begin{figure}[H]
    \centering
    \begin{subfigure}[b]{0.49\textwidth}
        \centering
        \includegraphics[width=\textwidth]{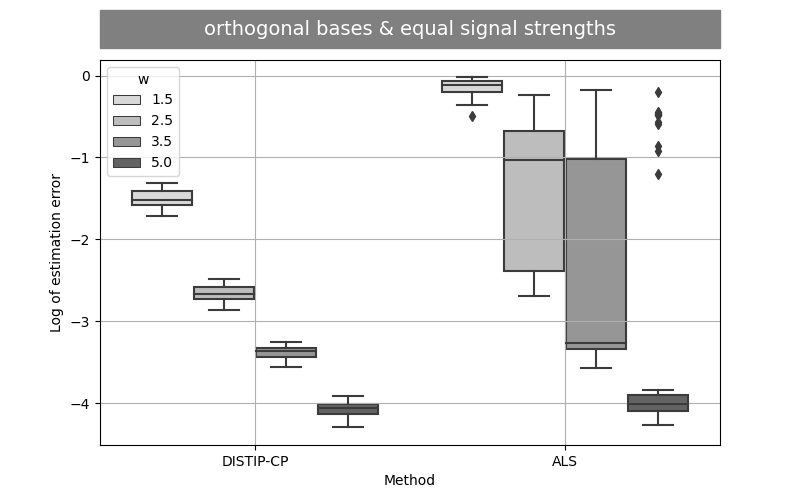}
        \caption{}
        \label{fig:1a}
    \end{subfigure}
    \hfill
    \begin{subfigure}[b]{0.49\textwidth}
        \centering
        \includegraphics[width=\textwidth]{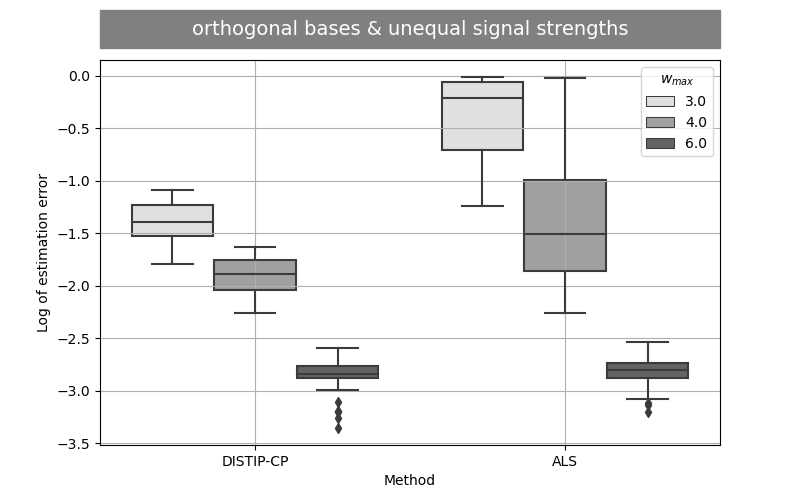}
        \caption{}
        \label{fig:1b}
    \end{subfigure}
    
    \vspace{1em}
    
    \begin{subfigure}[b]{0.49\textwidth}
        \centering
        \includegraphics[width=\textwidth]{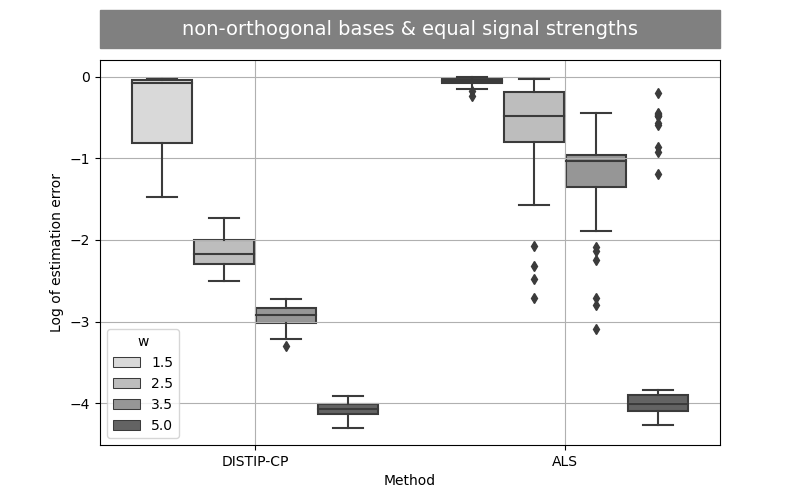}
        \caption{}
        \label{fig:2a}
    \end{subfigure}
    \hfill
    \begin{subfigure}[b]{0.49\textwidth}
        \centering
        \includegraphics[width=\textwidth]{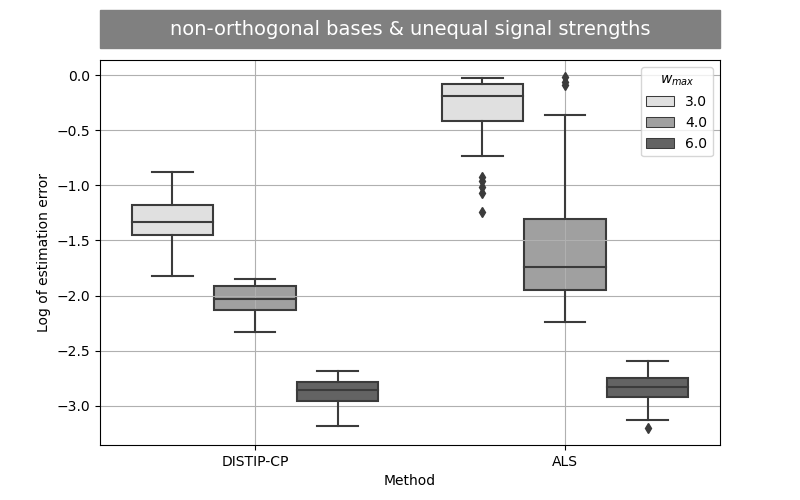}
        \caption{}
        \label{fig:2b}
    \end{subfigure}
    \caption{Logarithmic estimation errors for CP basis using DISTIP-CP and ALS algorithms with identity covariance matrices $\Sigma_m, m \in [3]$.}
    \label{fig:1}
\end{figure}

\begin{figure}[H]
    \centering
    \begin{subfigure}[b]{0.49\textwidth}
        \centering
        \includegraphics[width=\textwidth]{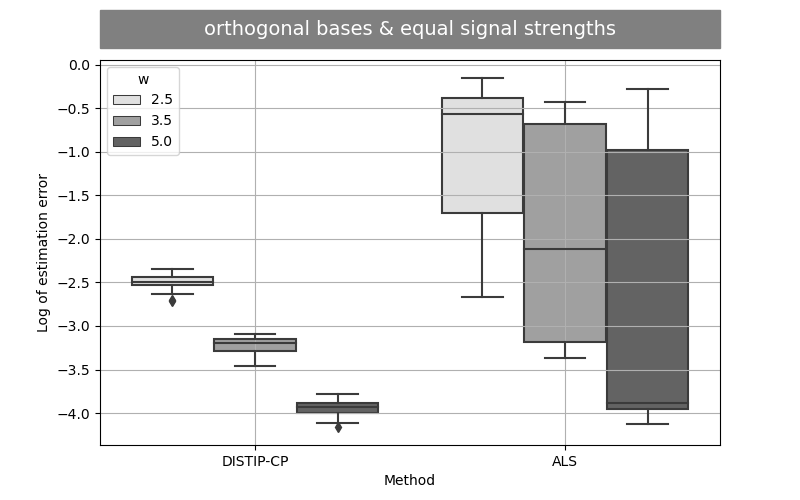}
        \caption{}
        \label{fig:3a}
    \end{subfigure}
    \hfill
    \begin{subfigure}[b]{0.49\textwidth}
        \centering
        \includegraphics[width=\textwidth]{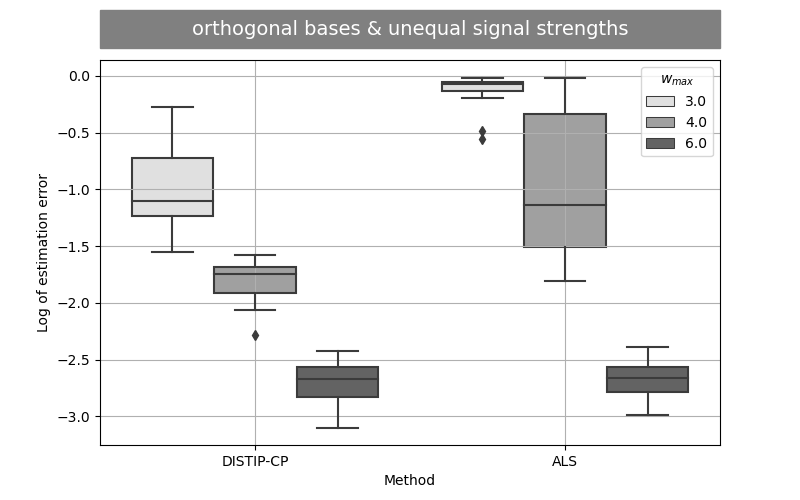}
        \caption{}
        \label{fig:3b}
    \end{subfigure}
    
    \vspace{1em}
    
    \begin{subfigure}[b]{0.49\textwidth}
        \centering
        \includegraphics[width=\textwidth]{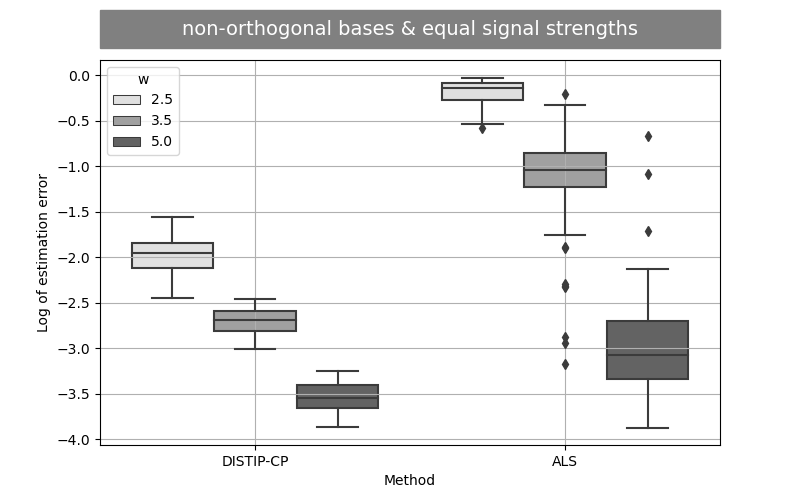}
        \caption{}
        \label{fig:4a}
    \end{subfigure}
    \hfill
    \begin{subfigure}[b]{0.49\textwidth}
        \centering
        \includegraphics[width=\textwidth]{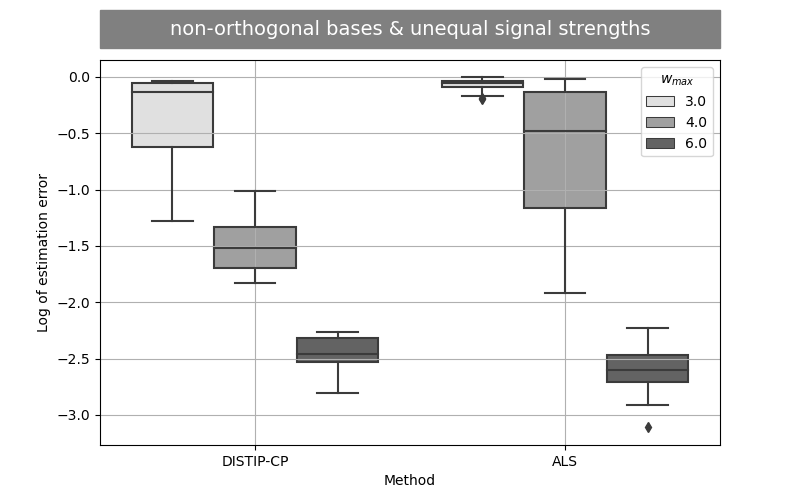}
        \caption{}
        \label{fig:4b}
    \end{subfigure}
    \caption{Logarithmic estimation errors for CP basis using DISTIP-CP and ALS algorithms with general covariance matrices $\Sigma_m, m \in [3]$.}
    \label{fig:2}
\end{figure}
\vspace{1em}

\textbf{Comparison with Alternating Least Squares (ALS)}: Figures \ref{fig:1} and \ref{fig:2} illustrate DISTIP-CP's superior performance over ALS across all scenarios, especially when signal strengths are equal. This highlights DISTIP-CP's robustness in scenarios where signal strengths are not well-separated---a challenging situation for ALS. The consistent convergence of DISTIP-CP, achieved through our \textsc{rc-PCA} initialization, outperforms ALS's SVD-based initialization \citep{anandkumar2014guaranteed}, providing more reliable starting points and ensuring global convergence in diverse settings.

Tables \ref{tab: 1} and \ref{tab: 2} present the convergence rates of the estimation errors for $\calB$, measured by $\|\widehat \calB  - \calB \|_{\rm F} / \| \calB \|_{\rm F}$ under various configurations. DISTIP-CP consistently achieves low estimation errors, although slightly higher under general covariance matrices compared to identity matrices. This aligns with theoretical results, where the upper bound of $\|\widehat \calB  - \calB \|_{\rm F}$ is influenced by the accuracy of estimating sample precision matrices. Orthogonal CP bases typically yield lower errors than non-orthogonal bases, particularly when signal strengths are equal, as demonstrated in Figures \ref{fig:1a}, \ref{fig:2a}, \ref{fig:3a}, and \ref{fig:4a}.
\vspace{1.5em}

\begin{table}[H]
\centering
\begin{tabular}{>{\centering\arraybackslash}p{2.6cm}*{4}{>{\centering\arraybackslash}p{1.45cm}}*{3}{>{\centering\arraybackslash}p{1.6cm}}}
\toprule
\multirow{2}{*}{Algorithms} & \multicolumn{4}{c}{Equal Signal Strengths} & \multicolumn{3}{c}{Unequal Signal Strengths} \\
\cmidrule(lr){2-5} \cmidrule(lr){6-8}
& $w = 1.5$ & $w = 2.5$ & $w = 3.5$ & $w = 5.0$ & $w_{max} = 3$ & $w_{max} = 4$ & $w_{max} = 6$ \\
\midrule
Sample $\calB$ & 4.98\textsubscript{(0.36)} & 2.97\textsubscript{(0.21)} & 2.10\textsubscript{(0.14)} & 1.49\textsubscript{(0.10)} & 3.49\textsubscript{(0.24)} & 2.62\textsubscript{(0.20)} & 1.76\textsubscript{(0.12)} \\
DISTIP-CP & 0.75\textsubscript{(0.05)} & 0.41\textsubscript{(0.03)} & 0.29\textsubscript{(0.03)} & 0.23\textsubscript{(0.03)} & 0.54\textsubscript{(0.04)} & 0.43\textsubscript{(0.05)} & 0.29\textsubscript{(0.04)} \\
DISTIP-Tucker & 1.14\textsubscript{(0.04)} & 0.49\textsubscript{(0.02)} & 0.34\textsubscript{(0.02)} & 0.24\textsubscript{(0.02)} & 0.71\textsubscript{(0.04)} & 0.47\textsubscript{(0.04)} & 0.28\textsubscript{(0.02)} \\
CATCH & 1.10\textsubscript{(0.04)} & 1.02\textsubscript{(0.01)} & 0.99\textsubscript{(0.00)} & 0.98\textsubscript{(0.00)} & 1.04\textsubscript{(0.01)} & 1.01\textsubscript{(0.01)} & 0.98\textsubscript{(0.00)} \\
\midrule
\multirow{2}{*}{Algorithms} & \multicolumn{4}{c}{Equal Signal Strengths} & \multicolumn{3}{c}{Unequal Signal Strengths} \\
\cmidrule(lr){2-5} \cmidrule(lr){6-8}
& $w = 1.5$ & $w = 2.5$ & $w = 3.5$ & $w = 5.0$ & $w_{max} = 3$ & $w_{max} = 4$ & $w_{max} = 6$ \\
\midrule
Sample $\calB$ & 4.92\textsubscript{(0.34)} & 2.87\textsubscript{(0.20)} & 2.07\textsubscript{(0.13)} & 1.43\textsubscript{(0.12)} & 3.54\textsubscript{(0.24)} & 2.62\textsubscript{(0.20)} & 1.81\textsubscript{(0.13)} \\
DISTIP-CP & 0.91\textsubscript{(0.08)} & 0.45\textsubscript{(0.03)} & 0.32\textsubscript{(0.02)} & 0.23\textsubscript{(0.02)} & 0.52\textsubscript{(0.04)} & 0.36\textsubscript{(0.03)} & 0.25\textsubscript{(0.03)} \\
DISTIP-Tucker & 1.09\textsubscript{(0.04)} & 0.49\textsubscript{(0.03)} & 0.33\textsubscript{(0.02)} & 0.23\textsubscript{(0.02)} & 0.70\textsubscript{(0.03)} & 0.47\textsubscript{(0.03)} & 0.28\textsubscript{(0.02)} \\
CATCH & 1.02\textsubscript{(0.01)} & 0.99\textsubscript{(0.00)} & 0.97\textsubscript{(0.01)} & 0.93\textsubscript{(0.00)} & 1.03\textsubscript{(0.01)} & 1.00\textsubscript{(0.00)} & 0.97\textsubscript{(0.01)} \\
\bottomrule
\end{tabular}
\caption{Estimation errors for the discriminant tensor $\calB$ under identity covariance matrices. Results are shown for both orthogonal (upper section) and non-orthogonal (lower section) CP bases, comparing different estimation methods.}
\label{tab: 1}
\end{table}
\vspace{1.5em}

\begin{table}[H]
\centering
\begin{tabular}{>{\centering\arraybackslash}p{2.6cm}*{3}{>{\centering\arraybackslash}p{1.4cm}}*{3}{>{\centering\arraybackslash}p{1.65cm}}}
\toprule
\multirow{2}{*}{Algorithms} & \multicolumn{3}{c}{Equal Signal Strengths} & \multicolumn{3}{c}{Unequal Signal Strengths} \\
\cmidrule(lr){2-4} \cmidrule(lr){5-7}
& $w = 2.5$ & $w = 3.5$ & $w = 5.0$ & $w_{max} = 3$ & $w_{max} = 4$ & $w_{max} = 6$ \\
\midrule
Sample $\calB$ & 3.19\textsubscript{(0.22)} & 2.29\textsubscript{(0.15)} & 1.58\textsubscript{(0.11)} & 3.78\textsubscript{(0.26)} & 2.81\textsubscript{(0.19)} & 1.90\textsubscript{(0.12)} \\
DISTIP-CP & 0.44\textsubscript{(0.04)} & 0.31\textsubscript{(0.03)} & 0.22\textsubscript{(0.02)} & 0.58\textsubscript{(0.05)} & 0.40\textsubscript{(0.03)} & 0.26\textsubscript{(0.02)} \\
DISTIP-Tucker & 0.53\textsubscript{(0.02)} & 0.36\textsubscript{(0.03)} & 0.25\textsubscript{(0.03)} & 0.80\textsubscript{(0.04)} & 0.54\textsubscript{(0.03)} & 0.31\textsubscript{(0.03)} \\
CATCH & 1.12\textsubscript{(0.04)} & 1.01\textsubscript{(0.01)} & 0.98\textsubscript{(0.01)} & 1.04\textsubscript{(0.01)} & 1.00\textsubscript{(0.01)} & 0.97\textsubscript{(0.01)} \\
\midrule
\multirow{2}{*}{Algorithms} & \multicolumn{3}{c}{Equal Signal Strengths} & \multicolumn{3}{c}{Unequal Signal Strengths} \\
\cmidrule(lr){2-4} \cmidrule(lr){5-7}
& $w = 2.5$ & $w = 3.5$ & $w = 5.0$ & $w_{max} = 3$ & $w_{max} = 4$ & $w_{max} = 6$ \\
\midrule
Sample $\calB$ & 3.07\textsubscript{(0.20)} & 2.23\textsubscript{(0.15)} & 1.55\textsubscript{(0.10)} & 3.66\textsubscript{(0.20)} & 2.78\textsubscript{(0.20)} & 1.84\textsubscript{(0.14)} \\
DISTIP-CP & 0.49\textsubscript{(0.03)} & 0.35\textsubscript{(0.03)} & 0.24\textsubscript{(0.02)} & 0.67\textsubscript{(0.05)} & 0.45\textsubscript{(0.04)} & 0.29\textsubscript{(0.02)} \\
DISTIP-Tucker & 0.53\textsubscript{(0.04)} & 0.35\textsubscript{(0.03)} & 0.23\textsubscript{(0.02)} & 0.78\textsubscript{(0.04)} & 0.54\textsubscript{(0.03)} & 0.30\textsubscript{(0.02)} \\
CATCH & 1.04\textsubscript{(0.02)} & 0.99\textsubscript{(0.01)} & 0.96\textsubscript{(0.01)} & 1.03\textsubscript{(0.01)} & 0.99\textsubscript{(0.01)} & 0.95\textsubscript{(0.01)} \\
\bottomrule
\end{tabular}
\caption{
Estimation errors for the discriminant tensor $\calB$ under general covariance matrices. Results are shown for both orthogonal (upper section) and non-orthogonal (lower section) CP bases, comparing different estimation methods.}
\label{tab: 2}
\end{table}

\subsection{Comparison with other methods}
In this subsection, we compare the proposed DISTIP-CP method with CATCH \citep{pan2019covariate}, DISTIP-Tucker, and the sample discriminant tensor for binary classification tasks. We also attempted to implement the adaptive estimation by \cite{cai2019high}, but its resource-intensive nature made it impractical. Tables \ref{tab: 1} and \ref{tab: 2} compare estimation errors for $\calB$, while Tables \ref{tab: 3} and \ref{tab: 4} show classification errors, defined as $\frac{\sum_{i=1}^{n_1}\hat\Upsilon(\calX_i^{(1)}) + \sum_{i=1}^{n_2}(1-\hat\Upsilon(\calX_i^{(2)}))}{n_1 + n_2}$, where $n_1=n_2=500$ in the testing set.

\noindent \textbf{Comparison with Sample Discriminant Tensor $\calB$}: DISTIP-CP significantly outperforms the sample discriminant tensor in both estimation accuracy (Tables \ref{tab: 1} and \ref{tab: 2}) and classification performance (Tables \ref{tab: 3} and \ref{tab: 4}). This superiority stems from DISTIP-CP's incorporation of the low-rank CP structure, which effectively reduces noise and captures the essential features of the discriminant tensor. The advantage is particularly pronounced in low-signal scenarios, where the sample estimator struggles with high-dimensional noise. As signal strength increases, the performance gap narrows, but DISTIP-CP consistently maintains an edge.

\noindent \textbf{Comparison with DISTIP-Tucker}: DISTIP-CP outperforms DISTIP-Tucker, particularly at smaller signal strengths, due to its superior capture of the underlying tensor structure.

\noindent \textbf{Comparison with CATCH}: Although the estimation error of the discriminant tensor using CATCH appears competitive, this perception is misleading due to the choice of metric. CATCH outputs a sparse discriminant tensor with most entries being zero, which keeps the metric $\|\widehat \calB  - \calB \|_{\rm F} / \| \calB \|_{\rm F}$ close to 1. However, its classification performance is poor, even compared to the sample discriminant tensor $\calB$ (Tables \ref{tab: 3} and \ref{tab: 4}). This discrepancy highlights the superiority of DISTIP-CP's decomposable low-rank tensor structure over CATCH's sparsity approach for classification accuracy under our settings. DISTIP-CP better captures the underlying low rank tensor structure, leading to more accurate estimations and, consequently, more precise classifications.

In summary, DISTIP-CP with \textsc{rc-PCA} initialization excels in discriminant tensor estimation and classification across diverse scenarios, demonstrating its robustness to varying signal strengths, CP basis orthogonality, and covariance structures.

\vspace{2em}
\begin{table}[H]
\centering
\begin{tabular}{>{\centering\arraybackslash}p{2.6cm}*{4}{>{\centering\arraybackslash}p{1.3cm}}*{3}{>{\centering\arraybackslash}p{1.65cm}}}
\toprule
\multirow{2}{*}{Algorithms} & \multicolumn{4}{c}{Equal Signal Strengths} & \multicolumn{3}{c}{Unequal Signal Strengths} \\
\cmidrule(lr){2-5} \cmidrule(lr){6-8}
& $w = 1.5$ & $w = 2.5$ & $w = 3.5$ & $w = 5.0$ & $w_{max} = 3$ & $w_{max} = 4$ & $w_{max} = 6$ \\
\midrule
Sample $\calB$ & 0.37\textsubscript{(0.02)} & 0.19\textsubscript{(0.01)} & 0.05\textsubscript{(0.01)} & 0.00\textsubscript{(0.00)} & 0.26\textsubscript{(0.01)} & 0.13\textsubscript{(0.01)} & 0.01\textsubscript{(0.00)} \\
DISTIP-CP & 0.09\textsubscript{(0.01)} & 0.01\textsubscript{(0.00)} & 0.00\textsubscript{(0.00)} & 0.00\textsubscript{(0.00)} & 0.02\textsubscript{(0.01)} & 0.00\textsubscript{(0.00)} & 0.00\textsubscript{(0.00)} \\
DISTIP-Tucker & 0.23\textsubscript{(0.03)} & 0.01\textsubscript{(0.00)} & 0.00\textsubscript{(0.00)} & 0.00\textsubscript{(0.00)} & 0.03\textsubscript{(0.00)} & 0.00\textsubscript{(0.00)} & 0.00\textsubscript{(0.00)} \\
CATCH & 0.45\textsubscript{(0.02)} & 0.38\textsubscript{(0.02)} & 0.26\textsubscript{(0.03)} & 0.11\textsubscript{(0.03)} & 0.41\textsubscript{(0.02)} & 0.34\textsubscript{(0.02)} & 0.19\textsubscript{(0.03)} \\
\midrule
\multirow{2}{*}{Algorithms} & \multicolumn{4}{c}{Equal Signal Strengths} & \multicolumn{3}{c}{Unequal Signal Strengths} \\
\cmidrule(lr){2-5} \cmidrule(lr){6-8}
& $w = 1.5$ & $w = 2.5$ & $w = 3.5$ & $w = 5.0$ & $w_{max} = 3$ & $w_{max} = 4$ & $w_{max} = 6$ \\
\midrule
Sample $\calB$ & 0.36\textsubscript{(0.02)} & 0.19\textsubscript{(0.01)} & 0.05\textsubscript{(0.01)} & 0.00\textsubscript{(0.00)} & 0.26\textsubscript{(0.02)} & 0.13\textsubscript{(0.01)} & 0.01\textsubscript{(0.00)} \\
DISTIP-CP & 0.13\textsubscript{(0.03)} & 0.01\textsubscript{(0.00)} & 0.00\textsubscript{(0.00)} & 0.00\textsubscript{(0.00)} & 0.02\textsubscript{(0.01)} & 0.00\textsubscript{(0.00)} & 0.00\textsubscript{(0.00)} \\
DISTIP-Tucker & 0.19\textsubscript{(0.02)} & 0.01\textsubscript{(0.00)} & 0.00\textsubscript{(0.00)} & 0.00\textsubscript{(0.00)} & 0.03\textsubscript{(0.00)} & 0.00\textsubscript{(0.00)} & 0.00\textsubscript{(0.00)} \\
CATCH & 0.36\textsubscript{(0.02)} & 0.25\textsubscript{(0.03)} & 0.08\textsubscript{(0.02)} & 0.00\textsubscript{(0.00)} & 0.39\textsubscript{(0.02)} & 0.30\textsubscript{(0.02)} & 0.13\textsubscript{(0.03)} \\
\bottomrule
\end{tabular}
\caption{Misclassification rates for binary classification under identity covariance matrices. Upper section: orthogonal CP bases; lower section: non-orthogonal CP bases for $\calB$.}
\label{tab: 3}
\end{table}
\vspace{1.5em}

\begin{table}[H]
\centering
\begin{tabular}{>{\centering\arraybackslash}p{2.6cm}*{3}{>{\centering\arraybackslash}p{1.35cm}}*{3}{>{\centering\arraybackslash}p{1.65cm}}}
\toprule
\multirow{2}{*}{Algorithms} & \multicolumn{3}{c}{Equal Signal Strengths} & \multicolumn{3}{c}{Unequal Signal Strengths} \\
\cmidrule(lr){2-4} \cmidrule(lr){5-7}
& $w = 2.5$ & $w = 3.5$ & $w = 5.0$ & $w_{max} = 3$ & $w_{max} = 4$ & $w_{max} = 6$ \\
\midrule
Sample $\calB$ & 0.18\textsubscript{(0.02)} & 0.05\textsubscript{(0.01)} & 0.00\textsubscript{(0.00)} & 0.26\textsubscript{(0.03)} & 0.13\textsubscript{(0.02)} & 0.01\textsubscript{(0.01)} \\
DISTIP-CP & 0.01\textsubscript{(0.00)} & 0.00\textsubscript{(0.00)} & 0.00\textsubscript{(0.00)} & 0.02\textsubscript{(0.01)} & 0.00\textsubscript{(0.00)} & 0.00\textsubscript{(0.00)} \\
DISTIP-Tucker & 0.01\textsubscript{(0.00)} & 0.00\textsubscript{(0.00)} & 0.00\textsubscript{(0.00)} & 0.05\textsubscript{(0.01)} & 0.00\textsubscript{(0.00)} & 0.00\textsubscript{(0.00)} \\
CATCH & 0.42\textsubscript{(0.02)} & 0.29\textsubscript{(0.04)} & 0.16\textsubscript{(0.04)} & 0.35\textsubscript{(0.04)} & 0.24\textsubscript{(0.05)} & 0.09\textsubscript{(0.04)} \\
\midrule
\multirow{2}{*}{Algorithms} & \multicolumn{3}{c}{Equal Signal Strengths} & \multicolumn{3}{c}{Unequal Signal Strengths} \\
\cmidrule(lr){2-4} \cmidrule(lr){5-7}
& $w = 2.5$ & $w = 3.5$ & $w = 5.0$ & $w_{max} = 3$ & $w_{max} = 4$ & $w_{max} = 6$ \\
\midrule
Sample $\calB$ & 0.17\textsubscript{(0.02)} & 0.04\textsubscript{(0.01)} & 0.00\textsubscript{(0.00)} & 0.25\textsubscript{(0.02)} & 0.13\textsubscript{(0.02)} & 0.01\textsubscript{(0.01)} \\
DISTIP-CP & 0.01\textsubscript{(0.00)} & 0.00\textsubscript{(0.00)} & 0.00\textsubscript{(0.00)} & 0.03\textsubscript{(0.01)} & 0.00\textsubscript{(0.00)} & 0.00\textsubscript{(0.00)} \\
DISTIP-Tucker & 0.01\textsubscript{(0.00)} & 0.00\textsubscript{(0.00)} & 0.00\textsubscript{(0.00)} & 0.04\textsubscript{(0.01)} & 0.00\textsubscript{(0.00)} & 0.00\textsubscript{(0.00)} \\
CATCH & 0.35\textsubscript{(0.03)} & 0.20\textsubscript{(0.05)} & 0.02\textsubscript{(0.02)} & 0.31\textsubscript{(0.04)} & 0.20\textsubscript{(0.05)} & 0.04\textsubscript{(0.03)} \\
\bottomrule
\end{tabular}
\caption{
Misclassification rates for binary classification under identity covariance matrices. Upper section: orthogonal CP bases; lower section: non-orthogonal CP bases for $\calB$.}
\label{tab: 4}
\end{table}


\section{Real Data Analysis}
\label{sec:appl}
We evaluate our method on the \href{https://chrsmrrs.github.io/datasets/docs/home/}{MUTAG dataset}, a benchmark for graph classification tasks. This dataset comprises 188 molecular graphs, each representing a chemical compound. The nodes in these graphs are labeled by atom types (encoded as one-hot vectors), and the undirected edges represent chemical bonds between atoms. Each graph is classified as either mutagenic (1) or non-mutagenic (0), indicating its potential to induce genetic mutations.
To leverage the topological features of these graphs, we preprocess the data using the Tensor-view Topological Graph Neural Network (TTG-NN) approach proposed by \cite{wen2024tensorview}. This preprocessing step transforms each graph into a tensor of dimensions $64 \times 32 \times 32$, capturing the complex structural information of the molecules.

We partition the dataset into 5 non-overlapping pairs of training (150 samples) and testing (38 samples) sets. This design ensures each of the 188 observations appears in exactly one testing set, encompassing the entire dataset. For each pair, we choose the low-rank $R$ using leave-10-out cross-validation on the training set. Table \ref{tab:5} presents the averaged classification accuracy across these 5 partitions.

\vspace{1em}
\begin{table}[h!]
    \centering
    \begin{tabular}{lcccc}
        \hline
        \textbf{Methods} & \textbf{Sample $\calB$} & \textbf{DISTIP-CP} & \textbf{DISTIP-Tucker} & \textbf{CATCH} \\
        \hline
        Classification Accuracy & 0.761 & 0.862 & 0.782 & 0.681 \\
        \hline
    \end{tabular}
    \caption{Averaged classification accuracy across 5 pairs for each method.}
    \label{tab:5}
\end{table}
Recognizing that the results may depend on the randomness of data splits, we conduct a more comprehensive evaluation. We repeat the experiment 100 times, each time randomly sampling 38 observations for testing and using the remaining 150 for training. Table \ref{tab:6} summarizes the mean classification accuracy and standard deviations over these 100 replications.

\vspace{1em}
\begin{table}[h!]
    \centering
    \begin{tabular}{lcccc}
        \hline
        \textbf{Methods} & \textbf{Sample $\calB$} & \textbf{DISTIP-CP} & \textbf{DISTIP-Tucker} & \textbf{CATCH} \\
        \hline
        \multicolumn{1}{c}{Mean} & 0.71 & 0.82 & 0.72 & 0.70 \\
        \multicolumn{1}{c}{Std} & 0.046 & 0.026 & 0.052 & 0.008 \\
        \hline
    \end{tabular}
    \caption{Classification Accuracy (mean and standard deviation) for 100 replications.}
    \label{tab:6}
\end{table}
The MUTAG dataset analysis showcases DISTIP-CP's superiority in real-world tensor classification. Outperforming competitors across evaluations, it demonstrates the highest accuracy and lowest variance, highlighting its effectiveness in capturing complex tensor structures and potential for reliable predictions.

\section{Conclusion} \label{sec:summ}
This paper introduces a novel tensor classification framework that addresses critical gaps in the existing literature. The proposed approach employs a CP low-rank structure for the discriminant tensor, which provides a more flexible and parsimonious representation than current methods. This structure effectively captures complex interactions in tensor data that traditional sparse tensors fail to model, marking a significant innovation in the field.

To capitalize on this structure, we developed the Tensor HD CP-LDA algorithm, incorporating a novel Randomized Composite PCA (\textsc{rc-PCA}) initialization scheme and an iterative refinement algorithm. The CP-LDA algorithm achieves global convergence, particularly in challenging scenarios involving intricate signal strengths and non-orthogonal CP bases, substantially advancing the algorithm design.

Moreover, the paper establishes sharp bounds for estimating the CP low-rank discriminant tensor. Our analysis accounts for correlated noise tensor entries during perturbation analysis, offering a more realistic framework for practical applications where noise correlations are common. This approach can be applied to other tensor learning analyses. Additionally, we derived minimax optimal misclassification rates, providing the first comprehensive theoretical guarantees in tensor classification.

Despite the significant advancements made by the proposed method, several directions for future research remain: 
(1) Investigating the application of this framework to non-Gaussian tensor data is particularly promising. Previous work by \cite{shao2011sparse} has demonstrated the optimality of LDA rules for elliptical distributions in vector classification, suggesting that our tensor classification framework could perform well in non-Gaussian settings. (2) Additionally, exploring the integration of our CP low-rank approach with other tensor learning tasks, such as regression and clustering, could yield valuable insights.
%
%

\ \\
\spacingset{1.18} 
\bibliographystyle{apalike} 
\bibliography{\mybib}

%
%
\clearpage
\setcounter{page}{1}
\begin{appendices}
    \begin{center}
        {\Large Supplementary Material of ``\TITLE''}

        {Authors}
    \end{center}
    


\section{Proofs for Main Theories} \label{append:proof:main}

\subsection{Proofs for Initialization} \label{append:proof:initia}

\subsubsection*{Proof I: Noise Decomposition}

The initial estimator $\hat\cB$ can be viewed as perturbed version of the ground truth $\cB$ where the noise is actually the estimation error. For notational simplicity, we sometimes give explicit expressions only in the case of $m=1$ and $M=3$. 
We first decompose the perturbing error under $M=3$ without loss of generality. The extension towards higher tensor order $M>3$ is straightforward. 
Let $C_{m,\sigma}=(1-2/n_L)\tr(\otimes_{k\neq m}\Sigma_{k})/d_{-m}, \; C_\sigma = \prod_{m=1}^M C_{m,\sigma}=(1-2/n_L)^M [\tr(\bSigma)/d]^{M-1}=(1-2/n_L)^M[\prod_{m=1}^M \tr(\Sigma_m)/d]^{M-1}$.
Recall that from \eqref{eqn:lda-rule} and \eqref{eqn:lda-discrim-tensor},
\begin{equation*}
\calB = \bbrackets{\calM_{2}-\calM_{1}; \Sigma_1^{-1}, \Sigma_2^{-1}, \Sigma_3^{-1}} 
\quad \text{and} \quad 
\hat\calB = \bbrackets{\bar\calX^{(2)}-\bar\calX^{(1)}; \hat\Sigma_1^{-1}, \hat\Sigma_2^{-1}, \hat\Sigma_3^{-1}}  .
\end{equation*}
Recall
\begin{align*}
\hat C_{\sigma} =\frac{\prod_{m=1}^M \hat\Sigma_{m,11}}{\hat{\Var}(\cX_{1\cdots1})} ,
\end{align*}
where $\hat{\Var}(\cX_{1\cdots1})$ is the pooled sample variance of the first element of $\cX_i^{(k)}$.
Define $\bD_S = \mats(\cM_{2}-\cM_{1})$. Let $\bU_{M+m}:=\bU_{m}$ for all $1\le m\le M$.

\noindent Let $\bDelta_m = \hat\Sigma_m^{-1} - C_{m,\sigma}^{-1}\Sigma_m^{-1}:=\hat\Sigma_m^{-1} - \tilde\Sigma_m^{-1}$ for $m=1,...,M-1$, and
$\bDelta_M = \hat\Sigma_M^{-1} - C_{M,\sigma}^{-1} C_{\sigma} \Sigma_m^{-1}:=\hat\Sigma_M^{-1} - \tilde\Sigma_M^{-1}$. Due to the identifiability issues associated with the tensor normal distribution, we can rescale the covariance matrices $\Sigma_{m}$ such that $C_{m,\sigma}=1$ for $1\le m\le M-1$. In this sense, we slightly abuse notation by using $\Sigma_m$ instead of $\tilde\Sigma_m$.

Then, we have the following decomposition of the error
\begin{align}\label{eqn: error decomposition}
 \cE &= \hat\calB - \calB = \bbrackets{ (\bar\calX^{(2)}-\bar\calX^{(1)}) 
- (\calM_{2}-\calM_{1});\; \Sigma_1^{-1}, \Sigma_2^{-1}, \Sigma_3^{-1}}  \notag\\
&\quad  + \bbrackets{(\bar\calX^{(2)}-\bar\calX^{(1)})- (\calM_{2}-\calM_{1});\; \bDelta_1, \Sigma_2^{-1}, \Sigma_3^{-1}} 
+ \bbrackets{(\bar\calX^{(2)}-\bar\calX^{(1)})- (\calM_{2}-\calM_{1});\; \Sigma_1^{-1}, \bDelta_2, \Sigma_3^{-1}} \notag\\
&\quad  + \bbrackets{(\bar\calX^{(2)}-\bar\calX^{(1)})- (\calM_{2}-\calM_{1});\; \Sigma_1^{-1}, \Sigma_2^{-1}, \bDelta_3} 
+ \bbrackets{(\bar\calX^{(2)}-\bar\calX^{(1)})- (\calM_{2}-\calM_{1});\; \bDelta_1, \bDelta_2, \Sigma_3^{-1}} \notag\\
&\quad  + \bbrackets{(\bar\calX^{(2)}-\bar\calX^{(1)})- (\calM_{2}-\calM_{1});\; \bDelta_1, \Sigma_2^{-1}, \bDelta_3} 
+ \bbrackets{(\bar\calX^{(2)}-\bar\calX^{(1)})- (\calM_{2}-\calM_{1});\; \Sigma_1^{-1}, \bDelta_2, \bDelta_3} \notag\\
&\quad  + \bbrackets{(\bar\calX^{(2)}-\bar\calX^{(1)}) - (\calM_{2}-\calM_{1});\; \bDelta_1, \bDelta_2, \bDelta_3} \notag\\
&\quad + 
\bbrackets{(\calM_{2}-\calM_{1});\; \bDelta_1, \Sigma_2^{-1}, \Sigma_3^{-1}} 
+ 
\bbrackets{ (\calM_{2}-\calM_{1});\; \Sigma_1^{-1}, \bDelta_2, \Sigma_3^{-1}} 
+ 
\bbrackets{ (\calM_{2}-\calM_{1});\; \Sigma_1^{-1}, \Sigma_2^{-1}, \bDelta_3} \notag\\
&\quad + 
\bbrackets{ (\calM_{2}-\calM_{1});\; \bDelta_1, \bDelta_2, \Sigma_3^{-1}} 
+ 
\bbrackets{ (\calM_{2}-\calM_{1});\; \bDelta_1, \Sigma_2^{-1}, \bDelta_3} 
+ 
\bbrackets{ (\calM_{2}-\calM_{1});\; \Sigma_1^{-1}, \bDelta_2, \bDelta_3} \notag\\
&\quad + \bbrackets{(\calM_{2}-\calM_{1});\; \bDelta_1, \bDelta_2, \bDelta_3} \notag\\
&:=\cE_0+\sum_{k=1}^3 \cE_{1,k} +\sum_{k=1}^3 \cE_{2,k} + \cE_3 +\sum_{k=1}^3 \cE_{4,k} +\sum_{k=1}^3 \cE_{5,k} + \cE_6,
\end{align}
where
\begin{align*}
\cE_0 & = \bbrackets{ (\bar\calX^{(2)}-\bar\calX^{(1)}) 
- (\calM_{2}-\calM_{1});\; \Sigma_1^{-1}, \Sigma_2^{-1}, \Sigma_3^{-1}} ,\\
\cE_{1,1} & = \bbrackets{(\bar\calX^{(2)}-\bar\calX^{(1)})- (\calM_{2}-\calM_{1});\; \bDelta_1, \Sigma_2^{-1}, \Sigma_3^{-1}} ,\\
\cE_{2,1} & = \bbrackets{(\bar\calX^{(2)}-\bar\calX^{(1)})- (\calM_{2}-\calM_{1});\; \bDelta_1, \bDelta_2, \Sigma_3^{-1}} ,\\
\cE_3 & = \bbrackets{(\bar\calX^{(2)}-\bar\calX^{(1)})- (\calM_{2}-\calM_{1});\; \bDelta_1, \bDelta_2, \bDelta_3} ,\\
\cE_{4,1} & = \bbrackets{(\calM_{2}-\calM_{1});\; \bDelta_1, \Sigma_2^{-1}, \Sigma_3^{-1}} ,\\
\cE_{5,1} & = \bbrackets{(\calM_{2}-\calM_{1});\; \bDelta_1, \bDelta_2, \Sigma_3^{-1}} ,\\
\cE_6 & = \bbrackets{(\calM_{2}-\calM_{1});\; \bDelta_1, \bDelta_2, \bDelta_3} .\\
\end{align*}


\subsubsection*{Proof II: Initialization by \textsc{c-PCA} without Procedure \ref{alg:initialize-random}}

Taken $\hat{\cB}$ are the perturbed version of the ground truth, we then calculate the initialization through the \textsc{rc-PCA} Algorithm \ref{alg:initialize-cp}. 

In this step, we prove a general results of \textsc{rc-PCA} on any noisy tensor $\hat{\cB} = \cB + \cE$ where $\cB$ assumes a CP low-rank structure $\cB=\sum_{r=1}^R w_r \circ_{m=1}^M \ba_{rm}$ and $\cE$ is the perturbation error tensor, decomposed in \eqref{eqn: error decomposition}.  
CP basis for the $m$-th mode $\{\ba_{rm}\}$, $r\in[R]$ are not necessarily orthogonal. 

We first derive the convergence rate of the CP basis when Procedure \ref{alg:initialize-random} is not applied. Specifically, we consider eigengaps satisfy $\min\{ w_i- w_{i+1}, w_{i}-w_{i-1}\} \ge c w_R$ for all $1\le i\le R$, with $w_0=\infty, w_{R+1}=0$, and $c$ is sufficiently small constant. Let $\cS$ be a subset of $[M]$ that maximizes $\min(d_{\cS}, d_{\cS^C})$ with $d_{\cS}=\prod_{m\in\cS}d_m$ and $d_{\cS^C}=\prod_{m\in\cS^C}d_m$.

Let $\ba_{r,S}=\vect ( \circ_{m \in S}\ba_{rm})$ and $\ba_{r,S^c} =\vect (\circ_{m \in [M]\backslash S}~\ba_{rm})$. Define let $\bA_{S}=(\ba_{1,S},...,\ba_{R,S}) $ and $\bA_{S^c}=(\ba_{1,S^c},...,\ba_{R,S^c}) $.
Let $\bU = (\bu_1,\ldots,\bu_R)$ and $\bV = (\bv_1,\ldots,\bv_R)$ be the orthonormal matrices in Lemma \ref{lemma-transform-ext} with $\bA$ and $\bB$ there replaced respectively by $\bA_S$ and $\bA_{S^c}$. By Lemma \ref{lemma-transform-ext},  
\begin{align}\label{eq1:thm:initial}
\|\ba_{r,S} \ba_{r,S}^\top -  \bu_{r} \bu_{r}^{\top} \|_{2}\vee 
\| \ba_{r,S^c} \ba_{r,S^c}^\top -  \bv_{r}  \bv_{r}^{\top} \|_{2} \le \delta,\quad \big\|{\rm mat}_{S}(\cB) - \bU\Lambda \bV^\top\big\|_{2}\le \sqrt{2}\delta w_1. 
\end{align}
Let $\cE^* = {\rm mat}_{S}(\cE)={\rm mat}_{S}(\hat\cB- \cB)$. We have 
\begin{align*}
\big\|{\rm mat}_{S}(\hat\cB) - \bU\Lambda \bV^\top\big\|_{2}\le \sqrt{2}\delta w_1 + \|\cE^*\|_{2}. 
\end{align*}
As $w_1>w_2>...>w_R> w_{R+1}=0$, Wedin's perturbation theorem \citep{wedin1972perturbation} provides 
\begin{align} 
\max\left\{ \|\widehat \ba_{r,S}\widehat \ba_{r,S}^\top - \bu_{r} \bu_{r}^{\top} \|_{2}, \|\widehat \ba_{r,S^c} \widehat \ba_{r,S^c}^\top - \bv_{r} \bv_{r}^{\top} \|_{2} \right\} 
\le \frac{2 \sqrt{2} w_1 \delta + 2 \|\cE^*\|_{2} }{\min\{ w_{r-1}-w_r, w_r-w_{r+1}\}}. \label{eq3:thm:initial}
\end{align}
Combining \eqref{eq1:thm:initial} and \eqref{eq3:thm:initial}, we have
\begin{align}\label{eq4:thm:initial}
\max\left\{ \|\widehat \ba_{r,S} \widehat \ba_{r,S}^\top - \ba_{r,S} \ba_{r,S}^{\top} \|_{2}, \|\widehat \ba_{r,S^c} \widehat \ba_{r,S^c}^\top - \ba_{r,S^c} \ba_{r,S^c}^{\top} \|_{2} \right\} &\le \delta+\frac{2\sqrt{2} w_1 \delta + 2 \|\cE^*\|_{2} }{\min\{ w_{r-1}-w_r, w_r-w_{r+1}\}}.
\end{align}

\noindent Consider the decomposition of the error term in \eqref{eqn: error decomposition}. The first term $\cE_0$ is a Gaussian tensor. Specifically, from the assumption on the tensor-variate $\cX$, we have $\cE_0 \sim \cT\cN(0; \frac{1}{n}\bSigma^{-1})$, where $\bSigma^{-1}= [\Sigma_m^{-1}]_{m=1}^M$. As $\| \otimes_{m=1}^M \Sigma_m\|_2 \le C_0$, by Lemma \ref{lemma:Gaussian matrix}, in an event $\Omega_{11}$ with probability at least $1-\exp(-c_1 (d_S+d_{S^c}))$, for all $m=1,2,3$, we have
\begin{align}
\left\|\mats(\cE_0) \right\|_2&\le C \frac{\sqrt{d_S}+\sqrt{d_{S^c}}}{\sqrt{n}}.    \label{eq:init_E0a}
\end{align}
For $\cE_{1,k},\cE_{2,k},\cE_3$, the operator norm of their matricization is of a smaller order compared to that of a Gaussian tensor. By Lemma \ref{lemma:precision matrix} and as $\| \otimes_{m=1}^M \Sigma_m^{-1}\|_2 \le C_0$, in an event $\Omega_{12}$ with probability at least $1-n^{-c_1}-\sum_{m=1}^M \exp(-c_1 d_m)$,
\begin{align*}
\bDelta_m &\le   C \sqrt{ \frac{d_m}{n d_{-m}}  }.
\end{align*}
For the bound of $\bDelta_3$, we need to use Lemma \ref{lemma:precision matrix}(ii) with $t_1\asymp t_2\asymp \log(n)$ to derive $|\hat C_{\sigma} - C_{\sigma}|=o(1)$ in the event $\Omega_{12}$.
Since $nd_{-m}\gtrsim d_m$ for all $m$, in the event $\Omega_{12}$, $\bDelta_m\lesssim 1$.
Thus, by Lemma \ref{lemma:Gaussian matrix}, in the event $\Omega_{11}\cap \Omega_{12}$ with probability at least $1-n^{-c_2}-\sum_{m=1}^M \exp(-c_2 d_m)$, we have
\begin{align}
\left\| \mats(\cE_{1,1}) \right\|_2    & \le \left\| \mats\left( \bbrackets{(\bar\cX^{(2)}-\bar\cX^{(1)})- (\cM_{2}-\cM_{1});\; \bI_{d_1}, \Sigma_2^{-1}, \Sigma_3^{-1}} \right) \right\|_2 \cdot \|\bDelta_1\|_2 \notag \\
&\le C_1 \frac{\sqrt{d_S}+\sqrt{d_{S^c}}}{\sqrt{n}} \cdot \sqrt{ \frac{d_1}{n d_{-1}}  }  \notag \\
&\le C \frac{\sqrt{d_S}+\sqrt{d_{S^c}}}{\sqrt{n}} , \label{eq:init_E1a}
\end{align}
for all $m=1,2,3$. Similarly, in the event $\Omega_{11}\cap \Omega_{12}$, $\| \mats(\cE_{1,k}) \|_2$, $\| \mats(\cE_{2,k}) \|_2$ and $\| \mats(\cE_3) \|_2$, $k=1, 2,3$, have the same upper bound in \eqref{eq:init_E1a}.
Recall $\bD_S = \mats(\cM_{2}-\cM_{1})$, by lemma \ref{lemma:precision matrix}, in the event $\Omega_{12}$,
\begin{align}
\left\| \mats(\cE_{4,k} ) \right\|_2    &\le C  \|\bD_S\|_2 \max_m \sqrt{\frac{d_m}{n  d_{-m}}}, \qquad k=1,2,3. \label{eq:init_E2a}
\end{align}
Again, in the event $\Omega_{12}$, $\| \mats(\cE_{5,k}) \|_2$, $\| \mats(\cE_6) \|_2$ have the same upper bound in \eqref{eq:init_E2a}.
It follows that, in the event $\Omega_1=\Omega_{11}\cap \Omega_{12}$ with probability at least $1-n^{-c_2}-\sum_{m=1}^M \exp(-c_2 d_m)$, we have
\begin{align}\label{eq5:thm:initial}
\| \mats(\cE)\|_2 =\|\cE^*\|_2 &\le  C \frac{\sqrt{d_S}+\sqrt{d_{S^c}}}{\sqrt{n}} + C  \|\bD_S\|_2 \max_{1\le m\le M} \sqrt{\frac{d_m}{n  d_{-m}}}  .    
\end{align}

We formulate each $\widehat \bu_r\in\RR^{d_S}$ to be a $|S|$-way tensor $\widehat \bU_r$. Let $\widehat \bU_{rm}={\rm mat}_m(\widehat \bU_r)$, which is viewed as an estimate of $\ba_{rm}\vect(\circ_{\ell \in S\backslash \{m\}}~\ba_{r\ell})^\top\in\RR^{d_m\times (d_S/d_m)}$. Then $\hat\ba_{rm}^{\rm rcpca}$ is the top left singular vector of $\widehat \bU_{rm}$. 
By Lemma \ref{prop-rank-1-approx}, for any $m\in S$
\begin{align*}
\|\hat\ba_{rm}^{\rm rcpca} \hat\ba_{rm}^{{\rm rcpca}\top} - \ba_{rm} \ba_{rm}^\top \|_{2}^2 \wedge (1/2) \le 
\|\widehat \ba_{r,S} \widehat \ba_{r,S}^\top - \ba_{r,S} \ba_{r,S}^{\top} \|_{2}^2 .
\end{align*}
Similar bound can be obtained for $\|\hat\ba_{rm}^{\rm rcpca} \hat\ba_{rm}^{{\rm rcpca}\top} - \ba_{rm} \ba_{rm}^\top \|_{2}$ 
for $m\in S^c$. Substituting \eqref{eq4:thm:initial} and \eqref{eq5:thm:initial} into the above equation, as $ \|\mats( \cM_{2} - \cM_{1} ) \|_{2} \asymp w_1$, we have the desired results.

\subsubsection*{Proof III: Initialization by \textsc{rc-PCA}}

Consider the situation when Procedure \ref{alg:initialize-random} is applied.

\noindent
\textsc{Step I.} Random projections in Procedure \ref{alg:initialize-random} create desirable eigen-gaps between the first and second largest eigen-values, and achieve desired upper bounds.

Let $\ba_{r,S_1}=\vect ( \circ_{m \in S_1}\ba_{rm})$ and $\ba_{r,S_1^c} =\vect (\circ_{m \in S_1} \ba_{rm})$, $\Lambda=\diag(w_1,...,w_R)$ and $w_1\ge w_2\ge\cdots \ge w_R$. Define $\bA_{S_1}=(\ba_{1,S_1},...,\ba_{R,S_1}) $ and $\bA_{S_1^c}=(\ba_{1,S_1^c},...,\ba_{R,S_1^c}) $. Let $\delta_k = \| \bA_k^\top  \bA_k - I_{R}\|_{2}$ and $\delta_{S_1} = \| \bA_{S_1}^\top  \bA_{S_1} - I_{R}\|_{2}$. Note that $\|\bD_S\|_2 = \|\mats( \cM_{2} - \cM_{1} ) \|_{2} \asymp w_1$.

\begin{lemma}\label{lem:rcpca}
Assume $\delta_1 (w_1/w_R)\le c$ for a sufficiently small positive constant $c$.
Apply random projection in Procedure \ref{alg:initialize-random} to the whole sample discriminant tensor $\widehat\cB$ with $L\ge Cd_1^2 \vee Cd_1R^{2(w_1/w_R)^2}$. Denote the estimated CP basis vectors as $\widetilde \ba_{\ell m}$, for $1\le \ell\le L, 1\le m\le M$. Then in an event with probability at least $1-n^{-c}-d_1^{-c}-\sum_{m=2}^M e^{-c d_m}$, we have for any CP basis vectors tuple $(\ba_{rm}, 1\le m\le M)$, there exist $j_r\in[L]$ such that
\begin{align}
\left\| \widetilde \ba_{j_r,m} \widetilde \ba_{j_r,m}^\top  - \ba_{rm} \ba_{rm}^\top \right\|_2  &\le \psi_i,  \qquad 2\le m\le M,\\
\left\| \widetilde \ba_{j_r,1} \widetilde \ba_{j_r,1}^\top  - \ba_{r1} \ba_{r1}^\top \right\|_2  &\le \psi_i + C \sqrt{R-1}\prod_{m=2}^M \delta_m (w_1/w_R) + C \sqrt{R-1}\psi_i^{M-1} (w_1/w_R),
\end{align}
where $1\le r\le R$ and 
\begin{align}\label{eq:psi-i}
\psi_i=C \left( \frac{\sqrt{d_1d_{S_1}}+\sqrt{d_1d_{S_1^c}} }{\sqrt{n}w_i} + \frac{\|\bD_S\|_2}{w_i} \max_{1\le k\le M} \sqrt{\frac{d_k}{n  d_{-k}}} \right) + \frac{2\delta_1 w_1}{w_R}.    
\end{align}
\end{lemma}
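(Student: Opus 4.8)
\noindent\textbf{Proof strategy for Lemma \ref{lem:rcpca}.}
The plan is to analyze a single random projection, and then union-bound over the $L$ independent projections and the $R$ target components. Fix a target $r\in[R]$ and $\theta\sim\cN(0,I_{d_1})$, and set $g_j=\ba_{j1}^\top\theta$, so that
\[
\widehat\cB\times_1\theta=\cB\times_1\theta+\cE\times_1\theta,\qquad
\cB\times_1\theta=\sum_{j=1}^R w_j g_j\,\big(\circ_{m=2}^M\ba_{jm}\big),
\]
with $\cE$ decomposed as in \eqref{eqn: error decomposition}. The governing event is
\[
\cG_r=\Big\{|g_r|\ge\tfrac12\Big\}\ \cap\ \Big\{w_r|g_r|\ge 2\max_{j\ne r}w_j|g_j|\Big\}\ \cap\ \big\{\|\theta\|_2\le C\sqrt{d_1}\big\},
\]
on which the $r$-th rank-one term dominates $\cB\times_1\theta$ with a constant-factor spectral gap while the projected Gaussian noise stays at its typical scale. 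I would first lower-bound $\PP(\cG_r)$, so that $L$ projections cover every target; then show that on $\cG_r$ the CPCA steps of Procedure \ref{alg:initialize-random} applied to $\widehat\cB\times_1\theta$ recover $\ba_{rm}$, $m\ge 2$, to accuracy $\psi_i$ from \eqref{eq:psi-i}; and finally that the mode-$1$ refinement step recovers $\ba_{r1}$ with the extra error terms.

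\textbf{Step 1: the probabilistic estimate (the main obstacle).}
Since the $\ba_{r1}$ are unit vectors with coherence $\delta_1$, the vector $(g_1,\dots,g_R)$ is $\cN(0,\bA_1^\top\bA_1)$ with $\|\bA_1^\top\bA_1-I_R\|_2=\delta_1$ and $\delta_1(w_1/w_R)\le c$ by hypothesis. Conditioning $g_r$ to lie in an interval of length of order $1$ centred at $t$ with $t\asymp(w_1/w_r)\sqrt{\log R}$, Gaussian anti-concentration gives $\PP(g_r\in\cdot)\gtrsim e^{-Ct^2}\gtrsim R^{-C(w_1/w_R)^2}$; conditionally on $g_r=t$ the remaining $g_j$ are jointly Gaussian with means of order $\delta_1 t=O(c\sqrt{\log R})$ and near-unit variances, so $\max_{j\ne r}|g_j|\le\tfrac{w_r}{2w_1}t\asymp\sqrt{\log R}$ with constant probability; and $\PP(\|\theta\|_2\le C\sqrt{d_1})\ge 1-e^{-cd_1}$. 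Multiplying yields $\PP(\cG_r)\gtrsim R^{-C(w_1/w_R)^2}$, hence by a union bound over $r$ and over the complement of the $L$ i.i.d.\ projections, $\PP(\text{some target uncovered})\le R\big(1-\PP(\cG_r)\big)^L\le R\,e^{-L\PP(\cG_r)}\le d_1^{-c}$ once $L\ge Cd_1^2\vee Cd_1 R^{2(w_1/w_R)^2}$ (the two branches handling small $R$ and growing $w_1/w_R$ respectively). Intersecting with the event $\Omega_1=\Omega_{11}\cap\Omega_{12}$ of Proof II, on which the operator norms of the pieces of $\cE$ and the $\bDelta_m$ are controlled, gives the stated probability $1-n^{-c}-d_1^{-c}-\sum_{m=2}^M e^{-cd_m}$.

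\textbf{Step 2: recovery of modes $2,\dots,M$ on $\cG_r$.}
On $\cG_r$, ${\rm mat}_{S_1}(\cB\times_1\theta)=\bA_{S_1}\diag(w_jg_j)\bA_{S_1^c}^\top$, so by Lemma \ref{lemma-transform-ext} its SVD is, up to operator norm $\sqrt2\,\delta_{S_1}\max_j w_j|g_j|$, that of a matrix whose leading singular pair $(\bu_r,\bv_r)$ satisfies $\|\ba_{r,S_1}\ba_{r,S_1}^\top-\bu_r\bu_r^\top\|_2\vee\|\ba_{r,S_1^c}\ba_{r,S_1^c}^\top-\bv_r\bv_r^\top\|_2\le\delta_{S_1}$, with leading singular value of order $w_r|g_r|$ separated by a gap of the same order. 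The projected noise ${\rm mat}_{S_1}(\cE\times_1\theta)$ is bounded term by term along \eqref{eqn: error decomposition}, using $\|\theta\|_2\le C\sqrt{d_1}$ in Lemma \ref{lemma:Gaussian matrix} for the Gaussian part (contribution $\lesssim(\sqrt{d_1 d_{S_1}}+\sqrt{d_1 d_{S_1^c}})/\sqrt n$) and Lemma \ref{lemma:precision matrix} for the $\bDelta_m$-terms (contribution $\lesssim\|\bD_S\|_2\max_k\sqrt{d_k/(nd_{-k})}$). Wedin's theorem, exactly as in the derivation of \eqref{eq3:thm:initial}, gives $\|\widetilde\bu_\ell\widetilde\bu_\ell^\top-\bu_r\bu_r^\top\|_2\lesssim\big(\delta_{S_1}w_1|g_r|+\|{\rm mat}_{S_1}(\cE\times_1\theta)\|_2\big)/(w_r|g_r|)$, which after dividing by $w_r|g_r|\ge w_r/2$ is of order $\delta_1(w_1/w_R)+\psi_i$. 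Folding $\widetilde\bu_\ell,\widetilde\bv_\ell$ into $\widetilde\ba_{\ell m}$ for $m\in S_1$, resp.\ $m\in S_1^c$, via Lemma \ref{prop-rank-1-approx} preserves this order, establishing the bound for $2\le m\le M$ with $j_r=\ell$.

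\textbf{Step 3: recovery of mode $1$, and the hard point.}
Finally $\widetilde\ba_{\ell 1}$ is the normalization of $\widehat\cB\times_{m=2}^M\widetilde\ba_{\ell m}=\sum_{j=1}^R w_j\big(\prod_{m=2}^M\ba_{jm}^\top\widetilde\ba_{\ell m}\big)\ba_{j1}+\cE\times_{m=2}^M\widetilde\ba_{\ell m}$. The $j=r$ term equals $w_r\beta_r\ba_{r1}$ with $\beta_r=\prod_{m=2}^M\ba_{rm}^\top\widetilde\ba_{\ell m}=1+O((M-1)\psi_i^2)$; for $j\ne r$ each factor obeys $|\ba_{jm}^\top\widetilde\ba_{\ell m}|\le|\ba_{jm}^\top\ba_{rm}|+\psi_i\le\delta_m+\psi_i$, so $\big|\prod_{m=2}^M\ba_{jm}^\top\widetilde\ba_{\ell m}\big|\lesssim\prod_{m=2}^M\delta_m+\psi_i^{M-1}$ and $\big\|\sum_{j\ne r}w_j\big(\prod_m\ba_{jm}^\top\widetilde\ba_{\ell m}\big)\ba_{j1}\big\|_2\lesssim w_1\sqrt{R-1}\big(\prod_{m=2}^M\delta_m+\psi_i^{M-1}\big)$, while $\|\cE\times_{m=2}^M\widetilde\ba_{\ell m}\|_2=O(w_r\psi_i)$ by the noise bounds of Step 2. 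The elementary inequality $\big\|\tfrac{u}{\|u\|}\tfrac{u^\top}{\|u\|}-\tfrac{v}{\|v\|}\tfrac{v^\top}{\|v\|}\big\|_2\le 2\|u-v\|_2/\|v\|_2$ with $v=w_r\ba_{r1}$ then yields $\|\widetilde\ba_{\ell 1}\widetilde\ba_{\ell 1}^\top-\ba_{r1}\ba_{r1}^\top\|_2\lesssim\psi_i+\sqrt{R-1}\,(w_1/w_R)\big(\prod_{m=2}^M\delta_m+\psi_i^{M-1}\big)$, the claimed bound. I expect the delicate part to be Step 1: obtaining $\PP(\cG_r)\gtrsim R^{-C(w_1/w_R)^2}$ requires a careful conditional Gaussian tail/anti-concentration argument that absorbs the mode-$1$ correlation $\delta_1$, and one must also verify that the constant-factor spectral gap on $\cG_r$ is not destroyed by the non-orthogonality $\delta_{S_1}$ of the folded CP bases.
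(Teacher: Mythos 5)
Your overall decomposition matches the paper's (analyze a single random projection, recover modes in $S_1,S_1^c$ via CPCA, then do a mode-$1$ refinement), and Steps 2 and 3 are essentially the right perturbation arithmetic. But Step 1 — the probabilistic step you yourself flag as delicate — uses a genuinely different strategy from the paper, and it does not close with the stated $L$.

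\textbf{The quantitative gap in Step 1.} Your ``good event'' $\cG_r$ demands a factor-$2$ dominance $w_r|g_r|\ge 2\max_{j\ne r}w_j|g_j|$. For $j\ne r$, the conditional $g_j$ given $g_r=t$ has near-unit variance and small mean, so $\max_{j\ne r}|g_j|$ concentrates near $\sqrt{2\log R}$. To have the dominance constraint hold with at least constant probability you therefore need the threshold $\tfrac{w_r}{2w_1}t\gtrsim\sqrt{2\log R}$, i.e.\ $t\gtrsim 2\sqrt{2}\,(w_1/w_r)\sqrt{\log R}$, and hence $\PP(g_r\approx t)\asymp e^{-t^2/2}\asymp R^{-4(w_1/w_r)^2}$, \emph{not} $R^{-2(w_1/w_r)^2}$. (With the $t\asymp(w_1/w_r)\sqrt{\log R}$ you proposed, the conditional probability $\PP\bigl(\max_{j\ne r}|g_j|\le\tfrac12\sqrt{\log R}\bigr)$ is $\bigl(1-\Theta(R^{-1/8})\bigr)^{R-1}\to 0$, so $\PP(\cG_r)$ is far smaller than you assert.) The union-bound approach then needs $L\gtrsim R^{4(w_1/w_R)^2}\log d_1$, which exceeds the lemma's $L\ge C d_1^2\vee C d_1 R^{2(w_1/w_R)^2}$ unless $d_1$ is itself at least of order $R^{2(w_1/w_R)^2}$. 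So with your event the exponent in $L$ comes out too large by a factor of $2$, and the lemma is not proved.

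\textbf{How the paper avoids this.} The paper does not condition on a rare good event and count hits. Instead it takes $\ell_*=\arg\max_{\ell}\ba_{i1}^\top\theta_\ell$, and uses the key decomposition $\theta_\ell = (\ba_{i1}^\top\theta_\ell)\ba_{i1}+(I-\ba_{i1}\ba_{i1}^\top)\theta_\ell$ together with the \emph{independence} of the two pieces: since the argmax depends only on the parallel coordinate, the orthogonal components $(I-\ba_{i1}\ba_{i1}^\top)\theta_{\ell_*}$ remain a fresh Gaussian. This gives, with probability $\ge 1/2$ over the $L$ draws, the anti-concentration lower bound $\ba_{i1}^\top\theta_{\ell_*}\gtrsim\sqrt{2\log L}$ \emph{and} the concentration upper bound $\max_{r}|\ba_{r1}^\top(I-\ba_{i1}\ba_{i1}^\top)\theta_{\ell_*}|\lesssim\sqrt{\log R}$ simultaneously, so the projected eigengap is $\gtrsim\sqrt{\log d_1}\,w_i$ (growing, not merely a constant) once $L\gtrsim d_1\vee R^{2(w_1/w_i)^2}$. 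The probability is then amplified from $1/2$ to $1-d_1^{-c}$ by splitting the $L$ draws into $K=O(\log d_1)$ independent blocks, and the $R$-fold union over factors yields the final $L\ge C d_1^2\vee C d_1 R^{2(w_1/w_R)^2}$. Note also that the paper handles the noise projection $\|{\rm mat}_{S_1}(\cE\times_1\theta)\|_2$ via a matrix Gaussian-series (Tropp) bound conditional on $\cE$, giving an extra $\sqrt{\log d_1}$ factor that then cancels against the $\sqrt{\log L}$-scaled gap, rather than conditioning on $\|\theta\|_2\le C\sqrt{d_1}$ and bounding over the randomness in $\cE$ as you do. Both routes recover the same $\psi_i$, but the argmax/independence structure is what lets the exponent in $L$ be $2(w_1/w_R)^2$ instead of $4(w_1/w_R)^2$, and is the missing idea in your proposal.
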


\begin{proof}
Define
\begin{align*}
\Xi(\theta)={\rm mat}_{S_1} (\widehat \cB\times_1\theta )=\sum_{r=1}^R w_r (\ba_{r1}^\top \theta) \ba_{r,S_1} \ba_{r,S_1^c}^\top + {\rm mat}_{S_1}(\cE\times_1\theta).    
\end{align*}

First, consider the upper bound of $\| {\rm mat}_{S_1}(\cE\times_1\theta) \|_2$. By concentration inequality for matrix Gaussian sequence (see, for example Theorem 4.1.1 in \cite{tropp2015introduction}) and employing similar arguments in \eqref{eq5:thm:initial} of Proof II, we have, in an event $\Omega_0$ with probability at least $1-n^{-c}-d_1^{-c}-\sum_{m=2}^M e^{-c d_m}$,
\begin{align} \label{eq:init-e}
\| {\rm mat}_{S_1}(\cE\times_1\theta) \|_2 &= \left\|\sum_{i=1}^{d_1} \theta_{i} \cE_{i \cdot\cdot } \right\|_2    \le C \max\left\{ \left\|{\rm mat}_{(12),(3)} (\cE) \right\|_2  ,   \left\|{\rm mat}_{(13),(2)} (\cE) \right\|_2 \right\} \cdot \sqrt{\log(d_1)} \notag\\
&\le C \left( \frac{\sqrt{d_1d_{S_1}}+\sqrt{d_1d_{S_1^c}}}{\sqrt{n}} + \|\bD_S\|_2 \max_{1\le m\le M} \sqrt{\frac{d_m}{n  d_{-m}}} \right) \sqrt{\log(d_1)} .
\end{align}
When $M=3$, $\cE_{i \cdot\cdot }$ represents the $i$-th slice of $\cE$, ${\rm mat}_{(12),(3)}(\cdot)$ denotes the reshaping of order-three tensor into a matrix by collapsing its first and second indices as rows, and the third indices as columns. In the last step, we apply the arguments in \eqref{eq5:thm:initial} of Proof II.

Consider the $i$-th factor and rewrite $\Xi(\theta)$ as follows
\begin{align}
\Xi(\theta)= w_i (\ba_{i1}^\top \theta) \ba_{i,S_1} \ba_{i,S_1^c}^\top + \sum_{r\neq i}^R w_r (\ba_{r1}^\top \theta) \ba_{r,S_1} \ba_{r,S_1^c}^\top + {\rm mat}_{S_1}(\cE\times_1\theta).
\end{align}
Suppose now we repeatedly sample $\theta_{\ell}\sim \theta$, for $\ell=1,...,L$. By the anti-concentration inequality for Gaussian random variables (see Lemma B.1 in \cite{anandkumar2014tensor}), we have
\begin{align}
\PP\left( \max_{1\le \ell \le L} \ba_{i1}^\top \theta_{\ell} \le \sqrt{2 \log(L)} - \frac{\log\log(L)}{4 \sqrt{\log(L)}} - \sqrt{2\log(8)} \right)\le \frac14,    
\end{align}
where $\odot$ denotes Kronecker product. Let
\begin{align*}
\ell_*=\arg\max_{1\le \ell\le L}  \ba_{i1}^\top \theta_{\ell}   .
\end{align*}
Note that $\ba_{i1}^\top \theta_{\ell}$ and $( I_{d_1} - \ba_{i1}\ba_{i1}^\top ) \theta_{\ell}$ are independent. Since the definition of $\ell_*$ depends only on $\ba_{i1}^\top \theta_{\ell}$, this implies that the distribution of $( I_{d_1} - \ba_{i1}\ba_{i1}^\top ) \theta_{\ell}$ does not depend on $\ell_*$.

By Gaussian concentration inequality of $1$-Lipschitz function, we have
\begin{align*}
\PP\left( \max_{r\le R}  \ba_{r1}^\top \big( I_{d_1} - \ba_{i1}\ba_{i1}^\top \big) \theta_{\ell} \ge \sqrt{4 \log(R)} +\sqrt{2 \log(8)} \right) \le \frac14   . 
\end{align*}
Moreover, for the reminder bias term $\ba_{r1}^\top \ba_{i1}\ba_{i1}^\top \theta_{\ell}$, we have,
\begin{align*}
\left\| \sum_{r\neq i} w_r \ba_{r 1}^\top  \ba_{i1}\ba_{i1}^\top \theta_{\ell} \cdot \ba_{r,S_1} \ba_{r,S_1^c}^\top \right\|_2    
&\le \ba_{i1}^\top \theta_{\ell} \cdot \left\| \bA_{S_1} \left( \Lambda \odot \left(\diag (\bA_1^\top \ba_{i1} ) - e_{ii}\right) \right)\bA_{S_1^c}^\top \right\|_2 \\
&\le \ba_{i1}^\top \theta_{\ell}  \|\bA_{S_1} \|_2\|\bA_{S_1^c} \|_2  \|\Lambda\|_2 \left\| \diag (\bA_1^\top \ba_{i1} ) - e_{ii} \right\|_2 \\
&\le (1+\delta_{S_1}\vee \delta_{S_1^c}) \delta_1 w_1 \ba_{i1}^\top \theta_{\ell},
\end{align*}
where $\odot$ denotes Hadamard product, $e_{ii}$ is a $d_1\times d_1$ matrix with the $(i,i)$-th element be 1 and all the others be 0.

Thus, we obtain the top eigengap 
\begin{align}\label{eq:lem_gap}
&w_i (\ba_{i1}^\top \theta) -  \left\| \sum_{r\neq i}^R w_r (\ba_{r1}^\top \theta) \ba_{r,S_1} \ba_{r,S_1^c}^\top \right\|_2   \notag\\
&\ge (1- 2\delta_1 w_1/w_i)\left( \sqrt{2 \log(L)} - \frac{\log\log(L)}{4 \sqrt{\log(L)}} - \sqrt{2\log(8)} \right) w_{i} - \left( \sqrt{4 \log(R)} +\sqrt{2\log(8)} \right) w_{i}(w_1/w_i) \notag\\
&\ge C_0 \sqrt{\log(d_1)} w_{i},
\end{align}
with probability at least $\frac12$, by letting $L\ge Cd_1 \vee CR^{2(w_1/w_R)^2}$.

Since $\theta_{\ell}$ are independent samples, we instead take $L_i=L_{i1}+\cdots+L_{iK}$ for $K=\lceil C_1\log (d_1)/\log(2) \rceil$ and $L_{i1},...,L_{iK}\ge Cd_1 \vee CR^{2(w_1/w_R)^2}$. We define
\begin{align*}
\ell_*^{(k)}=\arg\max_{1\le \ell\le L_{ik}}  \ba_{i1}^\top \theta_{\ell}, \quad   \ell_{*}=\arg\max_{1\le \ell\le L_{i}}  \ba_{i1}^\top \theta_{\ell}.   
\end{align*}
We then have, by independence of $\theta_{\ell}$, that the above statement \eqref{eq:lem_gap} for the $i$-th factor holds in an event $\Omega_i$ with probability at least $1-d_1^{-C_1}$. By Wedin's perturbation theory, we have in the event $\Omega_0\cap\Omega_i$,
\begin{align*}
\left\| \widetilde \bu_{\ell_*} \widetilde \bu_{\ell_*}^\top -  \ba_{i,S_1} \ba_{i,S_1}^\top \right\|_2 \vee \left\| \widetilde \bv_{\ell_*} \widetilde \bv_{\ell_*}^\top -  \ba_{i,S_1^c} \ba_{i,S_1^c}^\top \right\|_2 \le \frac{\| {\rm mat}_{S_1}(\cE\times_1\theta) \|_2}{w_{i}\sqrt{\log(d_1)}} + \frac{2\delta_1 w_1}{w_R},    
\end{align*}
where $\widetilde \bu_{\ell_*}$ and $\widetilde \bv_{\ell_*}$ are the top left and right singular vector of $\Xi(\theta_{\ell_*})$. By Lemma \ref{prop-rank-1-approx} and \eqref{eq:init-e},
\begin{align}\label{eq1:lem_rcpca}
\left\| \widetilde \ba_{\ell_*,m} \widetilde \ba_{\ell_*,m}^\top - \ba_{im} \ba_{im}^\top \right\|_2 \le C \left( \frac{\sqrt{d_1d_{S_1}}+\sqrt{d_1d_{S_1^c}} }{\sqrt{n}w_i} + \frac{\|\bD_S\|_2}{w_i} \max_{1\le k\le M} \sqrt{\frac{d_k}{n  d_{-k}}} \right) + \frac{2\delta_1 w_1}{w_R}, 
\end{align}
for all $2\le m\le M,1\le i\le R.$

Now consider to obtain $\widetilde \ba_{\ell_*,1}$. Write $\psi_i$ to be the error bound on the right hand side of \eqref{eq1:lem_rcpca}. Note that
\begin{align*}
\widehat \cB \times_{m=2}^M \widetilde \ba_{\ell_*,m}  =& \prod_{m=2}^M \left(\widetilde \ba_{\ell_*,m}^\top \ba_{im} \right) w_{i}  \ba_{i1}  +  \sum_{r\neq i} \prod_{m=2}^M \left(\widetilde \ba_{\ell_*,m}^\top \ba_{rm} \right) w_{r}  \ba_{r1}  +\cE  \times_{m=2}^M \widetilde \ba_{\ell_*,m}   .
\end{align*}
By \eqref{eq1:lem_rcpca}, in the event $\Omega_0\cap\Omega_1$,
\begin{align*}
&\left\| \cE  \times_{m=2}^M \widetilde \ba_{\ell_*,m}  \right\|_2    \le \| \cE \|_{\rm op},\\
&\prod_{m=2}^M \left(\widetilde \ba_{\ell_*,m}^\top \ba_{im} \right)  \ge (1-\psi_i^2)^{(M-1)/2}  .
\end{align*}
Since 
\begin{align}\label{eq:a_decomp}
\max_{j\neq i}\big| \ba_{j m}^\top \widetilde \ba_{\ell_*,m}  \big| &=\max_{j\neq i}\big| \widetilde \ba_{\ell_*,m}^\top \ba_{im} \ba_{im}^\top \ba_{j m} + \widetilde \ba_{\ell_*,m}^\top (I - \ba_{im} \ba_{im}^\top ) \ba_{j m}   \big|    \notag\\
&\le \max_{j\neq i}\big| \widetilde \ba_{\ell_*,m}^\top \ba_{im} \big| \big| \ba_{im}^\top \ba_{j m} \big| + \max_{j\neq i}\big\|\widetilde \ba_{\ell_*,m}^\top (I - \ba_{im} \ba_{im}^\top ) \big\|_2 \big\| (I - \ba_{im} \ba_{im}^\top )  \ba_{j m}   \big\|_2 \notag\\
&\le \sqrt{1-\psi_i^2} \delta_m + \psi_i \sqrt{1-\delta_m^2} \le \delta_m +\psi_i,
\end{align}
we have
\begin{align*}
\left\| \sum_{r\neq i} \prod_{m=2}^M \left(\widetilde \ba_{\ell_*,m}^\top \ba_{rm} \right) w_{r}  \ba_{r1}  \right\|_2^2 &\le (R-1)(1+\delta_1)\prod_{m=2}^M(\delta_m+\psi_i)^2 w_1^2  \\
&\le C_M (R-1)  \left(\prod\nolimits_{m=2}^M \delta_m^2 +\psi_i^{2M-2} \right) w_1^2 \\
&\le C_M \left(\prod\nolimits_{m=3}^M \delta_m^2 +\psi_i^{2M-4} \right) w_1^2 .
\end{align*}
By Wedin's perturbation theory,
\begin{align}\label{eq2:lem_rcpca}
\left\| \widetilde \ba_{\ell_*,1} \widetilde \ba_{\ell_*,1}^\top - \ba_{i1} \ba_{i1}^\top \right\|_2 \le \psi_i + C \sqrt{R-1}\prod_{m=2}^M \delta_m (w_1/w_R) + C \sqrt{R-1}\psi_i^{M-1} (w_1/w_R) .    
\end{align}

Repeat the same argument again for all $1\le i\le R$ factors, and let $L=\sum_{i} L_i\ge Cd_1^2 \vee Cd_1 R^{2(w_1/w_R)^2} \ge Cd_1 R\log(d_1) \vee CR^{2(w_1/w_R)^2+1}\log(d_1)$. We have, in the event $\Omega_0\cap\Omega_1\cap\cdots\cap\Omega_R$ with probability at least $1-n^{-c}-d_1^{-c}-\sum_{m=2}^M e^{-c d_m}$, \eqref{eq1:lem_rcpca} and \eqref{eq2:lem_rcpca} hold for all $i$.

\end{proof}

\noindent \textsc{Step II.} {Clustering.} 

For simplicity, consider the most extreme case where $\min\{w_i-w_{i+1},w_{i}-w_{i-1}\} \le c w_R$ for all $i$, with $w_0=\infty, w_{R+1}=0$, and $c$ is sufficiently small constant. In such cases, we need to employ Procedure \ref{alg:initialize-random} to the entire sample discriminant tensor $\widehat \cB$. Let the eigenvalue ratio $w_1/w_R=O(R)$. In general, the statement in the theorem holds for number of initialization $L\ge C d_1^2 \vee C d_1 R^{2(w_1/w_R)^2}$, where $a \vee b =\max\{a, b\}$. 
We prove the statements through induction on factor index $i$ starting from $i=1$ proceeding to $i=R$. By the induction hypothesis, we already have estimators such that 
\begin{align} \label{eq:general-init}
\left\| \widehat \ba_{jm}^{\rm rcpca} \widehat \ba_{jm}^{\rm rcpca \top}  - \ba_{jm} \ba_{jm}^\top \right\|_2  &\le C\phi_0,  \qquad 1\le j\le i-1, 1\le m\le M,
\end{align}
in an event $\Omega$ with high probability, where
\begin{align*}
\phi_0^2 = C\left(\psi_R + \sqrt{R-1}\left(\psi_R \right)^{M-1} \left(\frac{w_1}{w_R} \right) + \max_{2\le m\le M} \delta_m \left(\frac{w_1}{w_R} \right) \right) .  
\end{align*}
Recall $\psi_i$ is defined in \eqref{eq:psi-i}.

Applying Lemma \ref{lem:rcpca}, we obtain that at the $i$-th step ($i$-th factor), we have 
\begin{align*}
\left\| \widetilde \ba_{\ell m} \widetilde \ba_{\ell m}^{\top}  - \ba_{im} \ba_{im}^\top \right\|_2  &\le \phi_0^2,  \qquad  1\le m\le M,
\end{align*}
in the event $\Omega$ with probability at least $1-n^{-c}-d_1^{-c}-\sum_{m=2}^M e^{-cd_m}$ for at least one $\ell\in[L]$. It follows that this estimator $\widetilde \ba_{\ell m}$ satisfies
\begin{align*}
\left\| \widehat\cB \times_{m=1}^{M} \widetilde \ba_{\ell m} \right\|_2   &\ge  \left\| \sum_{j=1}^R w_{j} \prod_{m=1}^M \ba_{j m}^\top \widetilde \ba_{\ell m} \right\|_2    -  \left\| \cE \times_{m=1}^{M} \widetilde \ba_{\ell m} \right\|_2 \\
&\ge \left\|  w_{i} \prod_{m=1}^M \ba_{i m}^\top \widetilde \ba_{\ell m}  \right\|_2  - \left\| \sum_{j\neq i}^R w_{j} \prod_{m=1}^M \ba_{j m}^\top \widetilde \ba_{\ell m}  \right\|_2   -   \left\| \cE \times_{m=1}^{M} \widetilde \ba_{\ell m} \right\|_2 .
\end{align*}
By \eqref{eq:a_decomp} and the last part of the proof of Lemma \ref{lem:rcpca}, as $\left\| \cE \times_{m=1}^{M} \widetilde \ba_{\ell m} \right\|_2/w_i\le \phi_0^2$ and $C\sqrt{R-1}(1+\delta_1)\prod_{m=2}^M (\delta_m+\psi_i) (w_1/w_R) \le \phi_0^2$, it follows that
\begin{align*}
\left\| \widehat\cB \times_{m=1}^{M} \widetilde \ba_{\ell m} \right\|_2   &\ge  (1-\phi_0^4)^{\frac{M}{2}} w_{i} - C\sqrt{R-1}(1+\delta_1)\prod_{m=2}^M (\delta_m+\psi_i) w_1 -\phi_0^2 w_{i }    \\
&\ge (1-3\phi_0^2) w_{i}.
\end{align*}
Now consider the best initialization $\ell_*\in [L]$ by using $\ell_* =\arg\max_{s} |\widehat\cB \times_{m=1}^{M} \widetilde \ba_{s m} |$. By the calculation above, it is immediate that
\begin{align}\label{eq:lbd}
\left\| \widehat\cB \times_{m=1}^{M} \widetilde \ba_{\ell m} \right\|_2   &\ge (1-3\phi_0^2) w_{i}.    
\end{align}
If $\| \ba_{i} \ba_{i}^\top - \widetilde \ba_{\ell_*} \widetilde \ba_{\ell_*}^\top \|_2 \ge C \phi_0$ for a sufficiently large constant $C$, we have that
\begin{align*}
\left\| \widehat\cB \times_{m=1}^{M} \widetilde \ba_{\ell m} \right\|_2   &\le  \left\| \sum_{j=1}^R w_{j} \prod_{m=1}^M \ba_{j m}^\top \widetilde \ba_{\ell m}  \right\|_2   + \left\| \cE \times_{m=1}^{M} \widetilde \ba_{\ell m} \right\|_2 \\
&\le  \left\| \sum_{j=i}^R w_{j} \prod_{m=1}^M \ba_{j m}^\top \widetilde \ba_{\ell m}  \right\|_2 + \phi_0^2 w_{i }+ R \nu^{M} w_1  \\
&\le  (1+\max_m \delta_m)(1-C^2\phi_0^2/2) w_{i} + \phi_0^2 w_{i }+ R \nu^{M} w_1 .
\end{align*}
If $\nu$ satisfies $R\nu^{M} (w_1/w_R) \le c\phi_0^2$ for a small positive constant $c$, as $\max_m \delta_m \le \phi_0^2$, we have
\begin{align*}
\left\| \widehat\cB \times_{m=1}^{M} \widetilde \ba_{\ell m} \right\|_2 \le (1-C' \phi_0^2) w_{i}   ,
\end{align*}
where $C'$ is a sufficiently large constant. It contradicts \eqref{eq:lbd} above. This implies that for $\ell=\ell_*$, we have 
\begin{align*}
\| \ba_i \ba_i^\top - \widetilde \ba_{\ell_*} \widetilde \ba_{\ell_*}^\top \|_2 \le C \phi_0    .
\end{align*}
By Lemma \ref{prop-rank-1-approx}, with $\widehat \ba_{im}^{\rm rcpca} =\widetilde \ba_{\ell_* ,m}$, in the event $\Omega$ with probability at least $1-n^{-c}-d_1^{-c}-\sum_{m=2}^M e^{-c d_M}$,
\begin{align*}
\left\| \widehat \ba_{im}^{\rm rcpca} \widehat \ba_{im}^{\rm rcpca \top}  - \ba_{im} \ba_{im}^\top \right\|_2  &\le C\phi_0, \quad 1\le m\le M   .
\end{align*}
This finishes the proof by an induction argument along with the requirements $R\nu^{M} (w_1/w_R) \le c\phi_0^2$. The general upper bound of \textsc{rc-PCA} is provided in \eqref{eq:general-init}.

\subsection{Proof of Theorem \ref{thm:cp-converge} for Algorithm \ref{alg:tensorlda-cp} DISTIP-CP}
\label{sec:proof-cp-converge}


\noindent\textsc{Step I.} \textbf{Upper bound for $\widehat \ba_{rm}^{(t)}$}. 


Recall that $\bB_m = \bA_m(\bA_m^{\top} \bA_m)^{-1} = (\bb_{1m},\cdots,\bb_{Rm})$ with $\bA_m = (\ba_{1m}, \cdots, \ba_{Rm})$. Let $\bg_{rm} = \bb_{rm}/\|\bb_{rm}\|_2$, $\widehat\bg_{rm}^{(t)} = \widehat\bb_{rm}^{(t)}/\|\widehat\bb_{rm}^{(t)}\|_2$, and  
\begin{align*}
\alpha&=\sqrt{(1-\delta_{\max})(1-1/(4R))}-(R^{1/2}+1)\psi_0.
\end{align*}  
Let $\psi_{0,\ell}=\psi_0$ and define sequentially 
\begin{align}
\phi_{t,m-1} &= (M-1)\alpha^{-1}\sqrt{2R/(1-1/(4R))}\max_{1\le \ell<M}\psi_{t,m-\ell}, \notag \\
\psi_{t,m} &= \Big(2\alpha^{1-M}\sqrt{R-1}(w_1/w_R) \prod\nolimits_{\ell=1}^{M-1}\psi_{t,m-\ell}\Big)
\vee \Big(C\alpha^{1-M}\eta^{\ideal}_{Rm,\phi_{t,m-1}} \Big),   \label{psi_mk2} \\
\eta^{\ideal}_{rm,\phi}&=\frac{\sqrt{d_m}}{\sqrt{n} w_r} +   (\phi \wedge 1) \frac{\sum_{k=1}^M \sqrt{d_k}}{\sqrt{n}  w_r}   +   \frac{\|\cM_{2} - \cM_{1}\|_{\rm op} }{ w_r} \max_{1\le k\le M} \sqrt{\frac{d_k}{n  d_{-k}}} ,  \notag
\end{align} 
for $m=1,\ldots,M$, $t=1,2,\ldots$ 
By induction, \eqref{eq:rho} and $\alpha>0,\rho<1$ give $\psi_{t,m}\le\psi_{t-1,m} \le\psi_0$. 
Here and in the sequel, we take the convention that $(t,\ell)=(t-1,M+\ell)$ with the subscript $(t,\ell)$, and that $\times_\ell \widehat\theta_{j,\ell}^{(t)} = \times_{M+\ell} \widehat\theta_{j,M+\ell}^{(t-1)}$ for any estimator $\widehat\theta_{j\ell}^{(t)}$.
Let 
\begin{align*}
\Omega^*_{t,m-1} =\cap_{\ell=1}^{M-1}\Omega_{t,m-\ell}
\end{align*} 
with $\Omega_{t,\ell}=\big\{\max\nolimits_{r\le R}\| \widehat \ba_{r\ell }^{(t)}\widehat \ba_{r\ell }^{(t)\top}  - \ba_{r\ell } \ba_{r\ell }^\top \|_{2} \le \psi_{t,\ell}\big\}$.
Given $\{\widehat \ba_{r,m-\ell}^{(t)}, r\in[R], \ell\in[M-1] \}$, the $t$-th iteration for tensor mode $m$ produces estimates $\widehat \ba_{rm}^{(t)}$ as the normalized version of $\widehat\cB \times_{\ell=m-1}^{m-M+1} \widehat \bb_{r,\ell}^{(t)\top}$. Because 
$\widehat\cB =\sum_{r=1}^R w_r \circ_{m=1}^{M} \ba_{rm} + \cE$, the ``noiseless'' version of this update is given by
\begin{equation}
\hat\calB\times_{l=1, l\ne m}^{M} \bb_{rl}= w_{r} \ba_{rm} + \cE\times_{l=1, l\ne m}^{M} \bb_{rl}^\top.
\end{equation}
Similarly, for any $1\le r\le R$, 
\begin{align*}
\tilde\cB_{rm}^{(t)}:=\hat\cB \times_{\ell=m-1}^{m-M+1} \widehat \bb_{r,\ell}^{(t)\top} 
= \sum_{i=1}^R \widetilde w_{i,r} \ba_{im}  + \cE \times_{\ell=m-1}^{m-M+1} \widehat \bb_{r,\ell}^{(t)\top} \in \RR^{d_m},
\end{align*}
where 
$\widetilde w_{i,r}= w_i\prod_{\ell=1 }^{M-1} \ba_{i,m-\ell}^\top\widehat \bb_{r,m-\ell}^{(t)}$.  
At $t$-th iteration, $\widehat \ba_{rm}^{(t)}=\tilde\cB_{rm}^{(t)}/\|\tilde\cB_{rm}^{(t)}\|_2$. 

We may assume without loss of generality $\ba_{j\ell}^\top\widehat \ba_{j\ell}^{(t)}\ge 0$ for all $(j,\ell)$. 
Similar to the proofs of Proposition 4 and Theorem 3 in \cite{han2023tensor}, we can show
\begin{align}\label{a-bd}
&\max_{r\le R}\|\widehat \ba_{r\ell}^{(t)} -  \ba_{r\ell} \|_2 \le \psi_{t,\ell}/\sqrt{1-1/(4R)}, \ \ 
\displaystyle \big\|\widehat \bb_{r\ell}^{(t)}\big\|_2 \le \|\widehat \bB_\ell^{(t)}\|_{\rm 2}
\le \bigg(\sqrt{1-\delta_\ell}-\frac{R^{1/2}\psi_0}{\sqrt{1-1/(4R)}}\bigg)^{-1}, \\
&\big\|\widehat \bg_{r\ell}^{(t)} -  \bb_{r\ell}/\| \bb_{r\ell}\|_2\big\|_2
\le (\psi_{t,\ell}/\alpha)\sqrt{2R/(1-1/(4R))}.  \label{b-bd}
\end{align}
Moreover, 
\eqref{a-bd} provides 
\begin{align}\label{g-bd}
\max_{i\neq r}\big| \ba_{i\ell}^\top\widehat \bg_{r\ell}^{(t)}\big| \le \psi_{t,\ell}/\sqrt{1-1/(4R)},\ \
\big| \ba_{r\ell}^\top\widehat \bg_{r\ell}^{(t)} \big| \ge \alpha, 
\end{align}
as $\widehat \ba_{i\ell}^{(t)\top}\widehat \bg_{r\ell}^{(t)}=I\{i=r\}/\|\widehat \bb_{r\ell}^{(t)}\|_2$. 
Then, for $i\neq r$,
\begin{align*}
\widetilde w_{i,r}/\widetilde w_{r,r} &= \big(w_{1}/w_{r}\big) \prod_{\ell=1 }^{M-1} \frac{ \left| (\ba_{i,m-\ell}- \widehat \ba_{i,m-\ell}^{(t)})^\top \widehat\bb_{r,m-\ell}^{(t)} \right| }{  \left| 1+ (\ba_{r,m-\ell}- \widehat \ba_{r,m-\ell}^{(t)})^\top \widehat\bb_{r,m-\ell}^{(t)} \right| }  \\
&\le \big(w_{1}/w_{r}\big)   \prod_{\ell=1 }^{M-1}  \frac{ [\psi_{t,m-\ell}/\sqrt{1-1/(4R)}]/[\sqrt{1-\delta_{\ell}}-R^{1/2}\psi_{t,m-\ell}/\sqrt{1-1/(4R)} ]  }{1- [\psi_{t,m-\ell}/\sqrt{1-1/(4R)}]/[\sqrt{1-\delta_{\ell}}-R^{1/2}\psi_{t,m-\ell}/\sqrt{1-1/(4R)} ] }  \\
&\le \big(w_{1}/w_{r}\big)   \prod_{\ell=1 }^{M-1}  \frac{\psi_{t,m-\ell} }{\alpha} .
\end{align*} 
It follows that
\begin{align}\label{eq:contraction}
\left\| \sum_{i=1}^R \widetilde w_{i,r} \ba_{im} / \widetilde w_{r,r} - \ba_{rm} \right\|_2^2  &= \sum_{i\neq r}^R \sum_{j\neq r}^R (\ba_{im}^\top \ba_{jm})  (\widetilde w_{i,r}/\widetilde w_{r,r})  (\widetilde w_{j,r}/\widetilde w_{r,r})  \\
&\le (R-1)(1+\delta_{m}) \big(w_{1}/w_{r}\big)^2   \prod_{\ell=1 }^{M-1}   \left( \frac{\psi_{t,m-\ell} }{\alpha} \right)^2 .
\end{align}
By basic geometry, we have
\begin{align}
\| \widehat \ba_{rm}^{(t)}\widehat \ba_{rm}^{(t)\top}  - \ba_{rm} \ba_{rm}^\top \|_{\rm 2} &=\|\sin\angle(\widehat\ba_{rm}^{(t)}, \ba_{rm}) \|_2 \le   \| \tilde\cB_{rm}^{(t)}/\widetilde w_{r,r} - \ba_{rm}\|_2  \notag\\
& \le \frac{w_1\sqrt{(R-1)(1+\delta_{m})}}{w_r} \prod_{\ell=1 }^{M-1}  \frac{\psi_{t,m-\ell} }{\alpha}  + \frac{\| \cE \times_{\ell=m-1}^{m-M+1} \widehat \bg_{r,\ell}^{(t)\top} \|_2 }{w_r \prod_{\ell=1 }^{M-1} \ba_{r,m-\ell}^\top\widehat \bg_{r,m-\ell}^{(t)} }  .
\end{align}
in $\Omega^*_{t,m-1}$. As $\cE \times_{\ell=m-1}^{m-M+1} \widehat \bg_{r,\ell}^{(t)\top}$ is linear in each $\widehat \bg_{r\ell}^{(t)}$, 
\begin{align*}
\big\|  \cE \times_{\ell=m-1}^{m-M+1} \widehat \bg_{r,\ell}^{(t)\top} \big\|_{2}
\le& (M-1)\max_{\ell<M} \| \widehat \bg_{r,m-\ell}^{(t)}- \bg_{r,m-\ell} \|_2 \| \Delta \|
+ \big\| \cE \times_{\ell\in [M] \backslash\{m\} } \bg_{r\ell}^{\top} \big\|_{2}, 
\end{align*}
where $\|\Delta \|= \max_{v_\ell\in \mathbb S^{d_\ell-1}\forall\ell}\big( \cE  \times_{\ell=1}^M v_\ell^\top\big)$.  
As we also have $\big\|  \cE \times_{\ell=m-1}^{m-M+1} \widehat \bg_{r,\ell}^{(t)\top} \big\|_{2}\le \|\Delta\|$, 
\eqref{b-bd} and \eqref{g-bd} yield 
\begin{align}\label{eqthm:norm-bd}
\big\|  \cE \times_{\ell=m-1}^{m-M+1} \widehat \bg_{r,\ell}^{(t)\top} \big\|_{2}
\le& \min\big\{\|\Delta\|, \phi_{t,m-1} \| \Delta \| 
+ \big\| \cE \times_{\ell\in [M] \backslash\{m\} } \bg_{r\ell}^{\top} \big\|_{2}
\end{align}
in $\Omega^*_{t,m-1}$, in view of the definition of $\phi_{t,m-1}$ in \eqref{psi_mk2}. 
By the Sudakov-Fernique and Gaussian concentration inequalities, similar to the proof of \eqref{eq5:thm:initial}, we can show
\begin{align*}
& \|\Delta \| \le C \frac{\sum_{k=1}^M \sqrt{d_k}}{\sqrt{n}} + C  \|\cM_{2} - \cM_{1}\|_{\rm op} \max_{1\le k\le M} \sqrt{\frac{d_k}{n  d_{-k}}} , \\
& \big\| \cE \times_{\ell\in [M] \backslash\{m\} } \bg_{r\ell}^{\top} \big\|_{2}  \le 
C \frac{\sqrt{d_m}}{\sqrt{n}} + C  \|\cM_{2} - \cM_{1}\|_{\rm op} \max_{1\le k\le M} \sqrt{\frac{d_k}{n  d_{-k}}} ,
\end{align*}
in an event $\Omega_1$ with at least probability $1-n^{-c}-\sum_{k=1}^M e^{-cd_k}$. 
Consequently, by \eqref{eqthm:norm-bd}, in $\Omega_1\cap\Omega^*_{m,k-1}$, 
\begin{align}\label{eqthm:bdd-ce}
&\frac{\| \cE \times_{\ell=m-1}^{m-M+1} \widehat \bg_{r,\ell}^{(t)\top} \|_2 }{w_r \prod_{\ell=1 }^{M-1} \ba_{r,m-\ell}^\top\widehat \bg_{r,m-\ell}^{(t)} }  \notag\\
\le& \frac{C \alpha^{1-M}\sqrt{d_m}}{\sqrt{n} w_r} +   \frac{C \alpha^{1-M}\|\cM_{2} - \cM_{1}\|_{\rm op} }{ w_r} \max_{1\le k\le M} \sqrt{\frac{d_k}{n  d_{-k}}} +   C \alpha^{1-M} (\phi_{t,m-1}\vee 1) \frac{\sum_{k=1}^M \sqrt{d_k}}{\sqrt{n}  w_r} .
\end{align}
Substituting \eqref{eqthm:bdd-ce} into \eqref{eqthm:norm-bd}, we have, in the event $\Omega_1\cap\Omega_{t,m-1}^*$, 
\begin{align}\label{bdd1:thm-projection0}
\| \widehat \ba_{rm}^{(t)}\widehat \ba_{rm}^{(t)\top}  - \ba_{rm} \ba_{rm}^\top \|_{\rm 2} 
\le \psi_{t,r,m} 
\end{align}
with 
\begin{align*}
\psi_{t,r,m} =\max\bigg\{ C\alpha^{1-M}\eta^{\ideal}_{rm,\phi_{t,m-1}} , 
\frac{w_1\sqrt{2R-2}}{w_r\alpha^{M-1}} \prod_{\ell=1}^{M-1}\psi_{t,m-\ell}\bigg\}. 
\end{align*}
Consequently, $\Omega_{t,m}\subset \Omega_1\cap\Omega^*_{t,m-1}$.
Let $\Omega_0=\{\max_{r\le R,m\le M}\;\|\hat\ba_{rm}^{(0)}\hat\ba_{rm}^{(0)\top} - \ba_{rm}\ba_{rm}^\top\|_2\le \psi_0 \}$
for any initial estimates $\hat\ba_{rm}^{(0)}$. Then \eqref{bdd1:thm-projection0} holds in the event $\Omega_0\cap\Omega_1$.

\medskip
\noindent\textsc{Step II.} \textbf{Number of iterations. } 

We now consider the number of iterations and the convergence of $\psi_{t,m}$ in \eqref{psi_mk2}. A simple way of dealing with the dynamics of \eqref{psi_mk2} is to compare $\psi_{t,r,m}$ with 
\begin{align}\label{psi*}
\psi^*_{t,r,m} &= \Big(2\alpha^{1-M}\sqrt{R-1}(w_1/w_r) \prod\nolimits_{\ell=1}^{M-1}\psi^*_{t,m-\ell}\Big)
\vee \Big(C\alpha^{1-M}\eta^{\ideal}_{rm,1} \Big) \notag \\
\psi^*_{t,m} &= \psi^*_{t,R,m}, 
\end{align}
with initialization $\psi^*_{0,r,m}=\psi_0$. Compared with \eqref{psi_mk2}, \eqref{psi*} is easier to analyze due to 
the use of static 1 in $\eta^{\ideal}_{rm,\phi_{t,m-1}}$ and the monotonicity of $\psi^*_{t,m}$ in $m$. It follows that 
\begin{align}\label{psi-compare}
& \psi_{t,r,m}\le\psi^*_{t,r,m}\le \psi^*_{t,m},\quad\forall (t,r,m). 
\end{align}
As $\rho<1$ with $\rho$ defined in \eqref{eq:rho}, $\psi_{1,1}^*\le \rho \psi_0 \vee \eta^{\ideal}_{R1,1}$ and this would contribute the extra factor $\rho$ in the application \eqref{psi*} to $\psi_{1,2}^*$, resulting in $\psi_{1,2}^*\le \rho^2 \psi_0 \vee \eta^{\ideal}_{R1,1}$, so on and so forth. In general, $\psi_{t,m}^* \le (\rho^{T_{(t-1)M+m}}\psi_0 ) \vee \eta^{\ideal}_{R1,1} $ with $T_1 =1$, $T_2=2, \ldots, T_{M}=2^{M-1}$, and $T_{k+1} = 1+\sum_{\ell=1}^{M-1}T_{k+1-\ell}$ for $k>M$. By induction, for $k=M, M+1,\ldots$. 
\begin{align*}
T_{k+1} \ge \gamma_M^{k-1}+\cdots+\gamma_M^{k-M+1} = \gamma_M^k \frac{1-\gamma_M^{-M+1} }{\gamma_M-1} 
= \gamma_M^k. 
\end{align*}
The function $f(\gamma) = \gamma^M - 2\gamma^{M-1}+1$ is decreasing in $(1,2-2/M)$ and increasing $(2-2/M,\infty)$. 
Because $f(1)=0$ and $f(2)=1>0$, we have $2-2/M < \gamma_M <2$. It follows that the required number of iteration is at most $T=\lceil M^{-1}\{1+ (\log\gamma_M)^{-1} \log \log (\psi_0/\eta^{\ideal}_{R1,1})/\log(1/\rho) \} \rceil$.  Furthermore, the desired upper bound for $\widehat\ba_{rm}$ after convergence is
\begin{align}
\| \widehat \ba_{rm}^{(t)}\widehat \ba_{rm}^{(t)\top}  - \ba_{rm} \ba_{rm}^\top \|_{\rm 2}  \le   C\alpha^{1-M}\eta^{\ideal}_{r1,1}
\end{align}
in the event $\Omega_0\cap\Omega_1$.

\medskip
\noindent\textsc{Step III.} \textbf{Upper bound for $\|\widehat \cB^{\rm cp} -\cB\|_{\rm F}$}. 

After convergence, let $\hat\bb_{rm}=\hat\bb_{rm}^{(t)}$. 
For weights estimation, we have
\begin{align*}
\hat w_{r} &= \hat\calB\times_{m=1}^M \hat  \bb_{rm}^{\top}   = \cE\times_{m=1}^M \hat  \bb_{rm}^{\top} + \cB\times_{m=1}^M \hat  \bb_{rm}^{\top}   \\
&= \cE\times_{m=1}^M \hat  \bb_{rm}^{\top} + w_r \prod_{m=1}^M (\ba_{rm}^\top \hat\bb_{rm}) + \sum_{i\neq r} w_i \prod_{m=1}^M (\ba_{im}^\top \hat\bb_{rm})   .
\end{align*}
As $\sqrt{R}\psi_{t,m}<1$ and $\rho<1$, it follows that
\begin{align*}
\left| \hat w_{r} - w_{r}  \right|   &\le \left| \cE\times_{m=1}^M \hat  \bb_{rm}^{\top} \right| + w_r \left| \prod_{m=1}^M (\ba_{rm}^\top \hat\bb_{rm}) -1 \right| + \left| \sum_{i\neq r} w_i \prod_{m=1}^M (\ba_{im}^\top \hat\bb_{rm}) \right|\\
&\le C \alpha^{-M} w_r \eta^{\ideal}_{r1,1} + w_r \left(1-  \prod_{m=1}^M (1-\alpha^{-1}\psi_{t,m})  \right) + \left| \sum_{i\neq r} w_i \alpha^{-M} \prod_{m=1}^M \psi_{t,m}   \right|   \\
&\le C \alpha^{-M} w_r \eta^{\ideal}_{r1,1} +  \alpha^{-1}\sum_{m=1}^M w_r \psi_{t,r,m} + \alpha^{-M} w_R (R-1) (w_1/w_R) \prod_{m=1}^M \psi_{t,m} \\
&\le C \alpha^{-M} w_r \eta^{\ideal}_{r1,1} +  \alpha^{-1}\sum_{m=1}^M w_r \psi_{t,r,m} + \alpha^{-M} w_R \min_m \psi_{t,m} \\
&\le C_{\alpha} w_r \eta^{\ideal}_{r1,1},
\end{align*}
which is free of $w_r$ by the definition of $\eta^{\ideal}_{r1,1}$.

We may assume without loss of generality $\ba_{r\ell}^\top\widehat \ba_{r\ell}\ge 0$ for all $(r,\ell)$.
Let $\ba_{r}=\vect(\circ_{m=1}^M \ba_{rm})$ and $\hat\ba_{r}=\vect(\circ_{m=1}^M \hat\ba_{rm})$.
Employing similar arguments in the proof of \eqref{eq:contraction}, we have
\begin{align*}
\norm{\hat{\cB}^{\rm cp}  - \cB}_F & = \norm{\sum_{r\in[R]}\hat w_r\circ_{m\in[M]}\hat{\ba}_{rm} - \sum_{r\in[R]} w_r\circ_{m\in[M]}\ba_{rm}}_{\rm F}  \\
&= \norm{\sum_{r\in[R]}\hat w_r \hat \ba_{r} - \sum_{r\in[R]} w_r \ba_{r}}_{\rm 2}  \\ 
& \le \norm{\sum_{r\in[R]}(\hat w_r - w_r)\hat{\ba}_{r}}_2  + \norm{\sum_{r\in[R]} w_r\hat{\ba}_{r} - \sum_{r\in[R]} w_r\ba_{r}}_2 \\
& \le \norm{\sum_{r\in[R]}(\hat w_r - w_r)\hat{\ba}_{r}}_2  + \sqrt{R}\max_{r\le R}\norm{ w_r\hat{\ba}_{r} - w_r\ba_{r}}_2  \\
& \le 2 \sqrt{R}\max_{r\le R} \abs{\hat w_r - w_r }  + \sqrt{R}\max_{r\le R}\norm{ w_r\hat{\ba}_{r} - w_r\ba_{r}}_2  \\
&\le C_{\alpha} \sqrt{R} w_r \eta^{\ideal}_{r1,1}.
\end{align*}
Note that $ \|\cM_{2} - \cM_{1}\|_{\rm op} \asymp w_1$. We can further simplify the bounds.

\subsection{Proof of Theorem \ref{thm:class-upp-bound}}

For simplicity, we mainly focus on the proof for a simple scenario where the prior probabilities $\pi_1 = \pi_2 = 1/2$. Additionally, we will provide key steps of the proof for more general settings correspondingly. Let $\hat \Delta = \sqrt{\langle \hat \calB^{\rm cp} \times_{m=1}^M \Sigma_m, \; \hat \calB^{\rm cp} \rangle}$, the misclassification error of $\hat\Upsilon_{\rm cp}$ is
\begin{align*}
\cR_{\btheta}(\hat\Upsilon_{\rm cp}) &= \frac{n_{L_1}}{n_{L_1} + n_{L_2}} \phi\left(\hat \Delta^{-1}\log(n_{L_2}/n_{L_1}) -\frac{\langle \hat \cM - \cM_1, \; \hat \calB^{\rm cp} \rangle}{\hat \Delta} \right) \\
& + \frac{n_{L_2}}{n_{L_1} + n_{L_2}} \bar \phi\left(\hat \Delta^{-1}\log(n_{L_2}/n_{L_1}) - \frac{\langle \hat \cM - \cM_2, \; \hat \calB^{\rm cp} \rangle}{\hat \Delta} \right)
\end{align*} 
and the optimal misclassification error is
\begin{align*}
\cR_{\rm opt}=\pi_1\phi(\Delta^{-1}\log(\pi_2/\pi_1)-\Delta/2)+\pi_2 \bar \phi(\Delta^{-1}\log(\pi_2/\pi_1)+\Delta/2),    
\end{align*}
where $\phi$ is the CDF of the standard normal, and $\bar \phi(\cdot) = 1 - \phi(\cdot)$.
While the simpler version when $\pi_1 = \pi_2 = \frac{1}{2},$ are
\begin{align*}
\cR_{\btheta}(\hat\Upsilon_{\rm cp}) = \frac{1}{2} \phi\left(-\frac{\langle \hat \cM - \cM_1, \; \hat \calB^{\rm cp} \rangle}{\hat \Delta} \right) + \frac{1}{2} \bar \phi\left(-\frac{\langle \hat \cM - \cM_2, \; \hat \calB^{\rm cp} \rangle}{\hat \Delta} \right)    
\end{align*}
and $\cR_{\rm opt}=\phi(-\Delta/2)=\frac12\phi(-\Delta/2)+\frac12\bar\phi(\Delta/2),$ respectively.
Define an intermediate quantity
\begin{align*}
\cR^{*} = \frac{1}{2} \phi\left(-\frac{\langle \cD, \; \hat \calB^{\rm cp} \rangle}{2 \hat \Delta} \right) + \frac{1}{2} \bar \phi\left(\frac{\langle \cD, \; \hat \calB^{\rm cp} \rangle}{2 \hat \Delta} \right).    
\end{align*}
By Theorem \ref{thm:cp-converge}, in an event $\Omega_1$ with probability at least $\PP(\Omega_0)-n^{-c_1} - \sum_{m=1}^M  e^{-c_1 d_m}$, \begin{align}\label{eq:B_cp}
\|\hat\calB^{\rm cp} - \cB\|_{\rm F} \le C\frac{ \sqrt{\sum_{k=1}^M  d_k R}}{\sqrt{n} }   +   C w_1  \max_{1\le k\le M} \sqrt{\frac{d_kR}{n  d_{-k}}} =o (\Delta ) .   
\end{align}

\noindent Firstly, we are going to show that $R^* -\cR_{\rm opt}(\btheta) \lesssim e^{-\Delta^2/8} \cdot \Delta^{-1} \cdot \|\hat \calB^{\rm cp}  - \calB \|_{\rm F}^{2}$. 
Applying Taylor's expansion to the two terms in $\cR^{*}$ at $-\Delta/2$ and $\Delta/2$, respectively, we obtain 
\begin{equation}
\begin{split}
\label{eqn: taylor 1}
\cR^{*} - \cR_{\rm opt}(\btheta) =& \frac{1}{2}\left(\frac{\Delta}{2} -\frac{\langle \cD, \; \hat \calB^{\rm cp} \rangle}{2\hat \Delta} \right) \phi^{\prime}(\frac{\Delta}{2}) + \frac{1}{2}\left(\frac{\Delta}{2} -\frac{\langle \cD, \; \hat \calB^{\rm cp} \rangle}{2\hat \Delta} \right) \phi^{\prime}(-\frac{\Delta}{2}) \\
& + \frac{1}{2}\left(\frac{\langle \cD, \; \hat \calB^{\rm cp} \rangle}{2\hat \Delta} - \frac{\Delta}{2} \right)^2 \phi^{\prime \prime}(t_{1,n}) + \frac{1}{2}\left(\frac{\langle \cD, \; \hat \calB^{\rm cp} \rangle}{2\hat \Delta} - \frac{\Delta}{2} \right)^2 \phi^{\prime \prime}(t_{2,n})
\end{split}
\end{equation}
where $t_{1,n}, \; t_{2,n}$ are some constants satisfying $| t_{1,n} |, \; | t_{2,n} |$ are between $\frac{\Delta}{2}$ and $\frac{\langle \cD, \; \hat \calB^{\rm cp} \rangle}{2\hat \Delta}$.

Since $\big(\frac{\Delta}{2} -\frac{\langle \cD, \; \hat \calB^{\rm cp} \rangle}{2\hat \Delta} \big)$ frequently appears in \eqref{eqn: taylor 1}, we need to bound its absolute value. Let $\gamma = \calB \times_{m=1}^M \Sigma_m^{1/2}$ and $\hat \gamma = \hat \calB^{\rm cp} \times_{m=1}^M \Sigma_m^{1/2}$, 
then by Lemma \ref{lemma:tensor norm inequality}, in the event $\Omega_1$, we have
\begin{align*}
& \left|  \Delta - \frac{\langle \cD, \; \hat \calB^{\rm cp} \rangle}{\hat \Delta} \right| = \left| \|\gamma\|_{\rm F} -  \frac{\langle \gamma ,\; \hat \gamma \rangle}{\|\hat \gamma\|_{\rm F}} \right| = \left| \frac{\|\gamma\|_{\rm F} \cdot \|\hat\gamma\|_{\rm F} - \langle \gamma ,\; \hat \gamma \rangle}{\|\hat\gamma\|_2}  \right| \\
\lesssim& \frac{1}{\Delta}\|\hat \gamma - \gamma\|_{\rm F}^2 \lesssim \frac{1}{\Delta}\|\hat \calB^{\rm cp} - \calB\|_{\rm F}^2.
\end{align*}
In fact, by triangle inequality,
\begin{align*}
| \hat \Delta - \Delta | &= \left\|\hat \calB^{\rm cp}  \times_{m=1}^M \Sigma_m^{1/2} \right\|_{\rm F}- \left\| \calB \times_{m=1}^M \Sigma_m^{1/2} \right\|_{\rm F} \le \left\|\left(\hat \calB^{\rm cp} - \calB\right) \times_{m=1}^M \Sigma_m^{1/2} \right\|_{\rm F} \le \left\| \hat \calB^{\rm cp} - \calB \right\|_{\rm F} \prod_{m=1}^M \left\|\Sigma_m\right\|_{2}^{1/2} \\
& \lesssim \left\|\hat \calB^{\rm cp} - \calB\right\|_{\rm F} \lesssim \frac{ \sqrt{\sum_{k=1}^M d_k R}}{\sqrt{n} }   +   w_1  \max_{1\le k\le M} \sqrt{\frac{d_kR}{n  d_{-k}}} = o(\Delta).
\end{align*}
Since $\|\hat \calB^{\rm cp} - \calB\|_{\rm F} = o(\Delta)$, it follows that $\langle \cD, \; \hat \calB^{\rm cp} \rangle/(2\hat \Delta) \rightarrow \Delta/2$.
Then, we have $| \phi^{\prime \prime}(t_{1,n}) | \asymp | \phi^{\prime \prime}(t_{2,n}) | \asymp \Delta  e^{-\frac{(\Delta/2)^2}{2}} = \Delta  e^{-\Delta^2/8}$.
Hence,
\begin{align*}
&\frac{1}{2}\Big(\frac{\langle \cD, \; \hat \calB^{\rm cp} \rangle}{2\hat \Delta} - \frac{\Delta}{2} \Big)^2 \phi^{\prime \prime}(t_{1,n}) + \frac{1}{2}\Big(\frac{\langle \cD, \; \hat \calB^{\rm cp} \rangle}{2\hat \Delta} - \frac{\Delta}{2} \Big)^2 \phi^{\prime \prime}(t_{2,n}) \\ 
&\asymp  \frac{1}{\Delta^2} \left\|\hat \calB^{\rm cp} - \calB\right\|_{\rm F}^4 \cdot \Delta \cdot e^{-\Delta^2/8}  \asymp  \frac{1}{\Delta} e^{-\Delta^2/8} \left\|\hat \calB^{\rm cp} - \calB\right\|_{\rm F}^4   .
\end{align*}
Then \eqref{eqn: taylor 1} can be further bounded such that
\begin{align*}
\cR^{*} - \cR_{\rm opt}(\btheta) &\asymp \Big(\frac{\Delta}{2} - \frac{\langle \cD, \; \hat \calB^{\rm cp} \rangle}{2\hat \Delta} \Big) e^{-\frac{(\Delta/2)^2}{2}} + O\Big(\frac{1}{\Delta}  e^{-\Delta^2/8} \left\|\hat \calB^{\rm cp} - \calB\right\|_{\rm F}^4 \Big)\\
    & \le e^{-\Delta^2/8} \cdot \left| \frac{\Delta}{2} - \frac{\langle \cD, \; \hat \calB^{\rm cp} \rangle}{2\hat \Delta} \right| +  O\left( \frac{1}{\Delta}  e^{-\Delta^2/8} \left\|\hat \calB^{\rm cp} - \calB\right\|_{\rm F}^4 \right) \\
    & \lesssim \frac{1}{\Delta} e^{-\Delta^2/8}  \left\|\hat \calB^{\rm cp} - \calB\right\|_{\rm F}^2  +    \frac{1}{\Delta}  e^{-\Delta^2/8} \left\|\hat \calB^{\rm cp} - \calB\right\|_{\rm F}^4  .
\end{align*}
Eventually we obtain $\cR^{*} - \cR_{\rm opt}(\btheta) \lesssim \Delta^{-1} e^{-\Delta^2/8} (\|\hat \calB^{\rm cp} - \calB \|_{\rm F}^2 \vee \|\hat \calB^{\rm cp} - \calB \|_{\rm F}^4)$ in the event $\Omega_1$ with probability at least $\PP(\Omega_0)-n^{-c_1} - \sum_{m=1}^M  e^{-c_1 d_m}$.

\noindent Next, focus on $\cR_{\btheta}(\hat\Upsilon_{\rm cp}) - \cR^{*}$. We apply Taylor's expansion to $\cR_{\btheta}(\hat\Upsilon_{\rm cp})$:
\begin{align}
\label{eqn: taylor 2}
   \cR_{\btheta}(\hat\Upsilon_{\rm cp}) &= \frac{1}{2} \left\{ \phi\Big(-\frac{\langle \cD, \; \hat \calB^{\rm cp} \rangle}{2\hat \Delta} \Big) + \frac{\langle \cD, \; \hat \calB^{\rm cp} \rangle/2 - \langle \hat \cM - \cM_1, \; \hat \calB^{\rm cp} \rangle}{\hat \Delta}\phi^{\prime} \Big(\frac{\langle \cD, \; \hat \calB^{\rm cp} \rangle}{2\hat \Delta} \Big)  \right.\notag\\
   & \left.+ O\left( \Delta \cdot e^{-\Delta^2/8} \right) \Big( \frac{\langle \hat \cM - \cM_1, \; \hat \calB^{\rm cp} \rangle - \langle \cD, \; \hat \calB^{\rm cp} \rangle/2}{\hat \Delta} \Big)^2 \;  \right\} \notag\\
   &+ \frac{1}{2} \left\{ \bar \phi \Big(\frac{\langle \cD, \; \hat \calB^{\rm cp} \rangle}{2\hat \Delta} \Big) + \frac{\langle \cD, \; \hat \calB^{\rm cp} \rangle/2 + \langle \hat \cM - \cM_2, \; \hat \calB^{\rm cp} \rangle}{\hat \Delta}\phi^{\prime}\Big(\frac{\langle \cD, \; \hat \calB^{\rm cp} \rangle}{2\hat \Delta} \Big) \right.\notag\\ 
   & \left.+ O\big( \Delta \cdot e^{-\Delta^2/8} \big) \Big( \frac{\langle \hat \cM - \cM_2, \; \hat \calB^{\rm cp} \rangle + \langle \cD, \; \hat \calB^{\rm cp} \rangle/2}{\hat \Delta} \Big)^2 \;  \right\} 
\end{align}
where the remaining term can be obtained similarly as \eqref{eqn: taylor 1} by using the fact that $|\phi^{\prime \prime}(t_n)| = O(\Delta \cdot e^{-\Delta^2/8})$. Now we aim to bound the following term:
\begin{align*} 
 \left| \frac{\langle \hat \cM - \cM_1, \; \hat \calB^{\rm cp} \rangle - \langle \cD, \; \hat \calB^{\rm cp} \rangle/2}{\hat \Delta} \right| 
\lesssim & \frac{1}{\Delta} \left| \langle \bar\calX^{(2)} - \cM_2 + \bar\calX^{(1)} - \cM_1, \; \hat \calB^{\rm cp} \rangle \right| \notag\\
\lesssim & \frac{1}{\Delta} \left| \langle  \bar\calX^{(1)} - \cM_1, \; \hat \calB^{\rm cp} \rangle \right|  + \frac{1}{\Delta} \left| \langle \bar\calX^{(2)} - \cM_2, \; \hat \calB^{\rm cp} \rangle \right|.  
\end{align*}
Note that, in the event $\Omega_1$, $\|\hat \calB^{\rm cp}\|_{\rm F} \le \|\hat\calB^{\rm cp} - \cB\|_{\rm F} + \|\cB\|_{\rm F} \lesssim \Delta $. By Lemma \ref{lemma:low-rank-tensor}, in an event $\Omega_2$ with probability at least $1-e^{-c_2\sum_{m=1}^M d_m R}$,
\begin{align*}
\left| \langle  \bar\calX^{(k)} - \cM_k, \; \hat \calB^{\rm cp} \rangle \right| &\lesssim \Delta \sqrt{\frac{\sum_{m=1}^M d_m R}{n}} ,\quad k=1,2. 
\end{align*}
It follows that, in the event $\Omega_1\cap \Omega_2$, 
\begin{align}
\left| \frac{\langle \hat \cM - \cM_1, \; \hat \calB^{\rm cp} \rangle - \langle \cD, \; \hat \calB^{\rm cp} \rangle/2}{\hat \Delta} \right| &\lesssim \Delta \sqrt{\frac{\sum_{m=1}^M d_m R}{n}},    \label{eqn: upper bound of taylor terms 2}\\
\left| \frac{\langle \hat \cM - \cM_2, \; \hat \calB^{\rm cp} \rangle + \langle \cD, \; \hat \calB^{\rm cp} \rangle/2}{\hat \Delta} \right| &\lesssim \Delta \sqrt{\frac{\sum_{m=1}^M d_m R}{n}} .   \label{eqn: upper bound of taylor terms 3}
\end{align}
Substituting \eqref{eqn: upper bound of taylor terms 2} and \eqref{eqn: upper bound of taylor terms 3} into \eqref{eqn: taylor 2}, we obtain,
\begin{align*}
\left| \cR_{\btheta}(\hat\Upsilon_{\rm cp}) - \cR^{*} \right| \lesssim& \left| \frac{\langle \cD, \; \hat \calB^{\rm cp} \rangle/2 - \langle \hat \cM - \cM_1, \; \hat \calB^{\rm cp} \rangle}{\hat \Delta} \phi^{\prime}(\frac{\langle \cD, \; \hat \calB^{\rm cp} \rangle}{2\hat \Delta})  \right.\\
& \left. + \frac{\langle \cD, \; \hat \calB^{\rm cp} \rangle/2 + \langle \hat \cM - \cM_2, \; \hat \calB^{\rm cp} \rangle}{\hat \Delta} \phi^{\prime}(\frac{\langle \cD, \; \hat \calB^{\rm cp} \rangle}{2\hat \Delta}) + O \left(\Delta^3 e^{-\Delta^2/8} \left(\frac{\sum_{m=1}^M d_m R}{n} \right) \right) \right|
\end{align*}
Since $\cD/2- (\hat \cM-\cM_1) + \cD/2 + (\hat \cM-\cM_2) = \cD - (\cM_2-\cM_1) = 0$, then it follows that
\begin{align*}
\left| \cR_{\btheta}(\hat\Upsilon_{\rm cp}) - \cR^{*} \right| \lesssim \Delta^3 e^{-\Delta^2/8} \left(\frac{\sum_{m=1}^M d_m R}{n} \right)  .  
\end{align*}

Finally, combining the two pieces, we obtain
\begin{align*}
\cR_{\btheta}(\hat\Upsilon_{\rm cp}) -\cR_{\rm opt}(\btheta) \le& \cR_{\btheta}(\hat\Upsilon_{\rm cp}) -  \cR^{*} + \cR^{*} - \cR_{\rm opt}(\btheta) \\
\lesssim & \frac{1}{\Delta} e^{-\Delta^2/8}  \left\|\hat \calB^{\rm cp} - \calB\right\|_{\rm F}^2  +    \frac{1}{\Delta}  e^{-\Delta^2/8} \left\|\hat \calB^{\rm cp} - \calB\right\|_{\rm F}^4 + \Delta^3 e^{-\Delta^2/8} \left(\frac{\sum_{m=1}^M d_m R}{n} \right) ,
\end{align*}
in the event $\Omega_1\cap \Omega_2$ with probability at least $\PP(\Omega_0)-n^{-c}-\sum_{m=1}^M e^{-cd_m}$.

Now consider the two case. On the one hand, when $\Delta = O(1)$, by \eqref{eq:B_cp}, with probability at least $\PP(\Omega_0)-n^{-c}-\sum_{m=1}^M e^{-cd_m }$, we have 
\begin{align*}
\cR_{\btheta}(\hat\Upsilon_{\rm cp}) - \cR_{\rm opt}(\btheta) \le C \frac{\sum_{m=1}^M d_m R}{n}  +     C \frac{ w_1^2 R }{\Delta^2 }  \max_{1\le m\le M} \frac{d_m}{n  d_{-m}}  .
\end{align*}
On the other hand, when $\Delta\to\infty$ as $n\to \infty$, by\eqref{eq:B_cp}, with probability at least $1-n^{-c}-\sum_{m=1}^M e^{-cd_m }$, we have 
\begin{align*}
& \cR_{\btheta}(\hat\Upsilon_{\rm cp}) - \cR_{\rm opt}(\btheta) \\
\le& C \Delta^3 e^{-\Delta^2/8} \left(\frac{\sum_{m=1}^M d_m R }{n} \right) +  C \Delta e^{-\Delta^2/8} \frac{w_1^2 R \max_{1\le m\le M}d_m^2}{\Delta^2 nd  } + C \Delta^3 e^{-\Delta^2/8} \left(\frac{w_1^2 R \max_{1\le m\le M}d_m^2}{\Delta^2 nd} \right)^2 \\
\le& C \Delta^3 e^{-\Delta^2/8} \left(\frac{\sum_{m=1}^M d_m R }{n}  +\frac{w_1^2 R \max_{1\le m\le M}d_m^2}{\Delta^2 nd} \right)\\
=& C \exp\left(-\left(\frac18-\frac{3\log(\Delta)}{\Delta^2} \right)\Delta^2\right) \left(\frac{\sum_{m=1}^M d_m R }{n}  +\frac{w_1^2 R \max_{1\le m\le M}d_m^2}{\Delta^2 nd} \right),
\end{align*}
where $3\log(\Delta)/\Delta^2$ is an $o(1)$ term as $n\to\infty$.

As $\Delta^2\asymp w_1^2+\cdots +w_R^2\gg w_1^2 (d_{\max}/d)$, we have
\begin{align*}
\frac{w_1^2 R \max_{1\le m\le M}d_m^2}{\Delta^2 nd} \ll   \frac{\sum_{m=1}^M d_m R }{n}.  
\end{align*}
That is, the second part in the excess misclassification rate, which comes from the estimation accuracy of the mode-$m$ precision matrix is negligible.

\subsection{Proof of Theorem \ref{thm:class-lower-bound}}

Note that the proof is not straightforward, partly because the excess risk $\cR_{\btheta}(\hat\Upsilon_{\rm cp}) -\cR_{\rm opt}(\btheta)$ does not satisfy the triangle inequality required by standard lower bound techniques. A crucial approach in this context is establishing a connection to an alternative risk function.
For a general classification rule $\Upsilon$, we define $L_{\btheta}(\Upsilon)=\PP_{\btheta}(\Upsilon(\cZ) \neq \Upsilon_{\theta}(\cZ))$, where $\Upsilon_{\theta}(\cZ)$ is the Fisher’s linear discriminant rule introduced in (\ref{eqn:lda-rule}). Lemma \ref{lemma:the first reduction} 
allows us to transform the excess risk $\cR_{\btheta}(\hat\Upsilon_{\rm cp}) -\cR_{\rm opt}(\btheta)$ into the risk function $L_{\btheta}(\hat \Upsilon_{\rm cp})$, as shown below:
\begin{equation} \label{eqn:loss function reduction}
\cR_{\btheta}(\hat\Upsilon_{\rm cp}) -\cR_{\rm opt}(\btheta) \ge \frac{\sqrt{2\pi}\Delta}{8} e^{\Delta^2/8} \cdot L_{\theta}^2(\hat \Upsilon_{\rm cp}).    
\end{equation}
We then apply Lemma \ref{lemma:Tsybakov variant} to derive the minimax lower bound for the risk function $L_{\btheta}(\hat \Upsilon_{\rm cp})$.

We carefully construct a finite collection of subsets of the parameter space $\calH$ that characterizes the hardness of the problem. 
Any $M$-th order tensor $\cM \in \mathbb{R}^{d_1 \times \cdots \times d_M}$ with CP rank $R$ can be expressed as $\cM= \cF \times_{m=1}^M \bA_m$. Here, the latent core tensor $\cF=\diag(w_1,...,w_R)$ is a diagonal tensor of dimensions $R \times \cdots \times R$, i.e. $\cF_{i,...,i}$ is non zero for all $i=1,..,R$, and all the other elements of $\cF$ are zero.
Denote the loading matrices $\bA_m=(\ba_{1m},...,\ba_{Rm}) \in \mathbb{R}^{d_m \times R}$ for each mode $m=1,\ldots,M$. 

First, let $\bA_m$ be a fixed matrix where the $(i,i)$-th elements, $i=1,...,R$, are set to one and all other elements are zero. Denote 
$\bA=\bA_M\otimes \bA_{M-1}\otimes \cdots\otimes \bA_1$. According to basic tensor algebra, this setup implies that 
$\vect(\cM) = \bA \vect(\cF)$ and $\|\cM\|_{\rm F} = \|\vect(\cF)\|_2$. Let $\be_1$ be the basis vector in the standard Euclidean space whose first entry is 1 and 0 elsewhere, and $\bI_{d}=[\bI_{d_m}]_{m=1}^M$.
Define the following parameter space
\begin{align*}
\cH_0 =& \big\{ \theta = (\cM_1, \; \cM_2, \; \bI_{d}) : \; \cM_1 = \cF \times_{m=1}^M \bA_m, \; \cM_2= -\cM_1; 
\ \vect(\cF)=\epsilon \bff+ \lambda \be_1, \bff \in \{ 0,1\}^R, \bff^\top \be_1=0 \big\},
\end{align*}
where $\epsilon=c/\sqrt{n}$, $c=O(1)$ and $\lambda$ is chosen to ensure that $\theta\in\cH$ such that
\begin{align*}
\Delta=(\cM_2-\cM_1)^\top\bSigma^{-1} (\cM_2 -\cM_1) = 4 \| \epsilon \bff+ \lambda \be_1\|_2^2 = 4 \epsilon^2 \| \bff\|_2^2 + 4 \lambda^2. 
\end{align*}
In addition to $\cH_0$, we also define $\bA_{\ell}$, for $\ell\neq m$, as fixed matrices where the $(i,i)$-th elements, $i=1,...,R$, are set to one and all other elements are zero. Let $\cF$ be a diagonal tensor such that the $(i,...,i)$-th elements, $i=1,..., R$, are set to one, and all other elements are zero. It implies that $\|\cM\|_{\rm F} = \| \bA_m\|_{\rm F}$. For $m=1,...,M$, define the following parameter spaces 
\begin{align*}
\cH_m = \big\{& \theta = (\cM_1, \; \cM_2, \; \bI_{d}) : \; \cM_1 = \cF \times_{k=1}^M \bA_k, \; \cM_2= -\cM_1; 
\ \vect(\bA_m)=\epsilon \bg_m+ \lambda_m \be_1, \bg_m \in \{ 0,1\}^{d_m R},\\
&\bg_m^\top \be_1=0 \big\},
\end{align*}
where $\epsilon=c/\sqrt{n}$, $c=O(1)$ and $\lambda_m$ is chosen to ensure that $\theta\in\cH$ such that
\begin{align*}
\Delta=(\cM_2-\cM_1)^\top\bSigma^{-1} (\cM_2 -\cM_1) = 4 \| \epsilon \bg_m+ \lambda_m \be_1\|_2^2 = 4 \epsilon^2 \| \bg_m\|_2^2 + 4 \lambda_m^2. 
\end{align*}
It is clear that $\cap_{\ell=0}^M \cH_{\ell} \subset \cH$. We shall show below separately for the minimax risks over each parameter space $\cH_{\ell}$.

First, consider $\cH_0$. By Lemma \ref{lemma:Varshamov-Gilbert Bound}, we can construct a sequence of $R$-dimensional vectors $\bff_1, \ldots ,\bff_N \in \{0,1\}^R$, such that $\bff_{i}^\top \be_1=0$, $\rho_H(\bff_i, \bff_j) \ge R/8, \; \forall 0 \le i < j \le N$, and $R \le (8/\log 2)\log N$, where $\rho_H$ denotes the Hamming distance.
To apply Lemma \ref{lemma:Tsybakov variant}, for $\forall \theta_{\bu}, \; \theta_{\bv} \in \cH_0, \; \theta_{\bu} \neq \theta_{\bv},$ we need to verify two conditions:  
\begin{enumerate}
\item[(i)] the upper bound on the Kullback-Leibler divergence between $\PP_{\theta_{\bu}}$ and $\PP_{\theta_{\bv}}$, and
\item[(ii)] the lower bound of $L_{\theta_{\bu}}(\hat \Upsilon_{\rm cp}) + L_{\theta_{\bv}}(\hat \Upsilon_{\rm cp})$ for $\bu \neq \bv$ and $\bu^\top \be_1=0, \bv^\top \be_1=0$.
\end{enumerate}

We calculate the Kullback-Leibler divergence first. For $\bff_{\bu} \in\{ 0,1\}^R$ and $\bff_{\bu}^\top \be_1=0$, define
\begin{align*}
\cM_{\bu} = \cF_{\bu} \times_{m=1}^M \bA_m, \; \vect(\cF_{\bu})= \epsilon \bff_{\bu} + \lambda \be_1, \; \theta_{\bu} = \big(\cM_{\bu}, \; -\cM_{\bu}, \; \bI_d \big) \in \cH_0 .   
\end{align*}
and consider the distribution $\cT\cN(\cM_{\bu}, \; \bI_d)$. 
Then the Kullback-Leibler divergence between $\PP_{\theta_{\bu}}$ and $\PP_{\theta_{\bv}}$ can be bounded by
\begin{align*}
    {\rm KL}(\PP_{\theta_{\bu}}, \PP_{\theta_{\bv}}) &= \frac{1}{2} \norm{\vect(\cM_{\bu}) - \vect(\cM_{\bv})}_2^2 = \frac{1}{2}\norm{\vect(\cF_{\bu}) - \vect(\cF_{\bv})}_2^2 \le \frac{c^2 R}{2n}  .
\end{align*}
In addition, by applying Lemma \ref{lemma:probability inequality}, we have that for any $\bff_{\bu}, \bff_{\bv} \in\{ 0,1\}^R$,
\begin{align*}
    L_{\theta_{\bu}}(\hat \Upsilon_{\rm cp}) + L_{\theta_{\bv}}(\hat \Upsilon_{\rm cp}) &\ge \frac{1}{\Delta} e^{-\Delta^2/8} \cdot \norm{\vect(\cF_{\bu}) - \vect(\cF_{\bv})}_2 \\
    &\ge \frac{1}{\Delta} e^{-\Delta^2/8} \sqrt{\frac{R}{8} \cdot \frac{c^2}{n}  } \\
    &\gtrsim \frac{1}{\Delta} e^{-\Delta^2/8} \sqrt{\frac{R}{n}} .
\end{align*}
So far we have verified the aforementioned conditions (i) and (ii). Lemma \ref{lemma:Tsybakov variant} immediately implies that, there exists some constant $C_{\gamma} > 0$, such that 
\begin{equation}\label{eqn:inter1}
\inf_{\hat \Upsilon_{\rm cp}} \sup_{\theta \in \calH_0} \PP\left(L_{\theta}(\hat \Upsilon_{\rm cp}) \ge C_{\gamma} \frac{1}{\Delta} e^{-\Delta^2/8} \sqrt{ \frac{R}{n} } \right) \ge 1-\gamma
\end{equation}
Combining \eqref{eqn:inter1} and \eqref{eqn:loss function reduction}, we have
\begin{equation}\label{eqn:inter2}
\inf_{\hat \Upsilon_{\rm cp}} \sup_{\theta \in \calH_0} \PP\left(\cR_{\btheta}(\hat\Upsilon_{\rm cp}) -\cR_{\rm opt}(\btheta) \ge C_{\gamma} \frac{1}{\Delta} e^{-\Delta^2/8} \cdot \frac{R}{n}  \right) \ge 1-\gamma   .
\end{equation}

Similarly, for each $\cH_m$, $m=1,...,M$, we can obtain that,  there exists some constant $C_{\gamma} > 0$, such that 
\begin{equation}\label{eqn:inter3}
\inf_{\hat \Upsilon_{\rm cp}} \sup_{\theta \in \calH_m} \PP\left(\cR_{\btheta}(\hat\Upsilon_{\rm cp}) -\cR_{\rm opt}(\btheta) \ge C_{\gamma} \frac{1}{\Delta} e^{-\Delta^2/8} \cdot \frac{d_m R }{n}  \right) \ge 1-\gamma   .
\end{equation}

Finally combining \eqref{eqn:inter2} and \eqref{eqn:inter3}, we obtain the desired lower bound for the excess misclassficiation error
\begin{equation}\label{eqn:mis}
\inf_{\hat \Upsilon_{\rm cp}} \sup_{\theta \in \calH} \PP\left(\cR_{\btheta}(\hat\Upsilon_{\rm cp}) -\cR_{\rm opt}(\btheta) \ge C_{\gamma} \frac{1}{\Delta} e^{-\Delta^2/8} \cdot \frac{\sum_{m=1}^M d_m R + R}{n}  \right) \ge 1-\gamma   .
\end{equation}

This implies that if $c_1<\Delta \le c_2$ for some $c_1,c_2>0$, we have 
\begin{align*}
\inf_{\hat \Upsilon_{\rm cp}} \sup_{\theta \in \calH} \PP\left(\cR_{\btheta}(\hat\Upsilon_{\rm cp}) -\cR_{\rm opt}(\btheta) \ge C_{\gamma} \cdot \frac{\sum_{m=1}^M d_m R }{n}  \right) \ge 1-\gamma   .    
\end{align*}
On the other hand, if $\Delta\to\infty$ as $n\to \infty$, then for any $\vartheta>0$
\begin{align*}
\inf_{\hat \Upsilon_{\rm cp}} \sup_{\theta \in \calH} \PP\left(\cR_{\btheta}(\hat\Upsilon_{\rm cp}) -\cR_{\rm opt}(\btheta) \ge C_{\gamma} \exp\left\{-\left(\frac18+\vartheta\right)\Delta^2 \right\} \frac{\sum_{m=1}^M d_m R }{n}  \right) \ge 1-\gamma   .    
\end{align*}


\section{Technical Lemmas}
\label{append:proof:lemma}

We collect all technical lemmas that has been used in the theoretical proofs throughout the paper in this section. Let $d=d_1d_2\cdots d_M$ and $d_{-m}=d/d_m.$ Denote $\|A\|_2$ or $\|A\|$ as the spectral norm of a matrix $A$. Also let $\otimes$ be the Kronecker product, and $\circ$ be the tensor outer product.

\begin{lemma}\label{lemma-transform-ext} 
Let  $\bA \in \RR^{d_1\times r}$ and $\bB \in \RR^{d_2\times r}$ with 
$\|\bA^\top \bA - \bI_r\|_{\rm 2}\vee \|\bB^\top \bB - \bI_r\|_{\rm 2} \le\delta$ and $d_1\wedge d_2\ge r$. 
Let $\bA=\widetilde \bU_1 \widetilde \bD_1 \widetilde \bU_2^\top$ be the SVD of $\bA$, 
$\bU = \widetilde \bU_1\widetilde \bU_2^\top$, $\bB=\widetilde \bV_1 \widetilde \bD_2 \widetilde \bV_2^\top$ 
the SVD of $\bB$, and $\bV = \widetilde \bV_1\widetilde \bV_2^\top$. 
Then, $\|\bA \Lambda \bA^\top - \bU \Lambda \bU^{\top}\|_{\rm 2}\le \delta \|\Lambda\|_{\rm 2}$  
for all nonnegative-definite matrices $\Lambda$ in $\RR^{r\times r}$, and 
$\|\bA \bQ \bB^\top - \bU \bQ \bV^{\top}\|_{\rm 2}\le \sqrt{2}\delta \|\bQ\|_{\rm 2}$  
for all $r\times r$ matrices $\bQ$. 
\end{lemma}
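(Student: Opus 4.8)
The plan starts from the observation that $\bU$ and $\bV$ are exactly the orthonormal polar factors of $\bA$ and $\bB$. From $\bA=\widetilde\bU_1\widetilde\bD_1\widetilde\bU_2^\top$ and $\bU=\widetilde\bU_1\widetilde\bU_2^\top$ we get $\bA=\bU P$ with $P:=\widetilde\bU_2\widetilde\bD_1\widetilde\bU_2^\top=(\bA^\top\bA)^{1/2}$, a symmetric positive definite $r\times r$ matrix; the hypothesis $\|\bA^\top\bA-\bI_r\|_2\le\delta$ is precisely $(1-\delta)\bI_r\preceq P^2\preceq(1+\delta)\bI_r$, and since $t\mapsto\sqrt{1+t}-1$ obeys $|\sqrt{1+t}-1|\le|t|$ for $t\ge-1$, applying this to the eigenvalues of $P^2-\bI_r$ yields $\|P-\bI_r\|_2\le\|P^2-\bI_r\|_2\le\delta$. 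Symmetrically $\bB=\bV R$, $R:=(\bB^\top\bB)^{1/2}$ symmetric positive definite with $\|R-\bI_r\|_2\le\delta$, and $\bU^\top\bU=\bV^\top\bV=\bI_r$.

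\textbf{Reduction.} Writing $\bA\Lambda\bA^\top-\bU\Lambda\bU^\top=\bU(P\Lambda P-\Lambda)\bU^\top$ and $\bA\bQ\bB^\top-\bU\bQ\bV^\top=\bU(P\bQ R-\bQ)\bV^\top$, and using that multiplication by matrices with orthonormal columns is a spectral-norm isometry ($\|\bU M\bU^\top\|_2=\|M\|_2$ and $\|\bU M\bV^\top\|_2=\|M\|_2$ for any $M\in\RR^{r\times r}$, since $\bU,\bV$ act isometrically on their column spaces), the two assertions reduce to the identity-centred inequalities $\|P\Lambda P-\Lambda\|_2\le\delta\|\Lambda\|_2$ for $\Lambda\succeq0$ and $\|P\bQ R-\bQ\|_2\le\sqrt2\,\delta\|\bQ\|_2$ for arbitrary $\bQ$.

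\textbf{Bounding the reduced quantities.} Set $P=\bI_r+E$, $R=\bI_r+F$ with $E,F$ symmetric and $\|E\|_2,\|F\|_2\le\delta$. Expanding gives $P\Lambda P-\Lambda=E\Lambda+\Lambda E+E\Lambda E$ and $P\bQ R-\bQ=E\bQ+\bQ F+E\bQ F$, so the triangle inequality already delivers a bound of the right shape, $(2\delta+\delta^2)\|\cdot\|_2$. To sharpen to the stated constants I would diagonalise $P$ (and, for the second inequality, $R$ independently), turning $P\Lambda P-\Lambda$ into the Hadamard product $D\odot\Lambda$ with $D_{ij}=\sigma_i\sigma_j-1$ where $\sigma_i^2\in[1-\delta,1+\delta]$ are the eigenvalues of $\bA^\top\bA$. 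The identity $\sigma_i\sigma_j-1=\tfrac12(\sigma_i^2-c)+\tfrac12(\sigma_j^2-c)+(c-1)-\tfrac12(\sigma_i-\sigma_j)^2$, with $c$ the midpoint of $[\min_i\sigma_i^2,\max_i\sigma_i^2]$, splits $D\odot\Lambda$ into a piece $\tfrac12\mathrm{diag}(a)\Lambda+\tfrac12\Lambda\,\mathrm{diag}(a)+(c-1)\Lambda$ of spectral norm $\le(\|a\|_\infty+|c-1|)\|\Lambda\|_2\le\delta\|\Lambda\|_2$ (by the choice of $c$, $\|a\|_\infty\le\delta-|c-1|$), plus the Hadamard product with $-\tfrac12(\sigma_i-\sigma_j)^2$, whose entries are $O(\delta^2)$ and which is controlled by the same diagonal-conjugation trick after centring the $\sigma_i$. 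The $P\bQ R-\bQ$ case is handled identically with $D_{ij}=\sigma_i\tau_j-1$.

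\textbf{Main obstacle.} The delicate point is squeezing the constants down to exactly $\delta$ and $\sqrt2\,\delta$ rather than the $\delta+O(\delta^2)$ that the Hadamard/triangle estimates above produce, and this is also where the two parts genuinely differ. For $P\Lambda P-\Lambda$ one must use $\Lambda\succeq0$ in an essential way: the extremal configuration is $\Lambda$ a rank-one projection onto the leading singular direction of $\bA$, where equality $\delta\|\Lambda\|_2$ is attained, and making the gain from positivity rigorous (e.g.\ via a variational/SDP-duality argument reducing to rank-one $\Lambda$) is the crux. For $P\bQ R-\bQ$ no positivity is available, which is precisely why the constant is allowed to degrade to $\sqrt2\,\delta$; there the refined Hadamard split already suffices. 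I therefore expect the positive-semidefinite inequality to be the harder one to prove with the sharp constant, and the general-$\bQ$ inequality to follow more routinely from the Schur-multiplier bookkeeping sketched above.
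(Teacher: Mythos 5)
Your reduction is sound: writing $\bA=\bU P$ with $P=(\bA^\top\bA)^{1/2}$ (and similarly $\bB=\bV R$) correctly identifies $\bU,\bV$ as polar factors, and the isometry of left/right multiplication by orthonormal-column matrices does reduce the lemma to $\|P\Lambda P-\Lambda\|_2\le\delta\|\Lambda\|_2$ and $\|P\bQ R-\bQ\|_2\le\sqrt2\,\delta\|\bQ\|_2$. (Note, incidentally, that the paper itself does not prove this lemma: it is cited as Proposition 5 of \texttt{han2023tensor}, so the question is simply whether your argument is complete.) It is not. Your Schur-multiplier decomposition $\sigma_i\sigma_j-1=\tfrac12(\sigma_i^2-c)+\tfrac12(\sigma_j^2-c)+(c-1)-\tfrac12(\sigma_i-\sigma_j)^2$ does yield a ``main'' Hadamard factor whose Schur-multiplier norm is at most $\delta$ (your midpoint computation $\|a\|_\infty+|c-1|=\max(\sigma_{\max}^2-1,\,1-\sigma_{\min}^2)\le\delta$ is correct), but the leftover factor $-\tfrac12(\sigma_i-\sigma_j)^2$ has strictly negative Schur-multiplier norm whenever the $\sigma_i$ are not all equal, and any attempt to absorb it by a further triangle inequality pushes the total strictly above $\delta\|\Lambda\|_2$. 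You acknowledge this yourself in the ``Main obstacle'' paragraph---``making the gain from positivity rigorous \dots\ is the crux''---which is an honest admission that the first inequality is not established. The second inequality is likewise only asserted (``the refined Hadamard split already suffices'') without being carried out, and the naive splittings $(\bA-\bU)\bQ\bB^\top+\bU\bQ(\bB-\bV)^\top$ or its symmetrization do not obviously land under $\sqrt2\,\delta$ either.

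The underlying difficulty is structural, not just a matter of bookkeeping: $P\Lambda P-\Lambda=E\Lambda+\Lambda E+E\Lambda E$ with $E=P-\bI$ symmetric of norm $1-\sqrt{1-\delta}$ or $\sqrt{1+\delta}-1$, and the three terms have partially cancelling signs (in the worst-case rank-one alignment they sum exactly to $\pm\delta\Lambda$), so any argument that estimates them separately necessarily overshoots. You need an argument that keeps the cancellation---for instance a direct quadratic-form computation, or an operator inequality that uses $\Lambda\succeq0$ and $P^2\preceq(1+\delta)\bI$ together without splitting into pieces. Your observation that the extremal $\Lambda$ is a rank-one projector aligned with an extremal eigenvector of $P$ is a useful guide to where equality holds, but a matching worst-case example is not a proof of the upper bound. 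As submitted, there is a genuine gap in both reduced inequalities.
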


\begin{lemma}\label{prop-rank-1-approx} 
Let $\bM\in \RR^{d_1\times d_2}$ be a matrix with $\|\bM\|_{\rm F}=1$ and 
${\ba}$ and ${\bb}$ be unit vectors respectively in $\RR^{d_1}$ and $\RR^{d_2}$. 
Let $\widehat \ba$ be the top left singular vector of $\bM$. 
Then, 
\begin{equation}\label{prop-rank-1-approx-1}  
\big(\|\hat\ba\hat\ba^{\top} - \ba\ba^{\top}\|_{\rm 2}^2\big) \wedge (1/2)
\le \|\vect(\bM)\vect(\bM)^{\top} - \vect(\ba\bb^\top)\vect(\ba\bb^\top)^\top\|_{\rm 2}^2. 
\end{equation}
\end{lemma}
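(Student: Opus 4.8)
\textbf{Proof proposal for Lemma~\ref{prop-rank-1-approx}.}

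The plan is to reduce the statement about the projection matrices $\hat\ba\hat\ba^\top - \ba\ba^\top$ and $\vect(\bM)\vect(\bM)^\top - \vect(\ba\bb^\top)\vect(\ba\bb^\top)^\top$ to a pair of scalar inequalities involving only the alignment cosines $\angles{\hat\ba,\ba}$ and $\angles{\vect(\bM),\vect(\ba\bb^\top)}$. First I would recall the elementary identity that for unit vectors $\bu,\bv$ one has $\|\bu\bu^\top - \bv\bv^\top\|_2 = \sqrt{1-\angles{\bu,\bv}^2}$; applying this to both sides, the claimed bound becomes
\begin{equation*}
\big(1 - \angles{\hat\ba,\ba}^2\big)\wedge(1/2) \;\le\; 1 - \angles{\vect(\bM),\vect(\ba\bb^\top)}^2,
\end{equation*}
since $\|\bM\|_{\rm F}=\|\vect(\bM)\|_2=1$ and $\|\ba\bb^\top\|_{\rm F}=1$ so $\vect(\bM)$ and $\vect(\ba\bb^\top)$ are themselves unit vectors. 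Writing $\tau = \angles{\vect(\bM),\vect(\ba\bb^\top)} = \ba^\top \bM \bb$, it suffices to show $\angles{\hat\ba,\ba}^2 \ge \tau^2$ whenever $\tau^2 \ge 1/2$, i.e. whenever the right-hand side is at most $1/2$.

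The key step is a variational/Davis--Kahan-type argument controlling how well the top left singular vector $\hat\ba$ of $\bM$ aligns with $\ba$, given that $\bM$ itself is close to the rank-one matrix $\ba\bb^\top$ in the sense that $\ba^\top\bM\bb = \tau$. I would argue as follows: let $\sigma_1 \ge \sigma_2 \ge \cdots$ be the singular values of $\bM$. Decompose $\ba = \angles{\ba,\hat\ba}\hat\ba + \ba_\perp$ with $\ba_\perp \perp \hat\ba$, and similarly it is convenient to expand $\ba$ and $\bb$ in the left/right singular bases of $\bM$. Then $\tau = \ba^\top \bM \bb = \sum_i \sigma_i (\ba^\top\bu_i)(\bb^\top\bv_i)$, where $\bu_i,\bv_i$ are the singular vector pairs. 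Bounding $|\tau| \le \sigma_1 |\ba^\top\bu_1||\bb^\top\bv_1| + (\sum_{i\ge 2}\sigma_i^2)^{1/2}(\sum_{i\ge2}(\ba^\top\bu_i)^2)^{1/2}$ via Cauchy--Schwarz, and using $\sigma_1 \le 1$ and $\sum_{i\ge2}\sigma_i^2 \le 1 - \sigma_1^2 \le 1$, one gets $|\tau| \le |\ba^\top\bu_1| + \sqrt{1 - (\ba^\top\bu_1)^2}\cdot(\text{something}\le 1)$. This needs to be massaged into $\tau^2 \le (\ba^\top\bu_1)^2 + $ slack, from which, since $\hat\ba = \bu_1$, we conclude $\angles{\hat\ba,\ba}^2 \ge \tau^2 - (1-\tau^2)$ or a similar relation; the truncation at $1/2$ is exactly what absorbs the cross-term slack and makes the clean bound hold. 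An alternative and perhaps cleaner route: use that $\hat\ba = \bu_1$ maximizes $\|\bx^\top\bM\|_2$ over unit $\bx$, so $\|\hat\ba^\top\bM\|_2 \ge \|\ba^\top\bM\|_2 \ge |\ba^\top\bM\bb| = |\tau|$; combine this with $\|\hat\ba^\top\bM\|_2 \le \sigma_1 \le 1$ and a perturbation bound relating $\angles{\hat\ba,\ba}$ to the gap between $\|\ba^\top\bM\|_2$ and $\|\ba_\perp^\top\bM\|_2$.

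I expect the main obstacle to be making the slack bookkeeping precise enough that the factor $1/2$ (rather than some other constant) comes out, and in particular handling the regime where $\bM$ has several comparable singular values — there the top singular vector need not be well-aligned with $\ba$ even though $\tau$ is moderately large, which is precisely why the left-hand side is truncated at $1/2$. The cleanest way to handle this is probably to split into the case $(\ba^\top\bu_1)^2 \ge 1/2$ (where a direct estimate shows $\angles{\hat\ba,\ba}^2 = (\ba^\top\bu_1)^2 \ge \tau^2$ by the Cauchy--Schwarz chain above) and the case $(\ba^\top\bu_1)^2 < 1/2$ (where $1-\angles{\hat\ba,\ba}^2 > 1/2$, so the left-hand side $\min$ equals $1/2$ and we only need $\tau^2 \le 1/2$, which follows from the same chain since the bound on $|\tau|$ degrades once $\ba$ is poorly aligned with $\bu_1$). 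This case split, together with the two elementary identities for $\|\bu\bu^\top-\bv\bv^\top\|_2$, should close the argument; the remaining computations are routine. The extension to $m \in S^c$ referenced in Proof~II is identical with $\bM$ replaced by the mode-$m$ matricization of $\hat\bv_r$ and $\bb$ by $\vect(\circ_{\ell\in S^c\setminus\{m\}}\ba_{r\ell})$.
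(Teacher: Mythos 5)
Your reduction to the scalar inequality is correct: since $\vect(\bM)$ and $\vect(\ba\bb^\top)$ are unit vectors, $\|\bu\bu^\top-\bv\bv^\top\|_2^2 = 1-\langle\bu,\bv\rangle^2$ turns the claim into ``$(1-\langle\hat\ba,\ba\rangle^2)\wedge(1/2)\le 1-\tau^2$ with $\tau=\ba^\top\bM\bb$,'' and the case split on whether $\langle\hat\ba,\ba\rangle^2\gtrless 1/2$ is the right organizing idea. (The paper itself gives no proof here; it cites Proposition~3 of Han et al.\ 2023, so you are on your own.) The gap is in the estimate you lean on. Splitting off the $i=1$ term by the triangle inequality and then applying Cauchy--Schwarz to the tail gives, in its sharpest form, $|\tau|\le\sigma_1|a_1||b_1|+\sqrt{\sum_{i\ge2}\sigma_i^2}\sqrt{\sum_{i\ge2}a_i^2}$ with $a_i=\ba^\top\bu_i$, and this is already too lossy. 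Concretely, take $\sigma_1=\sigma_2=1/\sqrt2$, $a_1=a_2=1/\sqrt2$, $b_1=1$: your bound reads $|\tau|\le \tfrac12+\tfrac12=1$, which is vacuous, while the statement needs $\tau^2\le1/2$ (since $a_1^2=1/2$). Your proposed fallback inequality ``$\langle\hat\ba,\ba\rangle^2\ge 2\tau^2-1$'' is also strictly weaker than what is required: when $\tau^2>1/2$ you need $\langle\hat\ba,\ba\rangle^2\ge\tau^2$, and $2\tau^2-1\ge\tau^2$ only at $\tau^2=1$. So neither route you develop closes the argument, and the ``slack bookkeeping'' you flag as the obstacle really is the missing content.

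The fix is to apply Cauchy--Schwarz to the whole sum at once rather than peeling off $i=1$ by the triangle inequality. Writing $a_i=\ba^\top\bu_i$, one has $\tau^2 = |\ba^\top\bM\bb|^2 \le \|\ba^\top\bM\|_2^2\|\bb\|_2^2 = \sum_i\sigma_i^2 a_i^2 \le \sigma_1^2 a_1^2 + \sigma_2^2\sum_{i\ge2}a_i^2 \le \sigma_1^2 a_1^2 + \sigma_2^2(1-a_1^2)$, using $\sigma_i\le\sigma_2$ for $i\ge2$. (This is exactly the ``alternative route'' you mention via $|\tau|\le\|\ba^\top\bM\|_2$, but you would need to actually expand $\|\ba^\top\bM\|_2^2$ rather than only bounding it by $\sigma_1$.) Combining with $\sigma_2^2\le\min(\sigma_1^2,1-\sigma_1^2)$ and setting $x=\sigma_1^2$, $y=a_1^2$: if $x\le1/2$ then $\tau^2\le x\le1/2$ and the truncated bound is trivial; if $x>1/2$ and $y\ge1/2$, then $\tau^2\le xy+(1-x)(1-y)$ and $y-\tau^2\ge(2y-1)(1-x)\ge0$, i.e.\ $\langle\hat\ba,\ba\rangle^2\ge\tau^2$; if $x>1/2$ and $y<1/2$, then $\tau^2\le 1-x(1-2y)-y<1/2$. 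These three cases are exactly your intended dichotomy, but you need this particular Cauchy--Schwarz and the $\sigma_2^2\le\min(\sigma_1^2,1-\sigma_1^2)$ observation to make it stick; the bound you wrote down does not.
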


Lemmas \ref{lemma-transform-ext} and \ref{prop-rank-1-approx} are Propositions 5 and 3 in \cite{han2023tensor}, respectively.

\begin{lemma}\label{lemma:epsilonnet}
Let $d, d_j, d_*, r\le d\wedge d_j$ be positive integers, $\epsilon>0$ and
$N_{d,\epsilon} = \lfloor(1+2/\epsilon)^d\rfloor$. \\
(i) For any norm $\|\cdot\|$ in $\RR^d$, there exist
$M_j\in \RR^d$ with $\|M_j\|\le 1$, $j=1,\ldots,N_{d,\epsilon}$,
such that
\begin{align*}
\max_{\|M\|\le 1}\min_{1\le j\le N_{d,\epsilon}}\|M - M_j\|\le \epsilon .    
\end{align*}
Consequently, for any linear mapping $f$ and norm $\|\cdot\|_*$,
\begin{align*}
\sup_{M\in \RR^d,\|M\|\le 1}\|f(M)\|_* \le 2\max_{1\le j\le N_{d,1/2}}\|f(M_j)\|_*.    
\end{align*}
(ii) Given $\epsilon >0$, there exist $U_j\in \RR^{d\times r}$
and $V_{j'}\in \RR^{d'\times r}$ with $\|U_j\|_{2}\vee\|V_{j'}\|_{2}\le 1$ such that
\begin{align*}
\max_{M\in \RR^{d\times d'},\|M\|_{2}\le 1,\text{rank}(M)\le r}\
\min_{j\le N_{dr,\epsilon/2}, j'\le N_{d'r,\epsilon/2}}\|M - U_jV_{j'}^\top\|_{2}\le \epsilon.    
\end{align*}
Consequently, for any linear mapping $f$ and norm $\|\cdot\|_*$ in the range of $f$,
\begin{equation}\label{lm-3-2}
\sup_{M, \widetilde M\in \RR^{d\times d'}, \|M-\widetilde M\|_{2}\le \epsilon
\atop{\|M\|_{2}\vee\|\widetilde M\|_{2}\le 1\atop
\text{rank}(M)\vee\text{rank}(\widetilde M)\le r}}
\frac{\|f(M-\widetilde M)\|_*}{\epsilon 2^{I_{r<d\wedge d'}}}
\le \sup_{\|M\|_{2}\le 1\atop \text{rank}(M)\le r}\|f(M)\|_*
\le 2\max_{1\le j \le N_{dr,1/8}\atop 1\le j' \le N_{d'r,1/8}}\|f(U_jV_{j'}^\top)\|_*.
\end{equation}
(iii) Given $\epsilon >0$, there exist $U_{j,k}\in \RR^{d_k\times r_k}$
and $V_{j',k}\in \RR^{d'_k\times r_k}$ with $\|U_{j,k}\|_{2}\vee\|V_{j',k}\|_{2}\le 1$ such that
\begin{align*}
\max_{M_k\in \RR^{d_k\times d_k'},\|M_k\|_{2}\le 1\atop \text{rank}(M_k)\le r_k, \forall k\le K}\
\min_{j_k\le N_{d_kr_k,\epsilon/2} \atop j'_k\le N_{d_k'r_k,\epsilon/2}, \forall k\le K}
\Big\|\otimes_{k=2}^K M_k - \otimes_{k=2}^K(U_{j_k,k}V_{j_k',k}^\top)\Big\|_{2}\le \epsilon (K-1).    
\end{align*}
For any linear mapping $f$ and norm $\|\cdot\|_*$ in the range of $f$,
\begin{equation}\label{lm-3-3}
\sup_{M_k, \widetilde M_k\in \RR^{d_k\times d_k'},\|M_k-\widetilde M_k\|_{2}\le\epsilon\atop
{\text{rank}(M_k)\vee\text{rank}(\widetilde M_k)\le r_k \atop \|M_k\|_{2}\vee\|\widetilde M_k\|_{2}\le 1\ \forall k\le K}}
\frac{\|f(\otimes_{k=2}^KM_k-\otimes_{k=2}^K\widetilde M_k)\|_*}{\epsilon(2K-2)}
\le \sup_{M_k\in \RR^{d_k\times d_k'}\atop {\text{rank}(M_k)\le r_k \atop \|M_k\|_{2}\le 1, \forall k}}
\Big\|f\big(\otimes_{k=2}^K M_k\big)\Big\|_*
\end{equation}
and
\begin{equation}\label{lm-3-4}
\sup_{M_k\in \RR^{d_k\times d_k'},\|M_k\|_{2}\le 1\atop \text{rank}(M_k)\le r_k\ \forall k\le K}
\Big\|f\big(\otimes_{k=2}^K M_k\big)\Big\|_*
\le 2\max_{1\le j_k \le N_{d_kr_k,1/(8K-8)}\atop 1\le j_k' \le N_{d_k'r_k,1/(8K-8)}}
\Big\|f\big(\otimes_{k=2}^K U_{j_k,k}V_{j_k',k}^\top\big)\Big\|_*.
\end{equation}
\end{lemma}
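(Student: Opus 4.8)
\textbf{Proof plan for Lemma \ref{lemma:epsilonnet}.}

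The plan is to reduce every part to the single classical fact that a Euclidean unit ball in $\RR^d$ admits an $\epsilon$-net of cardinality at most $(1+2/\epsilon)^d$, and then to transfer this statement across linear maps, to low-rank matrices, and finally to Kronecker products by telescoping. For part (i), I would first recall the volumetric covering-number bound: if $\calN$ is a maximal $\epsilon$-separated subset of $\{M:\|M\|\le 1\}$, then the balls of radius $\epsilon/2$ around its points are disjoint and contained in the ball of radius $1+\epsilon/2$, whence a volume comparison gives $|\calN|\le(1+2/\epsilon)^d = N_{d,\epsilon}$ up to the floor; maximality of $\calN$ makes it an $\epsilon$-net. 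The ``consequently'' claim then follows by writing an arbitrary $M$ with $\|M\|\le1$ as $M_j+(M-M_j)$ with $\|M-M_j\|\le1/2$, applying linearity of $f$, and iterating: $\sup_{\|M\|\le1}\|f(M)\|_*\le\max_j\|f(M_j)\|_*+\tfrac12\sup_{\|M\|\le1}\|f(M)\|_*$, so the supremum is finite and bounded by $2\max_j\|f(M_j)\|_*$. (One needs $\sup_{\|M\|\le1}\|f(M)\|_*<\infty$ a priori, which holds since $f$ is linear on a finite-dimensional space.)

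For part (ii), the key observation is that a rank-$\le r$ matrix $M\in\RR^{d\times d'}$ with $\|M\|_2\le1$ can be written $M=U\Sigma V^\top$ with $U\in\RR^{d\times r}$, $V\in\RR^{d'\times r}$ having orthonormal columns and $\|\Sigma\|_2\le1$; absorbing $\Sigma$, it suffices to net the Stiefel-type sets $\{U\in\RR^{d\times r}:\|U\|_2\le1\}$ and $\{V\in\RR^{d'\times r}:\|V\|_2\le1\}$ separately. Each such set sits inside the unit ball of $\RR^{dr}$ (resp.\ $\RR^{d'r}$) in Frobenius norm, but we want the net in spectral norm; since $\|\cdot\|_2\le\|\cdot\|_{\rm F}$, an $\epsilon/2$-net in Frobenius norm is an $\epsilon/2$-net in spectral norm, giving the stated cardinalities $N_{dr,\epsilon/2}$, $N_{d'r,\epsilon/2}$. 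Then for $M=U_0\Sigma V_0^\top$ I pick net points $U_j,V_{j'}$ with $\|U_0-U_j\|_2\vee\|V_0-V_{j'}\|_2\le\epsilon/2$ and estimate $\|M-U_jV_{j'}^\top\|_2\le\|(U_0-U_j)\Sigma V_0^\top\|_2+\|U_j\Sigma(V_0-V_{j'})^\top\|_2\le\epsilon/2+\epsilon/2=\epsilon$ (using $\|U_j\|_2\le1$, which holds because net points can be taken inside the ball, after possibly replacing by the nearest point of the ball). The two displayed inequalities \eqref{lm-3-2} then follow: the right-hand inequality is the same iterative argument as in (i) applied to the low-rank net with $\epsilon=1/8$ (so the perturbation factor is $1/4$, giving the constant $2$); the left-hand inequality is just the observation that any rank-$\le r$ matrix of spectral norm $\le1$ can be written as $(\epsilon 2^{I_{r<d\wedge d'}})^{-1}$ times a difference of two such matrices within distance $\epsilon$ — concretely, scale down to norm $\epsilon/2$, use that it lies within $\epsilon$ of a net point, and note the factor $2$ when $r<d\wedge d'$ accounts for the fact that a net point of a low-rank set need not itself be low-rank unless one nets the factors, which is why the factored construction is used.

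For part (iii), I would proceed by telescoping the Kronecker product. Write $\otimes_{k=2}^K M_k-\otimes_{k=2}^K \widehat M_k=\sum_{k=2}^K\big(\otimes_{\ell<k}\widehat M_\ell\big)\otimes(M_k-\widehat M_k)\otimes\big(\otimes_{\ell>k}M_\ell\big)$ (with the obvious interpretation at the endpoints), where $\widehat M_\ell=U_{j_\ell,\ell}V_{j_\ell',\ell}^\top$; submultiplicativity of the spectral norm under Kronecker products, $\|A\otimes B\|_2=\|A\|_2\|B\|_2$, together with $\|M_\ell\|_2,\|\widehat M_\ell\|_2\le1$ and $\|M_k-\widehat M_k\|_2\le\epsilon$ from part (ii), gives the bound $\epsilon(K-1)$, proving the first display. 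The inequality \eqref{lm-3-3} then follows by the now-familiar iteration argument (each coordinate contributes an additive error controlled by $\epsilon$, and there are $K-1$ of them, plus the left-hand telescoping bound gives the factor $2K-2$), and \eqref{lm-3-4} is the $\epsilon=1/(8K-8)$ instance run through the same contraction, with the geometric series summing to the constant $2$. The main obstacle, and the step I expect to require the most care, is bookkeeping the constants and the factor $2^{I_{r<d\wedge d'}}$ in \eqref{lm-3-2} and the factor $(2K-2)$ in \eqref{lm-3-3}: one must be precise about whether net points are taken inside the relevant ball (so that the $\|\cdot\|_2\le1$ normalization of $U_j,V_{j'}$ is preserved), and about the distinction between netting a low-rank \emph{set} directly versus netting its \emph{factors}, since only the latter keeps the approximants in the prescribed low-rank/norm class. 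None of the estimates is deep, but the constant-tracking is where errors creep in, so I would write out the telescoping sum explicitly and verify the base case $K=2$ reduces exactly to part (ii).
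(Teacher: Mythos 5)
The paper does not prove this lemma; it cites Lemma~G.1 of Han, Luo, Wang, and Zhang (2020), so there is no in-paper proof to match against. Your outline follows the standard route — volumetric covering number for an arbitrary norm, factored netting of low-rank matrices via a product of two spectral-norm balls, and telescoping for Kronecker products — and the first display of each part and the telescoping identity in (iii) are handled correctly. However, there are two concrete gaps in the details.

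First, the claim that the set $\{U\in\RR^{d\times r}:\|U\|_2\le 1\}$ ``sits inside the unit ball of $\RR^{dr}$ in Frobenius norm'' is backwards: the spectral-norm ball \emph{contains} the Frobenius ball (and has Frobenius diameter up to $2\sqrt r$), so a Frobenius net of the Frobenius ball does not cover it. The correct move is simply to invoke part (i) with the spectral norm itself as the norm on $\RR^{d\times r}\cong\RR^{dr}$, since (i) holds for any norm. The conclusion $N_{dr,\epsilon/2}$ is right; your justification for it is not.

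Second, and more seriously, every ``consequently'' inequality in parts (ii) and (iii) hinges on a rank-splitting step that you never state: the difference $M-U_jV_{j'}^\top$ (respectively $\otimes_k M_k-\otimes_k\widehat M_k$, termwise after telescoping) can have rank up to $2r$, so it does not lie in the class over which the supremum on the right-hand side is taken, and the iteration ``$\sup\le\max_j+\tfrac12\sup$'' does not apply directly. One must first split each such difference via its SVD into two rank-$\le r$ matrices of the same spectral norm and bound $\|f(\cdot)\|_*$ on each piece separately. This splitting is exactly the source of the factor $2^{I_{r<d\wedge d'}}$ in \eqref{lm-3-2} (the exponent is $0$ precisely when $r\ge d\wedge d'$ because then $\mathrm{rank}\le r$ is no constraint and no splitting is needed) and of the extra factor of $2$ turning $K-1$ telescoping terms into $2K-2$ in \eqref{lm-3-3}. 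Your stated explanation — that ``a net point of a low-rank set need not itself be low-rank'' — is false here (the net points $U_jV_{j'}^\top$ are always rank $\le r$ by construction) and misattributes the constant. Likewise your description of the left inequality in \eqref{lm-3-2} reads as if it were a lower bound on the sup of differences, when it is in fact proved by bounding $\|f(M-\widetilde M)\|_*$ from above after splitting $M-\widetilde M$ (rank $\le 2r$, norm $\le\epsilon$) into two rank-$\le r$ matrices and scaling each by $1/\epsilon$. Once you insert the rank-$2r$ split and correct the norm used in the factor nets, the constants in \eqref{lm-3-2}--\eqref{lm-3-4} all fall out exactly as stated.
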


\begin{lemma}\label{lm-pertubation}
Let $r \le d_1\wedge d_2$, $M$ be a $d_1\times d_2$ matrix, 
$U$ and $V$ be, respectively, the left and right singular matrices associated 
with the $r$ largest singular values of $M$,
$U_{\perp}$ and $V_{\perp}$ be the orthonormal complements of $U$ and $V$, 
and $\lam_r$ be the $r$-th largest singular value of $M$. 
Let $\widehat M = M + \Delta$ be a noisy version of $M$, 
$\{\widehat U, \widehat V, {\widehat U}_{\perp}, {\widehat V}_{\perp}\}$ 
be the counterpart of $\{U,V,V_\perp,V_\perp\}$, and 
${\widehat \lam}_{r+1}$ be the $(r+1)$-th largest singular value of $\widehat M$. 
Let $\|\cdot\|$ be a
matrix norm satisfying $\|ABC\|\le \|A\|_{2}\|C\|_{2}\|B\|$, 
$\epsilon_1= \|U^\top \Delta {\widehat V}_{\perp}\|$ and $\epsilon_2 = \|{\widehat U}_{\perp}^\top \Delta V\|$. Then,
\begin{align}\label{wedin+}
\| U_{\perp}^\top \widehat U \| 
\le \frac{{\widehat \lam}_{r+1}\epsilon_1+\lam_r\epsilon_2}{\lambda_r^2 - {\widehat \lam}_{r+1}^2}
\le \frac{\epsilon_1\vee\epsilon_2}{\lambda_r - {\widehat \lam}_{r+1}}.   
\end{align}
In particular, for the spectral norm $\|\cdot\|=\|\cdot\|_{2}$, $\hbox{\rm error}_1 =\|\Delta\|_{2}/\lambda_r$ 
and $\hbox{\rm error}_2 =\epsilon_2/\lambda_r$, 
\begin{align}\label{wedin-2}
\|\widehat U \widehat U^\top  -U U^\top\|_{2}\le \frac{\hbox{\rm error}_1^2+\hbox{\rm error}_2}{1-\hbox{\rm error}_1^2}.
\end{align}
\end{lemma}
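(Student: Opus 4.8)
The target quantity is a $\sin\Theta$-type distance: since the hypothesis $\|ABC\|\le\|A\|_2\|C\|_2\|B\|$ forces the norm to be unitarily invariant, a pair of $r$-dimensional subspaces satisfies $\|\widehat U_\perp^\top U\| = \|U_\perp^\top\widehat U\| = \|\widehat U\widehat U^\top - UU^\top\|_2$ by the CS/principal-angle decomposition, so it suffices to bound $G := \widehat U_\perp^\top U$. Throughout I will use the submultiplicativity consequences $\|AB\| \le \|A\|\,\|B\|_2$ and $\|AB\| \le \|A\|_2\,\|B\|$, the relation $\|G\| \le \|G\Sigma_1\|\,\|\Sigma_1^{-1}\|_2 = \|G\Sigma_1\|/\lambda_r$ (where $\Sigma_1=\mathrm{diag}(\lambda_1,\dots,\lambda_r)$ is invertible in the regime $\lambda_r>\widehat\lambda_{r+1}\ge 0$ of interest), and the block-SVD identities $\widehat U_\perp^\top\widehat M = \widehat\Sigma_2\widehat V_\perp^\top$, $\widehat V_\perp^\top\widehat M^\top = \widehat\Sigma_2^\top\widehat U_\perp^\top$ with $\|\widehat\Sigma_2\|_2 = \widehat\lambda_{r+1}$, together with $MV = U\Sigma_1$ and $M^\top U = V\Sigma_1$.

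The first step is to produce two coupling identities. Evaluating $\widehat U_\perp^\top \widehat M V$ once as $\widehat\Sigma_2\widehat V_\perp^\top V$ and once as $\widehat U_\perp^\top(M+\Delta)V = \widehat U_\perp^\top U\Sigma_1 + \widehat U_\perp^\top\Delta V$ yields $G\Sigma_1 = \widehat\Sigma_2 H - \widehat U_\perp^\top\Delta V$, where $H := \widehat V_\perp^\top V$; symmetrically, evaluating $\widehat V_\perp^\top\widehat M^\top U$ two ways yields $H\Sigma_1 = \widehat\Sigma_2^\top G - \widehat V_\perp^\top\Delta^\top U$. Taking norms, and noting $\|\widehat U_\perp^\top\Delta V\| = \epsilon_2$ and $\|\widehat V_\perp^\top\Delta^\top U\| = \|U^\top\Delta\widehat V_\perp\| = \epsilon_1$ (unitary invariance under transposition), I obtain $\lambda_r\|G\| \le \widehat\lambda_{r+1}\|H\| + \epsilon_2$ and $\lambda_r\|H\| \le \widehat\lambda_{r+1}\|G\| + \epsilon_1$. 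Substituting the second bound into the first and clearing denominators gives $(\lambda_r^2 - \widehat\lambda_{r+1}^2)\|G\| \le \widehat\lambda_{r+1}\epsilon_1 + \lambda_r\epsilon_2$; dividing by the (positive, by hypothesis) gap $\lambda_r^2 - \widehat\lambda_{r+1}^2$ is exactly the first inequality of \eqref{wedin+}. The second inequality of \eqref{wedin+} is then algebra: bound both $\epsilon_i$ by $\epsilon_1\vee\epsilon_2$ in the numerator, factor out $\widehat\lambda_{r+1} + \lambda_r$, and cancel it against $\lambda_r^2 - \widehat\lambda_{r+1}^2 = (\lambda_r - \widehat\lambda_{r+1})(\lambda_r + \widehat\lambda_{r+1})$.

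For \eqref{wedin-2} I would specialize to the spectral norm (so $\epsilon_1 = \|U^\top\Delta\widehat V_\perp\|_2 \le \|\Delta\|_2$) and invoke the Weyl inequality $\widehat\lambda_{r+1} \le \sigma_{r+1}(M) + \|\Delta\|_2 = \|\Delta\|_2$, valid whenever $M$ has rank at most $r$ (the situation in every application of this lemma, where $M$ is the low-rank signal). Then the numerator in \eqref{wedin+} is at most $\|\Delta\|_2^2 + \lambda_r\epsilon_2$ and the denominator at least $\lambda_r^2 - \|\Delta\|_2^2$, so dividing numerator and denominator by $\lambda_r^2$ gives precisely $(\mathrm{error}_1^2 + \mathrm{error}_2)/(1 - \mathrm{error}_1^2)$, and the identity $\|U_\perp^\top\widehat U\|_2 = \|\widehat U\widehat U^\top - UU^\top\|_2$ closes the argument.

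The only genuinely delicate point is the bookkeeping: one must pair $\widehat U_\perp$ with the \emph{unperturbed} $U$ (i.e.\ work with $G = \widehat U_\perp^\top U$ and convert to $U_\perp^\top\widehat U$ only at the end) so that the perturbation enters solely through the cross terms $\widehat U_\perp^\top\Delta V$ and $U^\top\Delta\widehat V_\perp$ appearing in the definitions of $\epsilon_1,\epsilon_2$; working directly with $U_\perp^\top\widehat U$ instead drags the tail singular values of $M$ into the estimate and spoils the clean statement. Checking the transposition/unitary-invariance facts and keeping track of which $\Sigma$ contributes $\lambda_r$ versus $\widehat\lambda_{r+1}$ is the rest of the (routine) work.
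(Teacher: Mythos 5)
Your argument is correct and reproduces the standard Wedin $\sin\Theta$ mechanism: couple $G=\widehat U_\perp^\top U$ and $H=\widehat V_\perp^\top V$ through the two block-SVD identities, solve the resulting $2\times 2$ system of inequalities, and pass to $\|U_\perp^\top\widehat U\|$ by unitary invariance (which, as you note, is forced by the norm hypothesis). This is the same argument behind Lemma~4.1 of the cited source (Han, Luo, Wang and Zhang, 2020); the present paper simply cites that lemma and provides no proof of its own, so there is nothing to compare against internally.

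Your caveat about \eqref{wedin-2} is worth stressing: the Weyl step $\widehat\lambda_{r+1}\le\|\Delta\|_2$ requires $\mathrm{rank}(M)\le r$, which the lemma statement omits. Without it \eqref{wedin-2} is simply false. Take $d_1=d_2=2$, $r=1$, $M=\mathrm{diag}(1,1-\delta)$, $\Delta=\mathrm{diag}(-\delta,\delta)$: then $\widehat M=\mathrm{diag}(1-\delta,1)$ so $\widehat U=e_2$, $U=e_1$, and $\|\widehat U\widehat U^\top-UU^\top\|_2=1$, while $\mathrm{error}_1=\mathrm{error}_2=\delta$ make the right-hand side of \eqref{wedin-2} equal to $\delta/(1-\delta)$. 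Note \eqref{wedin+} is still tight here (both sides equal $1$ because $\widehat\lambda_{r+1}=1-\delta$), so your derivation of \eqref{wedin+} needs no rank hypothesis and is fine as written; only the passage to \eqref{wedin-2} needs $\mathrm{rank}(M)\le r$, which holds in every place the paper invokes this lemma (the signal $M$ is always a low-rank tensor unfolding). You correctly identified and confined the extra hypothesis, so this is a flag on the lemma statement, not a gap in your proof.
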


Lemma \ref{lemma:epsilonnet} applies an $\epsilon$-net argument for matrices, as derived in Lemma G.1 in \cite{han2020iterative}. Lemma \ref{lm-pertubation} enhances the matrix perturbation bounds of \cite{wedin1972perturbation}, as derived in Lemma 4.1 in \cite{han2020iterative}. This sharper perturbation bound, detailed in the middle of \eqref{wedin+}, improves the commonly used version of the \cite{wedin1972perturbation} bound on the right-hand side,  
compared with Theorem 1 of \cite{cai2018rate} and Lemma 1 of \cite{chen2022rejoinder}. As \cite{cai2018rate} pointed out, such variations of the \cite{wedin1972perturbation} bound offer more precise convergence rates when when $\hbox{\rm error}_2\le \hbox{\rm error}_1$ in \eqref{wedin-2},  typically in the case of $d_1\ll d_2$.

The following lemma characterizes the accuracy of estimating the inverse of the covariance matrix $\Sigma_m^{-1}, m=1,\dots,M$ using sample covariance, and the convergence rate of the normalization constant. 
\begin{lemma} \label{lemma:precision matrix}
(i) Let $\bX_1, \cdots ,\bX_n \in \RR^{d_m \times d_{-m}}$ be i.i.d. random matrices, each following the matrix normal distribution $\bX \sim \cM\cN_{d_m \times d_{-m}}(\mu, \; \Sigma_m, \; \Sigma_{-m})$. To estimate $\Sigma_m$ and its inverse $\Sigma_m^{-1}$, we utilize the sample covariance $\hat\Sigma_m := (nd_{-m})^{-1} \sum\nolimits_{i=1}^n (\bX_i-\bar \bX) (\bX_i-\bar \bX)^\top$ with $\bar \bX=n^{-1}\sum_{i=1}^n \bX_i$, and its inverse $\hat\Sigma_m^{-1}$. Then there exists constants $C>0$ such that if $n d_{-m} \gtrsim d_m(1 \vee \|\Sigma_{-m} \|_2^2)$, in an event with probability at least $1-\exp(-cd_m)$, we have
\begin{align}
& \left\|\hat\Sigma_m - C_{m,\sigma} \Sigma_m\right\|_2 \leq C \cdot \left\| \Sigma_m \right\|_2 \left\| \Sigma_{-m}\right\|_2 \sqrt{\frac{d_m}{nd_{-m}}} ,  \label{eqn:covariance matrix} \\
& \left\| \hat\Sigma_m^{-1} - (C_{m,\sigma})^{-1} \Sigma_m^{-1} \right\|_2 \leq C (C_{m,\sigma}^{-2} \vee C_{m,\sigma}^{-4}) \left\|\Sigma_m^{-1} \right\|_2 \left\|\Sigma_{-m}\right\|_2 \sqrt{\frac{d_m}{nd_{-m}}}, \label{eqn:inverse matrix}
\end{align}
where $C_{m,\sigma}=(1-n^{-1})\tr(\Sigma_{-m})/d_{-m}$.

(ii) Let $\cX_1, \cdots ,\cX_n \in \RR^{d_1\times \cdots \times d_{M}}$ be i.i.d. random tensors, each following the tensor normal distribution $\cX \sim \cT\cN(\mu, \; \bSigma)$, where $\bSigma=[\Sigma_m]_{m=1}^M$ and $\Sigma_m\in\RR^{d_m \times d_m}$. Define $\hat{\Var}(\cX_{1\cdots1})=n^{-1}\sum_{i=1}^n (\cX_{i,1\cdots1}- \bar \cX_{1\cdots1})^2$ as the sample variance of the first element of $\cX$, and $\Var(\cX_{1\cdots 1})=\prod_{m=1}^M\Sigma_{m,11}$, $\bar \cX_{1\cdots1}=n^{-1}\sum_{i=1}^n\cX_{i,1\cdots1}$. Then in an event with probability at least $1-\exp(-c(t_1+t_2))$, we have
\begin{align}
& \left|\frac{\prod_{m=1}^M \hat\Sigma_{m,11}}{\hat{\Var}(\cX_{1\cdots1})} - C_{\sigma} \right| \leq C_{M} \Var(\cX_{1\cdots 1}) \left(\max_m \frac{\|\otimes_{k\neq m}\Sigma_{k}\|_2}{\sqrt{d_{-m}}} \cdot \sqrt{\frac{ t_1}{n }} +   \sqrt{\frac{ t_2}{n }}   \right),  \label{eqn:const} 
\end{align}
where $C_{\sigma}=\prod_{m=1}^M C_{m,\sigma} =(1-n^{-1})^M [\prod_{m=1}^M\tr(\Sigma_m)/d]^{M-1} = (1-n^{-1})^M [\tr(\bSigma)/d]^{M-1}$ and $C_{M}$ depends on $M$.
\end{lemma}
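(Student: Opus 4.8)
\textbf{Proof plan for Lemma~\ref{lemma:precision matrix}.}

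The plan is to treat the two statements separately, with part (i) being the workhorse and part (ii) following from it together with a scalar concentration argument. For part (i), the key observation is that when $\bX \sim \cM\cN_{d_m\times d_{-m}}(\mu,\Sigma_m,\Sigma_{-m})$, one can write $\bX - \bar\bX = \Sigma_m^{1/2} \bZ \Sigma_{-m}^{1/2}$ where $\bZ$ has (after centering across the $n$ samples) i.i.d.\ $N(0,1)$ entries up to the rank-one correction from $\bar\bX$, which contributes the factor $(1-n^{-1})$ in $C_{m,\sigma}$. So the first step is to reduce $\hat\Sigma_m$ to the Gaussian case: write $\hat\Sigma_m = \Sigma_m^{1/2}\big[(nd_{-m})^{-1}\sum_i \bZ_i \Sigma_{-m} \bZ_i^\top\big]\Sigma_m^{1/2}$ and note $\EE[(nd_{-m})^{-1}\sum_i \bZ_i\Sigma_{-m}\bZ_i^\top] = (1-n^{-1})(\tr\Sigma_{-m}/d_{-m}) I_{d_m} = C_{m,\sigma} I_{d_m}$. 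The second step is the concentration estimate $\|(nd_{-m})^{-1}\sum_i \bZ_i\Sigma_{-m}\bZ_i^\top - C_{m,\sigma} I_{d_m}\|_2 \lesssim \|\Sigma_{-m}\|_2\sqrt{d_m/(nd_{-m})}$, valid with probability $\ge 1-e^{-cd_m}$ under $nd_{-m}\gtrsim d_m(1\vee\|\Sigma_{-m}\|_2^2)$. This is a standard sub-exponential/Hanson--Wright-type bound for a sum of $n$ independent Wishart-like blocks of ``effective sample size'' $d_{-m}$, and can be obtained via an $\epsilon$-net over the unit sphere in $\RR^{d_m}$ (using Lemma~\ref{lemma:epsilonnet}(i)) combined with a Bernstein inequality for the quadratic forms $v^\top \bZ_i\Sigma_{-m}\bZ_i^\top v$; the net has cardinality $e^{O(d_m)}$, which matches the exponent in the probability bound. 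Multiplying on both sides by $\Sigma_m^{1/2}$ gives \eqref{eqn:covariance matrix}. For \eqref{eqn:inverse matrix}, apply the perturbation identity $A^{-1} - B^{-1} = -A^{-1}(A-B)B^{-1}$ with $A = \hat\Sigma_m$, $B = C_{m,\sigma}\Sigma_m$: on the same event the relative error is small (the condition $nd_{-m}\gtrsim d_m(1\vee\|\Sigma_{-m}\|_2^2)$ ensures the perturbation is $\le$ half the smallest eigenvalue of $C_{m,\sigma}\Sigma_m$), so $\|\hat\Sigma_m^{-1}\|_2 \lesssim C_{m,\sigma}^{-1}\|\Sigma_m^{-1}\|_2$, and plugging \eqref{eqn:covariance matrix} in yields the claimed bound with the $C_{m,\sigma}^{-2}\vee C_{m,\sigma}^{-4}$ factor (one power of $C_{m,\sigma}^{-1}$ each from the two inverses, squared, with the extra power absorbed when $C_{m,\sigma}$ is small).

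For part (ii), the quantity $\hat\Sigma_{m,11}$ is just the $(1,1)$ entry of the matrix estimated in part (i) for mode $m$, so from \eqref{eqn:covariance matrix} (or more directly, $\hat\Sigma_{m,11}$ is a scaled $\chi^2$-type average) we get $|\hat\Sigma_{m,11} - C_{m,\sigma}\Sigma_{m,11}| \lesssim \|\otimes_{k\ne m}\Sigma_k\|_2 \sqrt{d_m/(nd_{-m})}\cdot\ldots$; more usefully, observing $\hat\Sigma_{m,11} = (nd_{-m})^{-1}\sum_i \|\matk(\cX_i - \bar\cX)_{1,\cdot}\|_2^2$ is an average of $n$ independent terms each a quadratic form in a $d_{-m}$-dimensional Gaussian with covariance $\Sigma_{m,11}\otimes_{k\ne m}\Sigma_k$, a one-dimensional Bernstein bound gives $|\hat\Sigma_{m,11} - C_{m,\sigma}\Sigma_{m,11}| \lesssim \Sigma_{m,11}(\|\otimes_{k\ne m}\Sigma_k\|_2/\sqrt{d_{-m}})\sqrt{t_1/n}$ with probability $\ge 1 - e^{-ct_1}$. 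Likewise $\hat{\Var}(\cX_{1\cdots1})$ is an average of $n$ i.i.d.\ squares of a scalar $N(0,\Var(\cX_{1\cdots1}))$, so $|\hat{\Var}(\cX_{1\cdots1}) - (1-n^{-1})\Var(\cX_{1\cdots1})| \lesssim \Var(\cX_{1\cdots1})\sqrt{t_2/n}$ with probability $\ge 1-e^{-ct_2}$. The third step is to combine these via a telescoping/multiplicative-error argument: write $\prod_m\hat\Sigma_{m,11} = \prod_m(C_{m,\sigma}\Sigma_{m,11} + \epsilon_m)$, expand, bound each cross term (using $\prod_m C_{m,\sigma}\Sigma_{m,11} = C_\sigma\Var(\cX_{1\cdots1})$ and $|\epsilon_m/(C_{m,\sigma}\Sigma_{m,11})| = o(1)$ on the good event), then divide by $\hat{\Var}(\cX_{1\cdots1})$ and use $1/\hat{\Var} = (1/\Var)(1 + O(\sqrt{t_2/n}))$; the $(1-n^{-1})^{-1}$ discrepancy between $\prod_m(1-n^{-1})$ in $C_\sigma$ and the $(1-n^{-1})$ from the sample variance is an $O(1/n)$ correction absorbed into the stated bound. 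Collecting terms and intersecting the events (probability $\ge 1 - e^{-c(t_1+t_2)}$ after adjusting constants) gives \eqref{eqn:const}, with the $M$-dependent constant $C_M$ coming from the number of cross terms in the product expansion.

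The main obstacle is the matrix concentration inequality in the second step of part (i): one must carefully track the dependence on $\|\Sigma_{-m}\|_2$ and on the ``effective sample size'' $d_{-m}$ rather than just $n$, since each of the $n$ data matrices already carries $d_{-m}$ worth of information along the collapsed modes. The cleanest route is probably to first handle the $n=1$, $\mu$ known case (a single Wishart block $\bZ\Sigma_{-m}\bZ^\top$ of dimension $d_m\times d_m$ built from a $d_m\times d_{-m}$ Gaussian) via an $\epsilon$-net plus Hanson--Wright, getting deviation $\lesssim \|\Sigma_{-m}\|_2(\sqrt{d_m/d_{-m}} + d_m/d_{-m})$, then average over $n$ independent copies to sharpen the $d_m/d_{-m}$ term to $d_m/(nd_{-m})$ and the $\sqrt{d_m/d_{-m}}$ term to $\sqrt{d_m/(nd_{-m})}$, which under $nd_{-m}\gtrsim d_m(1\vee\|\Sigma_{-m}\|_2^2)$ makes the linear term lower order. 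The centering by $\bar\bX$ only changes the $n$ into $n-1$ effectively and rescales the mean to $(1-n^{-1})C_{m,\sigma}$-form, a harmless modification; I would note this and move on rather than belabor it.
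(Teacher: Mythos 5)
Your proof plan is correct and follows essentially the same route as the paper's proof: reduce to a standardized Gaussian representation $\bX_i-\mu=\bA\bZ_i\bB^\top$, bound $v^\top(\hat\Sigma_m-C_{m,\sigma}\Sigma_m)v$ for each fixed unit $v$ via Hanson--Wright on the quadratic form in $\vect(\bZ_i)$, pass to the spectral norm with an $\epsilon$-net of cardinality $5^{d_m}$, and then invert. The only notable (and harmless) variation is in the inversion step: you propose the first-order resolvent identity $A^{-1}-B^{-1}=-A^{-1}(A-B)B^{-1}$ together with an a priori bound on $\|\hat\Sigma_m^{-1}\|_2$, while the paper uses a Neumann series around $\tilde\Sigma_m^{-1}$; these are interchangeable once the relative perturbation $\|\tilde\Sigma_m^{-1/2}\Delta_m\tilde\Sigma_m^{-1/2}\|_2\le 1/2$ is established, which your condition $nd_{-m}\gtrsim d_m(1\vee\|\Sigma_{-m}\|_2^2)$ guarantees, and both naturally produce the $C_{m,\sigma}^{-2}\vee C_{m,\sigma}^{-4}$ factor from the linear-versus-higher-order split. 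Part (ii) — scalar Bernstein for each $\hat\Sigma_{m,11}$, a $\chi^2$ tail bound for $\hat{\Var}$, and a product-expansion/telescoping argument with the $O(1/n)$ bias from the $(1-n^{-1})$ factors absorbed into the stated rate — matches the paper as well.
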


\begin{proof}
We first show \eqref{eqn:covariance matrix}.
Note that $\EE[(\bX-\mu)(\bX-\mu)^\top] = \tr(\Sigma_{-m})\cdot \Sigma_m = (n/(n-1))C_{m,\sigma}d_{-m} \cdot \Sigma_m$, and thus $\Sigma_m = \EE[(\bX-\mu)(\bX-\mu)^\top] (n-1) / (n C_{m,\sigma} d_{-m})$. 
Consider a sequence of independent copies $\bZ_1, \dots, \bZ_n$ of $\bZ \in \RR^{d_m \times d_{-m}}$ with entries $z_{ij}$ that are i.i.d. and follow $N(0, 1)$. The Gaussian random matrices $\bX_i$ are then given by $\bX_i - \mu = \bA\bZ_i\bB^\top$, where $\bA\bA^\top = \Sigma_m$ and $\bB\bB^\top = \Sigma_{-m}$. Then, $\bar \bX=n^{-1}\sum_{i=1}^n \bX_i = n^{-1}\sum_{i=1}^n \bA\bZ_i\bB^\top$. 
Note that
\begin{align*}
\hat\Sigma_m - C_{m,\sigma} \Sigma_m =& \left[ \frac{1}{nd_{-m}} \sum_{i=1}^n (\bX_i-\mu)(\bX_i-\mu)^\top - \frac{\tr(\Sigma_{-m})}{d_{-m}} \cdot \Sigma_m   \right]   \\
&\quad -  \left[ \frac{1}{d_{-m}} (\bar\bX-\mu)(\bar\bX-\mu)^\top  - \frac{\tr(\Sigma_{-m})}{nd_{-m}} \cdot \Sigma_m   \right].
\end{align*}
For any unit vector $v \in \RR^{d_m}$, 
$v^\top(\bX_i-\mu) = v^\top \bA \bZ_i \bB^\top =\vec1(v^\top \bA \bZ_i \bB^\top) = (\bB\otimes v^\top \bA) \vec1(\bZ_i)$.
It follows that
\begin{align*} 
&  \sum_{i=1}^n v^\top\left( (\bX_i-\mu)(\bX_i-\mu)^\top - \EE[(\bX-\mu)(\bX-\mu)^\top] \right) v \\
=&  \sum_{i=1}^n v^\top\left(\bA\bZ_i\bB^\top \bB \bZ_i^\top\bA^\top - \bA\EE(\bZ\bB^\top \bB\bZ^\top )\bA^\top \right) v \\
=&  \sum_{i=1}^n {\vec1}^\top(\bZ_i) (\bB^\top \otimes \bA^\top v) (\bB \otimes v^\top \bA) \vec1(\bZ_i) - \tr(\bB^\top \bB \otimes \bA^\top v v^\top \bA) \\
=&  \sum_{i=1}^n {\vec1}^\top(\bZ_i) (\bB^\top \otimes \bA^\top v) (\bB \otimes v^\top \bA) \vec1(\bZ_i) - \tr(\Sigma_{-m}) v^\top \Sigma_m v.
\end{align*}
By Hanson-Wright inequality, for any $t>0$,
\begin{align*}
&\PP\left( \sum_{i=1}^n \left[ {\vec1}^\top(\bZ_i) (\bB^\top \otimes \bA^\top v) (\bB \otimes v^\top \bA) \vec1(\bZ_i) - \tr(\Sigma_{-m}) v^\top \Sigma_m v     \right] \ge t \right) \\
\le& 2 \exp\left( - c \min\left\{ \frac{t^2}{16n \left\| \bB^\top \bB \otimes \bA^\top v v^\top \bA \right\|_{\rm F}^2} , \frac{t }{4 \left\| \bB^\top \bB \otimes \bA^\top v v^\top \bA \right\|_{2}}\right\} \right) \\
\le& 2 \exp\left( - c \min\left\{ \frac{t^2}{16n \left\| \Sigma_{-m} \right\|_{\rm F}^2 (v^\top \Sigma_m v)^2 } , \frac{t }{4 \left\| \Sigma_{-m} \right\|_{2}(v^\top \Sigma_m v) }\right\} \right) \\ 
\le&2 \exp\left( - c \min\left\{ \frac{t^2}{16nd_{-m} \left\| \Sigma_{-m} \right\|_{2}^2 (v^\top \Sigma_m v)^2 } , \frac{t }{4 \left\| \Sigma_{-m} \right\|_{2}(v^\top \Sigma_m v) }\right\} \right).
\end{align*}
Similarly, by Hanson-Wright inequality, for any $t>0$,
\begin{align*}
&\PP\left( v^\top\left( n^2(\bar\bX-\mu)(\bar\bX-\mu)^\top - n\EE[(\bX-\mu)(\bX-\mu)^\top] \right) v \ge t \right) \\
=&\PP\left( \left[\sum_{i,j=1}^n  {\vec1}^\top (\bZ_i) (\bB^\top \otimes \bA^\top v) (\bB \otimes v^\top \bA) \vec1(\bZ_j) - n\tr(\Sigma_{-m}) v^\top \Sigma_m v     \right] \ge t \right) \\
\le&2 \exp\left( - c \min\left\{ \frac{t^2}{16n^2 d_{-m} \left\| \Sigma_{-m} \right\|_{2}^2 (v^\top \Sigma_m v)^2 } , \frac{t }{4 n \left\| \Sigma_{-m} \right\|_{2}(v^\top \Sigma_m v) }\right\} \right).
\end{align*}
By $\eps-net$ argument, there exist unit vectors $v_1,...,v_{5^p}$ such that for all $p\times p$ symmetric matrix $M$,
\begin{align}\label{eq:epsilon_net}
\left\| M\right\|_2 \le  4 \max_{j\le 5^p} \left| v_j^\top M v_j\right|.   
\end{align} 
See also Lemma 3 in \cite{cai2010optimal}. Then
\begin{align*}
\PP\left( \left\| \hat\Sigma_m - C_{m,\sigma} \Sigma_m \right\|_2 \ge x) \right) &\le \PP\left( 4 \max_{j\le 5^{d_m}} \left| v_j^\top \left( \hat\Sigma_m - C_{m,\sigma} \Sigma_m \right) v_j\right| \ge x  \right)   \\
&\le 5^{d_m} \PP\left( 4 \left| v_j^\top \left( \hat\Sigma_m - C_{m,\sigma} \Sigma_m \right) v_j\right| \ge x  \right).
\end{align*}
As $d_m\lesssim n d_{-m}$, this implies that with $x\asymp \|\Sigma_m\|_2 \|\Sigma_{-m}\|_2 \sqrt{d_m/(n d_{-m})}$,
\begin{align*}
\PP\left( \left\| \hat\Sigma_m - C_{m,\sigma} \Sigma_m \right\|_2 \ge C \|\Sigma_m\|_2 \|\Sigma_{-m}\|_2 \sqrt{\frac{d_m}{n d_{-m}}} \right) \le    5^{d_m} \exp\left(-c_0 d_{m} \right) \le \exp\left(-c d_m \right).
\end{align*}

\noindent Second, we prove \eqref{eqn:inverse matrix}. For simplicity, denote $\Delta_m:= \hat\Sigma_m - C_{m,\sigma} \Sigma_m=\hat\Sigma_m - \tilde \Sigma_m$ with $\tilde \Sigma_m=C_{m,\sigma} \Sigma_m$. Then write
\begin{equation*}
\hat\Sigma_m^{-1} =  \tilde\Sigma_m^{-1/2}\left( \bI_{d_m} + \tilde\Sigma_m^{-1/2}\Delta_m\tilde\Sigma_m^{-1/2}\right)^{-1} \tilde\Sigma_m^{-1/2}   .
\end{equation*}
Using Neumann series expansion, we obtain
\begin{align*}
\hat\Sigma_m^{-1} &= \tilde\Sigma_m^{-1/2} \sum_{k=0}^\infty \left(-\tilde\Sigma_m^{-1/2} \Delta_m \tilde\Sigma_m^{-1/2} \right)^k \tilde\Sigma_m^{-1/2} \\
&= \tilde\Sigma_m^{-1} + \tilde\Sigma_m^{-1/2} \sum_{k=1}^\infty \left(-\tilde\Sigma_m^{-1/2} \Delta_m \tilde\Sigma_m^{-1/2} \right)^k \tilde\Sigma_m^{-1/2}
\end{align*}
Rearranging the term, we have
\begin{align*}
\hat\Sigma_m^{-1} - \tilde\Sigma_m^{-1} = -\tilde\Sigma_m^{-1} \Delta_m \tilde\Sigma_m^{-1} + \tilde\Sigma_m^{-1} \Delta_m \tilde\Sigma_m^{-1/2} \sum_{k=0}^\infty \left(-\tilde\Sigma_m^{-1/2} \Delta_m \tilde\Sigma_m^{-1/2} \right)^k \tilde\Sigma_m^{-1/2} \Delta_m \tilde\Sigma_m^{-1}
\end{align*}
Employing similar arguments in the proof of \eqref{eqn:covariance matrix}, we can show that in an event $\Omega$ with probability at least $1-\exp(-cd_m)$,
\begin{align*}
\left\| \tilde\Sigma_m^{-1/2} \Delta_m \tilde\Sigma_m^{-1/2}  \right\|_2 \le C C_{m,\sigma}^{-1} \| \Sigma_{-m}\|_2 \sqrt{\frac{d_m}{n d_{-m}} }    .
\end{align*}
As $d_m \|\Sigma_{-m}\|_2^2 \lesssim n d_{-m}$, in the same event $\Omega$, 
\begin{align*}
\left\|\tilde\Sigma_m^{-1} \Delta_m \tilde\Sigma_m^{-1}\right\|_2 &\leq C C_{m,\sigma}^{-2}\|\Sigma_m^{-1}\|_2 \|\Sigma_{-m}\|_2 \sqrt{\frac{d_m}{nd_{-m}}}, \\
\left\|\tilde\Sigma_m^{-1} \Delta_m \tilde\Sigma_m^{-1/2}\right\|_2 &\leq C C_{m,\sigma}^{-3/2} \norm{\Sigma_{-m}} \sqrt{\frac{d_m}{nd_{-m}}}   , \\
\left\|\tilde\Sigma_m^{-1/2} \Delta_m \tilde\Sigma_m^{-1/2}\right\|_2 &\leq C C_{m,\sigma}^{-1} \| \Sigma_{-m}\|_2 \sqrt{\frac{d_m}{n d_{-m}} } \le \frac12  C_{m,\sigma}^{-1} .
\end{align*}
Combining the above bounds together, we have, in the event $\Omega$,
\begin{align*}
\left\| \hat\Sigma_m^{-1} - \tilde\Sigma_m^{-1} \right\|_2 &\le C C_{m,\sigma}^{-2} \|\Sigma_m^{-1}\|_2 \| \Sigma_{-m}\|_2 \sqrt{\frac{d_m}{n d_{-m}} } + C C_{m,\sigma}^{-3} \left( \| \Sigma_{-m}\|_2 \sqrt{\frac{d_m}{n d_{-m}} } \right)^2 \cdot C_{m,\sigma}^{-1} \\
&\le C (C_{m,\sigma}^{-2} \vee C_{m,\sigma}^{-4})   \|\Sigma_m^{-1}\|_2 \| \Sigma_{-m}\|_2 \sqrt{\frac{d_m}{n d_{-m}} } .
\end{align*}

\noindent Next, we prove \eqref{eqn:const}. Employing similar arguments in the proof of \eqref{eqn:covariance matrix}, we can show
\begin{align*}
\PP\left( | \hat\Sigma_{m,11} - C_{m,\sigma} \Sigma_{m,11}| \ge C \Sigma_{m,11}\|\otimes_{k\neq m}\Sigma_{k}\|_2  \sqrt{\frac{ t_1}{n d_{-m}}} \right) \le\exp(-c_1 t_1).    
\end{align*}
Using tail probability bounds for $\chi_n^2$ (see e.g. Lemma D.2 in \cite{ma2013sparse}), we have
\begin{align*}
\PP\left( \left| \hat{\rm Var}(\cX_{1\cdots1}) - \prod_{m=1}^M\Sigma_{m,11} \right| \ge C \prod_{m=1}^M\Sigma_{m,11} \sqrt{\frac{ t_2}{n }} \right) \le\exp(-c_1 t_2).    
\end{align*}
It follows that in an event with probability at least $1-\exp(-c(t_1+t_2))$,
\begin{align*}
\left|\frac{\prod_{m=1}^M \hat\Sigma_{m,11}}{\hat{\Var}(\cX_{1\cdots1})} - C_{\sigma} \right| \le  C_{M} \prod_{m=1}^M\Sigma_{m,11} \left(\max_m \frac{\|\otimes_{k\neq m}\Sigma_{k}\|_2}{\sqrt{d_{-m}}} \cdot \sqrt{\frac{ t_1}{n }} +   \sqrt{\frac{ t_2}{n }}   \right)
\end{align*}
where $C_{M}$ depends on $M$. As $\Var(\cX_{1\cdots 1})=\prod_{m=1}^M\Sigma_{m,11}$, we finish the proof of \eqref{eqn:const}.

\end{proof}

The following lemma presents the tail bound for the spectral norm of a Gaussian random matrix.
\begin{lemma} \label{lemma:Gaussian matrix}
Let $\bE$ be an $p_1 \times p_2$ random matrix with $\bE\sim \cM\cN_{p_1 \times p_2}(0, \; \Sigma_1, \; \Sigma_{2}).$ Then for any $t>0$, with constant $C>0$, we have
\begin{equation} \label{eqn: Gaussian tail bound}
\PP\left(\|\bE\|_2 \ge C\|\Sigma_1\|_2^{1/2} \|\Sigma_{2}\|_2^{1/2} (\sqrt{p_1} + \sqrt{p_2} + t) \right) \leq \exp(-t^2) 
\end{equation}
Let $\OO_{p_1, r} = \{\bU \in \RR^{p_1\times r}, \; \bU^\top \bU = \bI_{r} \}$ be the set of all $p_1 \times r$ orthonormal columns and let $\bU_{\perp}$ be the orthogonal complement of $\bU$. Denote $\bE_{12} = \bU^\top\bE\bV_{\perp}, \; \bE_{21} = \bU_{\perp}^\top\bE\bV$, where $\bU \in \OO_{p_1\times r}, \; \bV \in \OO_{p_2\times r}.$ We have
\begin{align}
&\PP \left( \|\bE_{21}\|_2 \geq C\|\Sigma_1\|_2^{1/2} \|\Sigma_{2}\|_2^{1/2} (\sqrt{p_1} + t) \right) \leq \exp(-t^2)   \\
& \PP \left( \|\bE_{12}\|_2 \geq C\|\Sigma_1\|_2^{1/2} \|\Sigma_{2}\|_2^{1/2} (\sqrt{p_2} + t) \right) \leq \exp(-t^2)
\end{align}
\end{lemma}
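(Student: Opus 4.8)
The plan is to reduce the separable (Kronecker) matrix-normal case to the case of an i.i.d.\ standard Gaussian matrix, and then invoke the classical Gordon/Chevet comparison inequality together with Gaussian Lipschitz concentration. First I would write $\bE = \Sigma_1^{1/2}\bG\,\Sigma_2^{1/2}$, where $\bG\in\RR^{p_1\times p_2}$ has i.i.d.\ $N(0,1)$ entries; this is exactly the definition of $\cM\cN_{p_1\times p_2}(0,\Sigma_1,\Sigma_{2})$. Submultiplicativity of the spectral norm gives $\|\bE\|_2 \le \|\Sigma_1\|_2^{1/2}\,\|\bG\|_2\,\|\Sigma_{2}\|_2^{1/2}$, so it suffices to control $\|\bG\|_2$ for a standard Gaussian matrix.

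For the standard Gaussian matrix I would use two facts: (a) the map $\bG\mapsto\|\bG\|_2$ is $1$-Lipschitz with respect to the Frobenius norm, and (b) Gordon's inequality (equivalently Chevet's inequality, via Slepian--Fernique) yields $\mathbb{E}\|\bG\|_2 \le \sqrt{p_1}+\sqrt{p_2}$, since $\|\bG\|_2 = \sup_{u\in\mathbb S^{p_1-1},\,v\in\mathbb S^{p_2-1}} u^\top\bG v$. Gaussian concentration for Lipschitz functions then gives $\PP(\|\bG\|_2 \ge \sqrt{p_1}+\sqrt{p_2}+s)\le e^{-s^2/2}$ for all $s\ge 0$; taking $s=\sqrt{2}\,t$ and absorbing the factor $\sqrt2$ (together with $\|\Sigma_1\|_2^{1/2}\|\Sigma_2\|_2^{1/2}$) into $C$ yields \eqref{eqn: Gaussian tail bound}.

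For the off-diagonal blocks I would use the standard closure property that a deterministic bilinear restriction of a matrix-normal is again matrix-normal: $\bP^\top\bE\bQ\sim\cM\cN(0,\bP^\top\Sigma_1\bP,\bQ^\top\Sigma_{2}\bQ)$. Hence $\bE_{21}=\bU_\perp^\top\bE\bV\sim\cM\cN(0,\bU_\perp^\top\Sigma_1\bU_\perp,\bV^\top\Sigma_{2}\bV)$ with the two covariance matrices having spectral norms bounded by $\|\Sigma_1\|_2$ and $\|\Sigma_{2}\|_2$ respectively (because $\bU_\perp$ and $\bV$ have orthonormal columns), and similarly for $\bE_{12}=\bU^\top\bE\bV_\perp$. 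Writing $\bE_{21}=(\bU_\perp^\top\Sigma_1\bU_\perp)^{1/2}\widetilde\bG(\bV^\top\Sigma_{2}\bV)^{1/2}$ with $\widetilde\bG\in\RR^{(p_1-r)\times r}$ standard Gaussian, the argument above gives the bound in terms of $\sqrt{p_1-r}+\sqrt{r}$; then $\sqrt{p_1-r}+\sqrt{r}\le\sqrt{2p_1}$ (since $(p_1-r)+r=p_1$) lets me absorb the $\sqrt r$ term into $C$, producing the stated $\sqrt{p_1}+t$ rate. The bound for $\bE_{12}$ is symmetric, using $\sqrt{r}+\sqrt{p_2-r}\le\sqrt{2p_2}$.

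There is no genuine obstacle here; the only points requiring care are (i) verifying that the bilinear restriction preserves the matrix-normal structure with covariance operator norms bounded by those of $\Sigma_1$ and $\Sigma_{2}$, and (ii) bookkeeping the absolute constants so that the tail exponent reads exactly $e^{-t^2}$ rather than $e^{-ct^2}$ — both of which are handled transparently by the generic constant $C$.
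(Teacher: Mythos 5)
Your argument is correct and gives the same conclusion, but by a genuinely different route than the paper. The paper proves the first bound by an $\epsilon$-net reduction (Lemma \ref{lemma:epsilonnet}) for the unit ball, paying the cardinality factor $7^{p_1+p_2}$ and then using the elementary Chernoff bound for the scalar Gaussian $x^\top\bE y$; the block bounds are obtained by rerunning the same net argument with $\bU_\perp^\top\Sigma_1^{1/2}$, etc., inserted into the bilinear form. You instead bound $\mathbb{E}\|\bG\|_2\le\sqrt{p_1}+\sqrt{p_2}$ via Gordon/Chevet and then invoke Gaussian Lipschitz concentration around the mean, which avoids the union bound over a net entirely and delivers $\sqrt{p_1}+\sqrt{p_2}$ as the deterministic centering rather than recovering it from the exponent of the net cardinality. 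Your block reduction via the closure of the matrix normal under $\bP^\top(\cdot)\bQ$, together with $\|\bU_\perp^\top\Sigma_1\bU_\perp\|_2\le\|\Sigma_1\|_2$ and $\sqrt{p_1-r}+\sqrt r\le\sqrt{2p_1}$, is a clean and correct way to get the dimension reduction from $p_1+p_2$ to $p_1$ or $p_2$; it is in fact the same structural observation the paper encodes by projecting $\bA$ and $\bB$ before applying the net, just stated at the level of distributions rather than of bilinear forms. Each route has a standard advantage: the paper's $\epsilon$-net proof extends verbatim to sub-Gaussian ensembles (no Gaussian-specific comparison inequality is needed), whereas the Gordon-plus-Lipschitz route is tighter and shorter but genuinely relies on Gaussianity of $\bG$. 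Since the lemma is stated for the Gaussian matrix normal, both proofs suffice, and the only remaining bookkeeping — absorbing the $\sqrt2$ factors from $s=\sqrt2\,t$ and from $\sqrt{p_1-r}+\sqrt r\le\sqrt2\sqrt{p_1}$ into the generic $C$ — you have handled correctly.
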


\begin{proof}
Similar to \eqref{eq:epsilon_net}, using $\eps$-net argument for unit ball (see e.g. Lemma 5 in \cite{cai2018rate}), we have
\begin{align*}
\PP \left(\|\bE\|_2 \ge 3 u \right) \le 7^{p_1 +p_2} \cdot \max_{\|x \|_2 \le 1, \|y\|_2\le 1} \PP\left( |x^\top \bE y|\ge u \right).    
\end{align*}
Decompose $\bE=\Sigma_1^{1/2}\bZ\Sigma_{2}^{1/2}$, where $\bZ \in \RR^{p_1\times p_2}, \; \bZ_{ij} \stackrel {\text{i.i.d}}{\sim} N(0, 1).$ 
Then,
\begin{align*}
x^\top \bE y=  \left( y^\top \Sigma_{2}^{1/2} \otimes x^\top \Sigma_{1}^{1/2} \right) \vec1 (\bZ) \sim N\left(0,  y^\top \Sigma_{2}y \cdot x^\top\Sigma_1 x \right)    .
\end{align*}
By Chernoff bound of Gaussian random variables,
\begin{align*}
\PP\left( |x^\top \bE y|\ge u \right) \le 2 \exp \left( -\frac{ u^2}{y^\top \Sigma_{2}y \cdot x^\top\Sigma_1 x }  \right) \le 2   \exp\left( -\frac{ u^2}{\| \Sigma_1\|_2 \|\Sigma_{2}\|_2 }  \right)  .
\end{align*}
Setting $u\asymp \|\Sigma_1\|_2^{1/2} \|\Sigma_{2}\|_2^{1/2} (\sqrt{p_1} + \sqrt{p_2} + t)$, for certain $C>0$, we have
\begin{align*}
\PP\left(\|\bE\|_2 \ge C\|\Sigma_1\|_2^{1/2} \|\Sigma_{2}\|_2^{1/2} (\sqrt{p_1} + \sqrt{p_2} + t) \right) \le 2 \cdot 7^{p_1 +p_2} \exp(-c(p_1+p_2)-t^2)  \le \exp(-t^2) .    
\end{align*}

For $\bE_{21}$, following the same $\epsilon$-net arguments, we have
\begin{align*}
\PP \left(\| \bE_{21}\|_2 \ge 3 u \right) \le 7^{(p_1-r) +r} \cdot \max_{\|x \|_2 \le 1, \|y\|_2\le 1} \PP\left( |x^\top \bE_{21} y|\ge u \right).    
\end{align*}
Note that
\begin{align*}
x^\top \bE_{21} y&=  x^\top \bU_{\perp}^\top \Sigma_1^{1/2} \bZ \Sigma_{2}^{1/2} \bV y=  \left( y^\top \bV^\top \Sigma_{2}^{1/2} \otimes x^\top \bU_{\perp}^\top \Sigma_{1}^{1/2} \right) \vec1 (\bZ) \\
&\sim N\left(0, y^\top \bV^\top \Sigma_{2} \bV y \cdot x^\top \bU_{\perp}^\top \Sigma_1 \bU_{\perp} x \right)    .
\end{align*}
By Chernoff bound of Gaussian random variables, setting $u\asymp \|\Sigma_1\|_2^{1/2} \|\Sigma_{2}\|_2^{1/2} (\sqrt{p_1} + t)$, we have
\begin{align*}
\PP\left(\|\bE_{21}\| \ge C\|\Sigma_1\|_2^{1/2} \|\Sigma_{2}\|_2^{1/2} (\sqrt{p_1} + t) \right) \le 2 \cdot 7^{p_1} \exp(-c(p_1)-t^2)  \le \exp(-t^2) .    
\end{align*}
Similarly, we can derive the tail bound of $\|\bE_{12}\|_2$.
\end{proof}

The following lemma characterizes the maximum of norms for zero-mean Gaussian tensors after any projections.
\begin{lemma} \label{lemma:Guassian tensor projection}
Let $\cE \in \RR^{d_1 \times d_2 \times d_3}$ be a Gaussian tensor, $\cE \sim \cT\cN(0; \; \frac{1}{n}\Sigma_1^{-1}, \Sigma_2^{-1}, \Sigma_3^{-1})$, where there exists a constant $C_0>0$ such that $C_0^{-1} \le \mathop{\min}\limits_{m \in \{1,2,3 \}} \lam_{\min}(\Sigma_m) \le \mathop{\max}\limits_{m \in \{1,2,3 \}} \lam_{\max}(\Sigma_m) \le C_0$. Then we have the following bound for projections, with probability at most $C\exp(-Ct(d_2r_2 + d_3r_3) )$,
\begin{align}
\label{eqn:matrice version}
\mathop{\max}\limits_{\bV_2 \in \mathbb{R}^{d_2 \times r_2}, 
\bV_3 \in \mathbb{R}^{d_3 \times r_3} \atop \|\bV_2\|_2 \le 1, \|\bV_3\|_2 \le 1} \left\| \mat1(\cE \times_2 \bV_2^\top \times_3 \bV_3^\top)\right\|_2 \ge C\|\Sigma_1^{-1/2}\|_2 \|\Sigma_{-1}^{-1/2}\|_2 \frac{\sqrt{d_1 + r_2r_3} + \sqrt{1+t}(\sqrt{d_2r_2 + d_3r_3})}{\sqrt{n}} ,
\end{align}
for any t>0. Similar results also hold for ${\rm mat}_2(\cE \times_1 \bV_1^\top \times_3 \bV_3^\top)$ and ${\rm mat}_3(\cE \times_1 \bV_1^\top \times_2 \bV_2^\top)$.

Meanwhile, we have with probability at most $\exp(-Ct(d_1r_1 + d_2r_2 + d_3r_3) )$
\begin{align}
\label{eqn:tensor version}
\mathop{\max}\limits_{\bV_1, \bV_2, \bV_3\in \mathbb{R}^{d_m \times r_m} \atop
\|\bV_m\|_2 \le 1,\; m=1,2,3 } \left\| \cE \times_1 \bV_1^\top \times_2 \bV_2^\top \times_3 \bV_3^\top \right\|_{\rm F}^2 \ge C \|\Sigma_1^{-1}\|_2 \|\Sigma_2^{-1}\|_2 \|\Sigma_3^{-1}\|_2 \cdot \frac{r_1r_2r_3 + (1+t)(d_1r_1 + d_2r_2 + d_3r_3)}{n}, 
\end{align}
for any $t>0.$
\end{lemma}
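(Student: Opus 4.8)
The plan is to reduce both uniform bounds to tail estimates for a single Gaussian matrix (resp.\ Gaussian vector), combined with the $\epsilon$-net estimates of Lemma~\ref{lemma:epsilonnet}, which exploit the Kronecker structure of the admissible projections. Write $\cE = n^{-1/2}\,\cZ\times_1\Sigma_1^{-1/2}\times_2\Sigma_2^{-1/2}\times_3\Sigma_3^{-1/2}$ with $\cZ$ having i.i.d.\ $N(0,1)$ entries; the eigenvalue assumption on the $\Sigma_m$ is only used at the end to replace each $\|\Sigma_m^{-1}\|_2$ by an absolute constant.

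For \eqref{eqn:matrice version}, I would first note the identity ${\rm mat}_1(\cE\times_2\bV_2^\top\times_3\bV_3^\top) = n^{-1/2}\Sigma_1^{-1/2}\,\bG\,\bW$, where $\bG = {\rm mat}_1(\cZ)\in\RR^{d_1\times d_2 d_3}$ has i.i.d.\ $N(0,1)$ entries and $\bW = (\Sigma_3^{-1/2}\bV_3)\otimes(\Sigma_2^{-1/2}\bV_2)$ satisfies $\rank(\bW)\le r_2 r_3$ and $\|\bW\|_2\le\|\Sigma_2^{-1/2}\|_2\|\Sigma_3^{-1/2}\|_2 = \|\Sigma_{-1}^{-1/2}\|_2$; it therefore suffices to control $\sup\{\|\bG\bW\|_2 : \|\bV_2\|_2\le 1,\|\bV_3\|_2\le1\}$ and multiply by $n^{-1/2}\|\Sigma_1^{-1/2}\|_2$. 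The set of admissible $\bW$ is contained in a constant multiple of the set of Kronecker products of a rank-$\le r_2$ and a rank-$\le r_3$ matrix, so Lemma~\ref{lemma:epsilonnet}(iii) (with $K=3$, $f(\cdot)=\bG(\cdot)$, $\|\cdot\|_*=\|\cdot\|_2$) reduces $\sup\|\bG\bW\|_2$, up to a factor $2$, to a maximum over a net of log-cardinality $\lesssim d_2 r_2 + d_3 r_3$ (using $r_k\le d_k$). For each frozen net point $\bW_0$, its reduced SVD $\bW_0 = \bU_0\bLambda_0\bV_0^\top$ shows that $\bG\bU_0$ is a $d_1\times(\le r_2 r_3)$ matrix with i.i.d.\ $N(0,1)$ entries, so Lemma~\ref{lemma:Gaussian matrix} gives $\|\bG\bW_0\|_2\le C\|\bW_0\|_2(\sqrt{d_1}+\sqrt{r_2 r_3}+s)$ with probability $\ge 1-e^{-s^2}$. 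Taking $s^2\asymp(1+t)(d_2 r_2 + d_3 r_3)$ with the implicit constant large enough to absorb the net's log-cardinality, a union bound over the net yields \eqref{eqn:matrice version}; the ${\rm mat}_2$ and ${\rm mat}_3$ bounds follow by permuting the modes.

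For \eqref{eqn:tensor version}, I would vectorize: $\vect(\cE\times_1\bV_1^\top\times_2\bV_2^\top\times_3\bV_3^\top) = n^{-1/2}\bA\,\vect(\cZ)$ with $\bA = (\bV_3^\top\Sigma_3^{-1/2})\otimes(\bV_2^\top\Sigma_2^{-1/2})\otimes(\bV_1^\top\Sigma_1^{-1/2})$, so $\|\bA\|_2^2\le\prod_m\|\Sigma_m^{-1}\|_2$, $\rank(\bA)\le r_1 r_2 r_3$ and $\tr(\bA^\top\bA)\le r_1 r_2 r_3\prod_m\|\Sigma_m^{-1}\|_2$. For fixed $\bV_1,\bV_2,\bV_3$, concentration of the Gaussian quadratic form $\vect(\cZ)^\top\bA^\top\bA\,\vect(\cZ)$ (Hanson--Wright, or a $\chi^2$-tail bound after passing to the singular values of $\bA$) gives $\|\cE\times_1\bV_1^\top\times_2\bV_2^\top\times_3\bV_3^\top\|_{\rm F}^2\le\tfrac{C}{n}\prod_m\|\Sigma_m^{-1}\|_2\,(r_1 r_2 r_3 + s)$ with probability $\ge 1-e^{-cs}$. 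Then I would take an $\epsilon$-net over $(\bV_1,\bV_2,\bV_3)$ via Lemma~\ref{lemma:epsilonnet}(iii) (with $K=4$), of log-cardinality $\lesssim d_1 r_1 + d_2 r_2 + d_3 r_3$, use that $(\bV_1,\bV_2,\bV_3)\mapsto\cE\times_1\bV_1^\top\times_2\bV_2^\top\times_3\bV_3^\top$ is multilinear so the net-approximation error is absorbed (supremum $\le$ twice the maximum over the net), set $s\asymp(1+t)(d_1 r_1 + d_2 r_2 + d_3 r_3)$, and union bound.

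The main obstacle is tracking the dimension dependence so that the stated forms emerge exactly: one must use that the Kronecker-structured net of Lemma~\ref{lemma:epsilonnet}(iii) has log-cardinality only of order $d_2 r_2 + d_3 r_3$ (not $d_2 d_3 r_2 r_3$, which a naive net over all rank-$\le r_2 r_3$ matrices would give), that for each frozen projection the effective Gaussian matrix is $d_1\times r_2 r_3$ (hence operator norm $\asymp\sqrt{d_1}+\sqrt{r_2 r_3}$), and that a single parameter $t$ must govern both the within-term deviation factor $\sqrt{1+t}$ and the overall failure probability $\exp(-ct(\cdots))$. A secondary point for \eqref{eqn:tensor version} is that bounding $\|\cdot\|_{\rm F}$ through the matricized estimate \eqref{eqn:matrice version} would be lossy by a factor $\sqrt{r_1}$ on the $(d_2 r_2 + d_3 r_3)$ term, so the per-net-point Gaussian-quadratic-form argument is needed instead.
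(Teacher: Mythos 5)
Your proposal is correct and follows essentially the same strategy as the paper's proof: for each fixed choice of $(\bV_m)$, reduce to a Gaussian matrix spectral-norm tail (via Lemma~\ref{lemma:Gaussian matrix}) for \eqref{eqn:matrice version} and a Gaussian quadratic-form tail (the paper cites Lemma~8 of Zhang and Xia (2018), which is exactly your Hanson--Wright $/$ $\chi^2$ route) for \eqref{eqn:tensor version}, then cover the admissible projections by nets of log-cardinality $\asymp \sum_m d_m r_m$ and union-bound while absorbing the net-approximation error by the standard continuity ("subtraction") argument. The only cosmetic differences are that the paper builds its nets directly over the individual balls $\{\|\bV_m\|_2\le 1\}$ rather than invoking Lemma~\ref{lemma:epsilonnet}(iii), and it keeps the matrix-normal covariance structure explicit instead of first passing through the SVD of the frozen Kronecker factor, but these lead to the same bounds.
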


\begin{proof} The key idea for the proof of this lemma is via $\epsilon$-net. We first prove \eqref{eqn:matrice version}. By Lemma \ref{lemma:epsilonnet}, for $m=1,2,3$, there exists $\epsilon$-nets: $\bV_m^{(1)}, \dots ,\bV_m^{(N_m)}$ for $\{ \bV_m \in \RR^{d_m \times r_m}: \|\bV_m\|_2 \le 1 \}$, $|\cN_m| \le ((4+\eps)/\eps)^{d_mr_m}$, such that for any $\bV_m \in \RR^{d_m \times r_m}$ satisfying $\|\bV_m\|_2 \le 1$, there exists $\bV_m^{(j)}$ such that
$\|\bV_m^{(j)} - \bV_m\|_2 \le \eps.$

For fixed $\bV_2^{(i)}$ and $\bV_3^{(j)}$, we define
\begin{align*}
\bZ_1^{(ij)} = \mat1 \left( \calE \times_2 (\bV_2^{(i)})^\top \times_3 (\bV_3^{(j)})^\top \right) \in \RR^{d_1 \times (r_2r_3)}.  
\end{align*}
It is easy to obtain that
$\bZ_1^{(ij)} \sim \cM\cN_{d_1 \times r_2r_3}\left(0; \; \frac{1}{n}\Sigma_1^{-1}, \; (\bV_2^{(i)} \otimes \bV_3^{(j)}) \cdot \Sigma_{-1}^{-1} \cdot (\bV_2^{(i)} \otimes \bV_3^{(j)})^\top \right).$ Then employing similar arguments of Lemma \ref{lemma:Gaussian matrix},
\begin{equation*}
\PP\left(\|\bZ_1^{(ij)}\|_2 \le C \|\Sigma_1^{-1/2}\|_2 \|\Sigma_{-1}^{-1/2}\|_2 \left(\frac{\sqrt{d_1} + \sqrt{r_2r_3} + t}{\sqrt{n}}\right)\right) \ge 1 - 2\exp(-t^2).   
\end{equation*}
Then we further have:
\begin{equation}
\label{eqn: lemma 4 proof}
\PP\left( \max\limits_{i,j} \|\bZ_1^{(ij)}\|_2 \le C \|\Sigma_1^{-1/2}\|_2 \|\Sigma_{-1}^{-1/2}\|_2 \left(\frac{\sqrt{d_1} + \sqrt{r_2r_3} + t}{\sqrt{n}}\right) \right) \ge 1 - 2((4+\eps)/\eps)^{d_2r_2 + d_3r_3} \exp(-t^2) 
\end{equation}
for all $t>0$. Denote
\begin{align*}
\bV_2^{*}, \bV_3^{*} &= \mathop{\rm argmax}\limits_{\bV_2 \in \RR^{d_2 \times r_2}, \bV_3 \in \RR^{d_3 \times r_3} \atop
\|\bV_2\|_2 \le 1, \|\bV_3\|_2 \le 1} \left\|\mat1 \left(\cE \times_2 \bV_2^\top \times_3 \bV_3^\top\right) \right\|_2 \\
M &= \mathop{\max}\limits_{\bV_2 \in \RR^{d_2 \times r_2}, \bV_3 \in \RR^{d_3 \times r_3} \atop
\|\bV_2\|_2 \le 1, \|\bV_3\|_2 \le 1} \left\| \mat1 \left(\calE \times_2 \bV_2^\top \times_3 \bV_3^\top\right) \right\|_2
\end{align*}
Using $\eps$-net arguments, we can find $1 \le i \le N_2$ and $1 \le j \le N_3$ such that $\|\bV_2^{(i)} - \bV_2^{*}\|_2 \le \eps$ and $\|\bV_3^{(i)} - \bV_3^{*}\|_2 \le \eps$. In this case, under \eqref{eqn: lemma 4 proof},
\begin{align*}
M =& \left\| \mat1 \left( \cE \times_2 (\bV_2^{*})^\top \times_3 ( \bV_3^{*})^\top\right) \right\|_2 \\
\le & \left\|\mat1 \left(\cE \times_2 (\bV_2^{(i)})^\top \times_3 (\bV_3^{(j)})^\top\right) \right\|_2
+ \left\|\mat1 \left(\cE \times_2 (\bV_2^{*} -\bV_2^{(i)})^\top \times_3 (\bV_3^{(j)})^\top\right) \right\|_2 \\
+& \left\|\mat1 \left(\cE \times_2 (\bV_2^{*})^\top \times_3 (\bV_3^{*} - \bV_3^{(j)})^\top\right) \right\|_2 \\
\le & C \left\|\Sigma_1^{-1/2} \right\|_2 \left\|\Sigma_{-1}^{-1/2}\right\|_2 \left(\frac{\sqrt{d_1} + \sqrt{r_2r_3} + t}{\sqrt{n}} \right) + \eps M + \eps M,
\end{align*}
Therefore, we have
\begin{equation*}
\PP \left(M \le C\cdot \frac{1}{1-2\eps}\left\|\Sigma_1^{-1/2}\right\|_2 \left\|\Sigma_{-1}^{-1/2}\right\|_2 \left(\frac{\sqrt{d_1} + \sqrt{r_2r_3} + t}{\sqrt{n}}\right) \right) \ge 1 - 2((4+\eps)/\eps)^{d_2r_2 + d_3r_3} \exp(-t^2)    
\end{equation*}
By setting $\eps=1/3$, and $t^2 = 2\log(13)(d_2r_2 + d_3r_3)(1+x)$, we have proved the first part of the lemma.

\noindent Then we prove the claim \eqref{eqn:tensor version}. Consider a Gaussian tensor $\cZ \in \RR^{d_1 \times d_2 \times d_3}$ with entries $z_{ijk} \stackrel{\text{i.i.d}}{\sim} N(0, \frac{1}{n}),$ then we have $\cE \times_1 \bV_1^\top \times_2 \bV_2^\top \times_3 \bV_3^\top := \cZ \times_1 ( \bV_1^\top \Sigma_1^{-1/2}) \times_2 (\bV_2^\top \Sigma_2^{-1/2}) \times_3 (\bV_3^\top \Sigma_1^{-1/2}).$ 
By Lemma 8 in \cite{zhang2018tensor}, we know
\begin{align*}
&\PP \Bigg( \left\|\cZ \times_1 (\Sigma_1^{-1/2} \bV_1)^\top \times_2 (\Sigma_2^{-1/2} \bV_2)^\top \times_3 (\Sigma_3^{-1/2} \bV_3)^\top\right\|_{\rm F}^2 - \frac{1}{n}\left\|\left(\Sigma_1^{-1/2} \bV_1\right) \otimes \left(\Sigma_2^{-1/2} \bV_2\right) \otimes \left(\Sigma_3^{-1/2} \bV_3\right) \right\|_{\rm F}^2 \\ 
&\quad \ge \frac{2}{n} \sqrt{t \left\|\left(\bV_1^\top \Sigma_1^{-1} \bV_1\right) \otimes \left(\bV_2^\top \Sigma_2^{-1} \bV_2\right) \otimes \left(\bV_3^\top \Sigma_3^{-1} \bV_3\right) \right\|_{\rm F}^2}  + \frac{2t}{n} \left\|\left(\Sigma_1^{-1/2} \bV_1\right) \otimes \left(\Sigma_2^{-1/2} \bV_2\right) \otimes \left(\Sigma_3^{-1/2} \bV_3\right) \right\|_2^2 \Bigg) \\
&\le \exp(-t).
\end{align*}

Note that for any given $\bV_k \in \RR^{d_k \times r_k}$ satisfying $\|\bV_k\|_2 \le 1, k=1, 2, 3,$ we have 
\begin{equation*}
\left\|\left(\Sigma_1^{-1/2} \bV_1\right) \otimes \left(\Sigma_2^{-1/2} \bV_2\right) \otimes \left(\Sigma_3^{-1/2} \bV_3 \right)\right\|_2 \le \|\Sigma_1^{-1/2}\|_2 \|\Sigma_2^{-1/2}\|_2 \|\Sigma_3^{-1/2}\|_2 := C_\lam^{1/2}.    
\end{equation*}
Then,
\begin{equation*}
\left\|\left(\Sigma_1^{-1/2} \bV_1 \right) \otimes \left(\Sigma_2^{-1/2} \bV_2\right) \otimes \left(\Sigma_3^{-1/2} \bV_3 \right)\right\|_{\rm F}^2 \le C_\lam r_1r_2r_3,    
\end{equation*}
and 
\begin{align*}
\left\|\left( \bV_1^\top \Sigma_1^{-1} \bV_1\right) \otimes \left(\bV_2^\top \Sigma_2^{-1} \bV_2 \right) \otimes \left( \bV_3^\top \Sigma_3^{-1} \bV_3 \right) \right\|_{\rm F}^2 \le C_\lam^2 r_1r_2r_3,
\end{align*}
we have for any fixed $\bV_1, \bV_2, \bV_3$ and $t > 0$ that
\begin{equation*}
\PP \left( \left\|\cZ \times_1 (\Sigma_1^{-1/2} \bV_1)^\top \times_2 (\Sigma_2^{-1/2} \bV_2)^\top \times_3 (\Sigma_3^{-1/2} \bV_3)^\top\right\|_{\rm F}^2 
\ge \frac{C_\lam r_1r_2r_3 + 2C_\lam \sqrt{r_1r_2r_3 t} + 2C_\lam t}{n} \right) \le \exp(-t).    
\end{equation*}
By geometric inequality, $2\sqrt{r_1r_2r_3t} \le r_1r_2r_3 + t$, then we further have
\begin{equation*}
\PP \left( \left\|\cE \times_1 \bV_1^\top \times_2 \bV_2^\top \times_3 \bV_3^\top\right\|_{\rm F}^2 
\ge \frac{C_\lam(2r_1r_2r_3 + 3t)}{n} \right) \le \exp(-t).   
\end{equation*}

The rest proof for this claim is similar to the first part. One can find three $\eps$-nets: $\bV_m^{(1)}, \dots ,\bV_m^{(N_m)}$ for $\{ \bV_m \in \RR^{d_m \times r_m}: \|\bV_m\|_2 \le 1 \}$, $N_m \le ((4+\eps)/\eps)^{d_mr_m}$, and we have the tail bound:
\begin{align} \label{eqn:lemma 4 proof 2}
        &\max\limits_{\bV_1^{(i)}, \bV_2^{(j)}, \bV_3^{(k)}} \PP \left( \left\|\cE \times_1 (\bV_1^{(i)})^\top \times_2 (\bV_2^{(j)})^\top \times_3 (\bV_3^{(k)})^\top\right\|_{\rm F}^2 \ge \frac{C_\lam (2r_1r_2r_3 + 3t)}{n} \right) \notag \\
        &\le \exp(-t) \cdot ((4+\eps)/\eps)^{d_1r_1 + d_2r_2 + d_3r_3} ,
\end{align}
for all $t > 0$. Assume
\begin{align*}
\bV_1^{*}, \bV_2^{*}, \bV_3^{*} &= \mathop{\rm argmax}\limits_{\bV_m \in \mathbb{R}^{d_m \times r_m} \atop
\|\bV_m\|_2 \le 1} \left\|\cE \times_1 \bV_1^\top \times_2 \bV_2^\top \times_3 \bV_3^\top\right\|_{\rm F}^2 \\
T &= \left\|\cE \times_1 (\bV_1^*)^\top \times_2 (\bV_2^*)^\top \times_3 (\bV_3^*)^\top\right\|_{\rm F}^2
\end{align*}
Then we can find $\bV_1^{(i)}, \bV_2^{(j)}, \bV_3^{(k)}$ in the corresponding $\eps$-nets such that $\|\bV_m^{*} - \bV_m\|_2 \le \eps$, $m=1,2,3$. And
\begin{align*}
T =& \left\|\cE \times_1 (\bV_1^*)^\top \times_2 (\bV_2^*)^\top \times_3 (\bV_3^*)^\top\right\|_{\rm F}^2 \\
\le & \left\|\cE \times_1 (\bV_1^{(i)})^\top \times_2 (\bV_2^{(j)})^\top \times_3 (\bV_3^{(k)})^\top\right\|_{\rm F}^2
+ \left\|\cE \times_1 (\bV_1^{(i)} - \bV_1^{*})^\top \times_2 (\bV_2^{*})^\top \times_3 (\bV_3^{*})^\top\right\|_{\rm F}^2 \\
+ & \left\|\cE \times_1 (\bV_1^{(i)})^\top \times_2 (\bV_2^{(j)} - \bV_2^*)^\top \times_3 (\bV_3^*)^\top\right\|_{\rm F}^2
+ \left\|\cE \times_1 (\bV_1^{(i)})^\top \times_2 (\bV_2^{(j)})^\top \times_3 (\bV_3^{(k)} - \bV_3^{*})^\top\right\|_{\rm F}^2 \\
\le & \frac{C_\lam (2r_1r_2r_3 + 3t)}{n} + 3\eps T
\end{align*}
which implies 
\begin{align*}
\left\|\cE \times_1 (\bV_1^*)^\top \times_2 (\bV_2^*)^\top \times_3 (\bV_3^*)^\top\right\|_{\rm F}^2 \le \frac{C_\lam \cdot (2r_1r_2r_3 + 3t)}{n(1-3\eps)}.    
\end{align*}
If we set $\eps = 1/9$ and $t=(1+x)\log(37)\cdot (d_1r_1 + d_2r_2 + d_3r_3)$, by \eqref{eqn:lemma 4 proof 2} we have proved \eqref{eqn:tensor version}. 
\end{proof}

\begin{lemma}\label{lemma:low-rank-tensor}
Suppose that $\cX_1,...,\cX_n \in \RR^{d_1\times\cdots\times d_M}$ are i.i.d. $\cT\cN(\mu,\bSigma)$ with $\bSigma=[\Sigma_m]_{m=1}^M$, and $\bar\cX$ is the sample mean. Then, with probability at least $1-e^{-c\sum_{m=1}^M d_m r}$,
\begin{align}
\sup_{\substack{\cV=\sum_{i=1}^r w_i \circ_{m=1}^M u_{im}\in \RR^{d_1\times\cdots\times d_M} \\ \sum_{i=1}^r w_i^2\le 1 , \| u_{im}\|_2=1, \forall i\le r, m\le M }}  \left| <\bar\cX-\mu,\cV> \right| \lesssim \sqrt{\frac{\sum_{m=1}^M d_m r }{n}}.
\end{align}
\end{lemma}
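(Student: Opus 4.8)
The plan is to recognise $\cE:=\bar\cX-\mu$ as a single Gaussian tensor and to control the supremum of the linear functional $\cV\mapsto\langle\cE,\cV\rangle$ over the set $\mathcal T_r$ of admissible $\cV$ by a covering (``$\epsilon$-net'') argument combined with a Gaussian tail bound. First I would record the distribution of $\cE$: since the $\cX_i$ are i.i.d.\ $\cT\cN(\mu,\bSigma)$, the sample mean satisfies $\cE=\bar\cX-\mu\sim\cT\cN(0;\tfrac1n\bSigma)$, equivalently $\vect(\cE)\sim\cN(0,\tfrac1n\,\Sigma_M\otimes\cdots\otimes\Sigma_1)$, so that $\cE=\tfrac1{\sqrt n}\,\cZ\times_{m=1}^M\Sigma_m^{1/2}$ with $\cZ$ having i.i.d.\ $N(0,1)$ entries. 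Under the standing eigenvalue bound $\|\otimes_{m=1}^M\Sigma_m\|_2\le C_0$, for any fixed rank-one unit tensor $\cW=\circ_{m=1}^M v_m$ one has $\langle\cE,\cW\rangle\sim N(0,\sigma^2)$ with $\sigma^2=\tfrac1n\prod_m v_m^\top\Sigma_m v_m\le C_0/n$; this is the elementary tail input, valid uniformly over unit $v_m$.

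The next step is to reduce the supremum to a tensor spectral (injective) norm. Writing $\cV=\sum_{i=1}^r w_i\circ_{m=1}^M u_{im}$ and applying Cauchy--Schwarz in the index $i$ gives $|\langle\cE,\cV\rangle|\le\|w\|_2\big(\sum_{i=1}^r(\cE\times_{m=1}^M u_{im}^\top)^2\big)^{1/2}\le\big(\sum_{i=1}^r(\cE\times_{m=1}^M u_{im}^\top)^2\big)^{1/2}$; equivalently $\mathcal T_r$ is contained in the radius-$\sqrt r$ nuclear-norm ball, so $\sup_{\cV\in\mathcal T_r}|\langle\cE,\cV\rangle|\le\sqrt r\,\|\cE\|_{\mathrm{op}}$, where $\|\cE\|_{\mathrm{op}}=\sup_{v_m\in\mathbb S^{d_m-1}}|\cE\times_{m=1}^M v_m^\top|$. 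It then suffices to bound $\|\cE\|_{\mathrm{op}}$: take for each mode $m$ a $\tfrac14$-net $\mathcal N_m$ of $\mathbb S^{d_m-1}$ with $|\mathcal N_m|\le 9^{d_m}$, apply the Gaussian tail above with a union bound over $\prod_m\mathcal N_m$ (of cardinality $\exp(O(\sum_m d_m))$), and use the standard multilinear net-to-supremum comparison (Lemma~\ref{lemma:epsilonnet}, which accounts for the fact that each $u_{im}$ enters $\cV$ linearly and yields only a dimension-free factor). This gives, on an event of probability at least $1-\exp(-c\sum_{m=1}^M d_m)$, the bound $\|\cE\|_{\mathrm{op}}\lesssim\sqrt{(\sum_{m=1}^M d_m)/n}$; sharpening the union bound (replacing $\sum_m d_m$ by $t$ in the exponent of the tail and taking $t\asymp r\sum_m d_m$) upgrades the failure probability to $\exp(-c\sum_m d_m r)$. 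Combining with the previous display yields $\sup_{\cV\in\mathcal T_r}|\langle\bar\cX-\mu,\cV\rangle|\lesssim\sqrt r\cdot\sqrt{(\sum_m d_m)/n}=\sqrt{(\sum_m d_m\,r)/n}$, as claimed. (A variant argument nets the set $\mathcal T_r$ directly: its $\epsilon$-covering number is $\exp(O(r\sum_m d_m))$ and each net point $\cV_0$ has $\|\cV_0\|_{\rm F}\le\sqrt r$, so $\langle\cE,\cV_0\rangle$ is $N(0,\le C_0 r/n)$; a union bound plus a crude mesh/Lipschitz estimate using $\|\cE\|_{\rm F}\lesssim\sqrt{d/n}$ gives the same conclusion.)

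The main obstacle, and the point requiring care, is the bookkeeping that simultaneously delivers the \emph{rate} $\sqrt{(\sum_m d_m\,r)/n}$ and the \emph{probability} $1-\exp(-c\sum_m d_m r)$: the naive union bound exhibits a trade-off (keeping the tail exponent at $\sum_m d_m$ gives the stated rate but only probability $1-e^{-c\sum_m d_m}$, while pushing the exponent to $r\sum_m d_m$ forces a slightly worse rate), and closing this gap cleanly is best done either through a more refined chaining/Gaussian-concentration (Borell--TIS) estimate for the Gaussian process $\cV\mapsto\langle\cE,\cV\rangle$ on $\mathcal T_r$, whose weak variance is $O(r/n)$ and whose expected supremum is $O(\sqrt{r\sum_m d_m/n})$, or by treating the constants $c$ as absolute and accepting the stated form. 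A secondary technical item is making the multilinear net-to-supremum passage rigorous for the rank-one functional, which I would handle by invoking Lemma~\ref{lemma:epsilonnet} rather than re-deriving it; everything else (the identification of the distribution of $\bar\cX-\mu$, the Cauchy--Schwarz/nuclear-norm reduction, and the per-point Gaussian tail) is routine.
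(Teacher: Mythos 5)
Your approach --- Cauchy--Schwarz reduction of the supremum to $\sqrt r\,\|\cE\|_{\rm op}$, followed by an $\epsilon$-net for the rank-one operator norm --- is a genuinely different route from the paper's, and you are right that it leaves a gap. The paper instead rewrites $\cV$ in Tucker form $\cV = \cS\times_{m=1}^M U_m$ with $\cS=\diag(w_1,\dots,w_r)$ and $U_m=(u_{1m},\dots,u_{rm})$, builds \emph{separate} $\epsilon$-nets for the diagonal core $\{\|\cS\|_{\rm F}\le1\}$ (size $(1+2/\epsilon)^r$) and for each mode factor $\{\|U_m\|_{2}\le1\}$ (size $(1+2/\epsilon)^{d_mr}$), transfers the supremum to the combined net by an $(M+1)$-term subtraction argument with $\epsilon=1/(2M+2)$, and then union-bounds using Gaussian concentration of the Lipschitz map $\cY\mapsto\langle\cY,\cV_0\rangle$ at each net point. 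The crucial idea you missed is that each net point obeys $\|\cS_{j_0}^*\times_m U_{m,j_m}^*\|_{\rm F}\le\|\cS_{j_0}^*\|_{\rm F}\prod_m\|U_{m,j_m}^*\|_{2}\le1$, so the per-point variance is $O(1/n)$ --- not the $O(r/n)$ appearing in your parenthetical variant --- while the covering number is $e^{O(r\sum_m d_m)}$. That combination is what delivers \emph{both} the rate $\sqrt{r\sum_m d_m/n}$ \emph{and} the tail $1-e^{-cr\sum_m d_m}$ simultaneously.

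Your reduction cannot achieve this: bounding by $\sqrt r\|\cE\|_{\rm op}$ and netting unit rank-one tensors gives per-point variance $1/n$ but covering number only $e^{O(\sum_m d_m)}$, so you get the right rate only at probability $1-e^{-c\sum_m d_m}$; forcing the exponent up to $r\sum_m d_m$ inflates the bound on $\|\cE\|_{\rm op}$ by $\sqrt r$ and the final rate by $\sqrt r$. Your Borell--TIS suggestion fails for the same reason: the weak variance of the process $\cV\mapsto\langle\cE,\cV\rangle$ over the admissible set is $\sup_\cV\|\cV\|_{\rm F}^2\cdot O(1/n)\asymp r/n$ (take $u_{im}\equiv u_m$ fixed and $w_i\equiv r^{-1/2}$), so the exponent needed for the claimed tail again costs an extra $\sqrt r$ in the rate. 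So your proposal correctly diagnoses that a gap exists, but neither of the fixes you offer closes it; the paper's per-factor netting with unit spectral-norm constraints is the missing ingredient.

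For completeness, be aware of one subtlety shared by both arguments: the paper's nets cover $\{U_m:\|U_m\|_{2}\le1\}$, whereas a non-orthogonal collection of unit columns has $\|U_m\|_{2}$ up to $\sqrt r$. The simple test $u_{im}\equiv e_1$, $w_i=r^{-1/2}$ gives $\langle\cE,\cV\rangle\sim N\bigl(0,\Theta(r/n)\bigr)$, so the failure probability at level $C\sqrt{r\sum_m d_m/n}$ is at least of order $e^{-C'\sum_m d_m}$ with $C'$ depending on $C$ but not on $r$. The stated tail $e^{-cr\sum_m d_m}$ therefore requires the implied constant $C$ to grow like $\sqrt r$ unless the $u_{im}$ are (nearly) orthogonal within each mode. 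This is benign in the paper's application, where $R$ is a fixed constant.
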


\begin{proof}
For any CP low rank tensor $\cV=\sum_{i=1}^r w_i \circ_{m=1}^M u_{im}$ with $w_i>0$ and $\| u_{im}\|_2=1$, we can reformulate a Tucker type decomposition, $\cV=\cS\times_{m=1}^M U_m$ with $\cS=\diag(w_1,...,w_r) \in \RR^{r\times\cdots\times r}$ and $U_m=(u_{1m},...,u_{rm})$. Let $N_0=\lfloor(1+2/\epsilon)^{r}\rfloor$ and $N_m=\lfloor(1+2/\epsilon)^{d_mr}\rfloor$. There exists $\epsilon$-nets $\cS_{j_0}^*  \in \RR^{r\times\cdots\times r}$ with diagonal elements $\cS_{j_0,k,...,k}^*\neq0$ and all the off-diagonal elements being 0, $\|\cS_{j_0}^*\|_{\rm F}\le 1, j_0=1,...,N_0$, and $U_{m,j_m}^* \in \RR^{d_m\times r}$ with $\|U_{m,j_m}^*\|_{2}\le 1,  j_m=1,...,N_m, 1\le m\le M$, such that
\begin{align*}
\max_{\|\cS\|_{\rm F}\le 1} \min_{1\le j_0\le N_0} \|\cS-\cS_{j_0}^*\|_{\rm F}\le \epsilon, \qquad \max_{\|U_m\|_{2}\le 1} \min_{1\le j_m\le N_m} \|U_m- U_{m,j_m}^*\|_{2}\le \epsilon, 1\le m\le M.   
\end{align*}
Note that $\sum_{i=1}^r w_i^2\le 1$ is equivalent to $\|\cS\|_{\rm F}\le 1$. Let $\cY=\bar X-\mu$.
Then by the ``subtraction argument",
\begin{align*}
&\sup_{\substack{\cV=\sum_{i=1}^r w_i \circ_{m=1}^M u_{im}\in \RR^{d_1\times\cdots\times d_M} \\ \sum_{i=1}^r w_i^2\le 1 , \| u_{im}\|_2=1, \forall i\le r, m\le M }} \left| <\cY,\cV> \right| -   \max_{\substack{\|\cS_{j_0}^*\|_{\rm F}\le 1, \|U_{m,j_m}^*\|_{2}\le 1 ,\\ j_0\le N_0,   j_m\le N_m, \forall 1\le m\le M}} \left| <\cY,\cS_{j_0}^*\times_{m=1}^M U_{m,j_m}^*> \right|   \\
=& \sup_{\substack{\cV=\sum_{i=1}^r w_i \circ_{m=1}^M u_{im}\in \RR^{d_1\times\cdots\times d_M} \\ \sum_{i=1}^r w_i^2\le 1 , \| u_{im}\|_2=1, \forall i\le r, m\le M }} \left| <\cY,\cV> \right| -   \max_{\substack{\|\cS_{j_0}^*\|_{\rm F}\le 1, j_0\le N_0}} \left| <\cY,\cS_{j_0}^*\times_{m=1}^M U_{m}> \right| \\
&+\sum_{k=0}^{M-1} \left( \max_{\substack{\|\cS_{j_0}^*\|_{\rm F}\le 1, \|U_{m,j_m}^*\|_{2}\le 1 ,\\ j_0\le N_0,   j_m\le N_m, \forall m\le k}} \left| <\cY,\cS_{j_0}^*\times_{m=1}^k U_{m,j_m}^*\times_{m=k+1}^M U_{m}> \right| \right.\\
&\qquad\qquad \left. - \max_{\substack{\|\cS_{j_0}^*\|_{\rm F}\le 1, \|U_{m,j_m}^*\|_{2}\le 1 ,\\ j_0\le N_0,   j_m\le N_m, \forall m\le k+1}} \left| <\cY,\cS_{j_0}^*\times_{m=1}^{k+1} U_{m,j_m}^*\times_{m=k+2}^M U_{m}> \right| \right)\\
\le& (M+1)\epsilon  \sup_{\substack{\cV=\sum_{i=1}^r w_i \circ_{m=1}^M u_{im}\in \RR^{d_1\times\cdots\times d_M} \\ \sum_{i=1}^r w_i^2\le 1 , \| u_{im}\|_2=1, \forall i\le r, m\le M }} \left| <\cY,\cV> \right|
\end{align*}
Setting $\epsilon=1/(2M+2)$,
\begin{align*}
\sup_{\substack{\cV=\sum_{i=1}^r w_i \circ_{m=1}^M u_{im}\in \RR^{d_1\times\cdots\times d_M} \\ \sum_{i=1}^r w_i^2\le 1 , \| u_{im}\|_2=1, \forall i\le r, m\le M }} \left| <\cY,\cV> \right| \le 2     \max_{\substack{\|\cS_{j_0}^*\|_{\rm F}\le 1, \|U_{m,j_m}^*\|_{2}\le 1 ,\\ j_0\le N_0,   j_m\le N_m, \forall 1\le m\le M}} \left| <\cY,\cS_{j_0}^*\times_{m=1}^M U_{m,j_m}^*> \right| .
\end{align*}
It follows that
\begin{align*}
&\PP\left(\sup_{\substack{\cV=\sum_{i=1}^r w_i \circ_{m=1}^M u_{im}\in \RR^{d_1\times\cdots\times d_M} \\ \sum_{i=1}^r w_i^2\le 1 , \| u_{im}\|_2=1, \forall i\le r, m\le M }} \left| <\cY,\cV> \right| \ge x\right) \\
\le& \PP\left(     \max_{\substack{\|\cS_{j_0}^*\|_{\rm F}\le 1, \|U_{m,j_m}^*\|_{2}\le 1 ,\\ j_0\le N_0,   j_m\le N_m, \forall 1\le m\le M}} \left| <\cY,\cS_{j_0}^*\times_{m=1}^M U_{m,j_m}^*> \right|  \ge x/2 \right)\\
\le& \prod_{k=0}^M N_k \cdot \PP\left(     \left| <\cY,\cS_{j_0}^*\times_{m=1}^M U_{m,j_m}^*> \right|  \ge x/2 \right) \\
\le& (4M+5) ^{\sum_{m=1}^M d_m r +r} \cdot \PP\left(     \left| <\cY,\cS_{j_0}^*\times_{m=1}^M U_{m,j_m}^*> \right|  \ge x/2 \right).
\end{align*}
Since $\cY=\bar X-\mu\sim \cT\cN(0,\frac1n \bSigma)$, we can show $| <\cY,\cS_{j_0}^*\times_{m=1}^M U_{m,j_m}^*> |$ is a $n^{-1/2}\prod_{m=1}^M \|\Sigma_m\|_{2}^{1/2}$ Lipschitz function, and $\E{| <\cY,\cS_{j_0}^*\times_{m=1}^M U_{m,j_m}^*> |}\le \sqrt{2/\pi} n^{-1/2}\prod_{m=1}^M \|\Sigma_m\|_{2}^{1/2}$. Then, by Gaussian concentration inequalities for Lipschitz functions,
\begin{align*}
\PP\left(     \left| <\cY,\cS_{j_0}^*\times_{m=1}^M U_{m,j_m}^*> \right|  \ge \E{| <\cY,\cS_{j_0}^*\times_{m=1}^M U_{m,j_m}^*> |} + t \right) \le \exp\left( -\frac{nt^2}{\prod_{m=1}^M \|\Sigma_m\|_{2}} \right).
\end{align*}
Setting $x\asymp t\asymp \sqrt{(\sum_{m=1}^M d_m r +r)/n}$, in an event with at least probability at least $1-e^{-c\sum_{m=1}^M d_m r}$,
\begin{align*}
\sup_{\substack{\cV=\sum_{i=1}^r w_i \circ_{m=1}^M u_{im}\in \RR^{d_1\times\cdots\times d_M} \\ \sum_{i=1}^r w_i^2\le 1 , \| u_{im}\|_2=1, \forall i\le r, m\le M }}  \left| <\bar\cX-\mu,\cV> \right| \lesssim \sqrt{\frac{\sum_{m=1}^M d_m r}{n}}.
\end{align*}
\end{proof}

The following lemma gives an inequality in terms of Frobenius norm between two tensors. 
\begin{lemma} 
\label{lemma:tensor norm inequality}
For two $M$-th order tensors $\gamma, \; \hat \gamma \in \RR^{d_1 \times \cdots \times d_M}$, if $\norm{\gamma - \hat\gamma}_{\rm F} = o(\|\gamma\|_{\rm F})$ as $n \rightarrow \infty$, and $\norm{\gamma}_{\rm F} \ge c$ for some constant $c>0$, then when $n \rightarrow \infty$,
\begin{equation*}
\norm{\gamma}_{\rm F} \cdot \norm{\hat\gamma}_{\rm F} - \langle \gamma ,\; \hat \gamma \rangle \asymp \norm{\gamma - \hat\gamma}_{\rm F}^2.    
\end{equation*}
\end{lemma}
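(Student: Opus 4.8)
The plan is to strip away the tensor structure and reduce to elementary Euclidean geometry, then read off the bound from one completing-the-square identity. Write $x=\vect(\gamma)$ and $\hat x=\vect(\hat\gamma)$ in $\RR^{d}$ with $d=\prod_m d_m$, so that $\langle\gamma,\hat\gamma\rangle=\langle x,\hat x\rangle$, $\|\gamma\|_{\rm F}=\|x\|_2$, $\|\hat\gamma\|_{\rm F}=\|\hat x\|_2$, and $\|\gamma-\hat\gamma\|_{\rm F}=\|x-\hat x\|_2$; only the Hilbert-space geometry matters. Put $a=\|x\|_2$, $\hat a=\|\hat x\|_2$, and $e=\|x-\hat x\|_2$. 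Expanding $e^2=a^2-2\langle x,\hat x\rangle+\hat a^2$ and using $a^2+\hat a^2=(a-\hat a)^2+2a\hat a$ gives the identity
\[
a\hat a-\langle x,\hat x\rangle=\tfrac12\bigl(e^2-(a-\hat a)^2\bigr).
\]

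The upper bound --- which is the only direction invoked in the proof of Theorem~\ref{thm:class-upp-bound} --- is then immediate: since $(a-\hat a)^2\ge 0$ the identity yields $0\le a\hat a-\langle x,\hat x\rangle\le\tfrac12 e^2=\tfrac12\|\gamma-\hat\gamma\|_{\rm F}^2$, with non-negativity coming from Cauchy--Schwarz (equivalently $(a-\hat a)^2\le e^2$ by the reverse triangle inequality). No hypothesis on $\|\gamma-\hat\gamma\|_{\rm F}$ or $\|\gamma\|_{\rm F}$ is needed for this half.

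For the matching lower bound I would decompose the perturbation into components along and across $\gamma$: write $\hat x-x=p\,u+\epsilon_\perp$ with $u=x/a$, $p=\langle\hat x-x,u\rangle$, $\epsilon_\perp\perp u$, $q=\|\epsilon_\perp\|_2$, so $e^2=p^2+q^2$. Then $\langle x,\hat x\rangle=a^2+ap$ and $\hat a=\bigl((a+p)^2+q^2\bigr)^{1/2}$, and the assumption $|p|\le e=\|\gamma-\hat\gamma\|_{\rm F}=o(\|\gamma\|_{\rm F})=o(a)$ together with $\|\gamma\|_{\rm F}\ge c$ forces $a+p>0$ and $\hat a+(a+p)\asymp a$; rationalizing,
\[
a\hat a-\langle x,\hat x\rangle=a\bigl((a+p)^2+q^2\bigr)^{1/2}-a(a+p)=\frac{a\,q^2}{\hat a+(a+p)}\asymp q^2 .
\]
Here lies the genuine obstacle: the quantity is comparable to $\|\epsilon_\perp\|_{\rm F}^2$, the part of $\hat\gamma-\gamma$ orthogonal to $\gamma$, and this coincides with $\|\gamma-\hat\gamma\|_{\rm F}^2$ up to absolute constants (giving the stated $\asymp$) precisely when the perturbation is not asymptotically a pure rescaling of $\gamma$. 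I would therefore present the identity and the two-sided estimate $\tfrac12\|\epsilon_\perp\|_{\rm F}^2 \le a\hat a-\langle x,\hat x\rangle\le\tfrac12\|\gamma-\hat\gamma\|_{\rm F}^2$ as the sharp statement, and note that the upper bound is exactly what feeds into Theorem~\ref{thm:class-upp-bound} after dividing through by $\|\hat\gamma\|_{\rm F}\asymp\Delta$.
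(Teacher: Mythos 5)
Your vectorization reduction and the exact identity $\|\gamma\|_{\rm F}\|\hat\gamma\|_{\rm F}-\langle\gamma,\hat\gamma\rangle=\tfrac12\bigl(\|\gamma-\hat\gamma\|_{\rm F}^2-(\|\gamma\|_{\rm F}-\|\hat\gamma\|_{\rm F})^2\bigr)$ are a cleaner route than the paper's Taylor expansion of $\sqrt{1+x}$, and they give the upper bound $\le\tfrac12\|\gamma-\hat\gamma\|_{\rm F}^2$ with no asymptotic hypotheses at all. More importantly, you have caught a genuine error in the lemma as stated: the lower bound fails. Taking $\hat\gamma=(1+\epsilon_n)\gamma$ with $\epsilon_n\to 0$ satisfies both hypotheses, yet $\|\gamma\|_{\rm F}\|\hat\gamma\|_{\rm F}-\langle\gamma,\hat\gamma\rangle=0$ while $\|\gamma-\hat\gamma\|_{\rm F}^2=\epsilon_n^2\|\gamma\|_{\rm F}^2>0$, so $\asymp$ cannot hold. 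Your tangential/normal decomposition locates the defect precisely: the left side is $\asymp\|\epsilon_\perp\|^2$, the component of $\hat\gamma-\gamma$ orthogonal to $\gamma$, which need not be comparable to the full $\|\hat\gamma-\gamma\|_{\rm F}^2$.

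The paper's proof slips at the step replacing $\sqrt{1+x}$ by $1+x/2$ and asserting the resulting expression is $\asymp$ the original. With $x=(2\langle\gamma,\calE\rangle+\|\calE\|_{\rm F}^2)/\|\gamma\|_{\rm F}^2$, the dropped second-order term $\|\gamma\|_{\rm F}^2\,x^2/8$ is of order $\langle\gamma,\calE\rangle^2/\|\gamma\|_{\rm F}^2$, which Cauchy--Schwarz bounds only by $\|\calE\|_{\rm F}^2$ --- exactly the order of the claimed answer $\|\calE\|_{\rm F}^2/2$. The expansion therefore does not isolate a leading term, and in the parallel-perturbation case the two contributions cancel to zero. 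Fortunately this does not harm Theorem~\ref{thm:class-upp-bound}, whose proof invokes only $\|\gamma\|_{\rm F}\|\hat\gamma\|_{\rm F}-\langle\gamma,\hat\gamma\rangle\lesssim\|\hat\gamma-\gamma\|_{\rm F}^2$; your proof of that half is correct, and the lemma should be restated with $\lesssim$ in place of $\asymp$ (or with $\|\epsilon_\perp\|^2$ on the right, as you propose).
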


\begin{proof}
Let $\calE = \hat \gamma - \gamma$, when $\norm{\gamma - \hat \gamma}_{\rm F} = o(\|\gamma\|_F)$ and $\norm{\gamma}_{\rm F} \ge c,$ we have
\begin{align*}
\norm{\gamma}_{\rm F} \cdot \norm{\hat \gamma}_{\rm F} - \langle \gamma, \; \hat \gamma \rangle &= \norm{\gamma}_{\rm F} \cdot \norm{\gamma + \calE}_{\rm F} - \langle \gamma, \; \gamma+\calE \rangle \\
&= \norm{\gamma}_{\rm F} \sqrt{\norm{\gamma}_{\rm F}^2 + 2\langle \gamma, \; \calE \rangle + \norm{\calE}_{\rm F}^2} - \norm{\gamma}_{\rm F}^2 - \langle \gamma, \; \calE \rangle \\
&= \norm{\gamma}_{\rm F}^2 \sqrt{1 + \frac{2\langle \gamma, \; \calE \rangle + \norm{\calE}_{\rm F}^2}{\norm{\gamma}_{\rm F}^2}} - \norm{\gamma}_{\rm F}^2 - \langle \gamma, \; \calE \rangle \\
&\asymp \norm{\gamma}_{\rm F}^2 \big( 1+\frac{1}{2} \frac{2\langle \gamma, \; \calE \rangle + \norm{\calE}_{\rm F}^2}{\norm{\gamma}_{\rm F}^2} \big) - \norm{\gamma}_{\rm F}^2 - \langle \gamma, \; \calE \rangle \\
&= \frac{\norm{\calE}_{\rm F}^2}{2} \asymp \norm{\hat \gamma - \gamma}_{\rm F}^2.
\end{align*}
\end{proof}

The following lemma illustrates the relationship between the risk function $\cR_{\btheta}(\hat\delta) -\cR_{\rm opt}(\btheta)$ and a more “standard” risk function $L_{\theta}(\hat \delta)$, which fulfills a role similar to that of the triangle inequality, as demonstrated in Lemma \ref{lemma:probability inequality}. Lemmas \ref{lemma:the first reduction} and \ref{lemma:probability inequality} correspond to Lemmas 3 and 4, respectively, in \cite{cai2019high}.

\begin{lemma} \label{lemma:the first reduction}
Let $\cZ \sim \frac{1}{2}\cT\cN(\cM_1; \; \bSigma) + \frac{1}{2}\cT\cN(\cM_2; \; \bSigma)$ with parameter $\theta = (\cM_1, \; \cM_2, \; \bSigma)$ where $\bSigma= [\Sigma_m]_{m=1}^M$. If a classifier $\hat \delta$ satisfies $L_{\theta}(\hat \delta) = o(1)$ as $n \rightarrow \infty$, then for sufficiently large n, 
\begin{align*}
\cR_{\btheta}(\hat\delta) -\cR_{\rm opt}(\btheta) \ge \frac{\sqrt{2\pi}\Delta}{8} e^{\Delta^2/8} \cdot L_{\theta}^2(\hat \delta) .    
\end{align*}
\end{lemma}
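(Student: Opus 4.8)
The plan is to relate the excess misclassification risk $\cR_{\btheta}(\hat\delta)-\cR_{\rm opt}(\btheta)$ to the ``standard'' loss $L_\theta(\hat\delta)=\PP_\theta(\hat\delta(\cZ)\neq\Upsilon_\theta(\cZ))$ by expressing both quantities as integrals of the TGMM density over the symmetric difference of the two acceptance regions. First I would recall that, under $\pi_1=\pi_2=1/2$ and equal covariance, the Bayes rule $\Upsilon_\theta$ is the halfspace $\{\langle\cZ-\cM,\cB\rangle\ge 0\}$ with $\cM=(\cM_1+\cM_2)/2$ and $\cB=\cD\times_{m=1}^M\Sigma_m^{-1}$, $\cD=\cM_2-\cM_1$; after the affine change of variable $u=\langle(\cZ-\cM)\times_{m=1}^M\Sigma_m^{-1/2},\ \cD\times_{m=1}^M\Sigma_m^{-1/2}\rangle/\Delta$ (a projection of the whitened tensor onto the unit direction $\gamma/\|\gamma\|$ with $\gamma=\cD\times_{m=1}^M\Sigma_m^{-1/2}$), the problem becomes one-dimensional: under class $k$, $u\sim N(\mp\Delta/2,1)$, and $\Upsilon_\theta$ accepts $\{u\ge 0\}$ while a generic classifier $\hat\delta$ corresponds to accepting some measurable set $\hat A\subseteq\RR$ (in the one-dimensional marginal, after conditioning on the orthogonal complement; since $\Upsilon_\theta$ depends only on $u$, the relevant discrepancy is captured by the $u$-marginal).

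\textbf{Key steps.} The core identity I would establish is
\begin{align*}
\cR_{\btheta}(\hat\delta)-\cR_{\rm opt}(\btheta) &= \frac12\int_{\hat A\,\triangle\,\{u\ge0\}}\bigl|\phi(u+\Delta/2)-\phi(u-\Delta/2)\bigr|\,du,\\
L_\theta(\hat\delta) &= \frac12\int_{\hat A\,\triangle\,\{u\ge0\}}\bigl(\phi(u+\Delta/2)+\phi(u-\Delta/2)\bigr)\,du,
\end{align*}
where $\phi$ here denotes the standard normal \emph{density}; the first follows because flipping the decision on a set where the Bayes rule is optimal incurs exactly the absolute difference of the weighted densities, and the second because misclassifying relative to $\Upsilon_\theta$ happens with total probability equal to the symmetric-difference integral of the average density. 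Next I would use the elementary pointwise inequality: for $u\ge 0$ (and symmetrically $u\le0$), the ratio $\bigl(\phi(u-\Delta/2)-\phi(u+\Delta/2)\bigr)/\bigl(\phi(u-\Delta/2)+\phi(u+\Delta/2)\bigr)=\tanh(u\Delta/2)$, so on the symmetric difference the integrand of the excess risk is $\tanh(|u|\Delta/2)$ times that of $L_\theta$. Since $L_\theta(\hat\delta)=o(1)$, the set $\hat A\,\triangle\,\{u\ge 0\}$ must have small measure under the average density, hence (by Markov-type reasoning) the mass concentrates near $u=0$ only in a controlled way; more precisely, I would bound $\tanh(|u|\Delta/2)\gtrsim |u|\Delta$ for $|u|$ in a window and control the contribution of large $|u|$ using the Gaussian tail, or more cleanly apply the Cauchy–Schwarz / Jensen inequality to get $\int \tanh(|u|\Delta/2)\,d\mu \gtrsim \bigl(\int d\mu\bigr)^2 \cdot (\text{constant depending on }\Delta)$ where $d\mu$ is the average-density restricted measure. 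This quadratic relation, after tracking the Gaussian normalizing constant $e^{-\Delta^2/8}$ that governs the density near $u\approx\Delta/2$ (the most ``efficient'' place to disagree), yields $\cR_{\btheta}(\hat\delta)-\cR_{\rm opt}(\btheta)\ge \tfrac{\sqrt{2\pi}\Delta}{8}e^{\Delta^2/8}L_\theta^2(\hat\delta)$ for $n$ large enough that $L_\theta(\hat\delta)$ is below the threshold where the quadratic bound is tight.

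\textbf{Main obstacle.} The delicate point is the reverse-Hölder / isoperimetric step: turning the \emph{linear} lower bound $\int\tanh(|u|\Delta/2)\,d\mu$ into something controlled by $\bigl(\int d\mu\bigr)^2$ with the sharp constant $\sqrt{2\pi}\Delta e^{\Delta^2/8}/8$. The worst case for the optimizer is when $\hat A\,\triangle\,\{u\ge0\}$ is an interval adjacent to $0$ of the form $[0,\epsilon]$, where the average density is $\approx\phi(\Delta/2)=\tfrac{1}{\sqrt{2\pi}}e^{-\Delta^2/8}$ (constant) and $\tanh(u\Delta/2)\approx u\Delta/2$; then $L_\theta\approx \tfrac12\phi(\Delta/2)\epsilon$ while the excess risk $\approx\tfrac12\phi(\Delta/2)\Delta\epsilon^2/4$, and eliminating $\epsilon$ gives exactly the claimed constant. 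I would therefore prove that this interval configuration is essentially extremal (any other $\hat A\,\triangle\,\{u\ge0\}$ of the same $\mu$-mass gives a larger excess risk, because pushing mass away from $u=0$ only increases $\tanh$), and handle the (lower-order, as $\Delta$ ranges over bounded or growing regimes) corrections from the variation of the Gaussian density across the disagreement set and from $|u|$ not being small, absorbing them into the ``for sufficiently large $n$'' clause. I expect citing Lemma 4 of \cite{cai2019high} (Lemma~\ref{lemma:probability inequality} here) to supply precisely the probability-inequality machinery needed for this extremal comparison.
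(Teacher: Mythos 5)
The paper does not prove this lemma itself; the supplement merely notes that Lemmas \ref{lemma:the first reduction} and \ref{lemma:probability inequality} ``correspond to Lemmas 3 and 4, respectively, in \cite{cai2019high}'' and relies on that reference. Your overall framework is the right one and is essentially the argument used there: write the excess risk and $L_\theta$ as integrals over the disagreement set $E=\{\hat\delta\neq\Upsilon_\theta\}$, observe that after whitening the integrands differ pointwise by the factor $\tanh(|u|\Delta/2)$ (which depends only on the discriminant coordinate $u$), and then minimize $\int_E \tanh(|u|\Delta/2)\,d\mu$ over sets of fixed $\mu$-mass $L_\theta$ by a layer-cake (bathtub) comparison. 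Two presentational remarks: (i) your displayed one-dimensional identities implicitly assume $E$ is a cylinder over a set $\hat A\subset\RR$, which is not true for a general $\hat\delta$; since $\tanh(|u|\Delta/2)$ is constant in the orthogonal directions the bathtub argument still forces the minimizer to be a slab, but the identities should be written over $E$ in the full space, not over a subset of $\RR$. (ii) Your appeal to Lemma \ref{lemma:probability inequality} as supplying the extremal machinery is misdirected: that lemma is the approximate triangle inequality for $L_\theta$ used in the Fano step of Theorem \ref{thm:class-lower-bound}, not an ingredient of the present comparison.

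There is a concrete error in your extremal calculation, and it matters for the constant. The sublevel sets of $\tanh(|u|\Delta/2)$ are the symmetric slabs $\{|u|\le c\}$, not one-sided intervals, so the minimizing configuration is $\{|u|\le c\}$, not $[0,\epsilon]$. With the correct extremal: $L_\theta\approx 2c\,\phi_N(\Delta/2)$ (where $\phi_N$ is the standard normal density) and $\cR-\cR_{\rm opt}\approx\phi_N(\Delta/2)\Delta c^2/2$, so $\cR-\cR_{\rm opt}\approx\frac{\Delta}{8\phi_N(\Delta/2)}L_\theta^2 = \frac{\sqrt{2\pi}\Delta}{8}e^{\Delta^2/8}L_\theta^2$, exactly the claimed bound. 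Your $[0,\epsilon]$ configuration yields a ratio $\frac{\Delta}{4\phi_N(\Delta/2)}L_\theta^2$, twice as large, and your own displayed approximations ($L_\theta\approx\tfrac12\phi_N(\Delta/2)\epsilon$ and excess $\approx\tfrac12\phi_N(\Delta/2)\Delta\epsilon^2/4$) are each off by a factor of $2$ from the true one-sided values, so that ``eliminating $\epsilon$'' as you describe gives $\frac{\sqrt{2\pi}\Delta}{2}e^{\Delta^2/8}L_\theta^2$, not the stated constant. Since you are proving a lower bound this would not invalidate the inequality, but your claim that the one-sided interval is extremal and reproduces the stated constant is incorrect; the symmetric slab must be used.
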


\begin{lemma} \label{lemma:probability inequality}
Let $\theta = (\cM, \; -\cM, \; [I_{d_m}]_{m=1}^M)$ and $\Tilde{\theta} = (\Tilde{\cM}, \; -\Tilde{\cM}, \; [I_{d_m}]_{m=1}^M)$ with $\norm{\cM}_{\rm F} = \norm{\Tilde{\cM}}_{\rm F} = \Delta/2$. For any classifier $\delta$, if $\norm{\cM - \Tilde{\cM}}_{\rm F} = o(1)$ as $n \rightarrow \infty$, then for sufficiently large n,
\begin{align*}
L_{\theta}(\delta) + L_{\Tilde{\theta}}(\delta) \ge \frac{1}{\Delta} e^{-\Delta^2/8} \cdot \norm{\cM - \Tilde{\cM}}_{\rm F} .    
\end{align*}
\end{lemma}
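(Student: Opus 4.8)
The plan is to vectorize and reduce to the classical two‑point argument in Euclidean space. Write $\bmu=\vect(\cM)$ and $\tilde\bmu=\vect(\tilde\cM)\in\RR^{d}$, $d=\prod_m d_m$, so that $\|\bmu\|_2=\|\tilde\bmu\|_2=\Delta/2$ and $h:=\|\bmu-\tilde\bmu\|_2=\|\cM-\tilde\cM\|_{\rm F}=o(1)$. Since the covariance is the identity and the priors are equal, the midpoint of the two class means in $\theta$ is $0$ and $\calB=\cM_2-\cM_1=-2\cM$, so Fisher's rule reads $\Upsilon_\theta(\bz)=\bbone\{\langle\bz,\bmu\rangle\le 0\}$ and $\PP_\theta$ has the mixture density $p_\theta(\bz)=\tfrac12\phi_d(\bz-\bmu)+\tfrac12\phi_d(\bz+\bmu)$ with $\phi_d$ the $\mathrm N(0,I_d)$ density; analogously for $\tilde\theta$. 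Thus the tensor claim is literally the corresponding vector claim.

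First I would isolate the region where the two optimal rules disagree, $D=\{\bz:\langle\bz,\bmu\rangle\langle\bz,\tilde\bmu\rangle<0\}$, i.e. the symmetric difference of the two halfspaces (the boundary hyperplanes are Lebesgue‑null). On $D$ one has $\Upsilon_\theta\neq\Upsilon_{\tilde\theta}$, hence for any classifier $\delta$, $\bbone\{\delta\neq\Upsilon_\theta\}+\bbone\{\delta\neq\Upsilon_{\tilde\theta}\}\ge 1$ there. Restricting the two error integrals to $D$ and bounding each density below by $\min(p_\theta,p_{\tilde\theta})$ yields
\begin{align*}
L_\theta(\delta)+L_{\tilde\theta}(\delta)\ \ge\ \int_D\bigl(\bbone\{\delta\neq\Upsilon_\theta\}+\bbone\{\delta\neq\Upsilon_{\tilde\theta}\}\bigr)\,\min\bigl(p_\theta(\bz),p_{\tilde\theta}(\bz)\bigr)\,d\bz\ \ge\ \int_D\min\bigl(p_\theta(\bz),p_{\tilde\theta}(\bz)\bigr)\,d\bz ,
\end{align*}
so it remains to lower bound this density‑overlap integral by a multiple of $\Delta^{-1}e^{-\Delta^2/8}h$.

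The key step is to reduce the overlap integral to a two‑dimensional computation: $D$, $p_\theta$ and $p_{\tilde\theta}$ depend on $\bz$ only through its projection onto the (at most) two‑dimensional subspace $V=\mathrm{span}(\bmu,\tilde\bmu)$, and the components orthogonal to $V$ are the same standard Gaussian, so the integral equals that of the $2$‑dimensional marginals $q_\theta,q_{\tilde\theta}$ over the planar region $D_0$. Choosing coordinates with $\bmu=(\Delta/2)(1,0)$, $\tilde\bmu=(\Delta/2)(\cos\gamma,\sin\gamma)$, the identity $h^2=\tfrac{\Delta^2}{2}(1-\cos\gamma)$ forces $\gamma\asymp h/\Delta$, and $D_0$ is a pair of opposite wedges of angular width $\gamma$ along the $z_2$‑axis. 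On the truncation $D_0\cap\{|z_2|\le 1\}$ every point has $|z_1|\le|z_2|\tan\gamma\le\tan\gamma\to0$, so $z_1\Delta\le\Delta\tan\gamma\asymp h\to0$; a short Taylor estimate then gives $q_\theta(\bz)=\tfrac1{2\pi}e^{-z_2^2/2}e^{-\Delta^2/8}(1+o(1))$ uniformly there, and the same for $q_{\tilde\theta}$, whence $q_{\tilde\theta}/q_\theta\to1$ uniformly on $D_0\cap\{|z_2|\le1\}$. Therefore, for $n$ large, $\int_{D_0}\min(q_\theta,q_{\tilde\theta})\ge\tfrac12\int_{D_0\cap\{|z_2|\le1\}}q_\theta$, and integrating the wedge (half‑width $\asymp|z_2|\tan\gamma$ at height $z_2$) against $\tfrac1{2\pi}e^{-z_2^2/2}e^{-\Delta^2/8}$ over $|z_2|\le1$ produces a quantity $\gtrsim\gamma\,e^{-\Delta^2/8}\gtrsim\Delta^{-1}e^{-\Delta^2/8}h$. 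Combining the three displays gives $L_\theta(\delta)+L_{\tilde\theta}(\delta)\gtrsim\Delta^{-1}e^{-\Delta^2/8}\,\|\cM-\tilde\cM\|_{\rm F}$, which is the assertion (the precise universal constant is immaterial, as this lemma feeds only into the minimax lower bound of Theorem \ref{thm:class-lower-bound}, where it is absorbed into $C_\gamma$).

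The main obstacle is the uniform control of the density ratio $q_{\tilde\theta}/q_\theta$ on the wedge and checking that enough of the wedge mass lies in the controlled region — both dealt with by the truncation $|z_2|\le1$, which keeps $\bz$ bounded while the wedge half‑width $\asymp|z_2|\tan\gamma$ remains large enough there; one must also verify carefully the orientation of $\Upsilon_\theta$ and that $D$ is exactly the symmetric difference of the two halfspaces up to a null set. The remaining ingredients — the one‑dimensional Gaussian tail/derivative estimates and the marginalization over $V^{\perp}$ — are routine.
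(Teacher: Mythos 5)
Your argument is sound. The paper does not prove this lemma; it simply invokes Lemma~4 of \cite{cai2019high}, so there is no in-text proof to compare against, and your job was essentially to reconstruct one. The key observations are all correctly executed: with identity covariances and equal priors the Fisher rule for $\theta=(\cM,-\cM,[I_{d_m}])$ vectorizes to $\bbone\{\langle\bz,\bmu\rangle\le0\}$; the indicator inequality $\bbone\{\delta\neq\Upsilon_\theta\}+\bbone\{\delta\neq\Upsilon_{\tilde\theta}\}\ge1$ on the symmetric difference $D$ reduces the problem to the overlap integral $\int_D\min(p_\theta,p_{\tilde\theta})$; the factorization of the mixture density over $V=\mathrm{span}(\bmu,\tilde\bmu)$ and $V^\perp$ legitimately collapses everything to two dimensions; and after writing $q_\theta(\bz)=(2\pi)^{-1}e^{-\|\bz\|^2/2-\Delta^2/8}\cosh(z_1\Delta/2)$ and similarly for $q_{\tilde\theta}$, the bounds $|z_1|\le\tan\gamma$ and $\Delta\tan\gamma\lesssim h\to0$ on the truncated wedge $D_0\cap\{|z_2|\le1\}$ do make both $\cosh$ factors uniformly $1+o(1)$, so the density ratio is eventually $\ge1/2$ there and the wedge-mass computation yields $\gtrsim\gamma e^{-\Delta^2/8}\asymp (h/\Delta)e^{-\Delta^2/8}$.

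Two small points worth flagging. First, your argument produces the bound only up to a universal constant, whereas the statement is written with the literal constant $1$; as you correctly note, this is harmless because the very next step of the Theorem~\ref{thm:class-lower-bound} proof replaces that factor by $\gtrsim$ and the constant is absorbed into $C_\gamma$, so you should just state the conclusion with a $\gtrsim$ rather than asserting the exact prefactor. Second, the step where you set $\gamma\asymp h/\Delta$ implicitly uses that $\Delta$ is bounded away from $0$ so that $\gamma\to0$; this is indeed the regime in which the lemma is applied (Theorem~\ref{thm:class-lower-bound} assumes $\Delta>c_1$ or $\Delta\to\infty$), but it is worth noting since the lemma as stated lists only $\|\cM-\tilde\cM\|_{\rm F}=o(1)$ as a hypothesis. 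With those caveats, the reduction and the two-dimensional overlap estimate constitute a correct proof.
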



Although $L_{\theta}(\hat \delta)$ is not a distance function and does not satisfy an exact triangle inequality, the following lemma provides a variant of Fano's lemma. It suggests that it suffices to provide a lower bound for $L_{\theta}(\hat \delta)$, and $L_{\theta}(\hat \delta)$ satisfies an approximate triangle inequality (Lemma \ref{lemma:probability inequality}). 

\begin{lemma}[\cite{tsybakov2009}] \label{lemma:Tsybakov variant}
Let $N \ge 2$ and $\theta_0, \theta_1, \ldots ,\theta_N \in \Theta_d$. For some constants $\alpha_0 \in (0, 1/8), \gamma > 0$ and any classifier $\hat\delta$, if ${\rm KL}(\PP_{\theta_i}, \PP_{\theta_j}) \le \alpha_0 \log N/n$ for all $1 \le i \le N$, and $L_{\theta_i}(\hat\delta) < \gamma$ implies $L_{\theta_j}(\hat\delta) \ge \gamma$ for all $0 \le i \neq j \le N$, then 
\begin{align*}
\inf_{\hat\delta} \sup_{i \in [N]} \PP_{\theta_i}(L_{\theta_i}(\hat\delta)) \ge \gamma) \ge \frac{\sqrt{N}}{1+\sqrt{N}} (1-2\alpha_0-\sqrt{\frac{2\alpha_0}{\log N}}) >0  .    
\end{align*}
\end{lemma}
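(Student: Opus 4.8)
The plan is to carry out the classical Tsybakov reduction of a minimax lower bound to a multiple–hypothesis testing problem and then invoke a Fano–type inequality; the only non-routine point is that $L_\theta(\cdot)$ is not a semi-distance, so the usual ``minimum–distance test'' has to be replaced by a test manufactured directly from the separation hypothesis. First, to every classifier $\hat\delta$ I would attach a test $\psi=\psi(\hat\delta)\in\{0,1,\dots,N\}$, measurable in the data through $\hat\delta$, whose error dominates the event of interest; then conclude $\PP_{\theta_i}\!\big(L_{\theta_i}(\hat\delta)\ge\gamma\big)\ge\PP_{\theta_i}(\psi\neq i)$ for every $i$; finally take the infimum over $\hat\delta$ and bound $\inf_{\psi}\max_{0\le i\le N}\PP_{\theta_i}(\psi\neq i)$ from below by the Fano–type testing inequality for $N+1$ hypotheses underlying Theorem~2.7 of \cite{tsybakov2009}.

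To build $\psi$: the hypothesis ``$L_{\theta_i}(\hat\delta)<\gamma\Rightarrow L_{\theta_j}(\hat\delta)\ge\gamma$ for all $j\neq i$'' says precisely that at most one index $i\in\{0,\dots,N\}$ can satisfy $L_{\theta_i}(\hat\delta)<\gamma$; so I set $\psi(\hat\delta)$ to be that unique index when it exists and $\psi(\hat\delta)=0$ otherwise. If $\psi\neq i$ then either no index is $\gamma$–small, in which case $L_{\theta_i}(\hat\delta)\ge\gamma$, or the unique $\gamma$–small index is some $j\neq i$, in which case again $L_{\theta_i}(\hat\delta)\ge\gamma$; hence $\{\psi\neq i\}\subseteq\{L_{\theta_i}(\hat\delta)\ge\gamma\}$, which gives the displayed inequality between probabilities. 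Since $\theta_0,\dots,\theta_N$ all lie in $\Theta_d$, the supremum on the left of the claim dominates $\max_{0\le i\le N}$; and, as in \cite{tsybakov2009}, the same testing argument equally lower bounds the alternative-averaged error $N^{-1}\sum_{i=1}^N\PP_{\theta_i}(\psi\neq i)$, which matches the literal restriction $i\in[N]$.

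For the Fano step: taking $\theta_0$ as the reference hypothesis and using the chain rule for Kullback–Leibler divergence over the $n$ i.i.d.\ observations, the single–observation bound ${\rm KL}(\PP_{\theta_i},\PP_{\theta_0})\le\alpha_0\log N/n$ upgrades to a bound $\alpha_0\log N$ for the joint laws, so the averaged divergence $N^{-1}\sum_{i=1}^N{\rm KL}(\PP_{\theta_i},\PP_{\theta_0})$ is at most $\alpha_0\log N$ with $\alpha_0\in(0,1/8)$. The Fano–type inequality then yields $\inf_{\psi}\max_{0\le i\le N}\PP_{\theta_i}(\psi\neq i)\ge\frac{\sqrt N}{1+\sqrt N}\big(1-2\alpha_0-\sqrt{2\alpha_0/\log N}\big)$, which is strictly positive exactly because $\alpha_0<1/8$; combined with the previous paragraph this proves the claim. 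The main obstacle is the first step: one must verify that the assumed ``$\gamma$–separation'' of the $\theta_i$'s with respect to $L$ genuinely plays the role of a $2\gamma$ metric separation — i.e.\ that the sets $\{L_{\theta_i}(\cdot)<\gamma\}$ in classifier space are pairwise disjoint — so that $\psi$ is well defined and $\{\psi\neq i\}\subseteq\{L_{\theta_i}(\hat\delta)\ge\gamma\}$; once this is in place, the remainder is Tsybakov's argument verbatim, with the divergence bookkeeping being routine.
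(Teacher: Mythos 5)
The paper offers no proof of this lemma: it is imported verbatim (via \cite{tsybakov2009}, in the form used by \cite{cai2019high}), so there is nothing internal to compare against. Your reconstruction is the standard argument behind that citation and is essentially sound: the separation hypothesis makes the events $\{L_{\theta_i}(\hat\delta)<\gamma\}$, $0\le i\le N$, pairwise disjoint, so the test $\psi$ picking the unique $\gamma$-small index is well defined, $\{\psi\neq i\}\subseteq\{L_{\theta_i}(\hat\delta)\ge\gamma\}$, and the problem reduces to multiple testing; the chain rule turns the per-observation bound $\alpha_0\log N/n$ (with $\theta_0$ as the reference, correcting the statement's typo in the index $j$) into $\alpha_0\log N$ for the joint laws, and Tsybakov's Theorem 2.5/Proposition 2.4 machinery delivers the constant $\frac{\sqrt N}{1+\sqrt N}\bigl(1-2\alpha_0-\sqrt{2\alpha_0/\log N}\bigr)$. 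The one loose point is your final remark that "the same testing argument equally lower bounds the alternative-averaged error $N^{-1}\sum_{i=1}^N\PP_{\theta_i}(\psi\neq i)$": Tsybakov's proposition bounds $\max_{0\le i\le N}\PP_{\theta_i}(\psi\neq i)$, and its two-case proof controls the average over the alternatives only in the case where $\PP_{\theta_0}(\psi=0)$ is large; in the complementary case it controls only $\PP_{\theta_0}(\psi\neq 0)$, so the quoted constant is not automatically inherited by $\sup_{i\in[N]}$ with $i\ge 1$ without reworking that case. This is a minor imprecision rather than a fatal gap — for the paper's use the distinction is immaterial, since $\theta_0$ also lies in the relevant parameter class and the final supremum over that class dominates either version — but if you want the literal $\sup_{i\in[N]}$ statement you should either redo the case analysis with the alternatives only or simply state the conclusion with $\max_{0\le i\le N}$.
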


\begin{lemma}[Varshamov-Gilbert Bound, \cite{tsybakov2009}] \label{lemma:Varshamov-Gilbert Bound}
Consider the set of all binary sequences of length m: $\Omega = \big\{ \omega= (\omega_1,\ldots,\omega_m), \omega_i \in \{ 0,1\} \big\} = \{0,1\}^m$. Let $m \ge 8$, then there exists a subset $\{ \omega^{(0)}, \omega^{(1)},$ $\ldots, \omega^{(N)} \}$ of $\Omega$ such that $\omega^{(0)}=(0,\ldots, 0)$, $\rho_H(\omega^{(i)}, \omega^{(j)}) \ge m/8, \forall 0 \le i < j \le N$, and $N \ge 2^{m/8}$, where $\rho_H$ denotes the Hamming distance.
\end{lemma}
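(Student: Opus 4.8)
The plan is to establish the Varshamov--Gilbert bound by a greedy maximal-packing construction combined with the classical binary-entropy estimate for the volume of Hamming balls; this is a standard argument (cf.\ \cite{tsybakov2009}) and uses no feature of the tensor setting.

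First I would build the set greedily. Put $\omega^{(0)} = (0,\dots,0)$, and having selected $\omega^{(0)},\dots,\omega^{(k)}$, adjoin any $\omega \in \{0,1\}^m$ with $\rho_H(\omega,\omega^{(i)}) \ge m/8$ for every $i \le k$; stop when no admissible $\omega$ remains, and call the output $\{\omega^{(0)},\dots,\omega^{(N)}\}$. By construction $\omega^{(0)}=(0,\dots,0)$ and all pairwise Hamming distances are at least $m/8$, so only the lower bound $N \ge 2^{m/8}$ requires work. Maximality forces every point of $\{0,1\}^m$ to lie at Hamming distance strictly less than $m/8$ --- hence at distance $\lceil m/8\rceil-1 \le \lfloor m/8\rfloor$ --- from some $\omega^{(i)}$, so the balls of radius $\lfloor m/8\rfloor$ centered at the chosen points cover $\{0,1\}^m$, giving
\[
2^m \;\le\; (N+1)\sum_{j=0}^{\lfloor m/8\rfloor}\binom{m}{j}.
\]
The quantitative heart of the argument is the standard Chernoff/Stirling bound $\sum_{j=0}^{\lfloor pm\rfloor}\binom{m}{j}\le 2^{mH_2(p)}$ for $p\le 1/2$, where $H_2(p) = -p\log_2 p - (1-p)\log_2(1-p)$; with $p = 1/8$ one has $H_2(1/8) < 0.55$, so the displayed inequality yields $N+1 \ge 2^{m(1-H_2(1/8))} \ge 2^{0.45 m}$, and since $0.45 m \ge m/8 + 1$ for all $m \ge 8$ we obtain $N+1 \ge 2^{m/8+1}$, hence $N \ge 2^{m/8}$.

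The only genuinely nontrivial point --- hardly an obstacle --- is verifying the entropy inequality and checking that the numerical gap $1 - H_2(1/8) > 1/8$ comfortably absorbs the ``$+1$'' and the floor/ceiling rounding uniformly in $m \ge 8$; both are routine. An alternative to the covering bound would be the probabilistic method: fix $\omega^{(0)}=(0,\dots,0)$, draw further candidates i.i.d.\ uniform on $\{0,1\}^m$, control the expected number of pairs at Hamming distance $< m/8$ via Hoeffding's inequality, and delete one endpoint from each such pair; this also works, but the greedy/covering route gives the cleanest constant.
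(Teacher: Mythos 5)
The paper cites this lemma from Tsybakov (2009) without supplying its own proof, so there is no internal argument to compare against; your job is just to check that you have a valid proof of the stated bound, and you do. Your greedy maximal-packing construction plus a Hamming-ball volume bound is exactly the standard textbook route. The only cosmetic difference from Tsybakov's version is the form of the ball-volume estimate: you invoke the binary-entropy bound $\sum_{j\le \lfloor m/8\rfloor}\binom{m}{j}\le 2^{mH_2(1/8)}$, whereas Tsybakov bounds the same binomial tail by Hoeffding's inequality, $\sum_{j< m/8}\binom{m}{j}\le 2^m e^{-9m/32}$. These are interchangeable here, and your exponent $1-H_2(1/8)\approx 0.456$ is in fact slightly sharper than Tsybakov's $9/(32\ln 2)\approx 0.406$, so the numerical slack for $m\ge 8$ is more comfortable. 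Your rounding step $\lceil m/8\rceil - 1 \le \lfloor m/8\rfloor$ holds for every real argument, so the covering argument is airtight, and the final check $0.45m \ge m/8 + 1$ for $m\ge 8$ is correct (it already holds for $m\ge 4$). In short: correct, and essentially the same proof as the cited source, modulo the choice of binomial-tail estimate.
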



\section{Additional Simulation Results}
\label{append:simulation}
In this subsection, we present the simulation results of order-4 tensor with dimensions $d_1=d_2=d_3=d_4=20$, sample size $n_0=n_1=250$ and a low-rank $R=5$. Simulations were conducted using identity matrices $\Sigma_m, m \in [4]$ for $\bSigma$. In this scenario, we set $\calM_1 = 0$, and $\calM_2=\calB$. We vary signal strengths ${w_r}$ as follows:
\begin{itemize}
\item Equal strengths: $w_{\max}/w_{\min}=1$ with $w_{\max} \in \{2.5, 3.5, 5, 7.5\}$
\item Unequal strengths: $w_{r+1}/w_r=1.25$ with $w_{\max} \in {4, 5, 8}$
\end{itemize}
The estimation errors of the CP bases (i.e., $\max_{r,m} \| \hat{\ba}_{rm} \hat{\ba}_{rm}^\top - \ba_{rm} \ba_{rm}^\top \|_2$) are presented in Figure \ref{fig:3}, the estimation errors of the discriminant tensor (i.e., $\|\widehat \calB - \calB \|_{\rm F} / \| \calB \|_{\rm F}$) are shown in Table \ref{tab:7}, and the misclassification rate are listed in Table \ref{tab:8}.

The results for order-$4$ tensor predictors are primarily consistent with those for order-3 tensor predictors. In both cases, DISTIP-CP consistently outperforms DISTIP-Tucker and CATCH, showing lower estimation errors and misclassification rates across all settings, including orthogonal and non-orthogonal CP bases as well as equal and unequal signal strengths. DISTIP-Tucker generally performs better than CATCH, but both methods show significantly higher errors than DISTIP-CP.

When comparing the two tensor orders (see results for order-$3$ tensors in Section \ref{sec:simu}), the estimation and misclassification errors for DISTIP-CP remain low in both settings, but the gap between DISTIP-CP and the other methods widens with the order-$4$ tensors. This suggests that higher-order tensors introduce additional challenges, particularly due to the increased dimensionality and complexity, which DISTIP-CP handles more effectively. Meanwhile, DISTIP-Tucker and CATCH struggle more with non-orthogonal bases and unequal signal strengths in the order-$4$ setting, reflecting the added difficulty in capturing the discriminant structure.

\begin{figure}[H]
    \centering
    \begin{subfigure}[b]{0.22\textwidth}
        \centering
        \includegraphics[width=\textwidth]{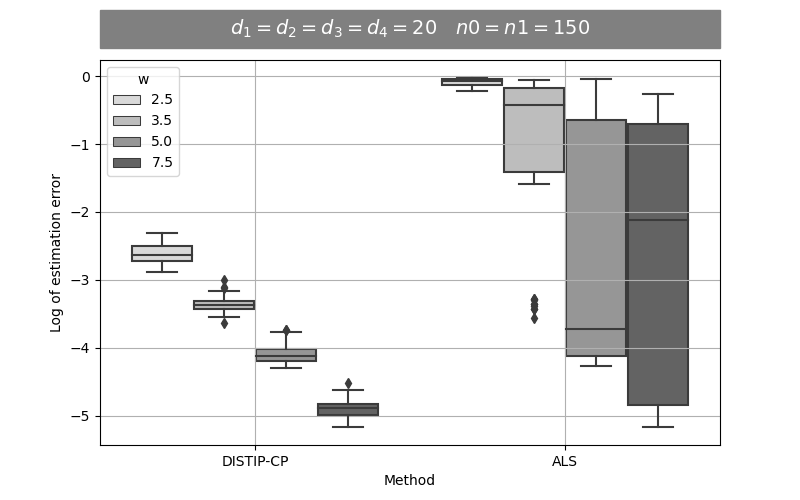}
        \caption{orthogonal, equal weights}
        \label{fig:5a}
    \end{subfigure}
    \hfill
    \begin{subfigure}[b]{0.22\textwidth}
        \centering
        \includegraphics[width=\textwidth]{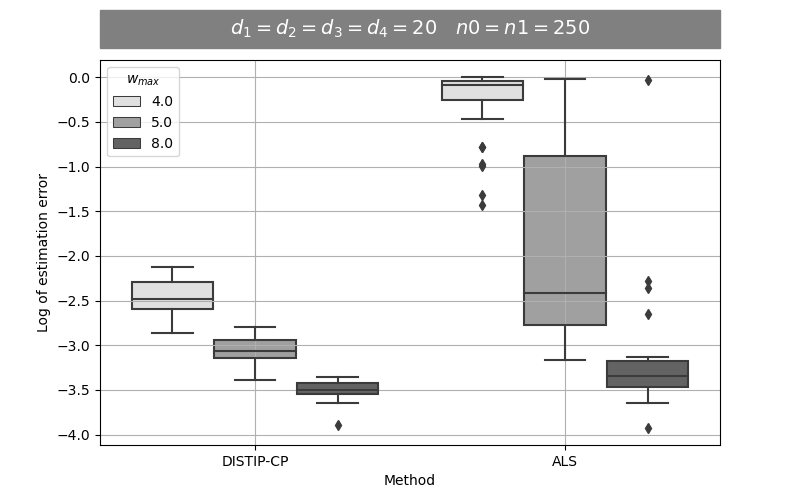}
        \caption{orthogonal, unequal weights}
        \label{fig:5b}
    \end{subfigure}
    \hfill
    \begin{subfigure}[b]{0.22\textwidth}
        \centering
        \includegraphics[width=\textwidth]{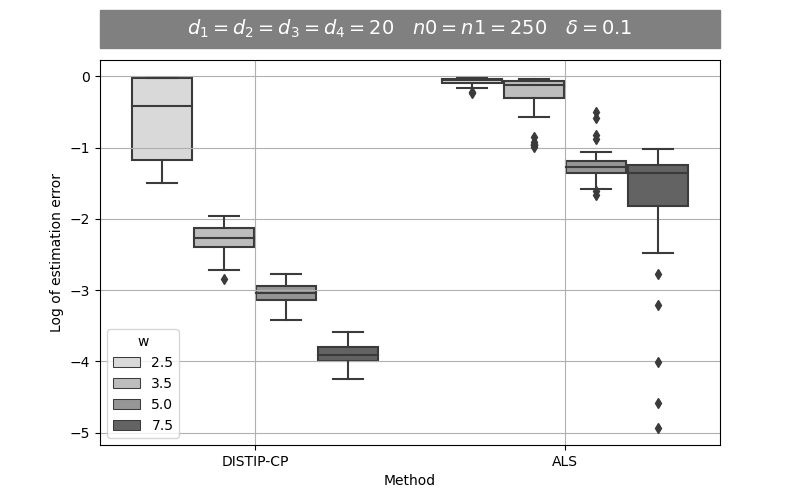}
        \caption{non-orthogonal, equal weights}
        \label{fig:6a}
    \end{subfigure}
    \hfill
    \begin{subfigure}[b]{0.22\textwidth}
        \centering
        \includegraphics[width=\textwidth]{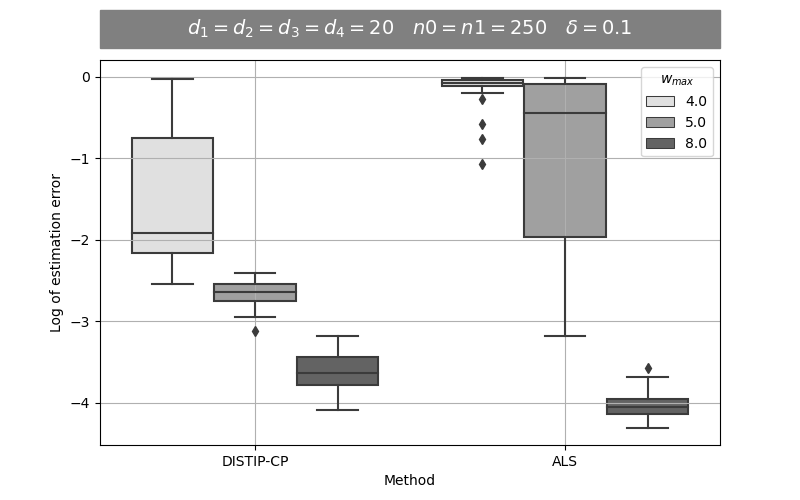}
        \caption{non-orthogonal, unequal weights}
        \label{fig:6b}
    \end{subfigure}
    \caption{Logarithmic estimation errors for CP basis using DISTIP-CP and ALS algorithms with identity covariance matrices.}
    \label{fig:3}
\end{figure}

\begin{table}[H]
\centering
\makebox[\linewidth]{%
    \hspace*{-1cm}\resizebox{\textwidth-1cm}{!}{%
\footnotesize
\begin{tabular}{>{\centering\arraybackslash}p{2.25cm}*{4}{>{\centering\arraybackslash}p{1.3cm}}*{3}{>{\centering\arraybackslash}p{1.5cm}}}
\toprule
\multirow{2}{*}{Algorithms} & \multicolumn{4}{c}{Equal Weights ($n_0 = n_1 = 250$)} & \multicolumn{3}{c}{Unequal Weights ($n_0 = n_1 = 250$)} \\
\cmidrule(lr){2-5} \cmidrule(lr){6-8}
& $w = 2.5$ & $w = 3.5$ & $w = 5.0$ & $w = 7.5$ & $w_{max} = 4$ & $w_{max} = 5$ & $w_{max} = 8$ \\
\midrule
Initial $\calB$ & 6.33\textsubscript{(0.54)} & 4.79\textsubscript{(0.37)} & 3.08\textsubscript{(0.24)} & 1.93\textsubscript{(0.16)} & 5.75\textsubscript{(0.29)} & 4.63\textsubscript{(0.28)} & 2.75\textsubscript{(0.24)} \\
{\footnotesize DISTIP-CP} & 0.32\textsubscript{(0.03)} & 0.24\textsubscript{(0.02)} & 0.14\textsubscript{(0.03)} & 0.13\textsubscript{(0.02)} & 0.62\textsubscript{(0.23)} & 0.34\textsubscript{(0.02)} & 0.13\textsubscript{(0.01)} \\
{\footnotesize DISTIP-Tucker} & 0.57\textsubscript{(0.04)} & 0.39\textsubscript{(0.03)} & 0.23\textsubscript{(0.03)} & 0.18\textsubscript{(0.02)} & 0.65\textsubscript{(0.03)} & 0.46\textsubscript{(0.03)} & 0.25\textsubscript{(0.02)} \\
{\footnotesize CATCH} & 1.03\textsubscript{(0.04)} & 1.02\textsubscript{(0.02)} & 1.01\textsubscript{(0.01)} & 0.99\textsubscript{(0.00)} & 1.12\textsubscript{(0.04)} & 1.07\textsubscript{(0.03)} & 0.99\textsubscript{(0.01)} \\
\toprule
\multirow{2}{*}{Algorithms} & \multicolumn{4}{c}{Equal Weights ($n_0 = n_1 = 250$)} & \multicolumn{3}{c}{Unequal Weights ($n_0 = n_1 = 250$)} \\
\cmidrule(lr){2-5} \cmidrule(lr){6-8}
& $w = 2.5$ & $w = 3.5$ & $w = 5.0$ & $w = 7.5$ & $w_{max} = 4$ & $w_{max} = 5$ & $w_{max} = 8$ \\
\midrule
{\small Initial $\calB$} & 6.36\textsubscript{(0.42)} & 4.54\textsubscript{(0.27)} & 3.15\textsubscript{(0.19)} & 2.13\textsubscript{(0.14)} & 5.10\textsubscript{(0.24)} & 4.02\textsubscript{(0.21)} & 2.53\textsubscript{(0.15)} \\
{\footnotesize DISTIP-CP} & 0.58\textsubscript{(0.11)} & 0.37\textsubscript{(0.04)} & 0.25\textsubscript{(0.03)} & 0.18\textsubscript{(0.02)} & 0.80\textsubscript{(0.06)} & 0.45\textsubscript{(0.03)} & 0.31\textsubscript{(0.02)} \\
{\footnotesize DISTIP-Tucker} & 0.64\textsubscript{(0.04)} & 0.40\textsubscript{(0.03)} & 0.26\textsubscript{(0.03)} & 0.19\textsubscript{(0.03)} & 
0.83\textsubscript{(0.17)} & 0.50\textsubscript{(0.05)} & 0.34\textsubscript{(0.03)} \\
{\footnotesize CATCH} & 1.16\textsubscript{(0.06)} & 1.10\textsubscript{(0.03)} & 1.00\textsubscript{(0.01)} & 0.96\textsubscript{(0.01)} & 1.12\textsubscript{(0.03)} & 1.06\textsubscript{(0.02)} & 0.98\textsubscript{(0.01)} \\
\bottomrule
\end{tabular}%
}%
}
\caption{\footnotesize Estimation errors for the discriminant tensor $\calB$ under identity covariance matrices. Results are shown for both orthogonal (upper section) and non-orthogonal (lower section) CP bases, comparing different estimation methods.}
\label{tab:7}
\end{table}

\begin{table}[H]
\centering
\makebox[\linewidth]{%
    \hspace*{-1cm}\resizebox{\textwidth-1cm}{!}{%
\footnotesize
\begin{tabular}{>{\centering\arraybackslash}p{2.25cm}*{4}{>{\centering\arraybackslash}p{1.3cm}}*{3}{>{\centering\arraybackslash}p{1.5cm}}}
\toprule
\multirow{2}{*}{Algorithms} & \multicolumn{4}{c}{Equal Weights ($n_0 = n_1 = 250$)} & \multicolumn{3}{c}{Unequal Weights ($n_0 = n_1 = 250$)} \\
\cmidrule(lr){2-5} \cmidrule(lr){6-8}
& $w = 2.5$ & $w = 3.5$ & $w = 5.0$ & $w = 7.5$ & $w_{max} = 4$ & $w_{max} = 5$ & $w_{max} = 8$ \\
\midrule
{\small Initial $\calB$} & 0.33\textsubscript{(0.02)} & 0.18\textsubscript{(0.01)} & 0.03\textsubscript{(0.01)} & 0.00\textsubscript{(0.00)} & 0.30\textsubscript{(0.01)} & 0.20\textsubscript{(0.01)} & 0.03\textsubscript{(0.00)} \\
{\footnotesize DISTIP-CP} & 0.004\textsubscript{(0.00)} & 0.00\textsubscript{(0.00)} & 0.00\textsubscript{(0.00)} & 0.00\textsubscript{(0.00)} & 0.058\textsubscript{(0.00)} & 0.00\textsubscript{(0.00)} & 0.00\textsubscript{(0.00)} \\
{\footnotesize DISTIP-Tucker} & 0.014\textsubscript{(0.00)} & 0.00\textsubscript{(0.00)} & 0.00\textsubscript{(0.00)} & 0.00\textsubscript{(0.00)} & 0.010\textsubscript{(0.00)} & 0.00\textsubscript{(0.00)} & 0.00\textsubscript{(0.00)} \\
{\footnotesize CATCH} & 0.44\textsubscript{(0.02)} & 0.37\textsubscript{(0.02)} & 0.22\textsubscript{(0.01)} & 0.10\textsubscript{(0.00)} & 0.41\textsubscript{(0.02)} & 0.36\textsubscript{(0.02)} & 0.24\textsubscript{(0.01)} \\
\toprule
\multirow{2}{*}{Algorithms} & \multicolumn{4}{c}{Equal Weights ($n_0 = n_1 = 250$)} & \multicolumn{3}{c}{Unequal Weights ($n_0 = n_1 = 250$)} \\
\cmidrule(lr){2-5} \cmidrule(lr){6-8}
& $w = 2.5$ & $w = 3.5$ & $w = 5.0$ & $w = 7.5$ & $w_{max} = 4$ & $w_{max} = 5$ & $w_{max} = 8$ \\
\midrule
{\small Initial $\calB$} & 0.33\textsubscript{(0.02)} & 0.19\textsubscript{(0.01)} & 0.04\textsubscript{(0.01)} & 0.00\textsubscript{(0.00)} & 0.27\textsubscript{(0.02)} & 0.17\textsubscript{(0.01)} & 0.01\textsubscript{(0.00)} \\
{\footnotesize DISTIP-CP} & 0.06\textsubscript{(0.04)} & 0.00\textsubscript{(0.00)} & 0.00\textsubscript{(0.00)} & 0.00\textsubscript{(0.00)} & 0.01\textsubscript{(0.00)} & 0.00\textsubscript{(0.00)} & 0.00\textsubscript{(0.00)} \\
{\footnotesize DISTIP-Tucker} & 0.06\textsubscript{(0.03)} & 0.00\textsubscript{(0.00)} & 0.00\textsubscript{(0.00)} & 0.00\textsubscript{(0.00)} & 0.00\textsubscript{(0.00)} & 0.00\textsubscript{(0.00)} & 0.00\textsubscript{(0.00)} \\
{\footnotesize CATCH} & 0.42\textsubscript{(0.02)} & 0.34\textsubscript{(0.02)} & 0.17\textsubscript{(0.03)} & 0.01\textsubscript{(0.01)} & 0.38\textsubscript{(0.02)} & 0.30\textsubscript{(0.02)} & 0.07\textsubscript{(0.02)} \\
\bottomrule
\end{tabular}%
}%
}
\caption{\footnotesize Misclassification rates for binary classification under identity covariance matrices. Upper section: orthogonal CP bases; lower section: non-orthogonal CP bases for $\calB$.}
\label{tab:8}
\end{table}

\end{appendices}

\end{document}